\documentclass[a4paper,12pt]{article}
\usepackage[title]{appendix}
\usepackage{a4wide}
\usepackage{amsmath}
\usepackage{mathtools}
\usepackage{amssymb}
\usepackage{amsthm}
\usepackage{amsfonts,bm}
\usepackage{scrextend}
\usepackage{subfig}
\usepackage{graphicx}
\usepackage{wrapfig}
\usepackage{tikz}
\usepackage{xcolor}
\usepackage{placeins}
\usepackage{floatrow}
\usepackage{subfig}
\usepackage{url}
\usepackage{hyperref}
\usepackage{algpseudocode}
\usepackage{algorithm}
\usepackage{array}
\usepackage{eufrak}
\usepackage{yfonts}
\usepackage{mathrsfs}
\usepackage{enumerate}
\usepackage[perpage]{footmisc}
\usepackage{authblk}
\usetikzlibrary{external}
\newtheorem{theorem}{Theorem}[section]
\newtheorem{corollary}[theorem]{Corollary}
\newtheorem{lemma}[theorem]{Lemma}
\newtheorem{proposition}[theorem]{Proposition}
\numberwithin{equation}{section}

\title{A stochastic modeling framework for single cell migration: coupling contractility and focal adhesions}
\author{Aydar Uatay}
\affil{\footnotesize Technische Universit\"at Kaiserslautern, Felix-Klein-Zentrum f\"ur Mathematik\protect\\ Paul-Ehrlich-Str.31, 67663 Kaiserslautern, Germany\protect\\(uatay at mathematik.uni-kl.de)}
\begin{document}
\maketitle
\begin{abstract}
The interaction of the actin cytoskeleton with cell-substrate adhesions is necessary for cell migration. While the trajectories of motile cells have a  stochastic character, investigations of cell motility mechanisms rarely elaborate on the origins of the observed randomness. Here, guided by a few fundamental attributes of cell motility, we construct a minimal stochastic cell migration model from ground-up. The resulting model couples a deterministic actomyosin contractility mechanism with stochastic cell-substrate adhesion kinetics, and yields a well-defined piecewise deterministic process. Numerical simulations reproduce several experimentally observed results, including anomalous diffusion, tactic migration, and contact guidance. This work provides a basis for the development of cell-cell collision and population migration models.\\
\textbf{Keywords:} cell motility, focal adhesions, piecewise deterministic process, superdiffusion, tactic migration, cell contractility.\\
\textbf{AMS Classification:} 92B05, 92C05, 92C10, 92C17, 60J25.    
\end{abstract}

\section{Introduction}
Cell migration is essential for embryogenesis, wound healing, immune surveillance, and progression of diseases, such as cancer metastasis. During embryogenesis, coordinated collective migration to particular sites is essential to the development of an organism. Tissue repair and immune response rely on directed movement of cells following external cues, which are produced after the tissue layer is damaged and infected with pathogens. Likewise, cancer cells migrate away from the tumor into surrounding tissue and distant organs to form metastases, which is the leading cause of death among cancer patients.

Cell motility is a cyclical process, involving morphological changes of the cell body and adhesive contacts with the underlying substrate \cite{lauffenburger1996cell}. The cycle can be divided into three steps: protrusion of the leading edge, assembly of adhesions to the substrate at the front and disassembly in the rear, and contraction of the cell body, thereby producing locomotion \cite{Abercrombie129}. This type of crawling movement requires transmission of contractile forces, generated within the cell cytoskeleton, through substrate adhesions. Such mechanical interactions between various cellular structures is attained by integrating numerous signaling molecules, the most prominent of which are Rho GTPases \cite{Raftopoulou2004}, \cite{ridley2003cell}. While there are other modes of motility relying on, for example, flagellar activity (e.g. spermatozoa or E.coli) or rolling in the bloodstream (e.g. leukocytes), here the focus is on the crawling type of movement.   
       
Due to cell motility being a highly complex and not fully elucidated process, mathematical modeling of the corresponding phenomena is a challenging task. Numerous approaches have been developed, each being able to capture a certain aspect of the process (see \cite{holmes2012comparison}, \cite{ziebert2016computational} for extensive reviews on whole cell motility models and \cite{mogilner2009mathematics} for a review on modeling of its critical components). For example, free boundary and phase-field models of steadily migrating cells in \cite{Camley2017crawling}, \cite{oelz2010cell} (and its extension in {\cite{manhart2015extended}}), \cite{preziosi2011multiphase}, \cite{Rubinstein2005multiscale}, \cite{shao2010computational} are able to reproduce cell morphology as a result of mechanical and biochemical interactions. Models of cell migration in \cite{COPOS20172672}, \cite{zhu2016comparison} explored emergence of various motility modes due to mechanical coupling of intracellular components and the substrate. In their hybrid motility model, Mar\'{e}e et al. \cite{maree2006polarization}, \cite{Athanasius2012how} explored the mechanochemical interaction in detail by considering the Rho GTPase signaling circuit. A common feature of these continuum models is that they are not able to reproduce experimentally observed random paths of migrating cells. Stochastic motility models of eukaryotic cells have been proposed, for example, in \cite{arrieumerlou2005local}, \cite{dickinson1993stochastic}, \cite{groh2010stochastic}, \cite{satulovsky2008exploring}, \cite{tranquillo1988stochastic}. There stochasticity is driven by a Gaussian process, although there is evidence that the paths of the migrating cells follow a non-Gaussian process \cite{dieterich2008anomalous}, \cite{liang2008persistent}, \cite{SELMECZI2005912}. Another way to include randomness is based on velocity jump process, as proposed by \cite{othmer1988models} in the context of cell motility. Extending the model and including cell-substrate interactions, population migration models have been proposed in \cite{engwer2015glioma}, \cite{kelkel2012multiscale}. Here, however, our focus is solely on single cell migration. 

In this paper and the follow-up works, we strike a middle ground. By proposing a minimal cell representation including a few cellular structures essential for cell motility, we aim at reproducing stochastic migratory paths in various experimental settings. In our model, stochasticity arises as a result of mechanochemical coupling between the cell cytoskeleton and the substrate through adhesive contacts. Specifically, the random events under consideration are the (de)adhesion events of the cell migration cycle, whereas deterministic locomotion and contraction occur between arrival of the events. Here, we do not make any prior assumptions about the distributions that the events and their arrival times follow. Rather, we consider, in detail, the major determinants of adhesion dynamics and derive the complete stochastic description. This is in contrast to most mathematically well-posed models on stochastic cell migration, where it is assumed that the motion follows a Gaussian process. Thereby, the construction of our cell motility model will result in a piecewise deterministic process \cite{davis84}, \cite{davis93}. Our simulations reproduce experimental observations, such as superdiffusive scaling of the squared displacement \cite{dieterich2008anomalous}, \cite{liang2008persistent}, \cite{liu2015confinement}, biased migration in the presence of an external cue, contact guidance \cite{ROG17}, and directed movement due to asymmetric contractility (and in the absence of guidance cues) \cite{verkhovsky1999self}, \cite{yam2007actin}. Thus, our approach illustrates how detailed treatment of adhesion cluster dynamics can translate into stochastic cell motility description in a mathematically consistent manner. Moreover, due to the minimal character of the model and our bottom-up approach, this work serves as a basis for modeling cell-cell collisions in \cite{uatay2019mathematical} and population migration, which will be presented in follow-up papers.                

This article is organized as follows. In Section \ref{section: The Model} we briefly describe the cell motility cycle and the relevant agents. We then introduce the minimal cell representation and describe the deterministic cell motion between the cycle steps. In Section \ref{section: jump process} we construct a stochastic model of adhesion events, which signify transitions of cycle stages. In Section \ref{section: PDMP} we combine the deterministic and stochastic components of the migration cycle to obtain a well-defined piecewise deterministic Markov process. In Section \ref{section: adhesion kinetics} we specify the kinetics of adhesion events. Numerical simulations are performed in Section \ref{section: numerical simulations}. A discussion of the results and future outlook are presented in Section \ref{section: discussion}.

\section{The cell motility model}\label{section: The Model}
The cell migration cycle begins with protrusion of the leading edge as a result of actin polymerization (Figure \ref{fig: four cycle}). The polymerization process in lamellipodia is mediated by the Arp2/3 complex, which acts downstream of signaling pathways responsible for cell polarization \cite{ridley2003cell}. Next, the protrusions are stabilized due to formation of focal adhesions (FAs) in the lamellae (regions behind the lamellipodia), which link the actin cytoskeleton to the extracellular matrix (ECM). An FA is a multiprotein integrin-based adhesion cluster, which matures in a Rho GTPase dependent manner \cite{Raftopoulou2004}. Furthermore, FA maturation depends on the applied tension and occurs concomitantly with actomyosin bundle formation \cite{gardel2010mechanical}. These bundles, called stress fibers (SFs), generate contractile forces due to non-muscle myosin II motors. Due to increased tension at cell rear, FAs rupture. Finally, deadhesion leads to cell body translocation due to cytoskeletal contraction.

In order to construct the mathematical model, we make the following assumptions. First, FA unbinding leads to remodeling of the SF configuration (and of the entire cytoskeleton) and the cell movement, whereas assembly of new FAs leads to restructuring only. Second, FA events need not occur in the order described above. Several adhesions might be assembled (disassembled) before deadhesion (adhesion) occurs. Note also that while the contractile machinery is important, the dynamic instability of adhesions is what drives the migratory process, for stable FAs prevent retraction. Thus, we consider only interactions of SFs and FAs. Moreover, we do not consider the actin polymerization process and simplify the migration cycle to two steps: after FA assembly occurs, a cell does not move, but reconfigures SFs; after disassembly, a cell does both. Neglecting the polymerization process and the reduction to binding/unbinding events can be justified by the fact that one of the major consequences of the leading edge protrusions is promotion of FA assembly. Because the repolarization of migrating cells occurs frequently as an outcome of intricate biochemical activity, then, in order to keep the model tractable, we do not explicitly model cell polarity. Instead, (de)adhesion frequency is indicative of (rear)front. 
\begin{figure}[H]
\subfloat[]{
\includegraphics[width=0.37\textwidth]{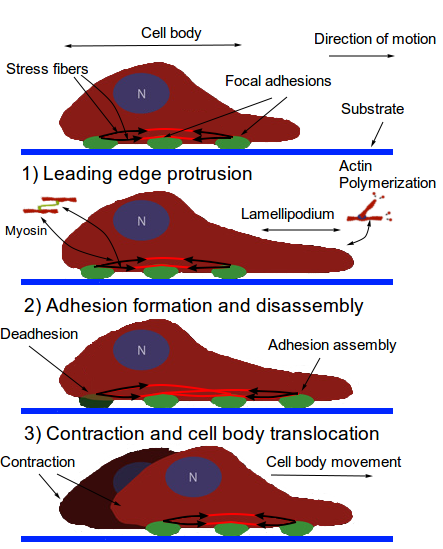}
\label{fig: four cycle}
}
\subfloat[]{
\begin{tikzpicture}[scale=0.95,transform shape,every node/.style={scale=0.9}]
\draw (0.76,1.84) -- (3,0);
\draw (0.76,1.84) -- (0,3);
\draw (0.76,1.84) -- (-3,0);
\draw (0.76,1.84) -- (0,-3);
\draw (0.76,1.84) -- (2.12,2.12);
\draw (0.76,1.84) -- (-2.12,2.12);
\draw (0.76,1.84) -- (2.12,-2.12);
\draw [dashed] (0.76,1.84) -- (-2.12,-2.12);
\draw (0,0) circle [radius=3cm];
\filldraw [gray] (0.76,1.84) circle [radius=2pt];

\filldraw [red] (3,0) circle [radius=3pt];
\filldraw [red] (0,3) circle [radius=3pt];
\filldraw [red] (0,-3) circle [radius=3pt];
\filldraw [red] (-3,0) circle [radius=3pt];
\filldraw [red] (2.12,2.12) circle [radius=3pt];
\filldraw [red] (-2.12,2.12) circle [radius=3pt];
\filldraw [red] (2.12,-2.12) circle [radius=3pt];
\filldraw [red] (2.12,-2.12) circle [radius=3pt];
\filldraw [gray] (-2.12,-2.12) circle [radius=3pt];

\filldraw [gray] (0,0) circle [radius=2pt];
\draw (-0.3,-0.3) node[text width = 2pt] {$\mathbf{x}$};

\draw [very thick, red][->](3,0) -- (1.88,0.92);
\draw (3.2,0) node[text width = 2pt] {$\mathbf{F}_1$};
\draw [very thick, red][->](2.12,2.12) -- (1.3,1.95);
\draw (2.3,2.3) node[text width = 2pt] {$\mathbf{F}_2$};
\draw [very thick, red][->](0,3) -- (0.38,2.42);
\draw (0,3.3) node[text width = 2pt] {$\mathbf{F}_3$};
\draw [very thick, red][->](-2.12,2.12) -- (-1.54,2.06);
\draw (-2.9,2.12) node[text width = 2pt] {$\mathbf{F}_4$};
\draw [very thick, red][->](-3,0) -- (-1.87,0.55);
\draw (-3.7,0) node[text width = 2pt] {$\mathbf{F}_5$};
\draw [very thick, red][->](0,-3) -- (0.22,-1.54);
\draw (-2.5,-2.5) node[text width = 2pt] {$\mathbf{F}_6$};
\draw (-0.1,-3.4) node[text width = 2pt] {$\mathbf{F}_7$};
\draw [very thick, red][->](2.12,-2.12) -- (1.84,-1.32);
\draw (2.2,-2.4) node[text width = 2pt] {$\mathbf{F}_8$};

\draw (0.76,1.84) -- (0,3);
\draw (0.76,1.84) -- (-3,0);
\draw (0.76,1.84) -- (0,-3);
\draw (0.76,1.84) -- (2.12,2.12);
\draw (0.76,1.84) -- (-2.12,2.12);
\draw (0.76,1.84) -- (2.12,-2.12);

\draw (0.8,2.11) node[text width = 2pt] {$\mathbf{x}_n$};

\draw[very thick, red] [->](0.76,1.84) -- (-0.16,1.55);
\draw (-0.45,1.55) node[text width = 2pt] {\textbf{F}};
\end{tikzpicture}

\label{fig: four}
}

\caption{(a) Schematic diagram of the cell migration cycle and the implicated cellular structures. Actin polymerization at the front pushes the membrane allowing protrusions to form.  Then, adhesions assemble at the front and disassemble at the rear. Finally, deadhesion and cell contraction produce locomotion, pulling the body forward. The black arrows overlaying the stress fibers show the inwardly directed contractile forces.
(b) Schematic representation of a cell with $M=8$ focal adhesions. Solid black lines represent stress fibers while red bullets represent focal adhesions. Red arrows indicate the direction and magnitude of applied traction force $\mathbf{F}_i$, $i=1,\ldots,8$. The dashed line and the corresponding gray circle represent an absent stress fiber and unbound focal adhesion, respectively. The central red arrow indicates the net force $\mathbf{F}$ on $\mathbf{x}_n$.}

\end{figure}

\subsection{The cell representation}

Consider the situation in Figure \ref{fig: four}. The disk represents a cell. Let the radius be $R_{cell}$ and let the position of the center at time $t$ be $\mathbf{x}(t)\in\mathbb{R}^2$. Suppose there are $M$ equally spaced adhesion sites $\mathbf{x}_i(t)\in R_{cell}\mathbb{S}^1$, $i=1,\ldots M$ on a cell circumference with constant relative distance (see Remark in Section {\ref{section: kinematics summary}}). Let $\mathbf{Y}(t)\in \left\lbrace 0,1\right\rbrace^M$ be a vector of focal adhesion states at time $t$, i.e. $Y_{i}(t) = 0,1$ correspond to unbound and bound FA at node $i$, respectively.

Since the traction stresses are oriented inward, transmitted to ECM by FAs, and generated by contractile SFs, then the FAs on the circumference must be one of the ends of SFs. Suppose the other end of all SFs at time $t$ is at the position $\mathbf{x}_n(t)\in\Omega_{cell}:=\left\lbrace(x,y)\in\mathbb{R}^2\phantom{.}|\phantom{.}x^2+y^2\leq R^2_{cell} \right\rbrace$ (in a cell's reference frame with origin at $\mathbf{x}$), i.e. all SFs are connected at $\mathbf{x}_n$. Since stress fibers behave like Hookean springs on extension, but readily buckle under compression \cite{Murrell2015}, then, inspired by Guthardt Torres et al.  \cite{TBS12}, the force $\mathbf{F}_i$ at focal adhesion $i$ is given by:
\begin{align}\label{eq: Fi}
\mathbf{F}_i = 
\begin{cases}
\left(T_i + EA\frac{L_i-L_0}{L_0}\right)\mathbf{e}_i,\phantom{aabc} L_0<L_i\\
T_i\mathbf{e}_i,\phantom{asdasdasdaabcsaaa}L_c\leq L_i\leq L_0\\
\frac{L_i-L_c+\delta}{\delta}T_i\mathbf{e}_i, \phantom{Asdaasssda} L_c-\delta\leq L_i<L_c\\
0 \phantom{sdaasdAADSADAdA} L_i<L_c-\delta, 
\end{cases} 
\end{align}
where $T_i$ is the magnitude of the contractile force due to myosin motors, $EA$ is the one-dimensional Young's modulus, $L_0$ and $L_c$ are, respectively, rest and critical lengths, $L_i = \lVert \mathbf{x}_n-\mathbf{x}_i\rVert$, $\mathbf{e}_i = \frac{\mathbf{x}_n-\mathbf{x}_i}{L_i}$ is the unit vector along the $i^{\text{th}}$ SF, and $\delta$ is a small positive constant. The first case in (\ref{eq: Fi}) is due to the Hookean behavior of SFs upon extension and myosin tension generation. Furthermore, stress fiber laser ablation experiments \cite{kassianidou2017geometry}, \cite{Kumar2006}, \cite{russell2009sarcomere} revealed that the initial instantaneous response (elastic behavior due to the SF length dependence in the first case) is followed by slower contraction due to myosin activity (force dependence on $T_i$) in the remaining portion of the fiber. Combined with stress fiber buckling, we obtain the second case in (\ref{eq: Fi}). Deguchi et al. \cite{deguchi2006stress} also found that SF contraction ceased after reaching a certain critical length. This implies that $F_i=0$ when $L_i<L_c-\delta$. For technical reasons we assume $F_i$ is piecewise continuous - hence the last cases in (\ref{eq: Fi}). We also assumed for simplicity that myosin generated force $T_i$ may vary between SFs, but is otherwise constant. 

\begin{figure}
\subfloat[]
{
\begin{tikzpicture}
\draw [->][very thick] (0,0) -- (5,0);
\draw [->][very thick] (0,0) -- (0,3);
\draw [very thick] (0,0) -- (0,-3);
\draw (5,-0.5) node[text width = 2pt] {$L_i$};
\draw (-1.0,3) node[text width = 2pt] {$\lVert\mathbf{F}_i\rVert$};

\draw [red,very thick] (4,2) -- (3,1);
\draw [red,very thick] (3,1) -- (1,1);
\draw [red,very thick] (1,1) -- (0.8,0);
\draw [red,very thick] (0.8,0) -- (0,0);

\draw [blue,thick,dashed] (3,1)--(0,-2);

\draw (3,0.1) -- (3,-0.1);
\draw (2.9,-0.4) node[text width = 1pt] {$L_0$};
\draw (1,0.1) -- (1,-0.1);
\draw (0.9,-0.4) node[text width = 1pt] {$L_c$};

\draw (0.1,1) -- (-0.1,1);
\draw (-0.5,1) node[text width = 1pt] {$T_i$};
\end{tikzpicture}
}
\hspace{1cm}
\subfloat[]
{
	\includegraphics[width=7cm,height=6cm]{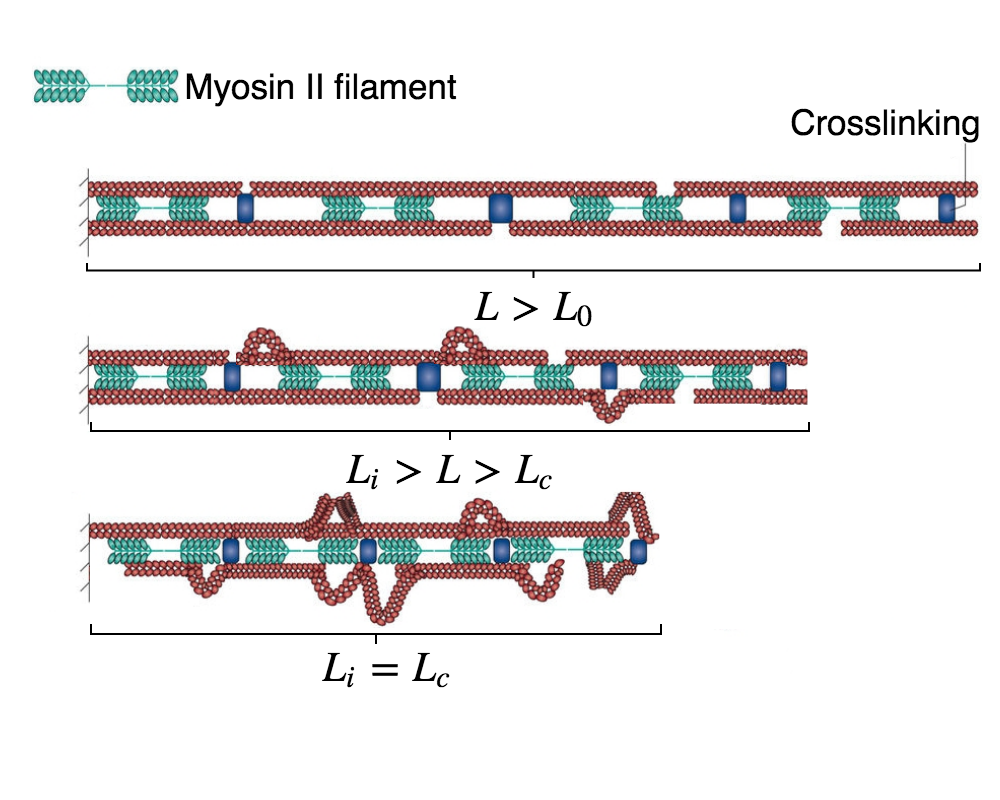}
}
\caption{(a) Magnitude of $\mathbf{F}_i$ in red. The blue dashed line corresponds to the profile of $F_i$ if we were to treat the fiber as a Hookean spring with constant $EA/L_0$. (b) Schematic representation of stress fiber contraction. As the fiber contracts below the rest length $L_0$, buckling occurs. As myosin mediated contraction causes the fiber to contract below the rest length $L_0$, buckling occurs due to lack of resistance to compression. Below the critical length $L_c$, the fiber ceases to contract due to vanishing interfilament distance. Modified from \cite{Murrell2015}.}
\label{figure: stress fiber contraction 1}
\end{figure}
Note that since $\mathbf{x}_i(t)\in R_{cell}\mathbb{S}^1$ and FA sites are equally spaced, then in polar coordinates we have:
\begin{align*}
\mathbf{x}_i(t) = R_{cell}(cos(\theta_i),sin(\theta_i))^T, && \theta_i(t):=\theta_1(t)+(i-1)\frac{2\pi}{M},
\end{align*}
and so $\mathbf{F}_i = \mathbf{F}_i(\mathbf{x}_n,\theta_1)$.

Since the force at $\mathbf{x}_n$ due to $i^{\text{th}}$ SF is $-\mathbf{F}_i$, then the net force at $\mathbf{x}_n$ is
\begin{align}\label{eq: total F}
\mathbf{F}(\mathbf{x}_n,\theta_1,\mathbf{Y}):=-\sum_{i=1}^{M}Y_{i}\mathbf{F}_i(\mathbf{x}_n,\theta_1),
\end{align} 
then, assuming negligible inertial effects (due to the viscous nature of cytoplasm) and constant $\mathbf{Y}$:
\begin{align}\label{eq: 1}
\beta_{cell}\dot{\mathbf{x}}_n = \mathbf{F}(\mathbf{x}_n,\theta_1,\mathbf{Y}),
\end{align} 
where $\beta_{cell}$ is the drag coefficient in the cytoplasm. 

The representation of a cell in such a way is justified for the following reasons:
\begin{itemize}
\item The traction stresses are largely applied on the cell periphery and their magnitude decays rapidly towards the center \cite{MKS12}, \cite{SG12}. Thus, the cell body SF ends are at or near mechanical equilibrium. Since contractile forces are generated by SFs, then a cell body SF end must be balanced by all other SFs (due to the equilibrium). Hence, it is reasonable to have a single connecting node of radial SFs which is either at mechanical equilibrium (for stationary cells) or tends to it. 
\item Paul et al. \cite{PHSS08} demonstrated in their active cable network model, combined with force application originating from nuclear region on FAs by star-like SF arrangement, results in cells acquiring morphologies typical for motile cells. Since the distribution of forces applied on FAs determines which one ruptures, then it also influences the motion of a cell (due to retraction). Since we are primarily interested in cell migration it is justified to assume that this architecture represents a realistic situation. Furthermore, Oakes et al. \cite{Oakes109595} found that modeling SFs embedded in contractile networks, where only SFs actively contract, yields a behavior mimicking the experimental results - the cytoskeletal flow was directed along the stress fibers. In the same study, the authors concluded that it is appropriate to treat an SF as a 1D viscoelastic contractile element, which also justifies neglecting inertia in Equation \eqref{eq: total F}. 
\item Since motile cells assume a wide variety of cell shapes and continuously remodel their actin cytoskeleton, one can view this representation as a cell shape normalization (it is implicitly assumed that a cell volume remains constant). That is, Figure \ref{fig: four} depicts a cell and forces applied on FAs normalized to a circle. M{\"o}hl et al. \cite{MKS12} applied the shape normalization technique to a timelapse series data of migrating keratinocytes and demonstrated that this allows consistent analysis of FA dynamics, actin flow and traction forces. In view of their results, a particular cell traction force map and FA configuration normalized to a circle can be effectively captured by our representation. 
\end{itemize}

Note that $\mathbf{x}_n(t)\in\Omega_{cell}$ for $t>0$, proved below.
\begin{proposition}\label{proposition: inside cell}
Let $\mathbf{x}_n\in \partial\Omega_{cell}$, $\theta_1\in[0,2\pi)$, and $\mathbf{Y}\in\left\lbrace 0,1\right\rbrace ^M$ be arbitrary. Let $\mathbf{n}$ be outward unit normal at $\mathbf{x}_n$. Then $\mathbf{F}(\mathbf{x}_n,\theta_1,\mathbf{Y})\cdot\mathbf{n}\leq 0$ with equality sign if and only if $\mathbf{F}(\mathbf{x}_n,\theta_1,\mathbf{Y}) = 0$.
\end{proposition}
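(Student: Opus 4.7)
The plan is to exploit the fact that every term $-\mathbf{F}_i$ (the force exerted on $\mathbf{x}_n$ by the $i^{\text{th}}$ stress fiber) points from $\mathbf{x}_n$ toward the boundary anchor $\mathbf{x}_i$, and then invoke a simple Cauchy--Schwarz-style argument on the sphere $R_{cell}\mathbb{S}^1$. Since $\mathbf{x}_n\in\partial\Omega_{cell}$, the outward unit normal is $\mathbf{n}=\mathbf{x}_n/R_{cell}$, so the quantity of interest becomes
\begin{equation*}
\mathbf{F}(\mathbf{x}_n,\theta_1,\mathbf{Y})\cdot\mathbf{n} \;=\; \frac{1}{R_{cell}}\sum_{i=1}^{M} Y_i\,\bigl(-\mathbf{F}_i\bigr)\cdot\mathbf{x}_n.
\end{equation*}

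First I would observe from (\ref{eq: Fi}) that in each of the four pieces $\mathbf{F}_i=c_i\mathbf{e}_i$ with a scalar coefficient $c_i\geq 0$; in particular the branch $L_0<L_i$ is the only one where $c_i$ could be large, and in the third and fourth branches $c_i$ is a non-negative multiple of $T_i$. Consequently $-\mathbf{F}_i=(c_i/L_i)(\mathbf{x}_i-\mathbf{x}_n)$ whenever $L_i>0$, and so
\begin{equation*}
\mathbf{F}\cdot\mathbf{n} \;=\; \sum_{i=1}^{M} Y_i\,\frac{c_i}{L_i}\,(\mathbf{x}_i-\mathbf{x}_n)\cdot\mathbf{n}.
\end{equation*}
Because $\mathbf{x}_i\in R_{cell}\mathbb{S}^1$, Cauchy--Schwarz gives $\mathbf{x}_i\cdot\mathbf{n}\leq\|\mathbf{x}_i\|=R_{cell}=\mathbf{x}_n\cdot\mathbf{n}$, whence $(\mathbf{x}_i-\mathbf{x}_n)\cdot\mathbf{n}\leq 0$. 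Since each $Y_ic_i/L_i\geq 0$, the sum is $\leq 0$, proving the inequality.

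For the equality case I would argue termwise: $\mathbf{F}\cdot\mathbf{n}=0$ forces $Y_i c_i(\mathbf{x}_i-\mathbf{x}_n)\cdot\mathbf{n}=0$ for every $i$. For each such $i$ either $Y_i=0$, or $c_i=0$ (which already means $\mathbf{F}_i=0$), or the equality in Cauchy--Schwarz holds, i.e.\ $\mathbf{x}_i=\mathbf{x}_n$ and hence $L_i=0$. The main subtlety is this last, degenerate possibility: here I would invoke the implicit standing hypothesis $L_c>\delta$ (needed for the piecewise definition of $\mathbf{F}_i$ to make sense at $L_i=0$, where $\mathbf{e}_i$ is undefined), so that $L_i=0<L_c-\delta$ places us in the fourth branch of (\ref{eq: Fi}) and $\mathbf{F}_i=0$. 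In every scenario $Y_i\mathbf{F}_i=0$, so $\mathbf{F}=0$ as claimed; the reverse implication is trivial.

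The only real obstacle is the bookkeeping around $L_i=0$: one must either add the mild assumption $L_c>\delta$ explicitly, or extend (\ref{eq: Fi}) by continuity to set $\mathbf{F}_i=0$ at $\mathbf{x}_i=\mathbf{x}_n$, so that the Cauchy--Schwarz equality case cannot produce a spurious nonzero contribution. Everything else is a one-line geometric observation plus sign-tracking.
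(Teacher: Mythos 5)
Your proof is correct and follows essentially the same route as the paper's: write $\mathbf{F}\cdot\mathbf{n}$ as a sum over $i$ of $Y_i(-\mathbf{F}_i)\cdot\mathbf{n}=Y_i\frac{\lVert\mathbf{F}_i\rVert}{L_i}(\mathbf{x}_i-\mathbf{x}_n)\cdot\mathbf{n}$ and observe that each term is nonpositive. In fact your version is the more careful one: the paper justifies the sign by asserting $\mathbf{x}_i\cdot\mathbf{n}<0$, which is false in general (an anchor $\mathbf{x}_i$ near $\mathbf{x}_n$ has $\mathbf{x}_i\cdot\mathbf{n}>0$), whereas your Cauchy--Schwarz bound $\mathbf{x}_i\cdot\mathbf{n}\leq R_{cell}=\mathbf{x}_n\cdot\mathbf{n}$ is the correct reason, and you also supply the equality case (including the degenerate $\mathbf{x}_i=\mathbf{x}_n$, $L_i=0$ branch) that the paper dismisses as obvious.
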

\begin{proof}
The proposition above is obviously true, and it follows from the fact that
\begin{align*}
Y_{i}(-\mathbf{F}_i(\mathbf{x}_n,\theta_1))\cdot\mathbf{n} = Y_{i}\lVert\mathbf{F}_i(\mathbf{x}_n,\theta_1))\rVert(-\mathbf{e}_i\cdot\mathbf{n})=\frac{Y_{i}\lVert\mathbf{F}_i(\mathbf{x}_n,\theta_1))\rVert}{L_i}\left(\mathbf{x}_i-\mathbf{x}_n\right)\cdot\mathbf{n}\leq0,
\end{align*} 
since $\mathbf{x}_i\cdot\mathbf{n}<0$ and $\mathbf{x}_n\cdot\mathbf{n}>0$. 
\end{proof}

\begin{corollary}\label{corollay: xn exit}
Let $\mathbf{x}_n(0)\in\Omega_{cell}$ be arbitrary and let $\theta_1(t)$ be given. Suppose $\mathbf{x}_n\in C^1([0,\infty))$ is a solution of (\ref{eq: 1}). Then, $\mathbf{x}_n(t)\in\Omega_{cell}$  $\forall t>0$ and $\forall \mathbf{Y}(t)\in\left\lbrace 0,1\right\rbrace ^M$.
\end{corollary}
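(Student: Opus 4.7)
The plan is to use Proposition \ref{proposition: inside cell} as the subtangential (Nagumo-type) condition driving a forward invariance argument for $\Omega_{cell}$ under the flow of (\ref{eq: 1}).

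First I would introduce the radial function $V(t):=\|\mathbf{x}_n(t)\|^2$ and differentiate along the solution:
\begin{equation*}
\dot V(t) = 2\,\mathbf{x}_n(t)\cdot\dot{\mathbf{x}}_n(t) = \frac{2}{\beta_{cell}}\,\mathbf{x}_n(t)\cdot\mathbf{F}\bigl(\mathbf{x}_n(t),\theta_1(t),\mathbf{Y}(t)\bigr).
\end{equation*}
Whenever $\mathbf{x}_n(t)\in\partial\Omega_{cell}$ the outward unit normal is $\mathbf{n}=\mathbf{x}_n(t)/R_{cell}$, so by Proposition \ref{proposition: inside cell} one has $\dot V(t) = (2R_{cell}/\beta_{cell})\,\mathbf{n}\cdot\mathbf{F}\le 0$, with equality only when $\mathbf{F}$ itself vanishes.

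Next I would argue by contradiction. Assuming $V(t_0)>R_{cell}^2$ for some $t_0>0$, set $\tau:=\inf\{t\in[0,t_0]:V(t)>R_{cell}^2\}$. Continuity of $V$ together with $V(0)\le R_{cell}^2$ force $V(\tau)=R_{cell}^2$, and a sequence $t_k\downarrow\tau$ with $V(t_k)>V(\tau)$ yields $\dot V(\tau)\ge 0$ from the right-hand difference quotients; combined with the boundary estimate above this gives $\dot V(\tau)=0$.

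The main obstacle is the equality case: $\dot V(\tau)=0$ only delivers $\mathbf{F}(\mathbf{x}_n(\tau),\theta_1(\tau),\mathbf{Y}(\tau))=0$, hence $\dot{\mathbf{x}}_n(\tau)=0$, which is not yet an outright contradiction. To close the argument I would exploit the piecewise deterministic structure laid out in Section \ref{section: PDMP}: $\mathbf{Y}(t)$ is piecewise constant along every realization and $\mathbf{x}_n$ is continuous across jump times (only $\mathbf{Y}$ changes). On any jump-free interval, $\mathbf{F}(\cdot,\theta_1(t),\mathbf{Y})$ is Lipschitz in $\mathbf{x}_n$ (immediate from the piecewise-affine formula (\ref{eq: Fi})) and the subtangential condition holds along the entire boundary $\partial\Omega_{cell}$, which is precisely Nagumo's hypothesis for forward invariance of the closed convex set $\overline{\Omega_{cell}}$. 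Concatenating the resulting per-interval invariance across successive jump times gives $\mathbf{x}_n(t)\in\Omega_{cell}$ for every $t>0$ and every admissible sample path $\mathbf{Y}(\cdot)$.
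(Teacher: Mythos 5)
Your argument is correct and follows the same route as the paper: both reduce the claim to the subtangential condition $\mathbf{F}\cdot\mathbf{n}\leq 0$ on $\partial\Omega_{cell}$ supplied by Proposition \ref{proposition: inside cell}. The paper simply asserts that this condition suffices, whereas you supply the justification it leaves implicit --- correctly noting that the equality case $\mathbf{F}=0$ defeats a naive strict-decrease argument for $\lVert\mathbf{x}_n\rVert^2$, and closing the gap via Nagumo's invariance theorem for the closed convex disk, using the Lipschitz continuity of the piecewise-affine vector field in \eqref{eq: Fi} and concatenation over the jump-free intervals on which $\mathbf{Y}$ is constant.
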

\begin{proof}
Due to (\ref{eq: 1}) it suffices to show that $\forall \mathbf{x}_n\in\partial\Omega_{cell}$ we have $\mathbf{F}(\mathbf{x}_n,\theta_1,\mathbf{Y})\cdot\mathbf{n}\leq 0$, which follows from Proposition \ref{proposition: inside cell}.
\end{proof}
\FloatBarrier
\subsection{The cell migration cycle}
Recall that during the migration cycle, deadhesion leads to cell body translocation, while adhesion binding does not. In both cases actomyosin contractility leads to reconfiguration of the cytoskeleton. Here we show how our cell representation can describe the reconfiguration and cell body motion following binding and unbinding events.

Without loss of generality assume that an event occurred at $t=0$. Let $\tau>0$ be the time of the next adhesion event, be it binding or unbinding. Let $\mathbf{Y}(0)\in\left\lbrace 0,1\right\rbrace ^M$, $\mathbf{x}(0)\in \mathbb{R}^2$, and $\mathbf{x}_n(0)\in\Omega_{cell}$ be arbitrary. Then, $\mathbf{Y}(t)=const.$ for $t\in[0,\tau)$. 

We assume $\theta_1(t=0) = 0$. Since the FA sites are equally spaced, it is sufficient to consider $\theta_1(t)$ only.

\subsubsection{Focal adhesion binding}
Following an FA binding, we suppose that a cell becomes stationary (i.e. the cell centroid remains constant). However, a newly formed FA and the associated SF lead to cytoskeletal reshaping. Thus, we have the following system of ODEs for $t\in[0,\tau)$:
\begin{align}\label{eq: binding ODEs}
\dot{\mathbf{x}} &=0\nonumber\\
\dot{\mathbf{x}}_n &= \beta_{cell}^{-1}\mathbf{F}(\mathbf{x}_n,\theta_1,\mathbf{Y})\nonumber\\
\dot{\theta}_1&=0.
\end{align}
\subsubsection{Focal adhesion unbinding}

Following an unbinding event, cytoskeletal contraction leads to cell body movement. Due to the circular geometry, the contractile forces induce both rotational and translational motion.

Note that the bound focal adhesions are able to slide for short distances \cite{MKS12}. Oakes et al. \cite{Oakes109595} found that the cytoskeleton behaves like an elastic solid on timescales up to one hour. Provided the time $\tau$ between adhesion events is small enough, the following is justified.

\begin{wrapfigure}[23]{r}{0.4\textwidth}
\vspace{-1cm}
\begin{tikzpicture}
\draw [->,thick, blue] (1cm,3.6cm)arc[start angle=67.5, end angle=112.5, radius=3cm];
\draw (-1.65,3.5) node[text width = 2pt] {$\dot{\theta}_1$};

\draw [->,thick, blue] (1.14,2.77) -- (0.48,3.04);
\draw (-0.12,3.23) node[text width = 2pt] {$F_{\varphi}\hat{\boldsymbol{\varphi}}$};

\draw (0,0) circle [radius=3cm];
\filldraw [gray] (0.76,1.84) circle [radius=2pt];
\filldraw [gray] (0,0) circle [radius=2pt];
\draw [->](0.76,1.84) -- (1.14,2.77);
\draw (0,0) -- (0.76,1.84);
\draw [->](0.76,1.84) -- (-0.16,2.22);
\draw (0.76,1.84) -- (0.85,2.07) -- (0.62,2.16) -- (0.52, 1.93);

\draw (1.25, 2.4) node[text width = 2pt] {$\hat{\mathbf{r}}$};
\draw (-0.5,2.3) node[text width = 2pt] {$\hat{\boldsymbol{\varphi}}$};
\draw (0.9,1.5) node[text width = 2pt] {$\mathbf{x}_n$};

\draw[thick, red] [->](0.76,1.84) -- (-0.16,1.45);
\draw (-0.45,1.45) node[text width = 2pt] {\textbf{F}};
\draw [dashed] (-0.16,1.45) -- (0.48,1.18);
\draw [dashed] (-0.16,1.45) -- (0.1,2.11);
\draw[thick, blue] [->](0.76,1.84) -- (0.48,1.18); 
\draw[thick, blue] [->](0.76,1.84) -- (0.1,2.11); 

\draw (-0.4,0.25) node[text width = 2pt] {$\mathbf{x}$};
\draw (0,0) -- (2.12, -2.12);
\draw (1,-0.75) node[text width = 2pt] {$R_{cell}$};
\draw (3.1,0) node[text width = 2pt] {$\theta_1$};
\filldraw [red] (3,0) circle [radius = 1.33pt];

\draw[thick, blue] [->](0,0) -- (-0.28,-0.66);
\draw (-0.39,-0.92) node[text width = 2pt] {$F_r\hat{\mathbf{r}}$};

\draw [dashed] (0,0) circle [radius=1.99cm];
\end{tikzpicture}
\caption{Force diagram showing transmission of internally generated contractile forces into translational and rotational motion. $\hat{\mathbf{r}}$ and $\hat{\boldsymbol{\varphi}}$ are radial and angular unit vectors, respectively. $\dot{\theta}_1$ is the angular velocity, $\mathbf{F}$ is a net contractile force, $F_r$ and $F_{\varphi}$ are radial and tangential components of $\mathbf{F}$, $\mathbf{x}$ and $R_{cell}$ are cell center and radius, respectively.}
\label{fig: five}
\end{wrapfigure}
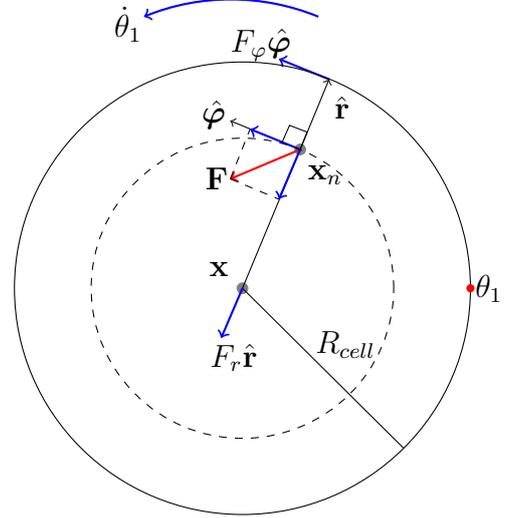
The force $\mathbf{F}$ along the radial vector $\hat{\mathbf{r}}(\mathbf{x}_n)$ is acting on the cell center, thereby inducing translational motion (see Figure \ref{fig: five}). On the other hand, the rotational motion is produced due to $\mathbf{F}$ acting along the tangential vector $\hat{\boldsymbol{\varphi}}(\mathbf{x}_n)$.
The radial and tangential components of the force $\mathbf{F}$ are given by:
\begin{align*}
F_r &:= \mathbf{F}(\mathbf{x}_n,\theta_1,\mathbf{Y})\cdot\hat{\mathbf{r}}(\mathbf{x}_n)\\
F_{\varphi} &:= \mathbf{F}(\mathbf{x}_n,\theta_1,\mathbf{Y})\cdot\hat{\boldsymbol{\varphi}}(\mathbf{x}_n),
\end{align*}
where $\mathbf{x}_n=(x_{n,1},x_{n,2})$ and 
\begin{align*}
\hat{\mathbf{r}}(\mathbf{x}_n) = \frac{\mathbf{x}_n}{\lVert\mathbf{x}_n\rVert}, && \hat{\boldsymbol{\varphi}}(\mathbf{x}_n) = \left(-\frac{x_{n,2}}{\lVert\mathbf{x}_n\rVert},\frac{x_{n,1}}{\lVert\mathbf{x}_n\rVert}\right)^T.
\end{align*}
Note that the characteristic Reynolds number $Re$ is given by 
\begin{align*}
Re = \frac{\rho\cdot s\cdot L}{\nu} \sim10^{-6}-10^{-4},
\end{align*} 
where we assumed the surrounding fluid is water (with corresponding values for density $\rho$ and viscosity $\nu$, and that characteristic cell speed $s$ and size $L$ are $0.1-1\mu m/s$, $L=10-50\mu m$, respectively. Thus, neglecting inertia, we have:
\begin{align}\label{eq: 3}
\dot{\mathbf{x}} &=\beta_{ECM}^{-1}\mathbf{F}(\mathbf{x}_n,\theta_1,\mathbf{Y})\cdot\hat{\mathbf{r}}(\mathbf{x}_n)\hat{\mathbf{r}}(\mathbf{x}_n)\nonumber \\ 
\dot{\mathbf{x}}_n &= \beta_{cell}^{-1}\mathbf{F}(\mathbf{x}_n,\theta_1,\mathbf{Y})\nonumber\\
\dot{\theta}_1 &=  \beta_{rot}^{-1}\lVert\mathbf{x}_n\rVert\mathbf{F}(\mathbf{x}_n,\theta_1,\mathbf{Y})\cdot\hat{\boldsymbol{\varphi}}(\mathbf{x}_n),
\end{align} 
where $\beta_{ECM}$ and $\beta_{rot}$ are, respectively, translational and rotational drag coefficients in the ECM (see Appendix \ref{appendix: equations of motion} for derivations). 

\FloatBarrier 
\subsection{Specification of kinematics}\label{section: kinematics summary}

It is convenient to transform the system above into nondimensional form. In order to do so, we define the following scales:
\begin{itemize}
\item The spatial and cell length scales are defined by cell radii $R_{cell}$.
\item The time scale is set by FA disassociation rate $k^0_{off}$, since FA unbinding of leads to locomotion.
\item The force scale is defined by the characteristic force $F_b$.
\end{itemize}
The constants are to be specified later. Whence we define the new variables:
\begin{align*}
\widetilde{\mathbf{x}} &:= \frac{\mathbf{x}}{R_{cell}}, &\widetilde{\mathbf{x}}_n &:= \frac{\mathbf{x}_n}{R_{cell}}, &\tilde{t} &:= k^0_{off}t
\end{align*}
and transform $\mathbf{F}_i$ from (\ref{eq: Fi}):
\begin{align}\label{eq: new F_i}
\widetilde{\mathbf{F}}_i := \frac{\mathbf{F}_i}{F_b}=
\begin{cases}
\left(\widetilde{T}_i + \widetilde{EA}\frac{\widetilde{L}_i-\widetilde{L}_0}{\widetilde{L}_0}\right)\widetilde{\mathbf{e}}_i,\phantom{abc} \widetilde{L}_0<\widetilde{L}_i\\
\widetilde{T}_i\widetilde{\mathbf{e}}_i,\phantom{asd1sdasdaabcsaa}\widetilde{L}_c\leq \widetilde{L}_i\leq \widetilde{L}_0\\
\frac{\widetilde{L}_i-\widetilde{L}_c+\widetilde{\delta}}{\widetilde{\delta}}\widetilde{T}_i\widetilde{\mathbf{e}}_i, \phantom{sdaasdssda} \widetilde{L}_c-\widetilde{\delta}\leq \widetilde{L}_i<\widetilde{L}_c\\
0 \phantom{sdaasdAADSADAdA} \widetilde{L}_i<\widetilde{L}_c-\widetilde{\delta},
\end{cases}
\end{align}
where
\begin{align*}
\widetilde{L}_i&= \frac{L_i}{R_{cell}}, &\widetilde{L}_0 &= \frac{L_0}{R_{cell}}, &\widetilde{L}_c&= \frac{L_0}{R_{cell}}, &\widetilde{\delta} &= \frac{\delta}{R_{cell}},\\ 
\widetilde{T}_i &= \frac{T_i}{F_b}, &\widetilde{EA} &=\frac{EA}{F_b} &\widetilde{\mathbf{e}}_i&= \frac{\widetilde{\mathbf{x}}_n-\widetilde{\mathbf{x}}_i}{\widetilde{L}_i}, & \widetilde{\mathbf{x}}_i&=\frac{\mathbf{x}_i}{R_{cell}}.
\end{align*}
Note that we have $\widetilde{\mathbf{x}}_n\in\widetilde{\Omega}_{cell} := \left\lbrace (x,y)\in\mathbb{R}^2\phantom{,}|\phantom{,}x^2+y^2\leq1\right\rbrace$ and $\widetilde{\mathbf{x}}_i \in\mathbb{S}^1$.

Let 
\begin{align*}
\widetilde{\mathbf{F}} &:= \mathbf{F}/F_b, &\widetilde{\beta}_{cell} &:= \frac{k^0_{off}R_{cell}}{F_b}\beta_{cell}, &\widetilde{\beta}_{ECM} &:= \frac{k^0_{off}R_{cell}}{F_b}\beta_{ECM}, &\widetilde{\beta}_{rot} &:= \frac{k^0_{off}}{R_{cell}F_b}\beta_{rot}. 
\end{align*}
To complete the specification of cell kinematics between adhesion events, we introduce another discrete variable $\mu(t)\in\left\lbrace 0,1\right\rbrace$:
\begin{align*}
\mu=
\begin{cases*}
1, \text{if the last event was unbinding}\\
0, \text{if the last event was binding}.
\end{cases*}
\end{align*}
Then, plugging in (\ref{eq: binding ODEs}), (\ref{eq: 3}) the rescaled quantities and dropping tildes, it follows that the system evolves according to the following ODE system between the FA events:
\begin{align}\label{eq: system ODE}
\dot{\mathbf{x}} &=\mu\beta_{ECM}^{-1}\mathbf{F}(\mathbf{x}_n,\theta_1,\mathbf{Y})\cdot\hat{\mathbf{r}}(\mathbf{x}_n)\hat{\mathbf{r}}(\mathbf{x}_n)\nonumber \\ 
\dot{\mathbf{x}}_n &= \beta_{cell}^{-1}\mathbf{F}(\mathbf{x}_n,\theta_1,\mathbf{Y})\nonumber\\
\dot{\theta}_1 &=  \mu\beta_{rot}^{-1}\lVert\mathbf{x}_n\rVert\mathbf{F}(\mathbf{x}_n,\theta_1,\mathbf{Y})\cdot\hat{\boldsymbol{\varphi}}(\mathbf{x}_n)
\end{align}
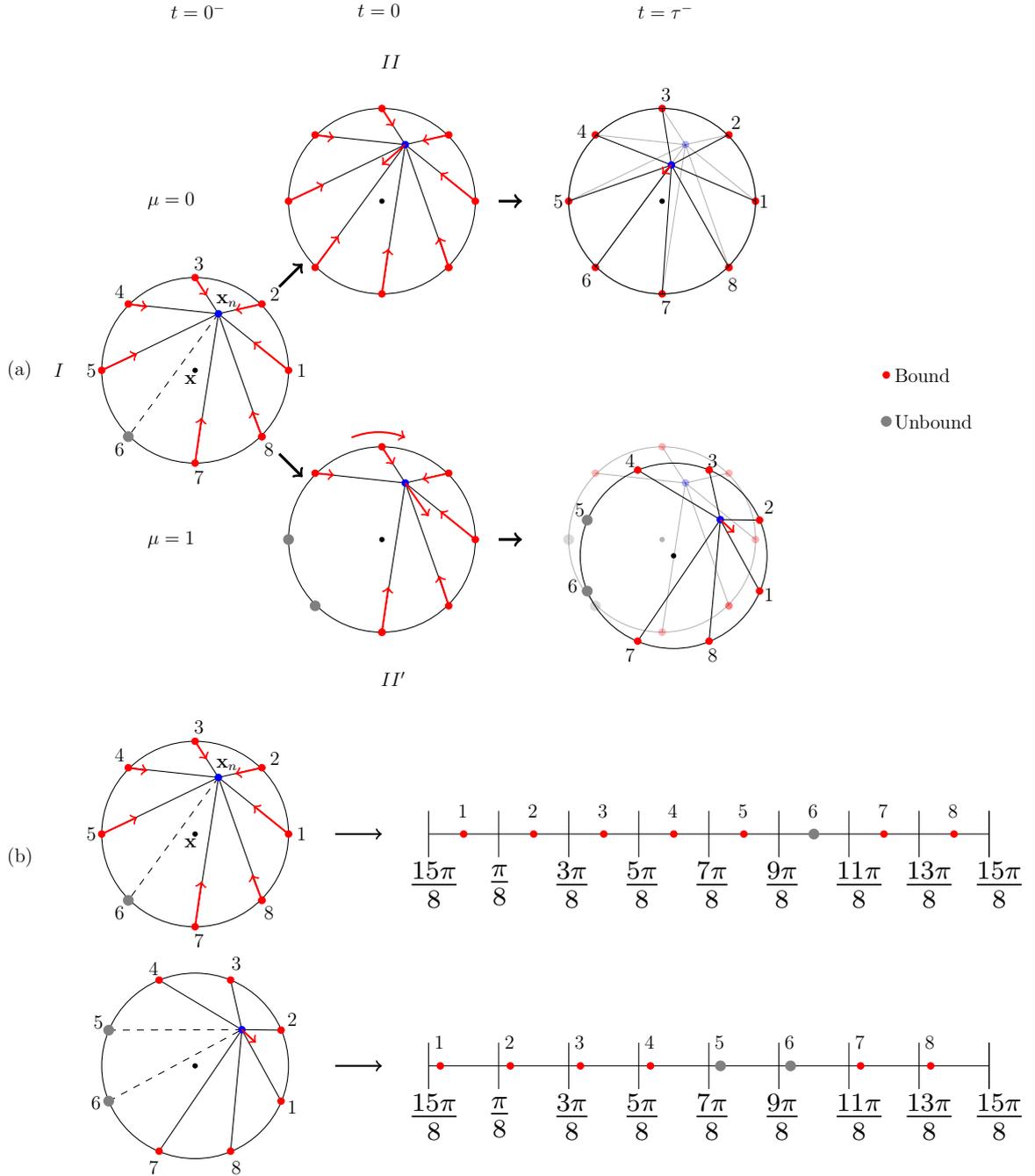
\begin{figure}[H]
\begin{tikzpicture}[scale=0.7,transform shape,every node/.style={scale=1}]
\draw (0.5,1.22) -- (2,0);
\draw (0.5,1.22) -- (0,2);
\draw (0.5,1.22) -- (-2,0);
\draw (0.5,1.22) -- (0,-2);
\draw (0.5,1.22) -- (1.41,1.43);
\draw (0.5,1.22) -- (-1.43,1.43);
\draw (0.5,1.22) -- (1.43,-1.43);
\draw [dashed] (0.5,1.22)-- (-1.43,-1.43);
\draw (0,0) circle [radius=2cm];
\filldraw [blue] (0.5,1.22) circle [radius=2pt];

\filldraw [red] (2,0) circle [radius=2pt];
\filldraw [red] (0,2) circle [radius=2pt];
\filldraw [red] (0,-2) circle [radius=2pt];
\filldraw [red] (-2,0) circle [radius=2pt];
\filldraw [red] (1.43,1.43) circle [radius=2pt];
\filldraw [red] (-1.43,1.43) circle [radius=2pt];
\filldraw [red] (1.43,-1.43) circle [radius=2pt];
\filldraw [red] (1.43,-1.43) circle [radius=2pt];
\filldraw [gray] (-1.43,-1.43) circle [radius=3pt];

\filldraw [black] (0,0) circle [radius=1.33pt];
\draw (-0.2,-0.2) node[text width = 1.33pt] {$\mathbf{x}$};
\draw [thick, red][->](2,0) -- (1.25,0.61);
\draw [thick, red][->](1.43,1.43) -- (0.86,1.3);
\draw [thick, red][->](0,2) -- (0.25,1.61);
\draw [thick, red][->](-1.43,1.43) -- (-1.02,1.37);
\draw [thick, red][->](-2,0) -- (-1.24,0.36);
\draw [thick, red][->](0,-2) -- (0.14,-1.02);

\draw [thick, red][->](1.43,-1.43) -- (1.22,-0.88);

\draw (0.48,1.5) node[text width = 1.33pt] {$\mathbf{x}_n$};

\draw [very thick][->] (1.8,1.8) -- (2.3,2.3);
\draw [very thick][->] (1.8,-1.8) -- (2.3,-2.3);

\draw (2.2,0) node[text width = 1.33pt] {$1$};
\draw (1.63,1.63) node[text width = 1.33pt] {$2$};
\draw (0,2.3) node[text width = 1.33pt] {$3$};
\draw (-1.7,1.7) node[text width = 1.33pt] {$4$};
\draw (-2.3,0) node[text width = 1.33pt] {$5$};
\draw (-1.7,-1.7) node[text width = 1.33pt] {$6$};
\draw (0,-2.3) node[text width = 1.33pt] {$7$};
\draw (1.5,-1.7) node[text width = 1.33pt] {$8$};

\draw (-3,0) node[text width = 1.33pt] {$I$};
\draw (-4,0) node[text width = 1.33pt] {(a)};
\draw [yshift=-10pt](4.5,5.22) -- (6,4);
\draw [yshift=-10pt](4.5,5.22) -- (4,6);
\draw [yshift=-10pt](4.5,5.22) -- (2,4);
\draw [yshift=-10pt](4.5,5.22) -- (4,2);
\draw [yshift=-10pt](4.5,5.22) -- (5.41,5.43);
\draw [yshift=-10pt](4.5,5.22) -- (2.57,5.43);
\draw [yshift=-10pt](4.5,5.22) -- (5.43,2.57);
\draw [yshift=-10pt](4.5,5.22) -- (2.57,2.57);
\draw [yshift=-10pt](4,4) circle [radius=2cm];
\filldraw [yshift=-10pt][blue] (4.5,5.22) circle [radius=2pt];

\filldraw [yshift=-10pt][red] (6,4) circle [radius=2pt];
\filldraw [yshift=-10pt][red] (4,6) circle [radius=2pt];
\filldraw [yshift=-10pt][red] (4,2) circle [radius=2pt];
\filldraw [yshift=-10pt][red] (2,4) circle [radius=2pt];
\filldraw [yshift=-10pt][red] (5.43,5.43) circle [radius=2pt];
\filldraw [yshift=-10pt][red] (2.57,5.43) circle [radius=2pt];
\filldraw [yshift=-10pt][red] (5.43,2.57) circle [radius=2pt];
\filldraw [yshift=-10pt][red] (5.43,2.57) circle [radius=2pt];
\filldraw [yshift=-10pt][red] (2.57,2.57) circle [radius=2pt];

\filldraw [yshift=-10pt][black] (4,4) circle [radius=1.33pt];

\draw [yshift=-10pt][thick, red][->](6,4) -- (5.25,4.61);
\draw [yshift=-10pt][thick, red][->](5.43,5.43) -- (4.86,5.3);
\draw [yshift=-10pt][thick, red][->](4,6) -- (4.25,5.61);
\draw [yshift=-10pt][thick, red][->](2.57,5.43) -- (2.98,5.37);
\draw [yshift=-10pt][thick, red][->](2,4) -- (2.76,4.36);
\draw [yshift=-10pt][thick, red][->](4,2) -- (4.14,2.98);
\draw [yshift=-10pt][thick, red][->](2.57,2.57) -- (3.07,3.25);
\draw [yshift=-10pt][thick, red][->](5.43,2.57) -- (5.22,3.22);

\draw [yshift=-10pt][very thick][->] (6.5,4) -- (7,4);

\draw [yshift=-10pt][thick,red][->] (4.5,5.22) -- (4,4.78);

\draw [yshift=-10pt](4,7) node[text width = 1.33pt] {$II$};
\draw [yshift=-10pt][opacity=0.3](10.5,5.22) -- (12,4);
\draw [yshift=-10pt][opacity=0.3](10.5,5.22) -- (10,6);
\draw [yshift=-10pt][opacity=0.3](10.5,5.22) -- (8,4);
\draw [yshift=-10pt][opacity=0.3](10.5,5.22) -- (10,2);
\draw [yshift=-10pt][opacity=0.3](10.5,5.22) -- (11.41,5.43);
\draw [yshift=-10pt][opacity=0.3](10.5,5.22) -- (8.57,5.43);
\draw [yshift=-10pt][opacity=0.3](10.5,5.22) -- (11.43,2.57);
\draw [yshift=-10pt][opacity=0.3](10.5,5.22) -- (8.57,2.57);
\draw [yshift=-10pt][opacity=0.3](10,4) circle [radius=2cm];
\filldraw [yshift=-10pt][opacity=0.3][blue] (10.5,5.22) circle [radius=2pt];

\filldraw [yshift=-10pt][red] (12,4) circle [radius=2pt];
\filldraw [yshift=-10pt][red] (10,6) circle [radius=2pt];
\filldraw [yshift=-10pt][red] (10,2) circle [radius=2pt];
\filldraw [yshift=-10pt][red] (8,4) circle [radius=2pt];
\filldraw [yshift=-10pt][red] (11.43,5.43) circle [radius=2pt];
\filldraw [yshift=-10pt][red] (8.57,5.43) circle [radius=2pt];
\filldraw [yshift=-10pt][red] (11.43,2.57) circle [radius=2pt];
\filldraw [yshift=-10pt][red] (11.43,2.57) circle [radius=2pt];
\filldraw [yshift=-10pt][red] (8.57,2.57) circle [radius=2pt];

\filldraw [yshift=-10pt][black][opacity=0.3] (10,4) circle [radius=1.33pt];

\draw [yshift = -10pt](10.2,4.78) -- (12,4);
\draw [yshift = -10pt](10.2,4.78) -- (10,6);
\draw [yshift = -10pt](10.2,4.78) -- (8,4);
\draw [yshift = -10pt](10.2,4.78) -- (10,2);
\draw [yshift = -10pt](10.2,4.78) -- (11.41,5.43);
\draw [yshift = -10pt](10.2,4.78) -- (8.57,5.43);
\draw [yshift = -10pt](10.2,4.78) -- (11.43,2.57);
\draw [yshift = -10pt](10.2,4.78) -- (8.57,2.57);
\draw [yshift = -10pt](10,4) circle [radius=2cm];
\filldraw [yshift = -10pt][blue] (10.2,4.78) circle [radius=2pt];

\draw [yshift = -10pt] (12.1,4) node[text width = 1.33pt] {$1$};
\draw [yshift = -10pt] (10,6.3) node[text width = 1.33pt] {$3$};
\draw [yshift = -10pt] (7.7,4) node[text width = 1.33pt] {$5$};
\draw [yshift = -10pt] (10,1.7) node[text width = 1.33pt] {$7$};
\draw [yshift = -10pt] (11.5,5.7) node[text width = 1.33pt] {$2$};
\draw [yshift = -10pt] (8.2,5.5) node[text width = 1.33pt] {$4$};
\draw [yshift = -10pt] (11.43,2.2) node[text width = 1.33pt] {$8$};
\draw [yshift = -10pt] (8.3,2.3) node[text width = 1.33pt] {$6$};

\filldraw [black][yshift = -10pt] (10,4) circle [radius=1.33pt];
\draw [yshift = -10pt][thick,red][->] (10.2,4.78) -- (10,4.58);

\draw [yshift=10pt](4.5,-2.78) -- (6,-4);
\draw [yshift=10pt](4.5,-2.78) -- (4,-2);
\draw [yshift=10pt](4.5,-2.78) -- (4,-6);
\draw [yshift=10pt](4.5,-2.78) -- (5.41,-2.57);
\draw [yshift=10pt](4.5,-2.78) -- (2.57,-2.57);
\draw [yshift=10pt](4.5,-2.78) -- (5.43,-5.43);
\draw [yshift=10pt](4,-4) circle [radius=2cm];
\filldraw [yshift=10pt][blue] (4.5,-2.78) circle [radius=2pt];

\filldraw [yshift=10pt][red] (6,-4) circle [radius=2pt];
\filldraw [yshift=10pt][red] (4,-6) circle [radius=2pt];
\filldraw [yshift=10pt][red] (4,-6) circle [radius=2pt];
\filldraw [yshift=10pt][gray] (2,-4) circle [radius=3pt];
\filldraw [yshift=10pt][red] (5.43,-2.57) circle [radius=2pt];
\filldraw [yshift=10pt][red] (2.57,-2.57) circle [radius=2pt];
\filldraw [yshift=10pt][red] (4,-2) circle [radius=2pt];
\filldraw [yshift=10pt][red] (5.43,-5.43) circle [radius=2pt];
\filldraw [yshift=10pt][gray] (2.57,-5.43) circle [radius=3pt];

\filldraw [yshift=10pt][black] (4,-4) circle [radius=1.33pt];

\draw [yshift=10pt][thick, red][->](6,-4) -- (5.25,-3.39);
\draw [yshift=10pt][thick, red][->](5.43,-2.57) -- (4.86,-2.7);
\draw [yshift=10pt][thick, red][->](4,-2) -- (4.25,-2.39);
\draw [yshift=10pt][thick, red][->](2.57,-2.57) -- (2.98,-2.61);
\draw [yshift=10pt][thick, red][->](4,-6) -- (4.14,-5.02);
\draw [yshift=10pt][thick, red][->](5.43,-5.43) -- (5.22,-4.78);

\draw [yshift=10pt][very thick][->] (6.5,-4) -- (7,-4);

\draw [yshift=10pt][thick,red][->] (4.5,-2.78) -- (5,-3.5);

\draw [yshift=10pt][<-,thick, red] (4.5,-1.8)arc[start angle=67.5, end angle=112.5, radius=1.5cm];
\draw [yshift=10pt](4,-7) node[text width = 1.33pt] {$II'$};
\draw [yshift=10pt][opacity=0.3](10.5,-2.78) -- (8.57,-2.57);
\draw [yshift=10pt][opacity=0.3](10.5,-2.78) -- (12,-4);
\draw [yshift=10pt][opacity=0.3](10.5,-2.78) -- (10,-6);
\draw [yshift=10pt][opacity=0.3](10.5,-2.78) -- (11.41,-2.57);
\draw [yshift=10pt][opacity=0.3](10.5,-2.78) -- (10,-2);
\draw [yshift=10pt][opacity=0.3](10.5,-2.78) -- (11.43,-5.43);
\draw [yshift=10pt][opacity=0.3](10,-4) circle [radius=2cm];
\filldraw [yshift=10pt][opacity=0.3][blue] (10.5,-2.78) circle [radius=2pt];

\filldraw [yshift=10pt][opacity=0.3][red] (12,-4) circle [radius=2pt];
\filldraw [yshift=10pt][opacity=0.3][red] (10,-2) circle [radius=2pt];
\filldraw [yshift=10pt][red][opacity=0.3] (10,-6) circle [radius=2pt];
\filldraw [yshift=10pt][gray][opacity=0.3] (8,-4) circle [radius=3pt];
\filldraw [yshift=10pt][red][opacity=0.3] (11.43,-2.57) circle [radius=2pt];
\filldraw [yshift=10pt][red][opacity=0.3] (8.57,-2.57) circle [radius=2pt];
\filldraw [yshift=10pt][red][opacity=0.3] (11.43,-5.43) circle [radius=2pt];
\filldraw [yshift=10pt][red][opacity=0.3] (11.43,-5.43) circle [radius=2pt];
\filldraw [yshift=10pt][gray][opacity=0.3] (8.57,-5.43) circle [radius=3pt];

\filldraw [yshift=10pt][black][opacity=0.3] (10,-4) circle [radius=1.33pt];

\draw [xshift = 7pt,yshift = -0pt](11,-3.22) -- (11.84,-4.76);
\draw [xshift = 7pt,yshift = -0pt](11,-3.22) -- (11.84,-3.23);
\draw [xshift = 7pt,yshift = -0pt](11,-3.22) -- (10.76,-2.15);
\draw [xshift = 7pt,yshift = -0pt](11,-3.22) -- (9.23,-2.15);
\draw [xshift = 7pt,yshift = -0pt](11,-3.22) -- (9.23,-5.84);
\draw [xshift = 7pt,yshift = -0pt](11,-3.22) -- (10.76,-5.84);
\draw [xshift = 7pt,yshift = -0pt](10,-4) circle [radius=2cm];
\filldraw [xshift = 7pt,yshift = -0pt][blue] (11,-3.22) circle [radius=2pt];

\filldraw [xshift = 7pt,yshift = -0pt][red] (11.84,-4.76) circle [radius=2pt];
\filldraw [xshift = 7pt,yshift = -0pt][red] (11.84,-3.23) circle [radius=2pt];
\filldraw [red][xshift = 7pt,yshift = -0pt] (10.76,-2.15) circle [radius=2pt];
\filldraw [red][xshift = 7pt,yshift = -0pt] (9.23,-2.15) circle [radius=2pt];
\filldraw [gray][xshift = 7pt,yshift = -0pt] (8.15,-3.23) circle [radius=3pt];
\filldraw [red][xshift = 7pt,yshift = -0pt] (9.23,-5.84) circle [radius=2pt];
\filldraw [red][xshift = 7pt,yshift = -0pt] (10.76,-5.84) circle [radius=2pt];
\filldraw [gray][xshift = 7pt,yshift = -0pt] (8.15,-4.76) circle [radius=3pt];

\filldraw [black][xshift = 7pt,yshift = -0pt](10,-4) circle [radius=1.33pt];

\draw [yshift=10pt](12.2,-5.2) node[text width = 1pt] {$1$};
\draw [yshift=10pt](12.2,-3.3) node[text width = 1pt] {$2$};
\draw [yshift=10pt](11,-2.3) node[text width = 1pt] {$3$};
\draw [yshift=10pt](9.23,-2.3) node[text width = 1pt] {$4$};
\draw [yshift=10pt](8.15,-3.35) node[text width = 1pt] {$5$};
\draw [yshift=10pt](8,-5) node[text width = 1pt] {$6$};
\draw [yshift=10pt](9.23,-6.5) node[text width = 1pt] {$7$};
\draw [yshift=10pt](11,-6.5) node[text width = 1pt] {$8$};

\draw [xshift = 7pt,yshift = -0pt,red,->,thick](11,-3.22) -- (11.3,-3.5);


\draw [yshift=10pt](9.5,7.4) node[text width = 1pt] {\text{$t=\tau^-$}};
\draw [yshift=10pt](3.5,7.4) node[text width = 1pt] {\text{$t=0$}};
\draw [yshift=10pt](-0.5,7.4) node[text width = 1pt] {\text{$t=0^-$}};

\draw [yshift=10pt](-1,-4) node[text width = 1pt] {\text{$\mu=1$}};
\draw [yshift=-10pt](-1,4) node[text width = 1pt] {\text{$\mu=0$}};


\draw [yshift = -10cm](0.5,1.22) -- (2,0);
\draw [yshift = -10cm](0.5,1.22) -- (0,2);
\draw [yshift = -10cm](0.5,1.22) -- (-2,0);
\draw [yshift = -10cm](0.5,1.22) -- (0,-2);
\draw [yshift = -10cm](0.5,1.22) -- (1.41,1.43);
\draw [yshift = -10cm](0.5,1.22) -- (-1.43,1.43);
\draw [yshift = -10cm](0.5,1.22) -- (1.43,-1.43);
\draw [yshift = -10cm][dashed] (0.5,1.22)-- (-1.43,-1.43);
\draw [yshift = -10cm](0,0) circle [radius=2cm];
\filldraw [yshift = -10cm][blue] (0.5,1.22) circle [radius=2pt];

\filldraw [yshift = -10cm][red] (2,0) circle [radius=2pt];
\filldraw [yshift = -10cm][red] (0,2) circle [radius=2pt];
\filldraw [yshift = -10cm][red] (0,-2) circle [radius=2pt];
\filldraw [yshift = -10cm][red] (-2,0) circle [radius=2pt];
\filldraw [yshift = -10cm][red] (1.43,1.43) circle [radius=2pt];
\filldraw [yshift = -10cm][red] (-1.43,1.43) circle [radius=2pt];
\filldraw [yshift = -10cm][red] (1.43,-1.43) circle [radius=2pt];
\filldraw [yshift = -10cm][red] (1.43,-1.43) circle [radius=2pt];
\filldraw [yshift = -10cm][gray] (-1.43,-1.43) circle [radius=3pt];

\filldraw [yshift = -10cm][black] (0,0) circle [radius=1.33pt];
\draw [yshift = -10cm](-0.2,-0.2) node[text width = 1.33pt] {$\mathbf{x}$};
\draw [yshift = -10cm][thick, red][->](2,0) -- (1.25,0.61);
\draw [yshift = -10cm][thick, red][->](1.43,1.43) -- (0.86,1.3);
\draw [yshift = -10cm][thick, red][->](0,2) -- (0.25,1.61);
\draw [yshift = -10cm][thick, red][->](-1.43,1.43) -- (-1.02,1.37);
\draw [yshift = -10cm][thick, red][->](-2,0) -- (-1.24,0.36);
\draw [yshift = -10cm][thick, red][->](0,-2) -- (0.14,-1.02);

\draw [yshift = -10cm][thick, red][->](1.43,-1.43) -- (1.22,-0.88);

\draw [yshift = -10cm](0.48,1.5) node[text width = 1.33pt] {$\mathbf{x}_n$};

\draw [yshift = -10cm](2.2,0) node[text width = 1.33pt] {$1$};
\draw [yshift = -10cm](1.63,1.63) node[text width = 1.33pt] {$2$};
\draw [yshift = -10cm](0,2.3) node[text width = 1.33pt] {$3$};
\draw [yshift = -10cm](-1.7,1.7) node[text width = 1.33pt] {$4$};
\draw [yshift = -10cm](-2.3,0) node[text width = 1.33pt] {$5$};
\draw [yshift = -10cm](-1.7,-1.7) node[text width = 1.33pt] {$6$};
\draw [yshift = -10cm](0,-2.3) node[text width = 1.33pt] {$7$};
\draw [yshift = -10cm](1.5,-1.7) node[text width = 1.33pt] {$8$};
\draw [yshift = -10cm][thick,black][->](3,0) -- (4,0);

\draw [yshift = -10cm,xshift = -0cm][black](5,0) -- (17,0);
\draw [yshift = -10cm,xshift = -0cm](5,-0.5) -- (5,0.5);
\draw [yshift = -10cm,xshift = -0cm](17,-0.5) -- (17,0.5);

\foreach \i in {1,...,8}
	\draw [yshift = -10cm,xshift=\i*1.5cm](5,-0.5)--(5,0.5); 
\foreach \i in {0,...,7}
	\filldraw [red,yshift = -10cm,xshift=\i*1.5cm+0.75cm](5,0) circle [radius=2pt];
\filldraw [gray,yshift = -10cm, xshift = 8.25cm] (5,0) circle [radius=3pt];
\draw [yshift = -10cm,xshift = 0.75cm] (4.9,0.5) node[text width = 1.33pt] {$1$};
\draw [yshift = -10cm,xshift = 2.25cm] (4.9,0.5) node[text width = 1.33pt] {$2$};
\draw [yshift = -10cm,xshift = 3.75cm] (4.9,0.5) node[text width = 1.33pt] {$3$};
\draw [yshift = -10cm,xshift = 5.25cm] (4.9,0.5) node[text width = 1.33pt] {$4$};
\draw [yshift = -10cm,xshift = 6.75cm] (4.9,0.5) node[text width = 1.33pt] {$5$};
\draw [yshift = -10cm,xshift = 8.25cm] (4.9,0.5) node[text width = 1.33pt] {$6$};
\draw [yshift = -10cm,xshift = 9.75cm] (4.9,0.5) node[text width = 1.33pt] {$7$};
\draw [yshift = -10cm,xshift = 11.25cm] (4.9,0.5) node[text width = 1.33pt] {$8$};

\draw [yshift = -10cm,xshift = -0cm] (4.6,-1.1) node[scale = 2, text width = 1.33pt] {$\frac{15\pi}{8}$};
\draw [yshift = -10cm,xshift = 1.5cm] (4.8,-1.1) node[scale = 2, text width = 1.33pt] {$\frac{\pi}{8}$};
\draw [yshift = -10cm,xshift = 3.0cm] (4.7,-1.1) node[scale = 2, text width = 1.33pt] {$\frac{3\pi}{8}$};
\draw [yshift = -10cm,xshift = 4.5cm] (4.7,-1.1) node[scale = 2, text width = 1.33pt] {$\frac{5\pi}{8}$};
\draw [yshift = -10cm,xshift = 6.0cm] (4.7,-1.1) node[scale = 2, text width = 1.33pt] {$\frac{7\pi}{8}$};
\draw [yshift = -10cm,xshift = 7.5cm] (4.7,-1.1) node[scale = 2, text width = 1.33pt] {$\frac{9\pi}{8}$};
\draw [yshift = -10cm,xshift = 9.0cm] (4.7,-1.1) node[scale = 2, text width = 1.33pt] {$\frac{11\pi}{8}$};
\draw [yshift = -10cm,xshift = 10.5cm] (4.7,-1.1) node[scale = 2, text width = 1.33pt] {$\frac{13\pi}{8}$};
\draw [yshift = -10cm,xshift = 12.0cm] (4.7,-1.1) node[scale = 2, text width = 1.33pt] {$\frac{15\pi}{8}$};


\draw [xshift = -10cm,yshift = -11cm](11,-3.22) -- (11.84,-4.76);
\draw [xshift = -10cm,yshift = -11cm](11,-3.22) -- (11.84,-3.23);
\draw [xshift = -10cm,yshift = -11cm](11,-3.22) -- (10.76,-2.15);
\draw [xshift = -10cm,yshift = -11cm](11,-3.22) -- (9.23,-2.15);
\draw [xshift = -10cm,yshift = -11cm](11,-3.22) -- (9.23,-5.84);
\draw [xshift = -10cm,yshift = -11cm](11,-3.22) -- (10.76,-5.84);
\draw [xshift = -10cm,yshift = -11cm,dashed](8.15,-4.76)--(11,-3.22);
\draw [xshift = -10cm,yshift = -11cm,dashed](8.15,-3.23)--(11,-3.22);
\draw [xshift = -10cm,yshift = -11cm](10,-4) circle [radius=2cm];
\filldraw [xshift = -10cm,yshift = -11cm][blue] (11,-3.22) circle [radius=2pt];

\filldraw [xshift = -10cm,yshift = -11cm][red] (11.84,-4.76) circle [radius=2pt];
\filldraw [xshift = -10cm,yshift = -11cm][red] (11.84,-3.23) circle [radius=2pt];
\filldraw [red][xshift = -10cm,yshift = -11cm] (10.76,-2.15) circle [radius=2pt];
\filldraw [red][xshift = -10cm,yshift = -11cm] (9.23,-2.15) circle [radius=2pt];
\filldraw [gray][xshift = -10cm,yshift = -11cm] (8.15,-3.23) circle [radius=3pt];
\filldraw [red][xshift = -10cm,yshift = -11cm] (9.23,-5.84) circle [radius=2pt];
\filldraw [red][xshift = -10cm,yshift = -11cm] (10.76,-5.84) circle [radius=2pt];
\filldraw [gray][xshift = -10cm,yshift = -11cm] (8.15,-4.76) circle [radius=3pt];

\filldraw [black][xshift = -10cm,yshift = -11cm](10,-4) circle [radius=1.33pt];

\draw [xshift = -10.2cm,yshift=-10.7cm](12.2,-5.2) node[text width = 1.33pt] {$1$};
\draw [xshift = -10.2cm,yshift=-10.7cm](12.2,-3.3) node[text width = 1.33pt] {$2$};
\draw [xshift = -10.2cm,yshift=-10.5cm](11,-2.3) node[text width = 1.33pt] {$3$};
\draw [xshift = -10.2cm,yshift=-10.6cm](9.23,-2.3) node[text width = 1.33pt] {$4$};
\draw [xshift = -10.3cm,yshift=-10.7cm](8.15,-3.35) node[text width = 1.33pt] {$5$};
\draw [xshift = -10.2cm,yshift=-10.8cm](8,-5) node[text width = 1.33pt] {$6$};
\draw [xshift = -10.2cm,yshift=-10.7cm](9.23,-6.5) node[text width = 1.33pt] {$7$};
\draw [xshift = -10.2cm,yshift=-10.7cm](11,-6.5) node[text width = 1.33pt] {$8$};

\draw [xshift = -10cm,yshift = -11cm,red,->,thick](11,-3.22) -- (11.3,-3.5);

\draw [thick,black,yshift = -15cm][->](3,0) -- (4,0);

\draw [black,yshift = -15cm](5,0) -- (17,0);
\draw [yshift = -15cm](5,-0.5) -- (5,0.5);
\draw [yshift = -15cm](17,-0.5) -- (17,0.5);

\foreach \i in {1,...,8}
	\draw [yshift = -15cm,xshift=\i*1.5cm](5,-0.5)--(5,0.5); 
\foreach \i in {0,...,7}
	\filldraw [red,yshift = -15cm,xshift=\i*1.5cm+0.25cm](5,0) circle [radius=2pt];
\filldraw [gray, yshift = -15cm, xshift = 7.75cm] (5,0) circle [radius=3pt];
\filldraw [gray, yshift = -15cm, xshift = 6.25cm] (5,0) circle [radius=3pt];
\draw [xshift = 0.25cm, yshift = -15cm] (4.9,0.5) node[text width = 1.33pt] {$1$};
\draw [xshift = 1.75cm, yshift = -15cm] (4.9,0.5) node[text width = 1.33pt] {$2$};
\draw [xshift = 3.25cm, yshift = -15cm] (4.9,0.5) node[text width = 1.33pt] {$3$};
\draw [xshift = 4.75cm, yshift = -15cm] (4.9,0.5) node[text width = 1.33pt] {$4$};
\draw [xshift = 6.25cm, yshift = -15cm] (4.9,0.5) node[text width = 1.33pt] {$5$};
\draw [xshift = 7.75cm, yshift = -15cm] (4.9,0.5) node[text width = 1.33pt] {$6$};
\draw [xshift = 9.25cm, yshift = -15cm] (4.9,0.5) node[text width = 1.33pt] {$7$};
\draw [xshift = 10.75cm, yshift = -15cm] (4.9,0.5) node[text width = 1.33pt] {$8$};

\draw [xshift = 0cm, yshift = -15cm] (4.6,-1.1) node[scale = 2, text width = 1.33pt] {$\frac{15\pi}{8}$};
\draw [xshift = 1.5cm, yshift = -15cm] (4.8,-1.1) node[scale = 2, text width = 1.33pt] {$\frac{\pi}{8}$};
\draw [xshift = 3.0cm, yshift = -15cm] (4.7,-1.1) node[scale = 2, text width = 1.33pt] {$\frac{3\pi}{8}$};
\draw [xshift = 4.5cm, yshift = -15cm] (4.7,-1.1) node[scale = 2, text width = 1.33pt] {$\frac{5\pi}{8}$};
\draw [xshift = 6.0cm, yshift = -15cm] (4.7,-1.1) node[scale = 2, text width = 1.33pt] {$\frac{7\pi}{8}$};
\draw [xshift = 7.5cm, yshift = -15cm] (4.7,-1.1) node[scale = 2, text width = 1.33pt] {$\frac{9\pi}{8}$};
\draw [xshift = 9.0cm, yshift = -15cm] (4.7,-1.1) node[scale = 2, text width = 1.33pt] {$\frac{11\pi}{8}$};
\draw [xshift = 10.5cm, yshift = -15cm] (4.7,-1.1) node[scale = 2, text width = 1.33pt] {$\frac{13\pi}{8}$};
\draw [xshift = 12.0cm, yshift = -15cm] (4.7,-1.1) node[scale = 2, text width = 1.33pt] {$\frac{15\pi}{8}$};

\draw [yshift = -10.5cm](-4,0) node[text width = 1.33pt] {(b)};

\draw [xshift = 10.5cm, yshift = 1cm] (4.5,-1.1) node[scale = 1, text width = 1.33pt] {Bound};
\filldraw [red, yshift = 1cm, xshift = 10.5cm] (4.3,-1.1) circle [radius=2pt];
\draw [xshift = 10.5cm, yshift = 0cm] (4.5,-1.1) node[scale = 1, text width = 1.33pt] {Unbound};
\filldraw [gray, yshift = 0cm, xshift = 10.5cm] (4.3,-1.1) circle [radius=3pt];
\end{tikzpicture}
\caption{(a) Schematic representation of the migration cycle between adhesion events. Suppose that just before an event occurs at time $t=0$, the cell is in state $I$. If at time $t=0$ (de)adhesion occurs, the cell jumps into state the ($II'$)$II$ and the system evolves according to {\eqref{eq: system ODE}} until the next event occurs at time $t=\tau$, after which the cycle begins anew. The scenarios can be characterized as ``run" and ``tumble" phases in the bottom and top panels, respectively. (b) Schematic representation of the FA positions projected on cell's circumference at $t=0^-$ and $t=\tau^-$ in the top and bottom panels, respectively.}
\label{fig: motion cycle2}    
\end{figure}

\textbf{Remark.} Our assumption on constant relative distance between FA sites stems from two slightly weaker assumptions: 1) total number of adhesion sites (occupied and unoccupied) is constant; 2) there is a neighborhood around each adhesion site, in which no other site is present, and the size of this neighborhood is the same (and constant) for each site. Figure \ref{fig: motion cycle2} how it reflects on their peripheral motion. This assumption implies that in each line segment of size $2\pi/M$ (with $M=8$) there is only one FA site present, which may correspond to bound (in red) or unbound FA (in gray). See also Appendix \ref{appendix: parameters} on our discussion on the number of FA sites. 

\section{FA event model}\label{section: jump process}
In the previous section we constructed a model of cell motion between FA events. Following \cite{G92}, here we construct a stochastic model describing the random adhesion/deadhesion events and their arrival times. The discussion here differs from the standard approach of the Gillespie algorithm in \cite{G92}, as we do not assume that the propensity functions vary inappreciably between the reactions. Moreover, it provides a connection to the theory of PDMPs, as the forms of the objects, necessary to define a piecewise deterministic process (see the next section), follow from the derivations here.

\subsection{Focal adhesion events}\label{section: FA jumps}
Since there are $M$ FAs and since each FA can participate in only two reactions (binding and unbinding), then there are $2M$ total possible reactions. We adopt the following convention for enumerating reactions: reaction $j$ corresponds to a \textit{binding} reaction of the FA site $i=(j+1)/2$ if $j$ is odd; otherwise reaction $j$ corresponds to an \textit{unbinding} reaction of the FA site $i=j/2$. Let $\mathbf{Y}(t)$ be defined as before.  

Let $a_j(\mathbf{y},t)dt$ be the probability, given $\mathbf{Y}(t)=\mathbf{y}\in\left\lbrace 0,1\right\rbrace^M$, $\mu(t)$, $(\mathbf{x}(t),\mathbf{x}_n(t),\theta_1(t))$ and time $t$, that reaction $j$ will occur in the time interval $[t,t+dt)$. For clarity, we suppress here the dependence of the rate $a_j(\mathbf{y},\cdot)$ on $(\mathbf{x}(\cdot),\mathbf{x}_n(\cdot),\theta_1(\cdot))$ and $\mu(\cdot)$. We assume that the rate $a_j$ satisfies the following:
\begin{align}\label{assumption: nonzero a_j}
a_j(\mathbf{y},t) = 
\begin{cases*}
0, \text{ if $j$ is odd and $y_{(j+1)/2}=1$}\\
0, \text{ if $j$ is even and $y_{j/2}=0$}\\
\neq 0,\text{ else.}
\end{cases*}
\end{align}
That is, if the FA is (un)bound, the probability of the (un)binding reaction is zero; if the FA is (un)bound, the probability of (binding) unbinding is nonzero. This implies that $a_j(\mathbf{y},t)\neq 0$ for at least one $j\in\left\lbrace 1,\ldots,2M\right\rbrace$.\footnote{For each FA site $i\in\left\lbrace 1,\ldots,M\right\rbrace$, either $a_{2i-1}$ or $a_{2i}$ is nonzero.}

\begin{lemma}\label{lemma: no FA event}
Let $\mathbf{Y}(t)=\mathbf{y}$. Then the probability that no FA event occurs in the time interval $[t,t+dt)$ is $1-\sum_{j=1}^{2M}a_j(\mathbf{y},t)dt + o(dt)$.  
\end{lemma}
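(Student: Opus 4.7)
The plan is to compute the complement, namely the probability that \emph{at least one} of the $2M$ possible FA events occurs in $[t,t+dt)$, and then subtract from $1$. Let $E_j$ denote the event ``reaction $j$ fires in $[t,t+dt)$'' and recall from the definition of the propensity that $\mathbb{P}(E_j \mid \mathbf{Y}(t)=\mathbf{y}) = a_j(\mathbf{y},t)\,dt + o(dt)$. The event ``no FA event in $[t,t+dt)$'' is precisely the complement $\bigl(\bigcup_{j=1}^{2M} E_j\bigr)^{c}$, so the target probability is $1 - \mathbb{P}\bigl(\bigcup_j E_j \,\big|\, \mathbf{Y}(t)=\mathbf{y}\bigr)$.

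The main step is to apply inclusion--exclusion and argue that all intersection terms $\mathbb{P}(E_{j_1}\cap E_{j_2}\mid\mathbf{Y}(t)=\mathbf{y})$ with $j_1\neq j_2$ are $o(dt)$. This is the standard ``no two jumps coincide in an infinitesimal window'' argument for point processes with bounded intensities: since each individual rate $a_j(\mathbf{y},t)$ is assumed locally bounded (the propensities are specified continuous functions of the continuous state via Section~\ref{section: adhesion kinetics}), the joint probability of two independent-in-the-infinitesimal-limit events in a window of length $dt$ scales as $O(dt^2) = o(dt)$. Therefore
\begin{equation*}
\mathbb{P}\!\left(\bigcup_{j=1}^{2M} E_j \,\Big|\, \mathbf{Y}(t)=\mathbf{y}\right) \;=\; \sum_{j=1}^{2M} a_j(\mathbf{y},t)\,dt \;+\; o(dt),
\end{equation*}
and subtracting from $1$ yields the claimed expression.

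I expect the only subtlety is to make the ``$o(dt)$ for double events'' step rigorous in a way consistent with the paper's framework, where the propensities depend on the continuously evolving deterministic state $(\mathbf{x},\mathbf{x}_n,\theta_1,\mu)$. One can sidestep this by appealing to the construction paradigm of~\cite{G92}: conditional on $\mathbf{Y}(t)=\mathbf{y}$ and the deterministic trajectory on $[t,t+dt)$, each reaction $j$ has firing probability $a_j(\mathbf{y},t)dt + o(dt)$, and the reactions are mutually exclusive on the FA sites by~\eqref{assumption: nonzero a_j} (only one of $a_{2i-1}, a_{2i}$ is nonzero per site). Summing the single-event probabilities over $j$ then gives the total event probability directly, without needing inclusion--exclusion for pairs on the same site; pairs on different sites contribute $o(dt)$ by the bounded-intensity argument above. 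That will be the only technical point worth a sentence in the writeup; everything else is routine.
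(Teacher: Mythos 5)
Your proposal is correct and follows essentially the same route as the paper: the paper writes the no-event probability as the product $\prod_{j=1}^{2M}\left(1-a_j(\mathbf{y},t)\,dt\right)$ and expands, which is just the complementary-side version of your inclusion--exclusion on $\bigcup_{j} E_j$, and in both cases the entire argument rests on multi-event terms being $o(dt)$. Your explicit observation that the pairwise intersections are $o(dt)$ (rather than tacitly assuming independence of the $2M$ reaction channels so that the product formula holds) is, if anything, a slightly more careful phrasing of the same step.
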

\begin{proof}
Using the definition of $a_j$, the probability that reaction $j$ does not happen is $1-a_j(\mathbf{y},t)dt$. Then, the probability that no FA reaction occurs is:
\begin{align*}
\prod_{j=1}^{2M}\left(1-a_j(\mathbf{y},t)dt\right) = 1-\sum_{j=1}^{2M}a_j(\mathbf{y},t)dt + o(dt).
\end{align*}
\end{proof}

Let $K(\tau,j|t,\mathbf{y})d\tau$ be the probability, given $\mathbf{Y}(t)=\mathbf{y}$ and $(\mathbf{x}(t),\mathbf{x}_n(t),\theta_1(t))$ at time $t$, that the \textit{next} reaction will occur in the time interval $[t+\tau,t+\tau+d\tau)$ \textit{and} will be reaction $j$. Here, again, we suppress for clarity the dependence on $\mathbf{x}$, $\mathbf{x}_n$, $\theta_1$. 

\begin{proposition}\label{proposition: next reaction probability}
Let $\tau>0$ and $\mathbf{Y}(t)=\mathbf{y}$. Then,
\begin{align*}
K(\tau,j|t,\mathbf{y})=a_j(\mathbf{y},t+\tau)\exp\left(-\int_{t}^{t+\tau}\sum_{j'=1}^{2M}a_{j'}(\mathbf{y},\tau')d\tau'\right).
\end{align*}
\end{proposition}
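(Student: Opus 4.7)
The plan is to condition on the event ``no reaction fires in $[t,t+\tau)$'' and then multiply by the infinitesimal probability that reaction $j$ fires in $[t+\tau,t+\tau+d\tau)$. Since no event has occurred, $\mathbf{Y}$ remains equal to $\mathbf{y}$ throughout $[t,t+\tau)$, and the trajectory $(\mathbf{x},\mathbf{x}_n,\theta_1)$ evolves deterministically according to \eqref{eq: system ODE}. Consequently the rates $a_{j'}(\mathbf{y},\cdot)$ are deterministic functions of time along this trajectory, which makes the analysis essentially that of an inhomogeneous Poisson-type waiting time.

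First I would define $P_0(\tau \mid t,\mathbf{y})$ as the probability that no FA event occurs during $[t,t+\tau)$ given $\mathbf{Y}(t)=\mathbf{y}$. Partitioning $[t,t+\tau+d\tau)$ into $[t,t+\tau)\cup[t+\tau,t+\tau+d\tau)$ and applying Lemma \ref{lemma: no FA event} to the second sub-interval (with $\mathbf{Y}(t+\tau)=\mathbf{y}$ on the event of interest), I would obtain
\begin{align*}
P_0(\tau+d\tau \mid t,\mathbf{y}) = P_0(\tau \mid t,\mathbf{y})\Bigl(1-\sum_{j'=1}^{2M}a_{j'}(\mathbf{y},t+\tau)\,d\tau + o(d\tau)\Bigr).
\end{align*}
Dividing by $d\tau$ and sending $d\tau\to 0$ yields the ODE
\begin{align*}
\frac{d}{d\tau}P_0(\tau \mid t,\mathbf{y}) = -\Bigl(\sum_{j'=1}^{2M}a_{j'}(\mathbf{y},t+\tau)\Bigr)P_0(\tau \mid t,\mathbf{y}), \qquad P_0(0\mid t,\mathbf{y})=1,
\end{align*}
whose unique solution is $P_0(\tau\mid t,\mathbf{y}) = \exp\bigl(-\int_t^{t+\tau}\sum_{j'}a_{j'}(\mathbf{y},\tau')\,d\tau'\bigr)$.

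Next, by the definition of $a_j$, conditional on no event occurring in $[t,t+\tau)$ the probability that reaction $j$ fires in $[t+\tau,t+\tau+d\tau)$ is $a_j(\mathbf{y},t+\tau)\,d\tau + o(d\tau)$. Multiplying the two independent contributions gives
\begin{align*}
K(\tau,j\mid t,\mathbf{y})\,d\tau = P_0(\tau\mid t,\mathbf{y})\cdot a_j(\mathbf{y},t+\tau)\,d\tau + o(d\tau),
\end{align*}
which, after dropping the $o(d\tau)$ term, is the claimed formula.

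The main technical obstacle is making the infinitesimal step rigorous, since Lemma \ref{lemma: no FA event} controls only a single $o(d\tau)$ remainder whereas the above derivation implicitly concatenates infinitely many such steps. To avoid any hand-waving I would replace the heuristic with a discrete partition $t=s_0<s_1<\cdots<s_n=t+\tau$ of mesh $\Delta=\tau/n$, write the no-event probability as $\prod_{k=0}^{n-1}(1-\sum_{j'}a_{j'}(\mathbf{y},s_k)\Delta + o(\Delta))$, take logarithms, apply the standard Riemann-sum estimate together with uniform continuity of the rates along the bounded trajectory, and pass to the limit $n\to\infty$ to recover the exponential. The same partition argument justifies the independence used in the final multiplication.
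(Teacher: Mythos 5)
Your proof is correct and follows essentially the same route as the paper: define the no-event probability, derive its ODE from Lemma \ref{lemma: no FA event}, solve to get the exponential, and multiply by $a_j(\mathbf{y},t+\tau)$. The additional partition/Riemann-sum remark is a reasonable way to firm up the infinitesimal argument, but the core derivation is identical to the paper's.
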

\begin{proof}
Let $P(\tau|t,\mathbf{y})$ denote the probability that no reaction occurs in the time interval $[t,t+\tau)$, given $\mathbf{y}$ (and $\mathbf{x}$, $\mathbf{x}_n$, $\theta_1$) at time $t$. Then, by Lemma \ref{lemma: no FA event}:
\begin{align*}
P(\tau+d\tau|t,\mathbf{y}) &= P(\tau|t,\mathbf{y})\left(1-\sum_{j=1}^{2M}a_j(\mathbf{y},t+\tau)d\tau + o(d\tau)\right)\Rightarrow\\
\frac{P(\tau+d\tau|t,\mathbf{y})-P(\tau|t,\mathbf{y})}{d\tau} &= -P(\tau|t,\mathbf{y})\sum_{j=1}^{2M}a_j(\mathbf{y},t+\tau) + P(\tau|t,\mathbf{y})\frac{o(d\tau)}{d\tau}.
\end{align*}
Letting $d\tau\rightarrow 0$ we obtain the following ODE:
\begin{align*}
\frac{d}{d\tau}P(\tau|t,\mathbf{y}) &= -P(\tau|t,\mathbf{y})\sum_{j=1}^{2M}a_j(\mathbf{y},t+\tau).
\end{align*}
Since $P(0|t,\mathbf{y})=1$, the solution $P(\tau|t,\mathbf{y})$ is given by:
\begin{align*}
P(\tau|t,\mathbf{y}) = \exp\left(-\int_{t}^{t+\tau}\sum_{j=1}^{2M}a_j(\mathbf{y},\tau')d\tau'\right).
\end{align*}
We have then:
\begin{align}\label{eq: prob until FA 1}
K(\tau,j|t,\mathbf{y}) = P(\tau|t,\mathbf{y})a_j(\mathbf{y},t+\tau) = a_j(\mathbf{y},t+\tau)\exp\left(-\int_{t}^{t+\tau}\sum_{j'=1}^{2M}a_{j'}(\mathbf{y},\tau')d\tau'\right).
\end{align}
\end{proof}

Let $K_{time}(\tau|t,\mathbf{y})d\tau$ be the probability that the next reaction will occur in the time interval $[t+\tau,t+\tau+d\tau)$, given $\mathbf{Y}(t)=\mathbf{y}$ and $(\mathbf{x}(t),\mathbf{x}_n(t),\theta_1(t))$ at time $t$. 

Let $K_{index}(j|\tau,t,\mathbf{y})$ be the probability that the index of the next reaction is $j$ given $\mathbf{Y}(t)=\mathbf{y}$, $(\mathbf{x}(t),\mathbf{x}_n(t),\theta_1(t))$ at time $t$ and given that the reaction will occur at time $t+\tau$.

By elementary probability theory (using the definition of conditional probability), we know that
\begin{align*}
K(\tau,j|t,\mathbf{y})d\tau=K_{index}(j|\tau,t,\mathbf{y})K_{time}(\tau|t,\mathbf{y})d\tau.
\end{align*}
Due to equation (\ref{eq: prob until FA 1}), we see that:
\begin{align}\label{eq: index time probability}
K_{index}(j|\tau,t,\mathbf{y}) &= \frac{a_j(\mathbf{y},t+\tau)}{a_0(\mathbf{y},t+\tau)}\nonumber\\
K_{time}(\tau|t,\mathbf{y}) &= a_0(\mathbf{y},t+\tau)\exp\left(-\int_{t}^{t+\tau}a_0(\mathbf{y},\tau')d\tau'\right),
\end{align} 
where
\begin{align*}
a_0(\mathbf{y},t) = \sum_{j=1}^{2M}a_j(\mathbf{y},t),
\end{align*}
and $a_0\neq 0$ due to equation (\ref{assumption: nonzero a_j}).

Obviously,
\begin{align*}
\sum_{j=1}^{2M}K_{index}(j|\tau,t,\mathbf{y}) &= 1\\
\int_{0}^{\infty}K_{time}(\tau|t,\mathbf{y})d\tau &= 1.
\end{align*}
Thus, if $T$ is (random) time until the next reaction, then its probability density function given by $K_{time}$, its survival function $S(s)$ is given by (without loss of generality, suppose that $t=0$):
\begin{align}\label{eq: survival function 1}
\mathbb{P}(T>s)=S(s) = \exp\left(-\int_{0}^{s}a_0(\mathbf{y},\tau')d\tau'\right),
\end{align}
and its (cumulative) distribution function is given by $1-S(s)$.\footnote{One can check this by differentiating the distribution function, given by $1-S(s)$, with respect to $s$.} Note that the distribution of a random variable is uniquely determined by its distribution function.

Using the proof of Proposition \ref{proposition: next reaction probability} one has the following:
\begin{proposition}\label{proposition: two FA events}
Let $\tau>0$ and let $\hat{K}(\tau|t,\mathbf{y})$ be the probability of more than one FA event occurring in the time interval $[t+\tau,t+\tau+d\tau)$, given the state of the system at time $t$. Then $\hat{K}(\tau|t,\mathbf{y}) = o(d\tau)$ as $d\tau\rightarrow 0$.
\end{proposition}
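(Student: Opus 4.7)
The plan is to follow the independent-reaction framework already used in Lemma~\ref{lemma: no FA event}, by writing
\[
\hat{K}(\tau|t,\mathbf{y}) \;=\; 1 - P_0 - P_1,
\]
where $P_0$ and $P_1$ denote the probabilities of zero and of exactly one FA event in $[t+\tau,\,t+\tau+d\tau)$ respectively, and then estimating each of $P_0$ and $P_1$ to leading order in $d\tau$.

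For $P_0$, I would apply Lemma~\ref{lemma: no FA event} at the left endpoint of $[t+\tau,\,t+\tau+d\tau)$ to obtain
\[
P_0 \;=\; \prod_{j=1}^{2M}\bigl(1 - a_j(\mathbf{y},\,t+\tau)\, d\tau\bigr) \;=\; 1 - a_0(\mathbf{y},\,t+\tau)\, d\tau + o(d\tau).
\]
For $P_1$, I would use the same product structure: the probability that exactly reaction $j$ fires (and no others) is
\[
a_j(\mathbf{y},\,t+\tau)\, d\tau \prod_{j'\neq j}\bigl(1 - a_{j'}(\mathbf{y},\,t+\tau)\, d\tau\bigr) \;=\; a_j(\mathbf{y},\,t+\tau)\, d\tau + O\bigl((d\tau)^2\bigr).
\]
Since these $2M$ single-event scenarios are mutually exclusive, summing over $j$ yields $P_1 = a_0(\mathbf{y},\,t+\tau)\,d\tau + O((d\tau)^2)$. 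Subtracting both from $1$ then gives $\hat{K}(\tau|t,\mathbf{y}) = o(d\tau)$.

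Equivalently, and more directly, one can expand the product
\[
\prod_{j=1}^{2M}\bigl((1-a_j\, d\tau) + a_j\, d\tau\bigr) \;=\; 1,
\]
identifying each resulting term with the probability that a specific subset $S \subseteq \{1,\ldots,2M\}$ of reactions fires while the complementary subset does not. The subsets with $|S|\geq 2$ comprise precisely $\hat{K}$, and each such term carries at least two factors of $d\tau$ and is therefore $O((d\tau)^2)$. Since there are at most $2^{2M}$ such terms, $\hat{K}(\tau|t,\mathbf{y}) = O((d\tau)^2) = o(d\tau)$.

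I do not expect any substantial obstacle: the statement is essentially a bookkeeping consequence of the Gillespie-style independence implicit in Lemma~\ref{lemma: no FA event}. The only care needed is that $2M<\infty$, so all cross-terms of order $(d\tau)^k$ with $k\geq 2$ may be absorbed into $o(d\tau)$ without issue.
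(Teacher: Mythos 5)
Your argument is correct and is essentially the paper's own proof: the paper likewise reduces the claim to the observation that, by the independent-reaction structure underlying Lemma~\ref{lemma: no FA event}, the probability of two or more reactions firing in a window of length $d\tau$ carries at least two factors of $d\tau$ and is hence $o(d\tau)$; you have simply written out that bookkeeping explicitly via the product expansion. The only cosmetic difference is that the paper keeps the survival factor $P(\tau|t,\mathbf{y})$ for no event on $[t,t+\tau)$ in front of this $o(d\tau)$ term, which is at most one and therefore immaterial to the conclusion.
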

\begin{proof}
By the proof of Proposition \ref{proposition: next reaction probability}:
\begin{align*}
\hat{K}(\tau|t,\mathbf{y}) = P(\tau|t,\mathbf{y})o(d\tau),
\end{align*}
since, following the definition of $a_j$, the probability of more than one reaction occurring in time interval $[t,t+d\tau)$ is $o(d\tau)$.
\end{proof}
Proposition \ref{proposition: two FA events} implies that we can neglect the case when more than one FA event occurs at the event time. Thus, an FA event (binding or unbinding) unambiguously correspond to a switch in motility state. If this were not the case and the probability of two FA events at the same time were not negligible, then binding and unbinding of distinct FAs could occur simultaneously. Since the cell becomes motile after unbinding only, simultaneous events could lead to ambiguity in determining the motile state of the cell.
\subsection{Combining the cell motility and the FA event model}\label{section: summary of PDP}

With the results of the previous section we can now formally state the cyclical mesenchymal cell motility model as a stochastic process (see Figure \ref{fig: motion cycle2} and Supplementary Video 1). 

Let $t=0$, $\mathbf{x}(0)$, $\mathbf{x}_n(0)$, $\theta_1(0)$, $\mu$(0) be given and $\mathbf{Y}(0) = \mathbf{y}^0$. 
\begin{itemize}
\item The time $T_1$ of the FA event is chosen such that $\mathbb{P}(T_1>s)=S(s)$.
\item The system evolves according to (\ref{eq: system ODE}) for $t\in[0,T_1)$.
\item At time $t=\tau$, the index $j$ of the FA event is chosen with probability $K_{index}(j|T_1,0,\mathbf{y}^0)$
$\mathbf{Y}$ and $\mu$ jump to new values:
\begin{align*}
\mathbf{Y}(t=\tau) &= 
\begin{cases*}
\mathbf{y}^0+\widehat{\mathbf{e}}_i, \phantom{a}i=(j+1)/2,\text{ if $j$ is odd}\\
\mathbf{y}^0-\widehat{\mathbf{e}}_i, \phantom{a}i=j/2, \text{ else}
\end{cases*},
\\
\mu(t=\tau) &=
\begin{cases*}
0, \text{ if $j$ is odd}\\
1, \text{ else}
\end{cases*},
\end{align*}
where $\widehat{\mathbf{e}}_i\in \mathbb{R}^M$ is the standard basis vector. Note that due to  \eqref{assumption: nonzero a_j}, we always have $\mathbf{Y}(t=\tau)\in\left\lbrace0,1 \right\rbrace^M $, since the probability of (un)binding of (un)bound FA is zero.
\item The cycle starts anew with initial time $t=T_1$ and initial values of other variables at this time:
starting at $t=T_1$ we choose the time $T_2$ of the FA event such that 
\begin{align*}
\mathbb{P}(T_2>s|T_1) = \exp\left(-\int_{T_1}^{T_1+s}a_0(\mathbf{y},\tau')d\tau'\right).
\end{align*}
\item The system evolves according to (\ref{eq: system ODE}) for $t\in[T_1,T_1+T_2)$ and so on.
\end{itemize}

One sees that the cyclical process described above is a Markov process, since the evolution of the system depends only on the current state. 
This completes the formal specification of the model. In the following we will show that this process is well-defined.
\section{Piecewise deterministic process}\label{section: PDMP}
In this section we briefly overview a class of piecewise deterministic processes, first introduced by Davis \cite{davis84}. We then show how the deterministic equations, describing motion between stochastic focal adhesion events, can be combined to yield a well-defined piecewise deterministic Markov process (PDMP).

\subsection{PDMP overview}\label{PDMP overview}
Let $A$ be countable and let $\Gamma\subset\mathbb{R}^d$ be open. Let $\mathbf{X}_t\in \Gamma$ and let $\mathbf{H}_\nu:\Gamma\rightarrow \mathbb{R}^d$ for $\nu\in A$.

Let $(\Omega,\mathcal{F},(\mathcal{F}_t)_{t\geq 0}, \mathbb{P})$ be a filtered probability space, where $\Omega$ is a sample space, $\mathcal{F}$ is a $\sigma$-algebra on $\Omega$, $(\mathcal{F}_t)_{t\geq 0}$ is a (natural) filtration, and $\mathbb{P}$ is a probability measure. Let $E:=\left\lbrace (\nu,\bm{\xi}):\nu\in A, \bm{\xi}\in \Gamma\right\rbrace $ and let $(E,\mathcal{E})$ be a Borel space. For details see [Chapter 2 in \cite{davis93}].  

We can define the piecewise deterministic process on the state space $(E,\mathcal{E})$ (for a more detailed general treatment see Davis \cite{davis93}) by the following objects\footnote{Here we first provide a constructive definition. The verification of the conditions and their explicit representation corresponding to our case of cell motility is postponed for the sake of clearer exposition}:
\begin{enumerate}[I]
\item Vector fields $(\mathbf{H}_\nu,\nu\in A)$ such that for all $\nu\in A$ there exists a unique global solution $\mathbf{X}_t\in\Gamma$ to the following equation:
\begin{align}\label{eq: ODE flow}
\frac{d}{dt}\mathbf{X}_t &= \mathbf{H}_\nu(\mathbf{X}_t)\nonumber\\
\mathbf{X}_0&\in\Gamma.
\end{align} 
Let $\phi_\nu:[0,\infty)\times\Gamma\rightarrow\Gamma$ denote the flow corresponding to Equation (\ref{eq: ODE flow}), i.e.
\begin{align*}
\phi_\nu(t,\mathbf{X}_0) = \mathbf{X}_t.
\end{align*} 
\item A measurable function $a_0:E\rightarrow\mathbb{R}_+$ such that the function $s\mapsto a_0(\nu,\phi_\nu(s,\mathbf{X}_0))$ is integrable. 
\item A transition measure $Q:\mathcal{E}\times E\rightarrow[0,1]$, such that
for fixed $C\in\mathcal{E}$, $(\nu,\bm{\xi})\mapsto Q(C;(\nu,\bm{\xi}))$ is measurable for $(\nu,\bm{\xi})\in E$, and $Q(\cdot;(\nu,\bm{\xi}))$ is a probability measure  for all $(\nu,\bm{\xi})$ on $(E,\mathcal{E})$. \\
\end{enumerate}

Let $(\nu^0,\mathbf{X}^0)\in E$ at time $t=0$ be given. Let a survival function $S$ be defined similarly as in equation (\ref{eq: survival function 1}):

\begin{align}\label{eq: survival function 2}
S(t,(\nu,\mathbf{X})) := \exp\left(-\int_{0}^{t}a_0(\nu,\phi_{\nu}(s,\mathbf{X}))ds\right).
\end{align}
Let $T_1$ be the first jump time such that
\begin{align*}
\mathbb{P}(T_1>t\phantom{,}|\phantom{,}(\nu^0,\mathbf{X}^0))=S(t,(\nu^0,\mathbf{X}^0)),
\end{align*}  
and let $(\nu^1,\mathbf{X}^1)$ be distributed according to the probability law $Q(\cdot,\phi_{\nu^0}(T_1,\mathbf{X}^0))$. Then, the motion of $(\nu_t,\mathbf{X}_t)$ for $t\leq T_1$ is given by:
\begin{align*}
(\nu_t,\mathbf{X}_t) = 
\begin{cases*}
(\nu^0,\phi_{\nu_0}(t,\mathbf{X^0})), \phantom{asd} t<T_1,\\
(\nu^1,\mathbf{X}^1),\phantom{aasdasdds} t=T_1.
\end{cases*} 
\end{align*}
At time $t=T_1$ the next jump time $T_2$ is distributed such that
\begin{align*}
\mathbb{P}(T_2-T_1>s\phantom{,}|\phantom{,}(\nu_{T_1},\mathbf{X}_{T_1}))=S(s,(\nu_{T_1},\mathbf{X}_{T_1})).
\end{align*}  
The value of the process at the jump time $T_2$ is determined by the measure $Q(\cdot,\phi_{\nu_{T_1}}(T_2,\mathbf{X}_{T_1}))$ and the process continues in a similar way. Thus, we have a well-defined piecewise deterministic process \cite{davis84}. 

\begin{theorem}[\cite{davis84}]
The process $(\nu_t,\mathbf{X}_t)_{t\geq 0}$ is a homogeneous Markov process.
\end{theorem}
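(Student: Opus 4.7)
The strategy is to exploit the cyclical construction: between consecutive jump times $T_n$ and $T_{n+1}$ the process evolves deterministically via the autonomous flow $\phi_{\nu^n}$, and at each jump the post-jump state is drawn from $Q$, which depends only on the current position. I would first verify homogeneity (nearly immediate from the definitions) and then reduce the Markov property to two ingredients: (a) the embedded discrete-time sequence $(\nu^n,\mathbf{X}^n,T_n)_{n\geq 0}$ is a Markov chain on $E\times\mathbb{R}_+$, and (b) the residual waiting time after any deterministic observation time $t$ has a distribution determined only by $(\nu_t,\mathbf{X}_t)$.

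For homogeneity, note that the right-hand side of Equation \eqref{eq: ODE flow} has no explicit time dependence, $a_0$ depends only on the state $(\nu,\bm{\xi})$, and $Q$ is time-independent. Consequently the survival function $S(t,(\nu,\mathbf{X}))$ defined in \eqref{eq: survival function 2} depends only on the elapsed time and the state at the start of the waiting period, so the joint law of $(T_{n+1}-T_n,(\nu^{n+1},\mathbf{X}^{n+1}))$ given $(\nu^n,\mathbf{X}^n)$ is independent of $n$ and of $T_n$. This delivers both the Markov chain structure of the embedded sequence and the time-homogeneity of all its transition probabilities.

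For the Markov property of the continuous-time process, I would fix $t\geq 0$ and partition the sample space by the $\mathcal{F}_t$-measurable event $\{T_n\leq t<T_{n+1}\}$. On this event the past determines $(\nu^n,\mathbf{X}^n,T_n)$, and the residual waiting time $R:=T_{n+1}-t$ satisfies
\begin{align*}
\mathbb{P}\bigl(R>s\,\big|\,\mathcal{F}_t,\{T_{n+1}>t\}\bigr)=\frac{S(t-T_n+s,(\nu^n,\mathbf{X}^n))}{S(t-T_n,(\nu^n,\mathbf{X}^n))}.
\end{align*}
Using the semigroup identity $\phi_{\nu^n}(u+t-T_n,\mathbf{X}^n)=\phi_{\nu^n}(u,\mathbf{X}_t)$ and the exponential form of $S$, the ratio collapses after a change of variables to $S(s,(\nu_t,\mathbf{X}_t))$. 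In the same way, the post-jump state at $T_{n+1}$ is drawn from $Q\bigl(\cdot,\phi_{\nu^n}(T_{n+1}-T_n,\mathbf{X}^n)\bigr)=Q\bigl(\cdot,\phi_{\nu_t}(R,\mathbf{X}_t)\bigr)$, which depends on $\mathcal{F}_t$ only through $(\nu_t,\mathbf{X}_t)$. Iterating across subsequent jumps by induction on the number of jumps in $(t,t+h]$ shows that for any bounded measurable test function $f$ on $E$, $\mathbb{E}[f(\nu_{t+h},\mathbf{X}_{t+h})\mid\mathcal{F}_t]$ is a measurable function of $(\nu_t,\mathbf{X}_t)$ alone, which is the Markov property.

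The main obstacle will be the bookkeeping in step (b): conditioning on $\mathcal{F}_t$ includes the information that no jump occurred in $(T_n,t]$, and one must justify that this collapses to conditioning on $(\nu_t,\mathbf{X}_t)$. This is where the integrability hypothesis on $s\mapsto a_0(\nu,\phi_\nu(s,\mathbf{X}))$ enters: it ensures non-explosion on bounded intervals, so the case analysis across $\{T_n\leq t<T_{n+1}\}_{n\geq 0}$ exhausts the sample space up to a null set, and the inductive argument over jump counts terminates almost surely. Once this is in place, homogeneity together with the displayed simplifications yield the theorem.
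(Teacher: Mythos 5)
The paper does not actually prove this theorem: it is imported verbatim from Davis \cite{davis84}, and the surrounding text only verifies that the defining objects (the flows $\phi_\nu$, the rate $a_0$, the kernel $Q$) satisfy Davis's hypotheses. So there is no in-paper proof to compare against; what your sketch does is reconstruct, correctly in outline, the standard argument from \cite{davis84} and Chapter 2 of \cite{davis93}. The two pillars you identify are the right ones: homogeneity follows because $\mathbf{H}_\nu$, $a_0$ and $Q$ carry no explicit time dependence, and the Markov property reduces to the residual-lifetime computation
\begin{align*}
\frac{S(t-T_n+s,(\nu^n,\mathbf{X}^n))}{S(t-T_n,(\nu^n,\mathbf{X}^n))}
=\exp\left(-\int_{0}^{s}a_0\bigl(\nu_t,\phi_{\nu_t}(v,\mathbf{X}_t)\bigr)\,dv\right)
=S(s,(\nu_t,\mathbf{X}_t)),
\end{align*}
obtained from the semigroup identity for the flow, together with the analogous collapse for the post-jump kernel $Q$. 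That is precisely the published argument, and your identification of the conditioning step as the delicate point is accurate.

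One caveat: you attribute non-explosion (that the events $\{T_n\leq t<T_{n+1}\}$ exhaust the sample space, i.e.\ $T_n\to\infty$ a.s.) to the integrability of $s\mapsto a_0(\nu,\phi_\nu(s,\mathbf{X}))$. That hypothesis only guarantees $S(t,\cdot)>0$ for finite $t$, i.e.\ that each individual sojourn time is almost surely strictly positive; it does not by itself rule out accumulation of jump times, and Davis imposes a separate standing assumption ($\mathbb{E}\,\#\{k:T_k\leq t\}<\infty$ for all $t$) for exactly this purpose. In the present model non-explosion does hold, but for a different reason: $\mathbf{x}_n$ remains in the compact set $\Omega_{cell}$ (Corollary \ref{corollay: xn exit}), so the forces $\mathbf{F}_j$ and hence all rates $a_j^{\pm}$ are uniformly bounded, and the interarrival times stochastically dominate i.i.d.\ exponentials of fixed parameter. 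With that observation supplied, your outline is a faithful and essentially complete rendering of the cited proof.
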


\subsection{Cell motility and PDMP}\label{section: cell motility and PDMP}
In this section we show that the cyclical cell motility model described in Section \ref{section: summary of PDP} is a well-defined PDMP.

One can show that $\mathbf{F}_i(\mathbf{x}_n,\theta_1)$ satisfies the Lipschitz condition for $(\mathbf{x}_n,\theta_1)\in\Omega_{cell}\cup[0,2\pi):=D_{cell}\footnote{The restriction to the interval $[0,2\pi)$ is due to the periodic dependence on $\theta_1$ in the definition of $\mathbf{x}_i$. See Section \ref{section: The Model}}$. Furthermore, one can show that $\beta_{cell}^{-1}\mathbf{F}(\mathbf{x}_n,\theta_1,\mathbf{Y})$ and $\mu\beta_{rot}^{-1}\lVert\mathbf{x}_n\rVert\mathbf{F}(\mathbf{x}_n,\theta_1,\mathbf{Y})\cdot\hat{\boldsymbol{\varphi}}(\mathbf{x}_n)$, given by 
\begin{align*}
\beta_{cell}^{-1}\mathbf{F}(\mathbf{x}_n,\theta_1,\mathbf{Y}) &=- \beta_{cell}^{-1}\sum_{i=1}^{M}Y_i\mathbf{F}_i(\mathbf{x}_n,\theta_1)\\
\mu\beta_{rot}^{-1}\lVert\mathbf{x}_n\rVert\mathbf{F}(\mathbf{x}_n,\theta_1,\mathbf{Y})\cdot\hat{\boldsymbol{\varphi}}(\mathbf{x}_n) &= - \mu\beta_{rot}^{-1}\lVert\mathbf{x}_n\rVert\sum_{i=1}^{M}Y_i\mathbf{F}_i(\mathbf{x}_n,\theta_1)\cdot\hat{\boldsymbol{\varphi}}(\mathbf{x}_n)
\end{align*}
also satisfy the Lipschitz condition for $(\mathbf{x}_n,\theta_1)\in D_{cell}$ and arbitrary $\mu\in\left\lbrace0,1 \right\rbrace$, $\mathbf{Y}\in\left\lbrace 0,1\right\rbrace^M $. 

\begin{proposition}\label{proposition: ODE existence}
Let $\mathbf{x}(0)=\mathbf{x}_0$, $(\mathbf{x}_n(0),\theta_1(0))\in D_{cell}$. Let $\mu\in\left\lbrace0,1 \right\rbrace$, $\mathbf{Y}\in\left\lbrace 0,1\right\rbrace^M $. Then there exists a unique solution of the system 
\begin{align}\label{eq: existence prop}
\dot{\mathbf{x}} &=\mu\beta_{ECM}^{-1}\mathbf{F}(\mathbf{x}_n,\theta_1,\mathbf{Y})\cdot\hat{\mathbf{r}}(\mathbf{x}_n)\hat{\mathbf{r}}(\mathbf{x}_n)\nonumber \\ 
\dot{\mathbf{x}}_n &= \beta_{cell}^{-1}\mathbf{F}(\mathbf{x}_n,\theta_1,\mathbf{Y})\nonumber\\
\dot{\theta}_1 &=  \mu\beta_{rot}^{-1}\lVert\mathbf{x}_n\rVert\mathbf{F}(\mathbf{x}_n,\theta_1,\mathbf{Y})\cdot\hat{\boldsymbol{\varphi}}(\mathbf{x}_n),\nonumber\\
\end{align}
for $t>0$.
\end{proposition}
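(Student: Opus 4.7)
My plan is to deduce the proposition from the classical Picard–Lindelöf theorem together with the a priori bounds already available from Corollary~\ref{corollay: xn exit} and the structure of the vector field.  Concretely, write $\mathbf{z}=(\mathbf{x},\mathbf{x}_n,\theta_1)\in\mathbb{R}^2\times\Omega_{cell}\times\mathbb{R}$ and denote the right-hand side of \eqref{eq: existence prop} by $\mathbf{H}_{\mu,\mathbf{Y}}(\mathbf{z})$, treating $\mu$ and $\mathbf{Y}$ as frozen parameters (which is legitimate because the proposition concerns the deterministic flow between two adhesion events).  Since the first two blocks of $\mathbf{H}_{\mu,\mathbf{Y}}$ do not depend on $\mathbf{x}$ at all, it suffices to construct the unique $C^1$-solution of the $(\mathbf{x}_n,\theta_1)$-subsystem on $D_{cell}$ and then recover $\mathbf{x}$ by straightforward integration of a continuous function of time.

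For the subsystem, the excerpt already asserts that $\beta_{cell}^{-1}\mathbf{F}(\mathbf{x}_n,\theta_1,\mathbf{Y})$ and $\mu\beta_{rot}^{-1}\lVert\mathbf{x}_n\rVert\mathbf{F}(\mathbf{x}_n,\theta_1,\mathbf{Y})\cdot\hat{\boldsymbol{\varphi}}(\mathbf{x}_n)$ are Lipschitz in $(\mathbf{x}_n,\theta_1)\in D_{cell}$; note in particular that the factor $\lVert\mathbf{x}_n\rVert$ multiplying $\hat{\boldsymbol{\varphi}}(\mathbf{x}_n)$ removes the apparent singularity at the origin, so the vector field extends continuously (and Lipschitz-continuously) across $\mathbf{x}_n=0$.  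The dependence on $\theta_1$ enters only through $\mathbf{x}_i=R_{cell}(\cos\theta_i,\sin\theta_i)^T$, so it is globally Lipschitz on all of $\mathbb{R}$.  Hence Picard–Lindelöf yields, for the initial datum $(\mathbf{x}_n(0),\theta_1(0))\in D_{cell}$, a unique maximal $C^1$-solution on some interval $[0,t^*)$.

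To upgrade this to a global solution I use two a priori bounds.  First, Corollary~\ref{corollay: xn exit} tells us that once $\mathbf{x}_n(0)\in\Omega_{cell}$, the $\mathbf{x}_n$-component of any $C^1$-solution of the second equation in \eqref{eq: existence prop} stays in the compact set $\Omega_{cell}$ for all times at which it is defined.  Second, since $\mathbf{F}$ is continuous and $\Omega_{cell}\times[0,2\pi)$ is bounded with $\mathbf{F}$ periodic in $\theta_1$, we have a uniform bound $\lVert\mathbf{F}(\mathbf{x}_n,\theta_1,\mathbf{Y})\rVert\leq C$.  Consequently $\lvert\dot\theta_1\rvert\leq \mu\beta_{rot}^{-1} C$ and $\lVert\dot{\mathbf{x}}_n\rVert\leq \beta_{cell}^{-1}C$, so the solution cannot escape to infinity in finite time nor leave $D_{cell}$ (understood as $\Omega_{cell}\times\mathbb{R}$ after unwrapping the $2\pi$-periodicity of $\theta_1$).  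A standard maximality argument then forces $t^*=\infty$.  Finally, $\dot{\mathbf{x}}=\mu\beta_{ECM}^{-1}(\mathbf{F}\cdot\hat{\mathbf{r}})\hat{\mathbf{r}}$ is a continuous (in fact uniformly bounded) function of the already-constructed $(\mathbf{x}_n(t),\theta_1(t))$, so $\mathbf{x}(t)=\mathbf{x}_0+\int_0^t\mu\beta_{ECM}^{-1}(\mathbf{F}\cdot\hat{\mathbf{r}})\hat{\mathbf{r}}\,ds$ exists globally and is unique.

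The only step that requires real care is the verification that the vector field is genuinely Lipschitz up to $\mathbf{x}_n=0$: the raw expression $(\mathbf{F}\cdot\hat{\mathbf{r}})\hat{\mathbf{r}}$ in the $\mathbf{x}$-equation looks singular since $\hat{\mathbf{r}}=\mathbf{x}_n/\lVert\mathbf{x}_n\rVert$, but as $\mathbf{x}$ does not feed back into $(\mathbf{x}_n,\theta_1)$ one only needs this term to be locally bounded and measurable in $t$ in order to integrate it; the Lipschitz analysis is needed solely for the $(\mathbf{x}_n,\theta_1)$-subsystem, where the troublesome $1/\lVert\mathbf{x}_n\rVert$ is cancelled by an explicit $\lVert\mathbf{x}_n\rVert$.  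This is the content of the assertion that has been imported from the excerpt, and it is the main technical ingredient on which the rest of the proof rests.
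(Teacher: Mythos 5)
Your proof follows essentially the same route as the paper's: decouple the $\mathbf{x}$-equation, apply the Lipschitz property of the $(\mathbf{x}_n,\theta_1)$-subsystem for local existence and uniqueness, and invoke Corollary~\ref{corollay: xn exit} to conclude the solution never leaves $D_{cell}$ and hence is global. The extra details you supply (the uniform bound on $\mathbf{F}$ ruling out finite-time blow-up, and the care taken with $\hat{\mathbf{r}}$ and $\hat{\boldsymbol{\varphi}}$ at $\mathbf{x}_n=0$) are welcome refinements of, not departures from, the paper's argument.
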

\begin{proof}
Note that since the evolution of $\mathbf{x}$ is decoupled from the other two equations, it is sufficient to prove the claim for the following subsystem:
\begin{align}\label{eq: mod small system}
\dot{\mathbf{x}}_n &= \beta_{cell}^{-1}\mathbf{F}(\mathbf{x}_n,\theta_1,\mathbf{Y})  \nonumber\\
\dot{\theta}_1 &=  \mu\beta_{rot}^{-1}\lVert\mathbf{x}_n\rVert\mathbf{F}(\mathbf{x}_n,\theta_1,\mathbf{Y})\cdot\hat{\boldsymbol{\varphi}}(\mathbf{x}_n) 
\end{align}  
Since the right hand side of this system is Lipschitz on $D_{cell}$ and $(\mathbf{x}_n(0),\theta_1(0))\in D_{cell}$, then there exists a unique solution of the subsystem (\ref{eq: mod small system}) for time $t\leq t_{D_{cell}}$, where $t_{D_{cell}}=\inf \left\lbrace t^*>0\phantom{,}|\phantom{,}\mathbf{x}_n(t^*)\notin\Omega_{cell}\right\rbrace $ is the exit time from $D_{cell}$. By Corollary \ref{corollay: xn exit}, we see that $t_{D_{cell}}=\infty$.
 
\end{proof}
Let $A:=\left\lbrace 1,2,\ldots,2^{M+1}\right\rbrace$ and let $\bm{\alpha}:A\rightarrow\left\lbrace 0,1\right\rbrace\times\left\lbrace0,1 \right\rbrace^M$ be a bijection. This is simply a mapping such that $\bm{\alpha}(\nu)=(\mu,\mathbf{Y})\in\left\lbrace 0,1\right\rbrace\times\left\lbrace0,1 \right\rbrace^M$  corresponds to a particular cell motion and FA states (recall that the former can either be moving or stationary).

Let $(\mathbf{x}(0),\mathbf{x}_n(0),\theta_1(0))\in\Gamma:=\mathbb{R}^2\times\Omega_{cell}\times [0,2\pi)$ and denote $\mathbf{X}_t=(\mathbf{x}(t),\mathbf{x}_n(t),\theta_1(t))$. Moreover, let   $\mathbf{H}_\nu:\Gamma\rightarrow\mathbb{R}^5$ be such that
\begin{align}
\mathbf{H}_\nu(\mathbf{X}) :=
\begin{pmatrix*}
\alpha_1(\nu)\beta_{ECM}^{-1}\mathbf{F}(\mathbf{x}_n,\bm{\alpha}_2(\nu),\theta_1)\cdot\hat{\mathbf{r}}(\mathbf{x}_n)\hat{\mathbf{r}}(\mathbf{x}_n)\\
\beta_{cell}^{-1}\mathbf{F}(\mathbf{x}_n,\bm{\alpha}_2(\nu),\theta_1)\\
\alpha_1(\nu)\beta_{rot}^{-1}\lVert\mathbf{x}_n\rVert\mathbf{F}(\mathbf{x}_n,\bm{\alpha}_2(\nu),\theta_1)\cdot\hat{\boldsymbol{\varphi}}(\mathbf{x}_n)
\end{pmatrix*},
\end{align} 
where $\alpha_1$ and $\bm{\alpha}_2$ are, respectively, the first and the second components of $\bm{\alpha}$\footnote{In this work, the subscript index in $\bm{\alpha}_2$ always denotes the second component of $\bm{\alpha}$.}. 

Let the probability $(\Omega,\mathcal{F},(\mathcal{F}_t)_{t\geq 0}, \mathbb{P})$ and state space $(E,\mathcal{E})$ be defined as in the previous section. 

We now specify the objects (I,II,III) described in Section \ref{PDMP overview}. 

\begin{enumerate}[I]
\item By Proposition \ref{proposition: ODE existence} we see that for all $\nu\in A$, there exists a unique global solution to (\ref{eq: ODE flow}).
\item 
Note that in our case the rate function $a_0$ is given by (recalling Section \ref{section: FA jumps})\footnote{We abuse the notation introduced in Section \ref{section: FA jumps}: $a_j(\bm{\alpha}_2(\nu),t)=a_j(\bm{\alpha}_2(\nu),\mathbf{X}_t)=a_j(\nu,\mathbf{X}_t)$ for $j=0,\ldots,2M$. }:
\begin{align*}
a_0(\nu,\mathbf{X}_t) =a_0(\bm{\alpha}_2(\nu),\mathbf{X}_t)= \sum_{j=1}^{2M}a_j(\bm{\alpha}_2(\nu), \mathbf{X}_t).
\end{align*}
Thus, for the integrability condition to be satisfied, we assume that each probability rate function $a_j$ is integrable along the solution of equation (\ref{eq: ODE flow}). An exact form of the rates $a_j$ satisfying this condition will be given in the subsequent section. Note that $a_0$ is nonzero, which follows from \eqref{assumption: nonzero a_j}.

\item In our case, the measure $Q(\cdot;(\nu,\bm{\xi}))$ is given by (recalling Section \ref{section: jump process}):
\begin{align}\label{eq: transition measure}
Q(\left\lbrace \eta\right\rbrace\times d\bm{\xi}';(\nu,\bm{\xi}))=\delta_{\bm{\xi}}(d\bm{\xi}')\sum_{j=1}^{M}&\delta_{\alpha_1(\eta),0}\frac{a^+_j(\bm{\alpha}_2(\nu),\bm{\xi})}{a_0(\bm{\alpha}_2(\nu),\bm{\xi})}\delta_{\bm{\alpha}_2(\eta)_j,1}\prod_{i\neq j}^{M}\delta_{\bm{\alpha}_2(\eta)_i,\bm{\alpha}_2(\nu)_i}\nonumber\\
+&\delta_{\alpha_1(\eta),1}\frac{a^-_j(\bm{\alpha}_2(\nu),\bm{\xi})}{a_0(\bm{\alpha}_2(\nu),\bm{\xi})}\delta_{\bm{\alpha}_2(\eta)_j,0}\prod_{i\neq j}^{M}\delta_{\bm{\alpha}_2(\eta)_i,\bm{\alpha}_2(\nu)_i},
\end{align}
where $\delta$ is the Kronecker delta function, $\delta_{\bm{\xi}}(\cdot)$ is the Dirac measure at $\bm{\xi}$, $a_j^+=a_{2j-1}$ and $a_j^-=a_{2j}$ correspond to, respectively, the binding and unbinding probability rates at FA site $j$, and $\bm{\alpha}_2(\cdot)_i$ is the $i^{\text{th}}$ component of the vector $\bm{\alpha}_2(\cdot)$.
\end{enumerate}
The justification for choosing the functions above stems from our deductions in Section \ref{section: FA jumps}. In particular, the rate function $a_0$ in (II) is due to (\ref{eq: index time probability}): the probability density function of the jump time $T_{k+1}$, given that $T_{k}\leq t<T_{k+1}$, is given by: 
\begin{align*}
K_{time}(\cdot|t,\bm{\alpha}_2(\nu_t)) &= a_0(\bm{\alpha}_2(\nu_t),\mathbf{X}_t)\exp\left(-\int_{0}^{\cdot}a_0(\nu_t,\phi_{\nu_t}(s,\mathbf{X}_t))ds\right)\nonumber\\
&=a_0(\bm{\alpha}_2(\nu_{T_k}),\mathbf{X}_t)\exp\left(-\int_{0}^{\cdot}a_0(\nu_{T_k},\phi_{\nu_{T_k}}(s,\mathbf{X}_t))ds\right),
\end{align*}
which corresponds to the survival function given by (\ref{eq: survival function 2}).

We now turn our attention to the measure $Q$ in (\ref{eq: transition measure}).
The components of $\mathbf{X}_t$ do not jump, and vary continuously in time, i.e. if $T_k$ is the jump time, then $\mathbf{X}_{T^-_k} =\mathbf{X}_{T_k}$ (see Section \ref{section: summary of PDP}), hence the Dirac measure $\delta_{\bm{\xi}}(\cdot)$ at $\bm{\xi}$ in (\ref{eq: transition measure}). Clearly, such transition of the continuous component $\mathbf{X}_t$ of the PDMP at a jump time is probabilistically independent of the transition of the discrete component $\nu$. Hence we have the product of the Dirac measure with the sum, which is a discrete measure for the transition of the discrete component.

By Proposition \ref{proposition: two FA events}, there is only one FA event at a jump time. Hence, for the probability of transition $\nu\rightarrow \eta$ to be nonzero, the vectors of FA states $\bm{\alpha}_2(\nu)$ and $\bm{\alpha}_2(\eta)$ must differ only by one component. Consider the following example to illustrate the jump mechanism.

\textit{Example.} Let $M=4$, $T_k\leq t<T_{k+1}$ and suppose $T_{k+1}$, $\mathbf{X}_t$ are given. Let $\nu,\eta\in A$ be such that $\bm{\alpha}(\nu) = (\mu_\nu,\mathbf{Y}_{\nu})$ and $\bm{\alpha}(\eta) = (\mu_\eta,\mathbf{Y}_{\eta})$, where
\begin{align*}
\mathbf{Y}_{\nu} = 
\begin{bmatrix*}
0\\0\\0\\1
\end{bmatrix*},
\mathbf{Y}_{\eta} = 
\begin{bmatrix*}
0\\1\\0\\1
\end{bmatrix*}.
\end{align*}
Then, by equation (\ref{eq: index time probability}), the probability of the transition $\nu\rightarrow\eta$ is given by:
\begin{align*}
\delta_{\mu_\eta,0}K_{index}(3|T_{k+1}-t,t,\mathbf{Y}_{\nu}) &= \delta_{\mu_\eta,0}\frac{a_3(\bm{\alpha}_2(\nu),\phi_\nu(T_{k+1}-t,\mathbf{X}_t))}{a_0(\bm{\alpha}_2(\nu),\phi_\nu(T_{k+1}-t,\mathbf{X}_t))} \\
&=\delta_{\alpha_1(\eta),0}\frac{a_3(\bm{\alpha}_2(\nu),\mathbf{X}_{T_{k+1}})}{a_0(\bm{\alpha}_2(\nu),\mathbf{X}_{T_{k+1}})}\\
&=\delta_{\alpha_1(\eta),0}\frac{a^+_2(\bm{\alpha}_2(\nu),\mathbf{X}_{T_{k+1}})}{a_0(\bm{\alpha}_2(\nu),\mathbf{X}_{T_{k+1}})}.
\end{align*}
Clearly, the transition $\mathbf{Y}_\nu\rightarrow\mathbf{Y}_{\eta}$ corresponds to the binding event at FA site 2, explaining the Kronecker delta term (see Sections \ref{section: kinematics summary}, \ref{section: summary of PDP}). Now, consider the sum in (\ref{eq: transition measure}) for this example. We see that  
\begin{align*}
\prod_{i\neq j}^{M}\delta_{\bm{\alpha}_2(\eta)_i,\bm{\alpha}_2(\nu)_i} \neq 0,\text{  for $j=2$ only}.
\end{align*}
We therefore obtain
\begin{align*}
Q(\left\lbrace \eta\right\rbrace\times d\bm{\xi}';(\nu,\mathbf{X}_{T_{k+1}})) = \delta_{\mathbf{X}_{T_{k+1}}}(d\bm{\xi}')\delta_{\alpha_1(\eta),0}\frac{a^+_2(\bm{\alpha}_2(\nu),\mathbf{X}_{T_{k+1}})}{a_0(\bm{\alpha}_2(\nu),\mathbf{X}_{T_{k+1}})}\delta_{\bm{\alpha}_2(\eta)_2,1}.
\end{align*}
Note that if at time $t$ the vector of FA states is given by $\mathbf{Y}_\nu$, then there are $M$ possible FA state vectors into which a transition can occur with nonzero probability:
\begin{align*}
\left\lbrace
\begin{bmatrix*}
0\\0\\0\\0
\end{bmatrix*},
\begin{bmatrix*}
1\\0\\0\\1
\end{bmatrix*},
\begin{bmatrix*}
0\\1\\0\\1
\end{bmatrix*},
\begin{bmatrix*}
0\\0\\1\\1
\end{bmatrix*}
 \right\rbrace.
\end{align*}  
Similarly as with the rate function $a_0$, we can derive equation (\ref{eq: transition measure}) from the principles we established before.
\begin{proposition}\label{proposition: transition measure form}
The transition probability measure $Q(\cdot,(\nu,\bm{\xi}))$ is given by equation (\ref{eq: transition measure}) for each $(\nu,\bm{\xi})\in E$.
\end{proposition}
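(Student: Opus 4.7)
The plan is to derive the expression (\ref{eq: transition measure}) directly from the probabilistic description in Sections \ref{section: FA jumps} and \ref{section: summary of PDP}, by decomposing a jump at time $T_k$ into its continuous and discrete components. The continuous component $\mathbf{X}_t = (\mathbf{x},\mathbf{x}_n,\theta_1)$ evolves according to \eqref{eq: system ODE} with a Lipschitz right-hand side (Proposition \ref{proposition: ODE existence}), hence is continuous in $t$; in particular $\mathbf{X}_{T_k^-} = \mathbf{X}_{T_k}$. This forces the conditional distribution of $\bm{\xi}' = \mathbf{X}_{T_k}$ given $(\nu,\bm{\xi}) = (\nu_{T_k^-}, \mathbf{X}_{T_k^-})$ to be the Dirac mass $\delta_{\bm{\xi}}(d\bm{\xi}')$. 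Since the only randomness at the jump time is in the new discrete coordinate $\eta$, the measure $Q$ factorizes as
\begin{equation*}
Q(\{\eta\}\times d\bm{\xi}';(\nu,\bm{\xi})) \;=\; \delta_{\bm{\xi}}(d\bm{\xi}')\cdot p(\eta \mid \nu,\bm{\xi}),
\end{equation*}
and the task reduces to identifying $p(\eta\mid \nu,\bm{\xi})$.

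Next I would enumerate the admissible transitions $\nu\to\eta$. By Proposition \ref{proposition: two FA events}, with probability one exactly one FA event occurs at $T_k$, so $\bm{\alpha}_2(\eta)$ must differ from $\bm{\alpha}_2(\nu)$ in exactly one coordinate $j\in\{1,\dots,M\}$; otherwise $p(\eta\mid\nu,\bm{\xi})=0$. There are then two cases, corresponding to the two reactions allowed at site $j$ by \eqref{assumption: nonzero a_j}:

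(i) \emph{Binding at site $j$}: $\bm{\alpha}_2(\nu)_j=0$, $\bm{\alpha}_2(\eta)_j=1$, and by the rule stated in Section \ref{section: summary of PDP} the motility flag becomes $\alpha_1(\eta)=0$. Using \eqref{eq: index time probability} with the odd reaction index $j' = 2j-1$, the probability of this particular transition conditional on a jump occurring at $\mathbf{X}_{T_k}=\bm{\xi}$ is $K_{index}(2j-1\mid\cdot)=a^+_j(\bm{\alpha}_2(\nu),\bm{\xi})/a_0(\bm{\alpha}_2(\nu),\bm{\xi})$.

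(ii) \emph{Unbinding at site $j$}: $\bm{\alpha}_2(\nu)_j=1$, $\bm{\alpha}_2(\eta)_j=0$, $\alpha_1(\eta)=1$, with probability $a^-_j(\bm{\alpha}_2(\nu),\bm{\xi})/a_0(\bm{\alpha}_2(\nu),\bm{\xi})$ from the even index $j'=2j$.

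The product $\prod_{i\neq j}\delta_{\bm{\alpha}_2(\eta)_i,\bm{\alpha}_2(\nu)_i}$ enforces that the two vectors agree away from site $j$, while the factors $\delta_{\bm{\alpha}_2(\eta)_j,1}$ or $\delta_{\bm{\alpha}_2(\eta)_j,0}$ select the correct value at site $j$, and $\delta_{\alpha_1(\eta),0}$, $\delta_{\alpha_1(\eta),1}$ encode the corresponding motility flag. Summing the two cases over $j=1,\dots,M$ gives exactly \eqref{eq: transition measure}, and this sum ranges over disjoint events (since cases (i) and (ii) cannot both be active at the same $j$ because of \eqref{assumption: nonzero a_j}), so there is no double-counting.

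Finally I would verify that this is indeed a probability measure on $(E,\mathcal{E})$: measurability in $(\nu,\bm{\xi})$ is immediate from measurability of the rate functions $a_j$ and of $a_0$, and integration over $E$ gives $\sum_{j=1}^{M}(a^+_j+a^-_j)/a_0 = \sum_{j'=1}^{2M} a_{j'}/a_0 = 1$ by the definition of $a_0$, with the sum being well-defined because $a_0\neq 0$ thanks to \eqref{assumption: nonzero a_j}. The only slightly delicate step is the bookkeeping connecting the two-index enumeration $j'\in\{1,\dots,2M\}$ of reactions used in Section \ref{section: jump process} to the site-indexed, binding/unbinding split $(j,\pm)$ used here, but this is purely combinatorial and the example preceding the proposition already exhibits the matching explicitly.
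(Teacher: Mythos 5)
Your proposal is correct and follows essentially the same route as the paper's proof: factor $Q$ into a Dirac mass for the continuous component (which does not jump) times a discrete law for $\eta$, decompose that law over the $2M$ mutually exclusive reactions, identify each reaction probability as $a_j^{\pm}/a_0$ via \eqref{eq: index time probability}, and encode the deterministic update of $(\mu,\mathbf{Y})$ with Kronecker deltas. Your additional normalization check that $Q(\cdot;(\nu,\bm{\xi}))$ integrates to one is a worthwhile supplement not spelled out in the paper's proof, but the argument is otherwise the same.
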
 
\begin{proof}
Let $(\nu,\bm{\xi})\in E$, $\left\lbrace \eta\right\rbrace\times d\bm{\xi}'\in\mathcal{E}$. Let $(N,\bm{\Xi})$ and $(N_-,\bm{\Xi}_-)$ be $E$-valued random variables before and after the jumps. Then,
\begin{align*}
Q(\left\lbrace \eta\right\rbrace\times d\bm{\xi}';(\nu,\bm{\xi})) &= \mathbb{P}\left((N,\bm{\Xi})\in\left\lbrace \eta\right\rbrace\times d\bm{\xi}'|(N_-,\bm{\Xi}_-)=(\nu,\bm{\xi}) \right) \\
&=\mathbb{P}\left(\left\lbrace \eta\right\rbrace\times d\bm{\xi}'|(\nu,\bm{\xi}) \right),
\end{align*} 
where we omitted the random variables for notational convenience. Then we have:
\begin{align*}
\mathbb{P}\left(\left\lbrace \eta\right\rbrace\times d\bm{\xi}'|(\nu,\bm{\xi}) \right) = \mathbb{P}(d\bm{\xi}'|\left\lbrace \eta\right\rbrace,(\nu,\bm{\xi}) )\mathbb{P}(\left\lbrace\eta\right\rbrace|(\nu,\bm{\xi})) = \delta_{\bm{\xi}}(d\bm{\xi}')\mathbb{P}(\eta|(\nu,\bm{\xi})),
\end{align*}
since $\bm{\Xi}=\bm{\Xi}_-$ a.s., by construction of the process. Since $\bm{\alpha}$ is a bijection, we have\footnote{If $(H,\bm{Z})$ is a random variable, then, due to $\bm{\alpha}$ being a bijection:
\begin{align*}
\left\lbrace H=\eta\right\rbrace=\left\lbrace \bm{\alpha}^{-1}(H)=\bm{\alpha}^{-1}(\eta)\right\rbrace=\left\lbrace \bm{\alpha}(H)=\bm{\alpha}(\eta)\right\rbrace.
\end{align*}}
\begin{align*}
\mathbb{P}(\eta|(\nu,\bm{\xi})) &= \mathbb{P}\left((\alpha_1(\eta),\bm{\alpha}_2(\eta))|(\alpha_1(\nu),\bm{\alpha}_2(\nu)),\bm{\xi}\right)\\
&=\mathbb{P}\left(\alpha_1(\eta),\bm{\alpha}_2(\eta)|\bm{\alpha}_2(\nu),\bm{\xi}\right),
\end{align*}
since, by construction of the cell motility process, the new FA state $\bm{\alpha}_2(\eta)$ is determined independently of whether a cell was moving or not (represented by $\alpha_1(\nu)\in\left\lbrace 0,1\right\rbrace $) and the new motility state $\alpha_1(\eta)$ is determined only by which FA event took place (binding or unbinding), regardless of whether a cell was previously moving or not.

Note that when a jump occurs, then, by Proposition $\ref{proposition: two FA events}$, only one of $j=1,\ldots,2M$ possible (binding and unbinding) FA events occurs. Thus, for $j,j'\in\left\lbrace 1,\ldots,2M\right\rbrace$ and $j\neq j'$ the events ``reaction $j$ occurs" and ``reaction $j'$ occurs" are mutually exclusive. We then have, by the definition of conditional probability:
\begin{align*}
\mathbb{P}\left(\alpha_1(\eta),\bm{\alpha}_2(\eta)|\bm{\alpha}_2(\nu),\bm{\xi}\right) = \sum_{j=1}^{2M}\mathbb{P}\left(\alpha_1(\eta),\bm{\alpha}_2(\eta)|j,\bm{\alpha}_2(\nu),\bm{\xi}\right)\mathbb{P}\left(j|\bm{\alpha}_2(\nu),\bm{\xi}\right),
\end{align*}
where 
\begin{itemize}
\item $\mathbb{P}\left(j|\bm{\alpha}_2(\nu),\bm{\xi}\right)$ is the probability that the FA event $j$ occurs, given $\bm{\alpha}_2(\nu)$ and $\bm{\xi}$. 
\item $\mathbb{P}\left(\alpha_1(\eta),\bm{\alpha}_2(\eta)|j,\bm{\alpha}_2(\nu),\bm{\xi}\right)$ is the probability of a jump into cell state $\left(\alpha_1(\eta),\bm{\alpha}_2(\eta)\right)$, given $\bm{\alpha}_2(\nu)$ and $\bm{\xi}$, and that the FA event $j$ occurred.
\end{itemize}
Let $j\in\left\lbrace 1,\ldots,M\right\rbrace $ and $j^+=2j-1$ and $j^-=2j$. Due to (\ref{eq: index time probability}) we have:
\begin{align}\label{eq: probability of the next reaction j}
\mathbb{P}\left(j^{\pm}|\bm{\alpha}_2(\nu),\bm{\xi}\right) &= \frac{a^{\pm}_j(\bm{\alpha}_2(\nu),\bm{\xi})}{a_0(\bm{\alpha}_2(\nu),\bm{\xi})}.
\end{align}
Furthermore,
\begin{align*}
\mathbb{P}\left(\alpha_1(\eta),\bm{\alpha}_2(\eta)|j^+,\bm{\alpha}_2(\nu),\bm{\xi}\right)&=\delta_{\alpha_1(\eta),0}\delta_{\bm{\alpha}_2(\eta)_j,1}\prod_{i\neq j}^{M}\delta_{\bm{\alpha}_2(\eta)_i,\bm{\alpha}_2(\nu)_i}\\
\mathbb{P}\left(\alpha_1(\eta),\bm{\alpha}_2(\eta)|j^-,\bm{\alpha}_2(\nu),\bm{\xi}\right)&=\delta_{\alpha_1(\eta),1}\delta_{\bm{\alpha}_2(\eta)_j,0}\prod_{i\neq j}^{M}\delta_{\bm{\alpha}_2(\eta)_i,\bm{\alpha}_2(\nu)_i}.
\end{align*}

Therefore,
\begin{align*}
\sum_{j=1}^{2M}\mathbb{P}\left(\alpha_1(\eta),\bm{\alpha}_2(\eta)|j,\bm{\alpha}_2(\nu),\bm{\xi}\right)&\mathbb{P}\left(j|\bm{\alpha}_2(\nu),\bm{\xi}\right)\\
&=\sum_{j=1}^{M}\mathbb{P}\left(\alpha_1(\eta),\bm{\alpha}_2(\eta)|j^+,\bm{\alpha}_2(\nu),\bm{\xi}\right)\mathbb{P}\left(j^+|\bm{\alpha}_2(\nu),\bm{\xi}\right)\\
&\phantom{\sum_{j=1}^{M}}+\mathbb{P}\left(\alpha_1(\eta),\bm{\alpha}_2(\eta)|j^-,\bm{\alpha}_2(\nu),\bm{\xi}\right)\mathbb{P}\left(j^-|\bm{\alpha}_2(\nu),\bm{\xi}\right)\\
&=\sum_{j=1}^{M}\delta_{\alpha_1(\eta),0}\frac{a^+_j(\bm{\alpha}_2(\nu),\bm{\xi})}{a_0(\bm{\alpha}_2(\nu),\bm{\xi})}\delta_{\bm{\alpha}_2(\eta)_j,1}\prod_{i\neq j}^{M}\delta_{\bm{\alpha}_2(\eta)_i,\bm{\alpha}_2(\nu)_i}\nonumber\\
&\phantom{\sum_{j=1}^{M}}+\delta_{\alpha_1(\eta),1}\frac{a^-_j(\bm{\alpha}_2(\nu),\bm{\xi})}{a_0(\bm{\alpha}_2(\nu),\bm{\xi})}\delta_{\bm{\alpha}_2(\eta)_j,0}\prod_{i\neq j}^{M}\delta_{\bm{\alpha}_2(\eta)_i,\bm{\alpha}_2(\nu)_i}.
\end{align*}        
\end{proof}

\section{Adhesion kinetics}\label{section: adhesion kinetics}
While we introduced the probability rates of binding and unbinding events, we have not yet fully specified them. Here we elaborate on how such quantities unambiguously correspond to the relevant subcell dynamics by providing the precise functional forms of the propensity functions.

\subsection{Unbinding rate}
Consider the unbinding rate $a^-_j$ of FA adhesion site $j\in\left\lbrace 1,\ldots, M\right\rbrace $ and let $\mathbf{y}\in\left\lbrace 0,1\right\rbrace^M$, $\bm{\xi}=(\mathbf{x},\mathbf{x}_n,\theta_1)\in\Gamma$. Following Bell \cite{Bell78}, the bond disassociation rate under applied force is given by:
\begin{align}\label{eq: force unbinding rate}
a^-_j(\mathbf{y},\bm{\xi}) = k^0_{off}e^{\lVert \mathbf{F}_j(\mathbf{x}_n,\theta_1)\rVert/F_b}y_j, 
\end{align}    
where $k^0_{off}$ is the FA disassociation rate under no load, $\mathbf{F}_i$ is the force applied at the FA, given by equation \eqref{eq: Fi}, and $F_b$ is a characteristic force scale. The last factor $y_j$ simply indicates that only bound FAs can unbind (thus satisfying equation \eqref{assumption: nonzero a_j}). Clearly, the function in \eqref{eq: force unbinding rate} is integrable. Here we neglect the fact that the force may be applied at the FA (and consequently at the transmembrane receptors) at an angle and assume for tractability of the model that it is applied normally to the FA (hence consider only magnitude).

\textbf{Remark.} In the context of cell migration and within the framework of our model, we only consider FA disassembly on the cell periphery (including the lamellae). The primary cause of such FA unbinding has mechanical, rather than biochemical nature due to the cell contractile mechanism applying load to the adhesions. Although it is known that the Rho family of GTPases (in particular its member RhoA) mediates disassembly of FAs, their effect is indirect: the activity of myosin motors, which generate contractile forces in SFs, is regulated by RhoA \cite{Ridley2001}. Hence the force dependence of the unbinding rate $a_j^-$.  Recalling the definition of $\mathbf{F}_i$ in equation \eqref{eq: Fi}, we can include such indirect biochemical mediation by considering mediators of the force $T_i$. In order to keep the model tractable, we omit the interaction between RhoA and myosin motors.

\subsection{Binding dynamics}  
Consider the binding probability rate $a^+_j$ of the FA adhesion site $j\in\left\lbrace 1,\ldots,M\right\rbrace $ and let $\mathbf{y}\in\left\lbrace 0,1\right\rbrace ^M$, $\bm{\xi}=(\mathbf{x},\mathbf{x}_n,\theta_1)\in\Gamma$. The probability rate $a^+_j$ is given by:
\begin{align*}
a^+_j(\mathbf{y},\bm{\xi}) = k_{on,j}(\bm{\xi})(1-y_j), 
\end{align*}
where $k_{on,j}:\Gamma\rightarrow\mathbb{R}^+$ is the effective binding rate at FA site $j$. The last term $(1-y_j)$ simply indicates that only unbound FAs can bind. Whereas unbinding can be viewed effectively as a bond rupturing under applied tension, a binding reaction, or focal adhesion assembly and maturation, is a highly regulated process. Due to the complexity of the FA assembly process, we focus on two major mediators of FA formation: Rac activity and contractile forces. 
\subsubsection{Rac dependence}
For simplicity, we assume that the probability of FA formation is directly proportional to local Rac concentration. Consider the case of chemotactic cell migration. Leading edge protrusions preferentially form in the direction of a chemoattractant. Since Rac is required for formation of lamellipodium and FA formation \cite{Ridley2001}, then local Rac activity correlates with the concentration of the chemical cues. Conversely, local Rac activity negatively correlates with chemorepellent. 

Let $Q_{cue}:\mathbb{R}^2\rightarrow\mathbb{R}^+$ denote the concentration of a cue in the spatial domain and let $q:\mathbb{R}^+\rightarrow\mathbb{R}^+$ denote the $Q_{cue}$ dependent concentration of Rac. Clearly, $q$ is an increasing function for the case of an attractant and a decreasing function for a repellent. Then, 
\begin{align*}
k_{on,j}(\bm{\xi}) \propto q(Q_{cue}(\mathbf{x}+\mathbf{x}_j)),
\end{align*}
where we recall that $\mathbf{x}_j$ is the position of the $j^{\text{th}}$ FA site. 

For example, we can take $Q_{cue}(\mathbf{x})$ to be the density of the ECM (or chemoattractant) at $\mathbf{x}\in\mathbb{R}^2$ and take $q(x)=x$. Then, the probability of binding is simply proportional to the ECM (or chemoattractant) density.\footnote{This corresponds to a bi-molecular reaction rate, which depends on the \textit{number} of one of the reactants (FA) and on the \textit{concentration} (ECM or a chemoattractant) of the other.} Although a more complex function $q$ can be considered, such as those in \cite{giverso2018mechanical}, \cite{holmes2012modelling}, \cite{narang2006spontaneous}, in order to keep the minimal character of the model, we opted for simple linear relation. Moreover, following Model 4 in \cite{holmes2012modelling} and assuming no feedback with phosphoinositides, then in a steady state we have:
\begin{align*}
R &= \bar{R}\frac{\alpha}{\delta_R} C + \bar{R}(\hat{I}_R+Q_{cue})\nonumber\\
\rho &= \frac{\bar{\rho}}{\delta_{\rho}}\frac{\hat{I}_{\rho}}{1+\left(\frac{R}{a_2}\right)^n}\nonumber\\
C &= \frac{\bar{C}}{\delta_C}\frac{\hat{I}_C}{1+\left(\frac{R}{a_1}\right)^n},
\end{align*}
where $R,\rho,C$ denote the concentrations of Rac, RhoA, Cdc42, and the rest are constant parameters (see \cite{holmes2012modelling} for details). That is, there is a linear dependence of Rac on an external cue. 
\subsubsection{Force dependence}\label{section: adhesion force dependence}
\begin{wrapfigure}{hr}[0.1cm]{0.4\textwidth}
\vspace{-0.5cm}
\includegraphics[width=0.9\textwidth,height=0.9\textwidth]{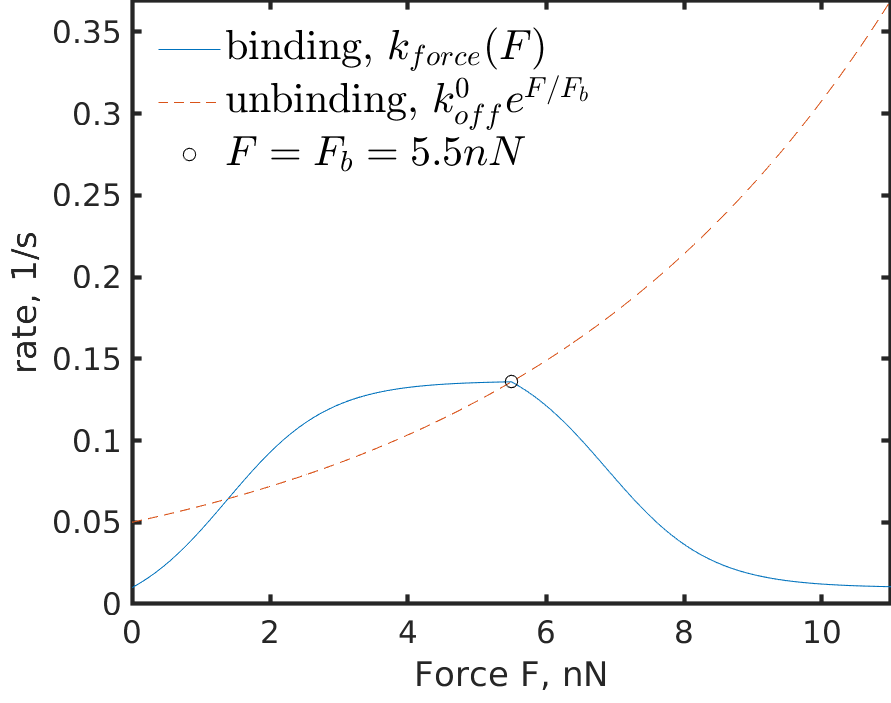}
\caption{Force dependence of unbinding and binding rates.}
\label{figure: force dependence}
\end{wrapfigure}

Note that the enlargement of nascent adhesions is concurrent with their maturation into focal adhesions. Thus, enlargement and maturation are synonymous. While the initial stage of adhesion growth is force-independent \cite{CVZWMH08}, maturation occurs in a force-dependent manner \cite{gardel2010mechanical}, \cite{stricker2011spatiotemporal}, \cite{Zaidel-Bar04}. 

However, during such a force-dependent maturation, the positive correlation between the adhesion size and the applied tension exists only in the initial stages of maturation. As FAs increase in size, the force dependence plateaus \cite{stricker2011spatiotemporal}. 

That is, the study by Stricker et. al. \cite{stricker2011spatiotemporal} showed that for mature FAs there is no correlation between applied force and FA size. One can consider an adhesion site as mature when its size reaches $\sim 1\mu m^2$ (see e.g. \cite{Balaban01}, \cite{OakesMBTF15}).

Choi et. al. \cite{CVZWMH08} showed that nascent adhesions assemble at a rate of $\sim 1.3\text{min}^{-1}=0.021s^{-1}$ reaching a size of $\sim0.2\mu m^2$. Furthermore, it was shown that the formation of these adhesions is independent of fibronectin density \cite{CVZWMH08}, stiffness \cite{CVZWMH08} and myosin II activity \cite{alexandrova2008comparative},\cite{CVZWMH08},\cite{vicente2007regulation}.

Let $k^{0}_{on}$ be the force-independent FA maturation rate. Since FAs are larger in size than nascent adhesions, which assemble at a rate of $0.021s^{-1}$, then their assembly is slower and hence we take $k^0_{on}=0.01s^{-1}$. We now want to find a function that could represent force dependence of FA maturation rate. Denote this function $k_{force}:\mathbb{R}^+\rightarrow\mathbb{R}^+$.  

It satisfies the following:
\begin{itemize}
\item $k_{force}(0) = k^0_{on}$, i.e. if there is no force applied, the rate is $k^0_{on}$.
\item If the applied force is below the characteristic force $F_b$, then $k_{force}$ is greater than the unbinding rate, i.e. it is more likely that an FA increases in size than that it ruptures. 
\item If the characteristic force $F_b$ is applied, the rate $k_{force}$ should equal the unbinding rate, i.e. we assume that there is a dynamic equilibrium in some sense. 
\item If the applied force is larger than $F_b$, then the unbinding rate dominates the binding rate. Note that as FA increases in size, the force dependence plateaus \cite{stricker2011spatiotemporal}. Thus, $k_{force}$ should plateau around $F_b$. We also assume that for large applied forces $k_{force}$ plateaus at $k^0_{on}$, since exceeding loads rupture integrin bonds frequently and impede stable maturation. 
\end{itemize}
The following form of $k_{force}$ satisfies the conditions above:
\begin{align}\label{eq: force binding rate}
k_{force}(F) = 
\begin{cases*}
\frac{k^0_{off}e+k^0_{on}}{1+\exp\left(-\gamma_1(F-F^*_1)/F_b\right)}+k^0_{on}-\epsilon,\phantom{a} F\leq F_b\\
\frac{k^0_{off}e+k^0_{on}}{1+\exp\left(\gamma_2(F-F^*_2)/F_b\right)}+k^0_{on},\phantom{a} \text{else}
\end{cases*},
\end{align}
where $F_1^*=F_b/4$ and $F_2^*=5F_b/4$ are the midpoints of the sigmoid functions (see Figure \ref{figure: force dependence}). To find the values of $\gamma_1,\gamma_2$ and $\epsilon$, see Appendix {\ref{appendix: parameters}}. 
 
\textbf{Remark.} Chan and Odde \cite{CO08} showed that adhesion site dynamics depends on substrate stiffness. In particular, they showed that for a stiff substrate the transmembrane bonds rupture more frequently, compared to the case with softer substrate under the same load, since the critical load is reached faster. This mechanism provides means for a cell to assess the surrounding rheology. Within the context of our model, this means that the force $F_b$ is smaller for the stiffer substrate, thus increasing the disassociation rate for the same load (see \eqref{eq: force unbinding rate}). Consequently, the force dependent binding rate $k_{force}$ also changes for the stiffer ECM. In this case, the curves in Figure \ref{figure: force dependence} will shift to the left. Therefore, our model provides an opportunity to include mechanosensitivity of migrating cells by considering the relevant dynamics for individual FAs.  

Therefore, the binding propensity rate $a_j^+$ of an adhesion $j\in\left\lbrace 1,\ldots,M\right\rbrace $ is given by:
\begin{align}\label{equation: binding propensity}
a^+_j(\mathbf{y},\bm{\xi}) = q(Q_{cue}(\mathbf{x}+\mathbf{x}_j))k_{force}(\lVert \mathbf{F}_j(\mathbf{x}_n,\theta_1)\rVert)(1-y_j).
\end{align}

\section{Numerical Simulations}\label{section: numerical simulations}
Here we show the results of simulating cell trajectories under different scenarios, which represent various experimental settings, namely:
\begin{enumerate}
\item Uniform environment with no cues.
\item Non-uniform environment with external cue gradient and uneven myosin motor activity within a cell. 
\item Striped ECM architecture.
\end{enumerate}

Note that the total number of adhesion sites $M$ is a free parameter, which differs from cell to cell. Nevertheless, it is a crucial parameter, determining whether the motility type is amoeboid or mesenchymal. Amoeboid motility is characterized by a large number of weak adhesions, high turnover, and more contractile cell body. On the other hand, mesenchymal migration relies on fewer, but stronger focal adhesions with slower turnover and lower overall contractility. The cells with the former motility type are faster and have more diffusive motion \cite{liu2015confinement}, \cite{pavnkova2010molecular}. Note that if $a_j^{\pm}\sim O(1)$, then the rate function is $a_0\sim O(M)$. Therefore, adhesion events occur more frequently for increasing $M$, implying higher adhesion turnover. Thus, by varying $M$, our model is capable of explaining this particular aspect of the difference between the two migration types.  

For all scenarios we employ the same initial conditions for all cells. Namely, at $t=0$ the conditions are: 
\begin{itemize}
\item $\mathbf{x}$ is at the origin, $\mathbf{x}_n$ is uniformly distributed on a circle with radius $R_{cell}$, and $\theta=0$.
\item Each FA is in (un)bound state  and each cell is in moving state with probability $1/2$. 
\end{itemize}
We simulate  trajectories of $n_{cell}:=56$ cells for time $t_{end}:=600min$. We divided the time interval into $n_{time}:=5000$ intervals, at the end of which we recorded the cell centroid positions $\mathbf{x}$. For details on the parameters and numerical methods used for simulations, see Appendix \ref{appendix: parameters} and \ref{appendix: simulation of the pdmp}, respectively. For details on the data analysis, see Appendix \ref{appendix: data analysis}.
\subsection{Uniform environment}
The results of the simulation with uniform spatial cue $Q_{cue}$ are presented in Figure \ref{figure: uniform results}. Due to the absence of spatial variation of $Q_{cue}$, we take $q=1$ in equation \eqref{equation: binding propensity}. 
             
\begin{figure}
\centering
\subfloat[]
{
	\includegraphics[width=40mm,height=37mm]{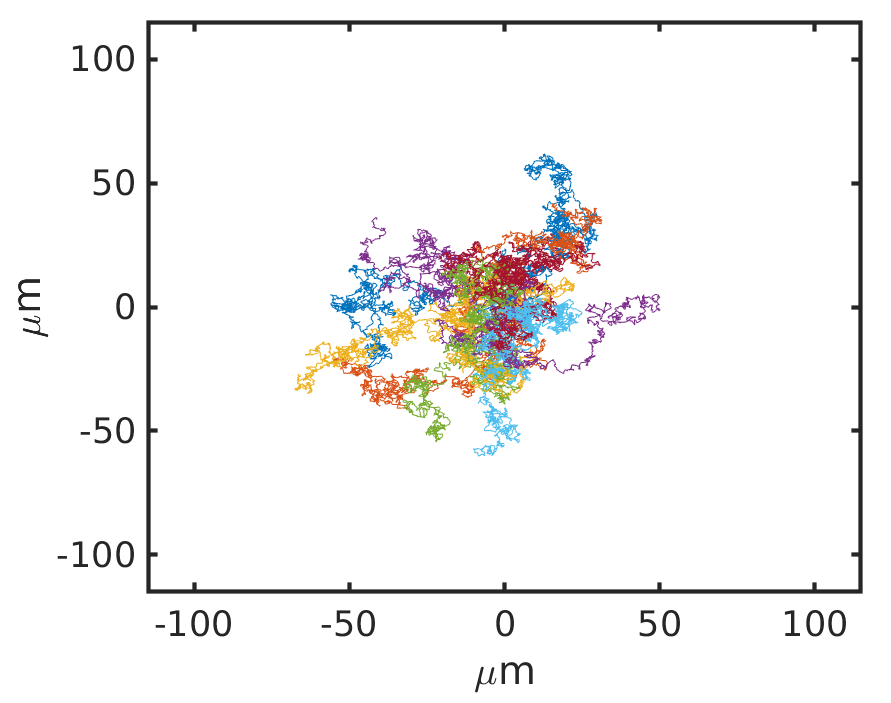}
}
\subfloat[]
{
	\includegraphics[width=40mm,height=37mm]{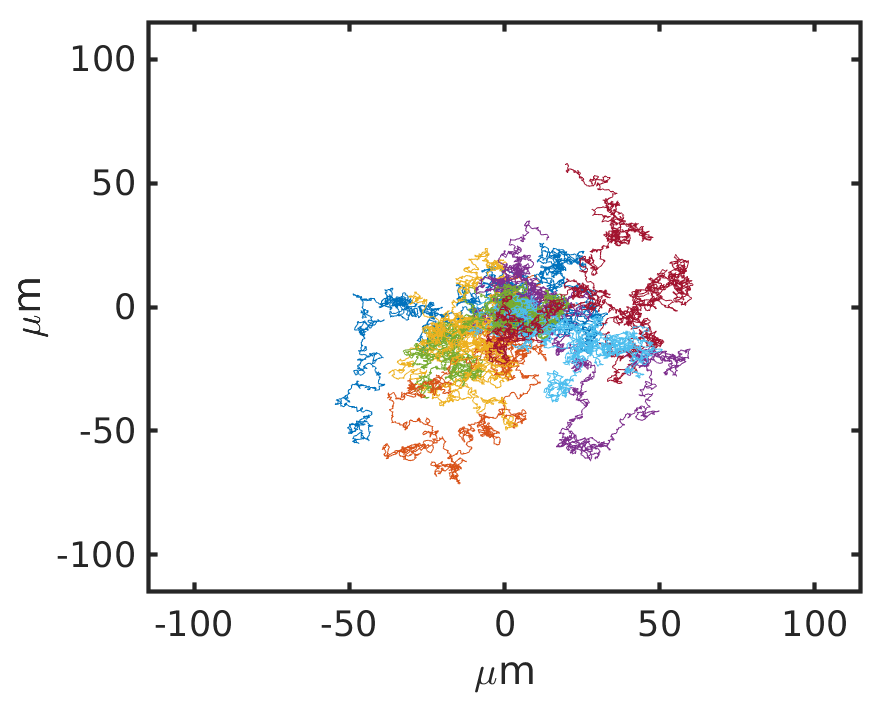}
}
\subfloat[]
{
	\includegraphics[width=40mm,height=37mm]{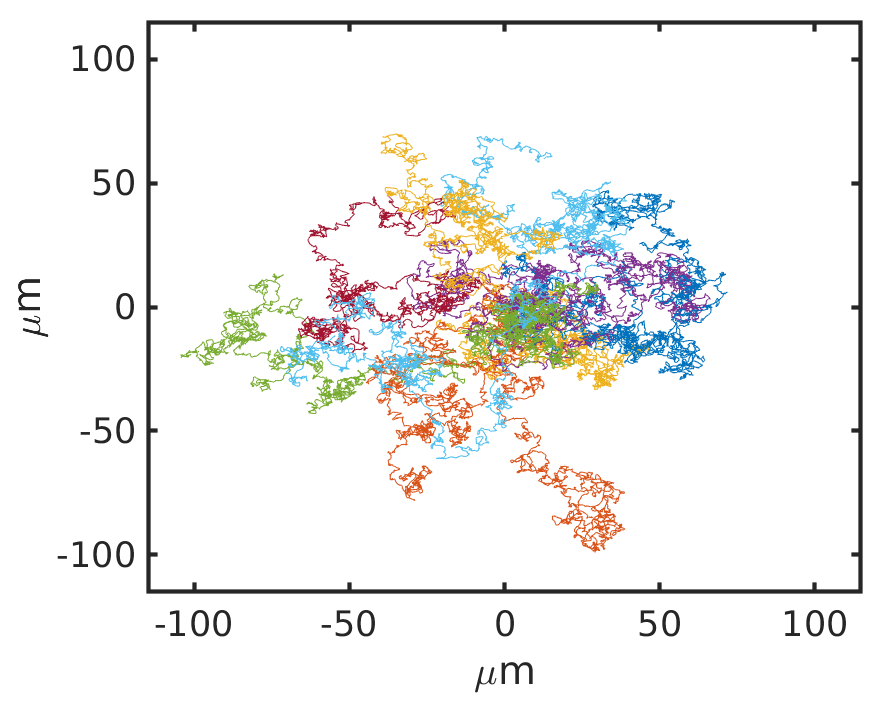}
}
\hspace{0mm}
\subfloat[]
{
	\includegraphics[width=40mm,height=37mm]{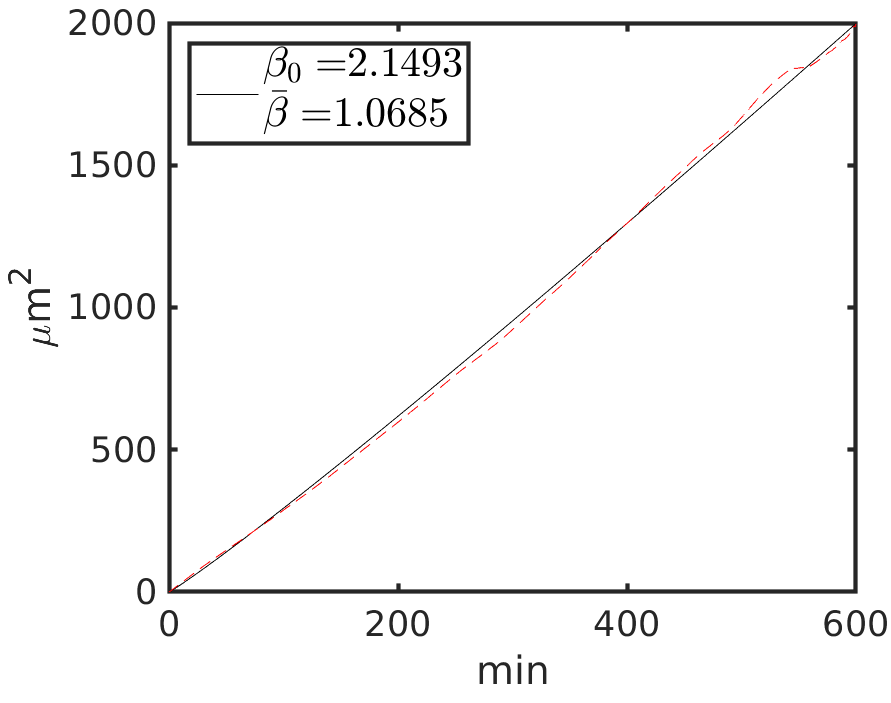}
}
\subfloat[]
{
	\includegraphics[width=40mm,height=37mm]{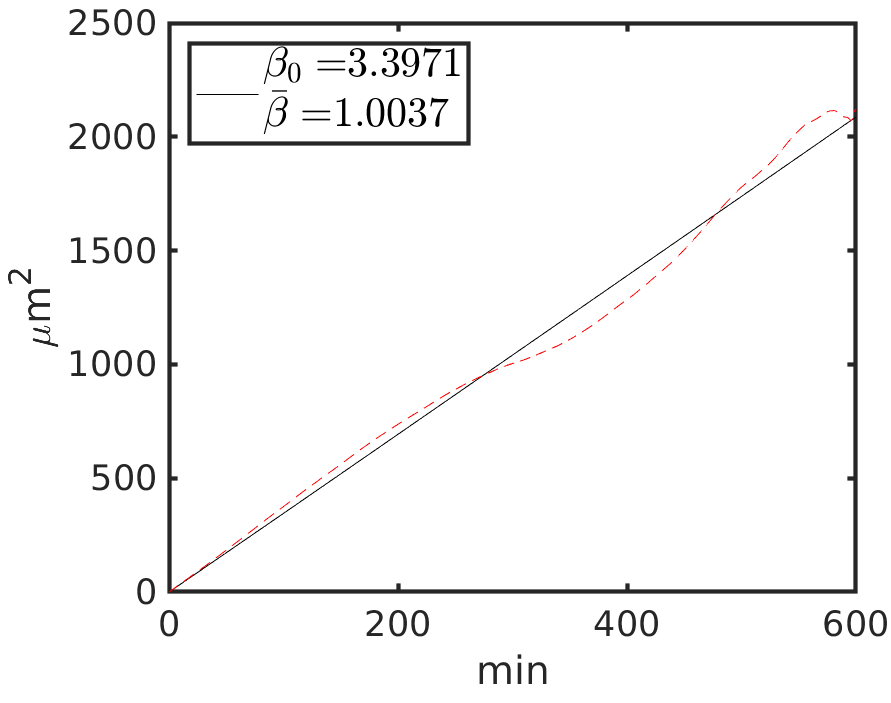}
}
\subfloat[]
{
	\includegraphics[width=40mm,height=37mm]{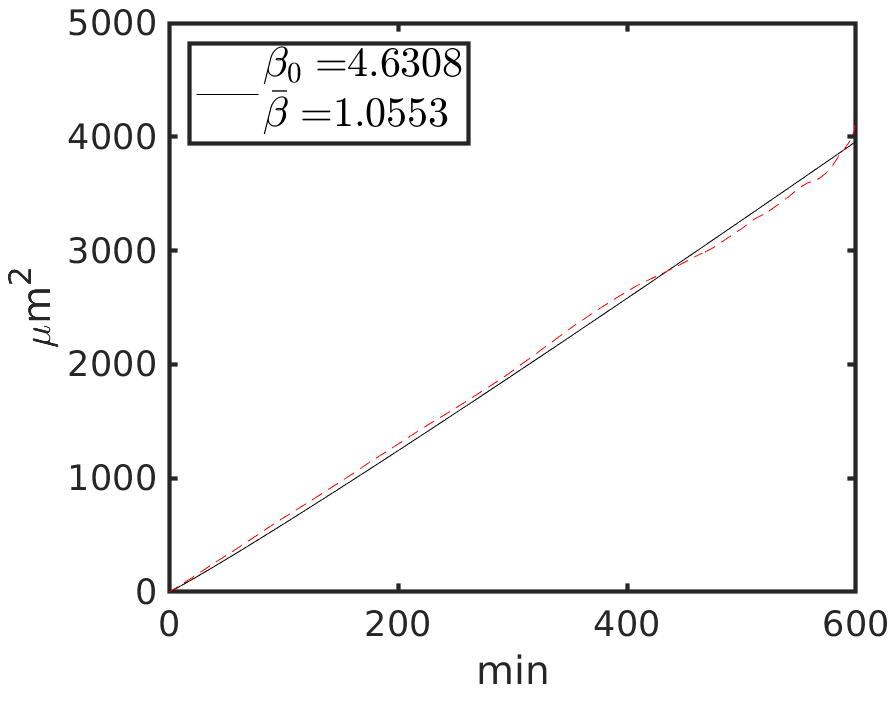}
}
\hspace{0mm}
\subfloat[]
{
	\includegraphics[width=40mm,height=37mm]{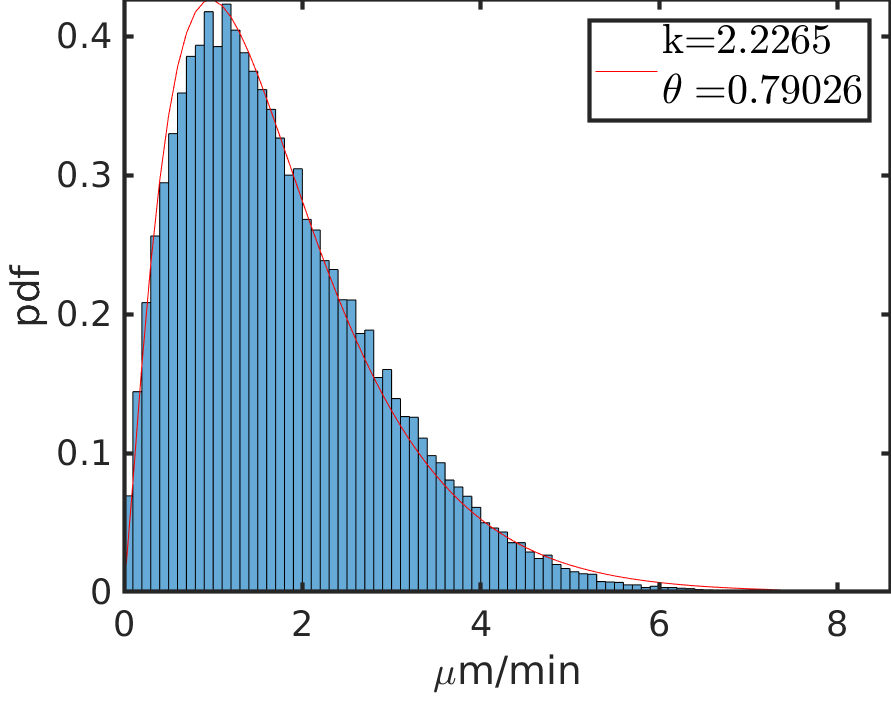}
}
\subfloat[]
{
	\includegraphics[width=40mm,height=37mm]{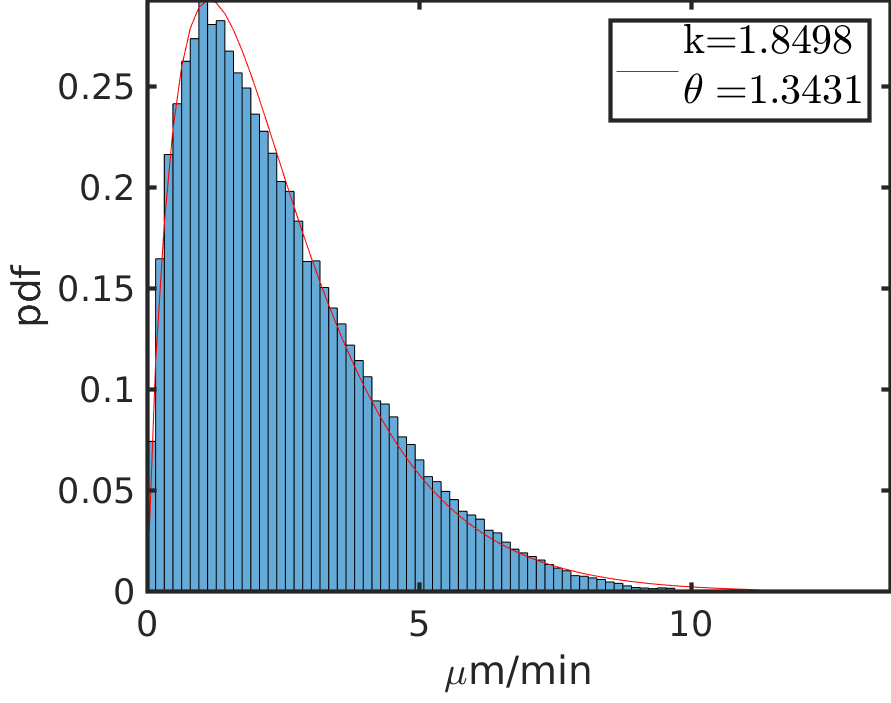}
}
\subfloat[]
{
	\includegraphics[width=40mm,height=37mm]{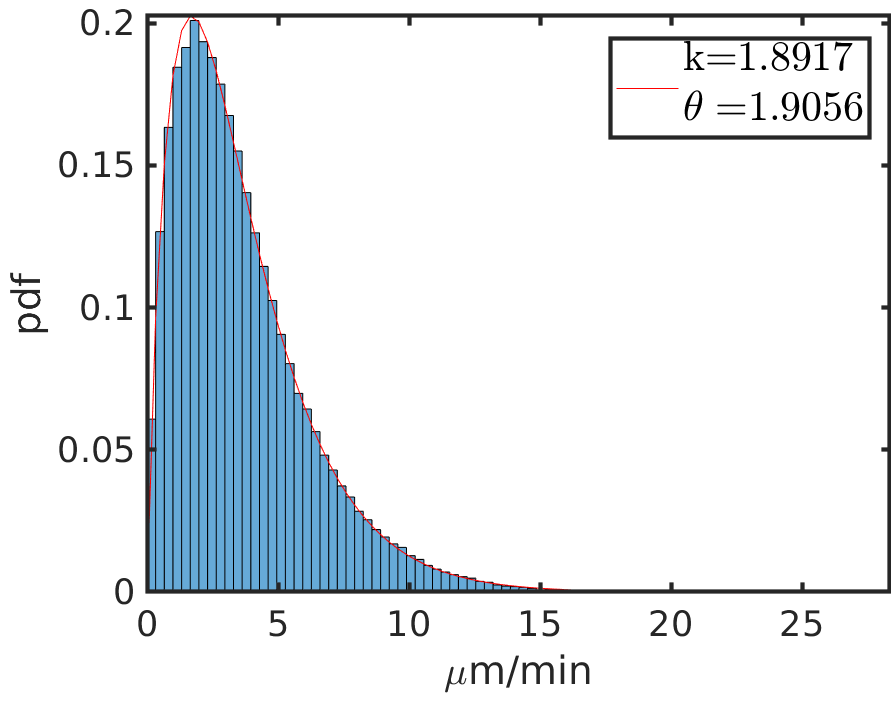}
}
\hspace{0mm}
\subfloat[]
{
	\includegraphics[width=40mm,height=37mm]{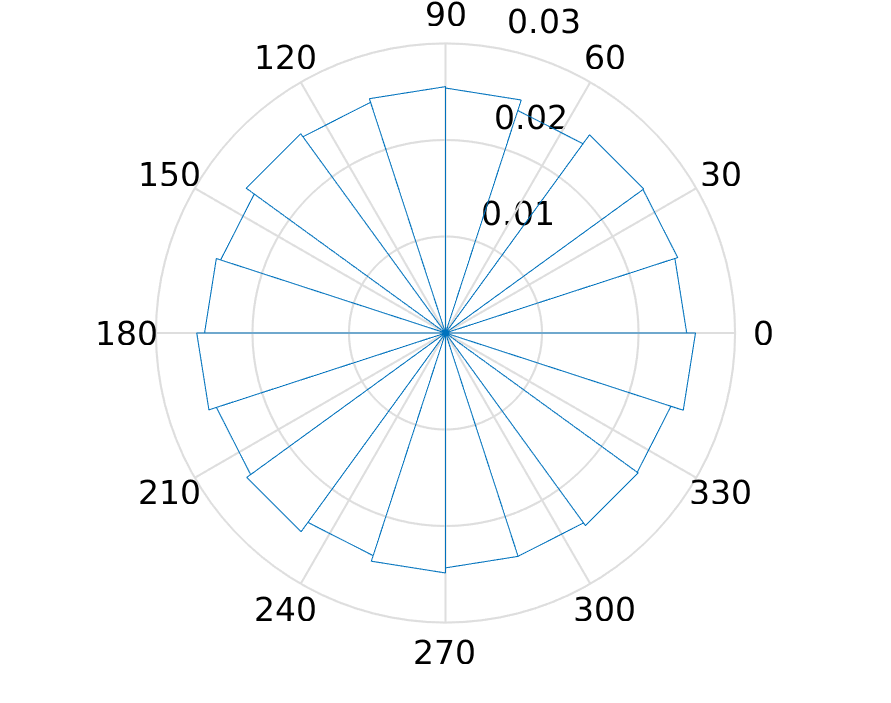}
}
\subfloat[]
{
	\includegraphics[width=40mm,height=37mm]{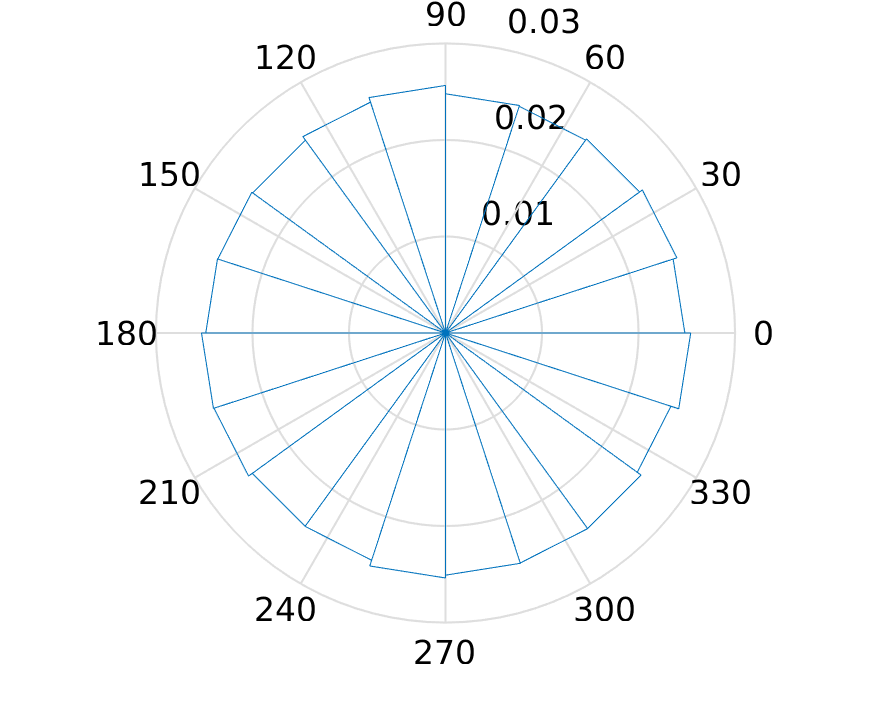}
}
\subfloat[]
{
	\includegraphics[width=40mm,height=37mm]{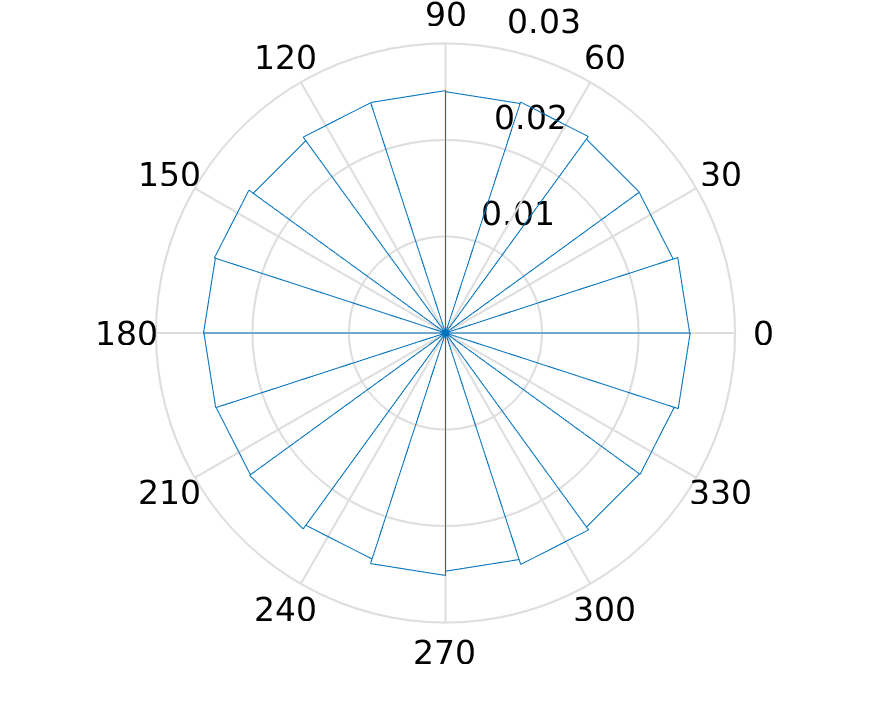}
}
\caption{Simulation results with $M=8,16,32$ adhesions in the first, second, and third columns respectively. (a-c) Trajectories of 13 cell centroids $\mathbf{x}(t)$. (d-f) Mean-squared displacements $msd(t)$ (red, dash) and fitted $\widehat{msd}(t)$ (black, solid) with parameters $\beta_0$ and $\bar{\beta}$ (see text for details). The unit of $\beta_0$ is $\mu m^2/min^{\bar{\beta}}$. (g-i) Histograms of speed probability density functions and fitted density function of gamma distribution (red) with parameters $k$ and $\theta$. (j-l) Relative frequency histogram of normalized velocities.}
\label{figure: uniform results}
\end{figure}
The cell trajectories with varying number of adhesion sites, depicted in Figure \ref{figure: uniform results} (a-c), show no clear trend and resemble those of a Brownian motion. Indeed, fitting the mean-squared displacement $msd(t)$ to the curve $\widehat{msd}(t) = \beta_0t^{\bar{\beta}}$ (see Appendix \ref{appendix: data analysis} for details), we see that the exponent is $\bar{\beta}\approx1$ for the three cases (see Figure \ref{figure: uniform results} (d-f)). This suggests that the cell motion has diffusive characteristics in this scenario. In Figure \ref{figure: uniform results} (g-i) we see the simulated distribution of speeds. The average speeds $s_{av}$ and the parameters $\beta_0$, $\bar{\beta}$ are shown in Table \ref{table: uniform parameters}. We see that as $M$ increases, the cell motion becomes progressively faster and more diffusive\footnote{Since $\bar{\beta}\sim1$, the slope $\beta_0$ is a measure of diffusivity. See Appendix \ref{appendix: data analysis}.}, which is expected for a dominantly amoeboid type of motility. Because $\bar{\beta}\sim1$, we can estimate the diffusion coefficient $D=\beta_0/4$. The obtained values are lower, but within an order of magnitude estimated by Liu et. al. \cite{liu2015confinement}, who found that $D\approx2.7\mu m^2/min$. Interestingly, the gamma distribution gives a very good fit to the simulated data for various values of $M$, suggesting that cell speeds are gamma distributed. Moreover, the obtained values of the average speed $s_{av}$ fall in the range reported by Liu et. al. \cite{liu2015confinement}, who found the speeds to be in the range from $0.2\mu m/min$ to $7\mu m/min$. Although there are very high speeds observed in Figure \ref{figure: uniform results} (i), which seem to be outliers, speeds as high as $25\mu m/min$ have been observed \cite{friedl1994locomotor}. As expected, rose plots of normalized velocities show no bias in any particular direction in Figure \ref{figure: uniform results} (j-l). Along with time scaling of the squared displacement, persistence of motion can be measured by directionality ratio (distance between cell centroids divided by path length) and velocity autocorrelation \cite{gorelik2014quantitative}. Expectedly, Figure \ref{figure: persistence uniform} illustrates that the cells strongly deviate from straight-path migration (Figure \ref{figure: persistence uniform} (left); see also time average of the directionality ratio $\bar{r}$ in Table \ref{table: uniform parameters}) and the deviation directions are uncorrelated (Figure \ref{figure: persistence uniform} (right)). The rapid decay in Figure \ref{figure: persistence uniform} (right) also suggests that correlations in time become stationary very fast. The increased oscillations in Figure {\ref{figure: persistence uniform}} (right) towards the end of the observation window are due to decreased number of observations (see Appendix \ref{appendix: data analysis}).  

\begin{figure}[h]
\captionsetup[subfigure]{labelformat=empty}
\subfloat[]
{
	\includegraphics[width=55mm,height=50mm]{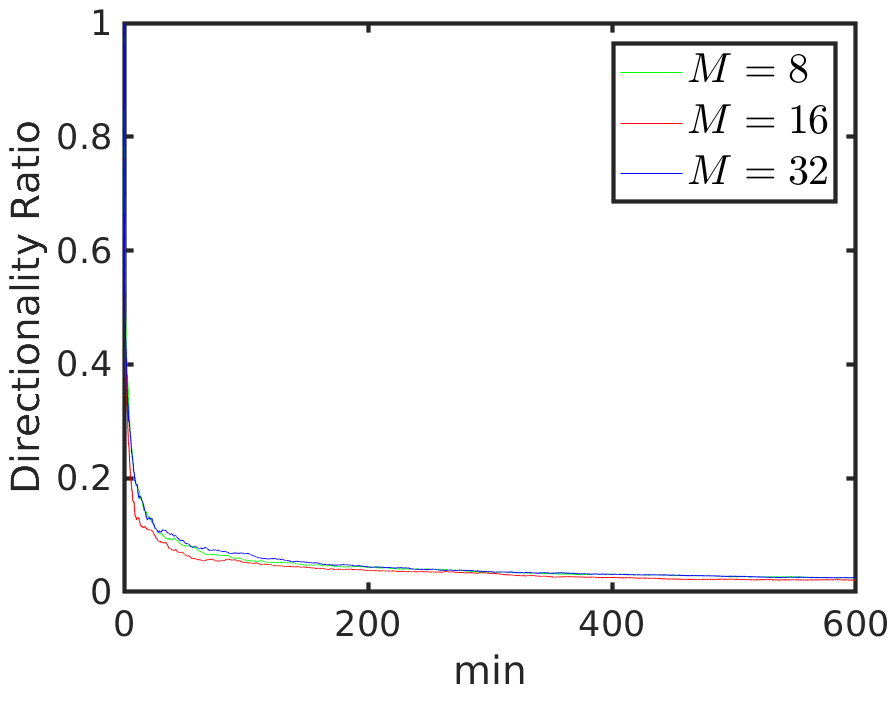}
}
\subfloat[]
{
	\includegraphics[width=55mm,height=50mm]{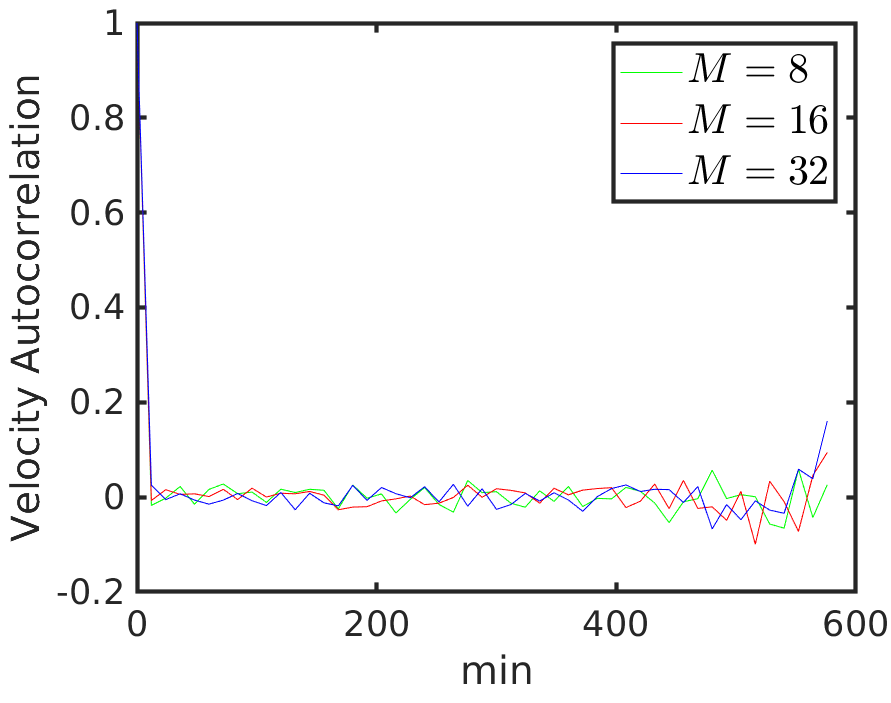}
}
\caption{Directionality ratio (left) and velocity autocorrelation (right) for $M=8$ (green), $16$ (red), $32$ (blue).}
\label{figure: persistence uniform}
\end{figure}

\begin{table}[h]
\centering
\def\arraystretch{1.5}
\begin{tabular}{|c|c c c|}
\hline
M & 8 & 16 & 32\\
\hline
$s_{av}$, $\mu m/min$ & 1.7595 & 2.4845 & 3.6047\\
$\bar{\beta}$, 1 & 1.0683 & 1.0035 & 1.0552\\
$\beta_0$, $\mu m^2/min^{\bar{\beta}}$ & 2.1493 & 3.3971 & 4.6308\\
$\bar{r}$, 1 & 0.0483 & 0.0408 & 0.0497\\
\hline

\end{tabular}
 \caption{Parameters obtained from the simulations.}
 \label{table: uniform parameters}
\end{table}

Although our results show in Figure \ref{figure: uniform results} (d-f) the mean-squared displacement scales diffusively (i.e. linearly) with time, this is not consistent with the reported results. For example, Dietrich et. al. \cite{dieterich2008anomalous}, Liang et. al. \cite{liang2008persistent}, and Liu et. al. \cite{liu2015confinement} showed that the displacement scales superdiffusively. In these studies it was experimentally found that the time scaling went as $\sim t^{\bar{\beta}}$, where $\bar{\beta}\approx1.2-1.5$. The primary reason why, in our case, we have diffusive behavior is that, given a certain state of adhesion sites, there is a complete circular symmetry of the probability rates $a_j^\pm$ with respect to $\mathbf{x}_n$ variable. Due to this symmetry, then, the probability of moving in one direction is exactly the same as the probability of moving in the opposite direction if we rotate $\mathbf{x}_n$ by $\pi$ radians. Hence, somewhat reminiscent of a random walk, we obtain a diffusive time scaling of the squared displacement. Moreover, this symmetry of the probability rates is somewhat idealistic, since it implies that the signaling activity relevant for adhesion dynamics is homogeneous within a cell. One of the ways to break this symmetry, is to multiply each binding probability rate $a_j^+$ by $1+u$, where $u\sim U(-\delta,\delta)$ is uniformly distributed on the interval $(-\delta,\delta)$ with $\delta\in(0,1)$. Then, on average, the rates are unmodified\footnote{The multiplication factor $1+u$ for each $j=1,\ldots,M$ of every cell is computed at the beginning of simulations and is held fixed thereafter.}. This way, we not only simulate a non-homogeneous binding rate (and hence, for example, non-homogeneous Rac activity) within a cell, but also simulate otherwise completely identical copies of cells. Such a modification, where we do not explicitly apply a directed, predefined bias can be referred to as chemokinesis \cite{PDY09}. 

\begin{figure}
\centering
\subfloat[]
{
	\includegraphics[width=40mm,height=37mm]{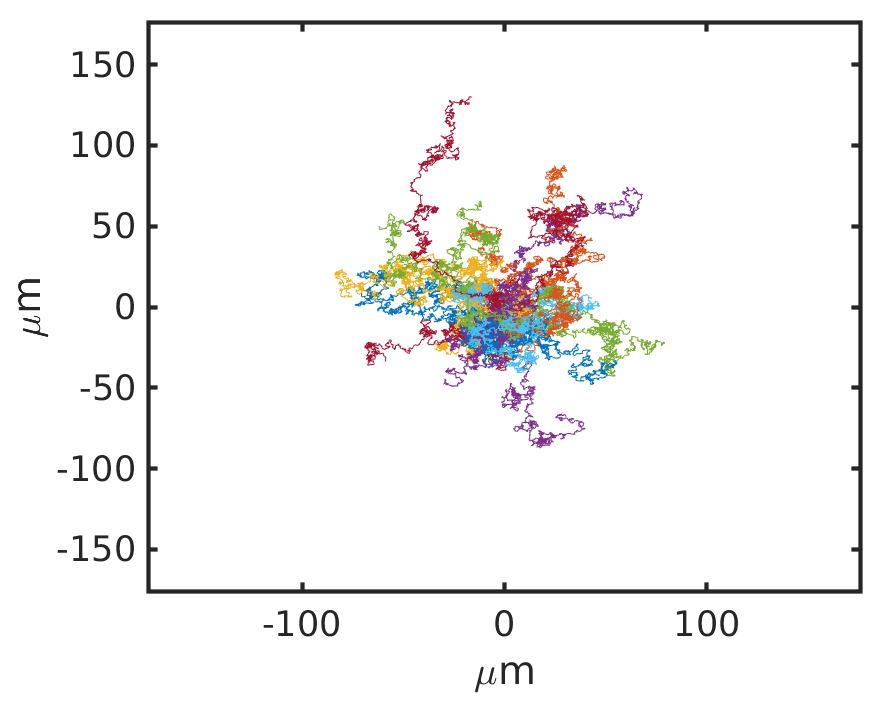}
}
\subfloat[]
{
	\includegraphics[width=40mm,height=37mm]{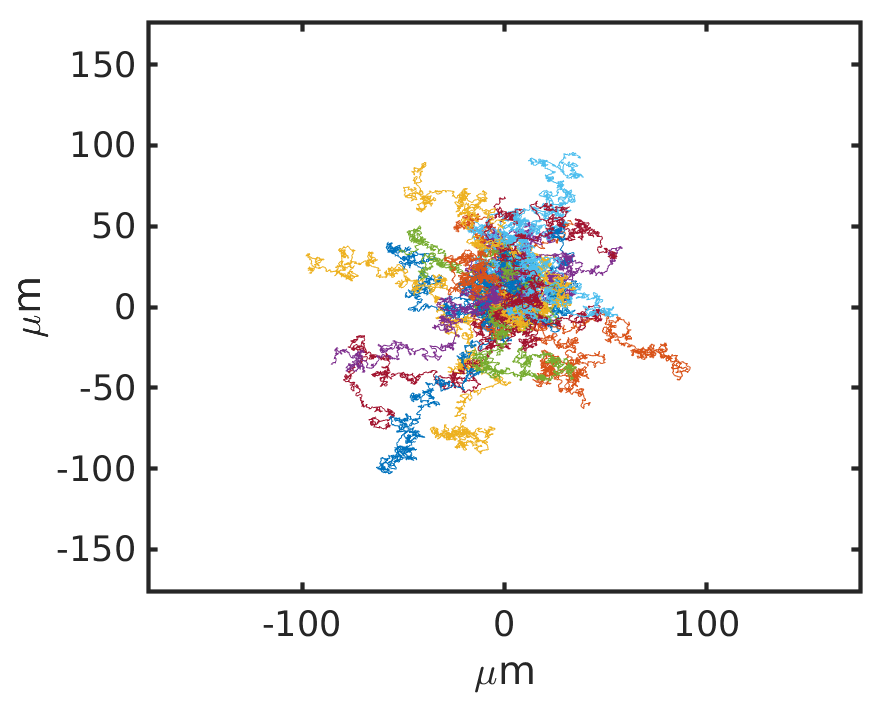}
}
\subfloat[]
{
	\includegraphics[width=40mm,height=37mm]{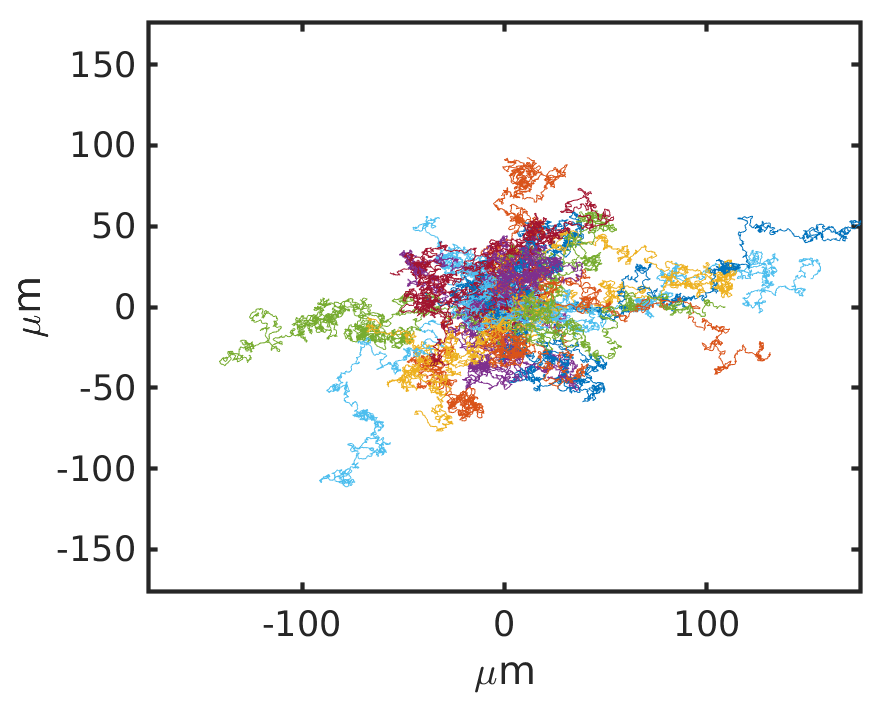}
}
\hspace{0mm}
\subfloat[]
{
	\includegraphics[width=40mm,height=37mm]{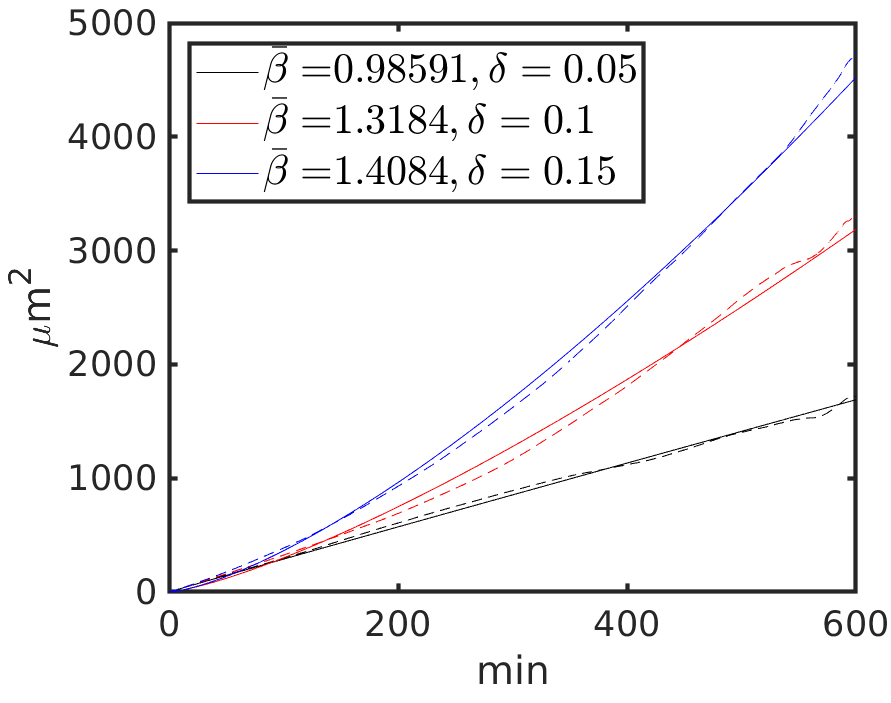}
}
\subfloat[]
{
	\includegraphics[width=40mm,height=37mm]{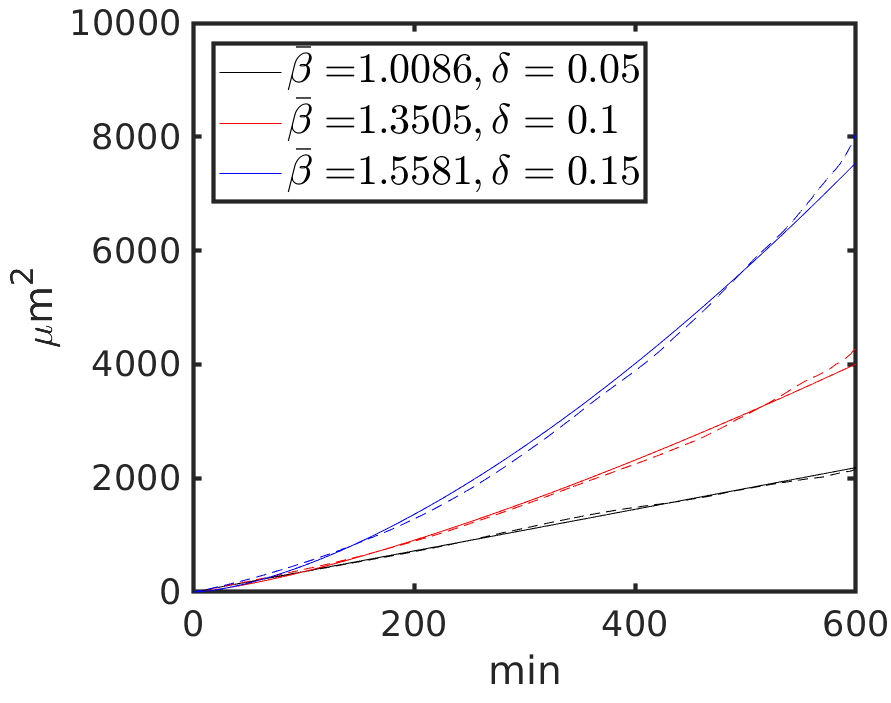}
}
\subfloat[]
{
	\includegraphics[width=40mm,height=37mm]{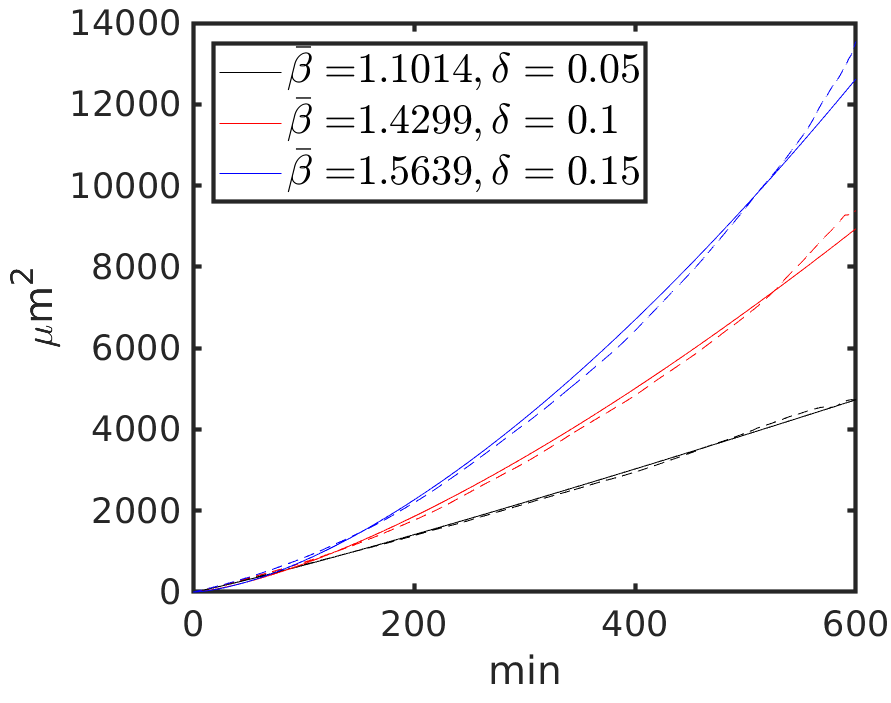}
}
\hspace{0mm}
\subfloat[]
{
	\includegraphics[width=40mm,height=37mm]{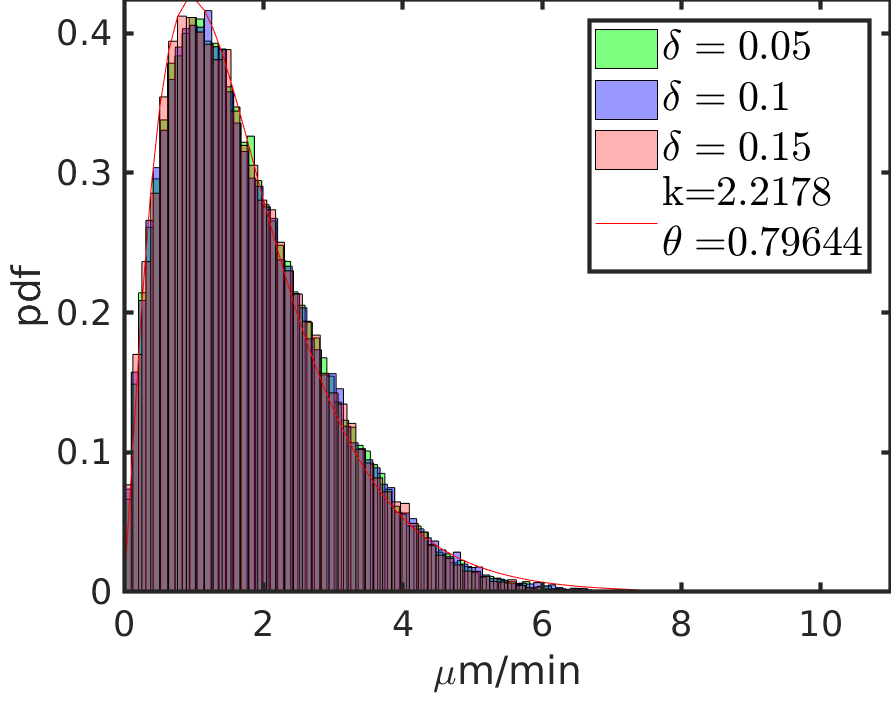}
}
\subfloat[]
{
	\includegraphics[width=40mm,height=37mm]{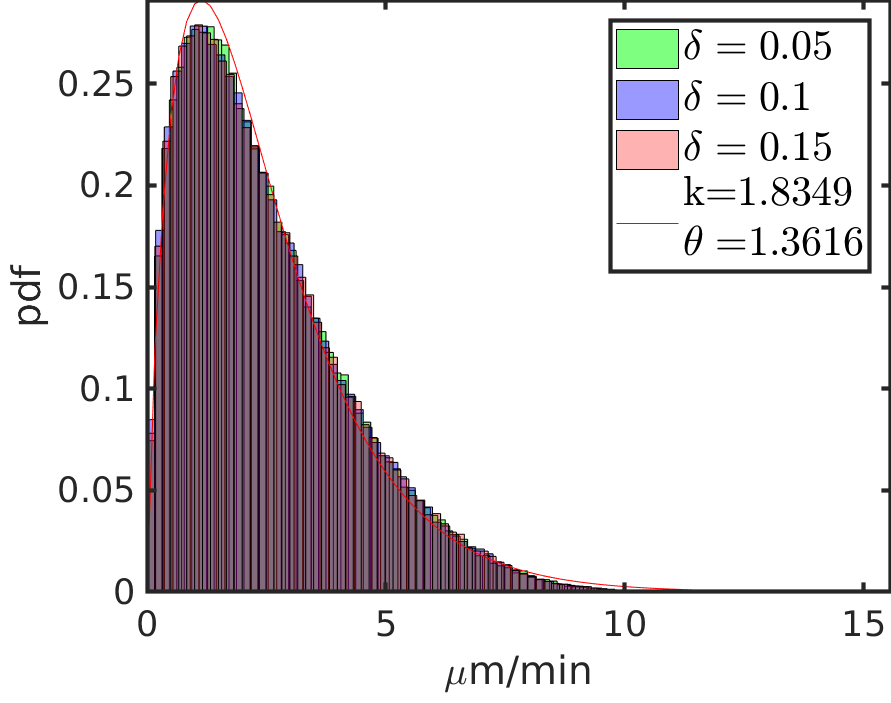}
}
\subfloat[]
{
	\includegraphics[width=40mm,height=37mm]{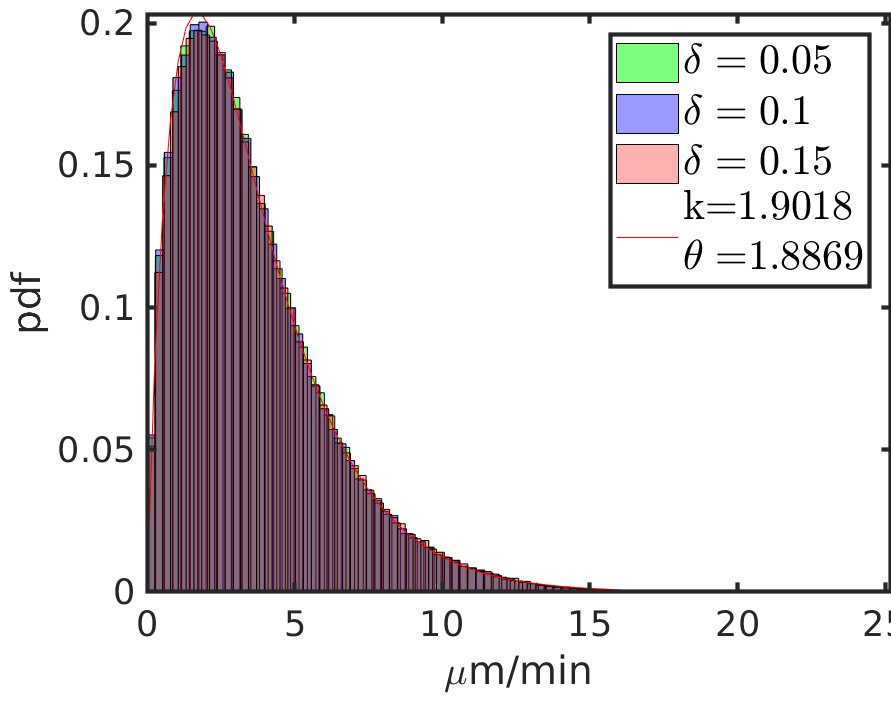}
}
\hspace{0mm}
\subfloat[]
{
	\includegraphics[width=40mm,height=37mm]{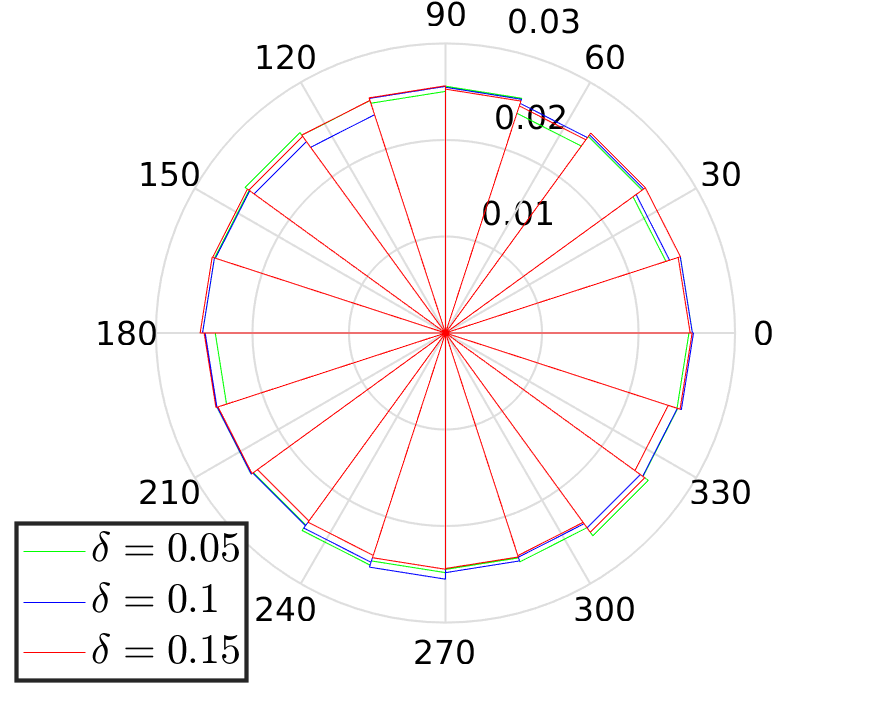}
}
\subfloat[]
{
	\includegraphics[width=40mm,height=37mm]{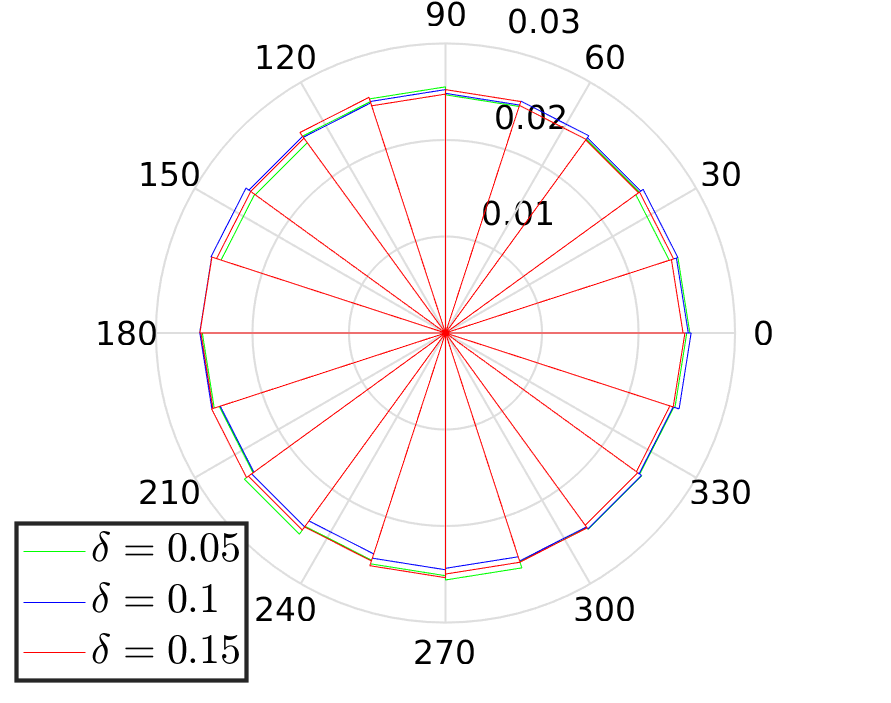}
}
\subfloat[]
{
	\includegraphics[width=40mm,height=37mm]{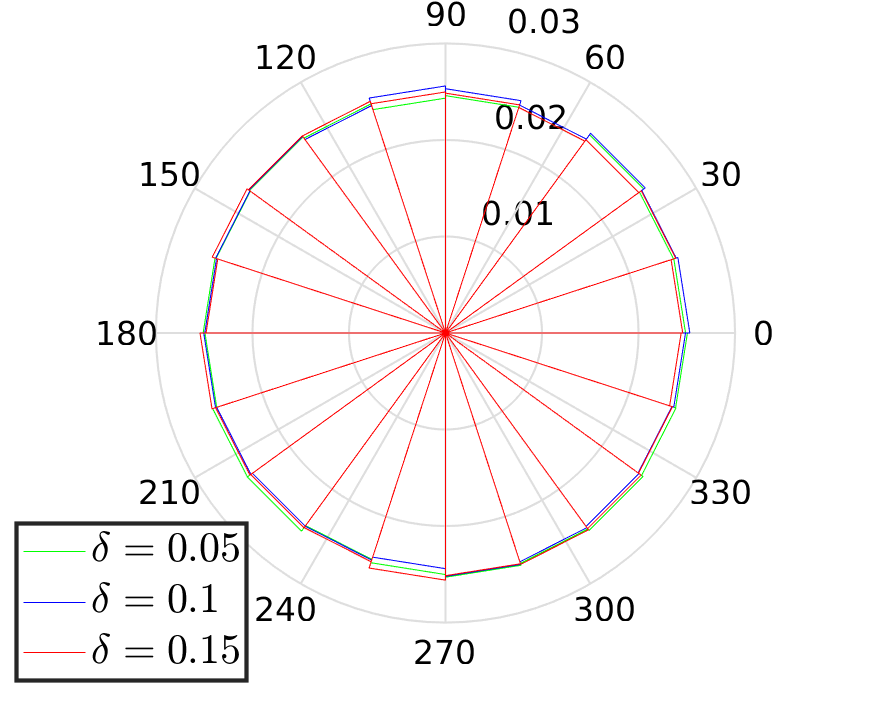}
}
\caption{Simulation results with $M=8,16,32$ adhesions in the first, second, and third columns respectively, and with various values of $\delta$. (a-c) Trajectories of 27 cell centroids $\mathbf{x}(t)$ with $\delta=0.1$. (d-f) Mean-squared displacements $msd(t)$ (solid) and fitted $\widehat{msd}(t)$ (dash) with $\delta=0.05$ (black), $0.1$ (red), $0.15$ (blue). (g-i) Superimposed histograms of speed probability density functions and fitted density function of gamma distribution (solid red) with average parameters $k$ and $\theta$ (see text for details). (j-l) Superimposed histogram of relative frequency of normalized velocities. }
\label{figure: uniform bias results}
\end{figure}
\begin{table}[h]
\centering
\def\arraystretch{1.5}
\setlength\tabcolsep{5.5pt}
\begin{tabular}{|c|c c c|c c c|c c c|}
\hline

M & \multicolumn{3}{c|}{8} & \multicolumn{3}{c|}{16} & \multicolumn{3}{c|}{32}\\
\hline
$\delta$ & 0.05 & 0.1 & 0.15 & 0.05 & 0.1 & 0.15 & 0.05 & 0.1 & 0.15\\
\hline
$\bar{\beta}$, 1 & 0.9859 & 1.3184 & 1.4084 & 1.0086 & 1.3505 & 1.5581 & 1.1014 & 1.4299 & 1.5639\\
\hline
\begin{tabular}{@{}c@{}}$s_{av}$, \\ $\mu m/min$\end{tabular}  & 1.7656 & 1.7768 & 1.7557 & 2.4918 & 2.5009 & 2.5021 & 3.5818 & 3.5735 & 3.6104\\
\hline
\begin{tabular}{@{}c@{}}$\beta_0$, \\ $\mu m^2/min^{\bar{\beta}}$\end{tabular}& 3.0846 & 0.6934 & 0.5534 & 3.4517 & 0.7103 & 0.3543 & 4.1257 & 0.9536 & 0.5716\\
\hline
$\bar{r}$, 1 & 0.0452 & 0.0519 & 0.0597 & 0.0440 & 0.0513 & 0.0587 & 0.0522 & 0.05 & 0.0623\\ 
\hline
\end{tabular}
 \caption{Parameters obtained from the simulations with varying $\delta$.}
 \label{table: uniform bias parameters}
\end{table}

The effect of modifying the rates $a_j^+$ with $\delta=0.05,0.1,0.15$ can be seen in Figure \ref{figure: uniform bias results}. The cell trajectories, depicted in Figure \ref{figure: uniform bias results} (a-c),  show that the motion consists of periods with relatively regular path intermingled with highly irregular and random movement. In Figure \ref{figure: uniform bias results} (d-f) we see that the rate modification leads to a superdiffusive time scaling of the mean-squared displacement, as the exponent $\bar{\beta}$ becomes larger than one and falls within the experimentally observed range of values \cite{dieterich2008anomalous}, \cite{liang2008persistent}, \cite{liu2015confinement}. Moreover, we see that as $\delta$ increases, so does $\bar{\beta}$, and the increase of the latter is more pronounced for a larger number of adhesion sites $M$ (see also Table \ref{table: uniform bias parameters}). This is due to the fact that as each adhesion site is modified independently, the variance of the modified rates of a cell grows with the number of FAs, which corresponds to increased cell polarization, and hence more prominent persistent motion resulting in higher values of $\bar{\beta}$. However, the distribution of speeds for the corresponding values of $M$ is virtually identical to the case with the unmodified probability rates (Figure \ref{figure: uniform bias results} (g-i) and Table \ref{table: uniform bias parameters}). The uniform distribution of normalized velocities also remained unchanged (Figure \ref{figure: uniform bias results} (j-l)). These results suggest that in the absence of spatial cues, the distribution of speeds for a given adhesiveness (represented by the total number of adhesions $M$) remains invariant under symmetry breaking of adhesion binding, while the diffusion type (normal vs. anomalous) does not. Thus, the adhesion number and its turnover is a major determinant of the cell speed, which is consistent with  \cite{pavnkova2010molecular}.  

\begin{figure}[h]
\centering
\subfloat[]
{
	\includegraphics[width=40mm,height=37mm]{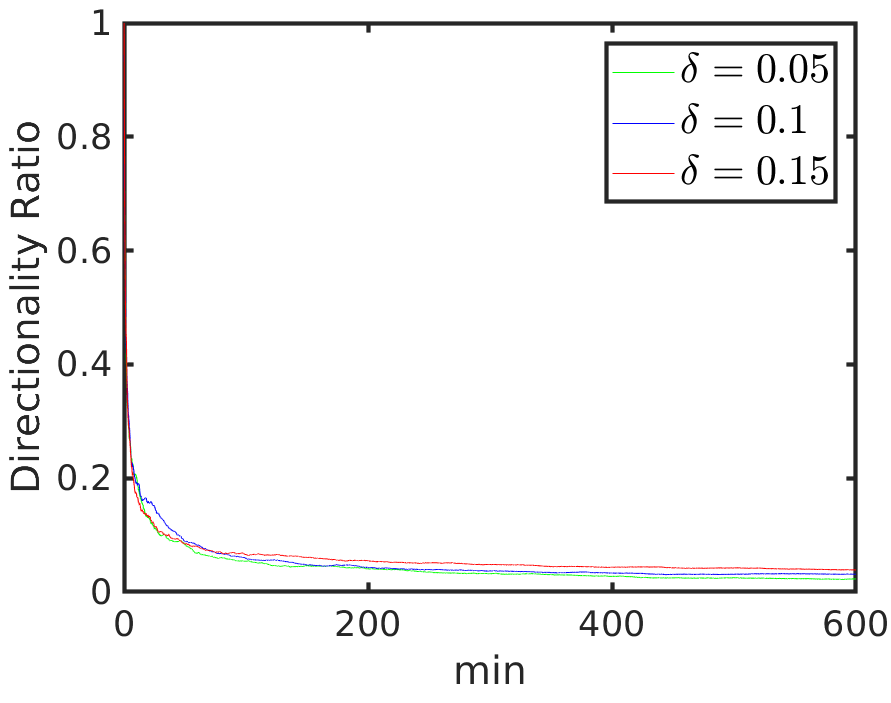}
}
\subfloat[]
{
	\includegraphics[width=40mm,height=37mm]{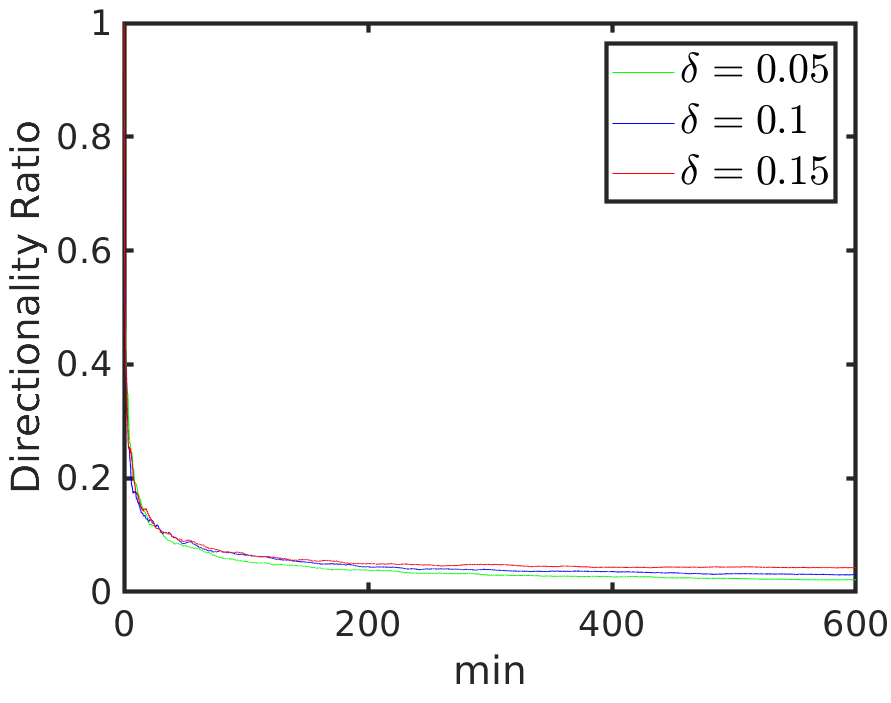}
}
\subfloat[]
{
	\includegraphics[width=40mm,height=37mm]{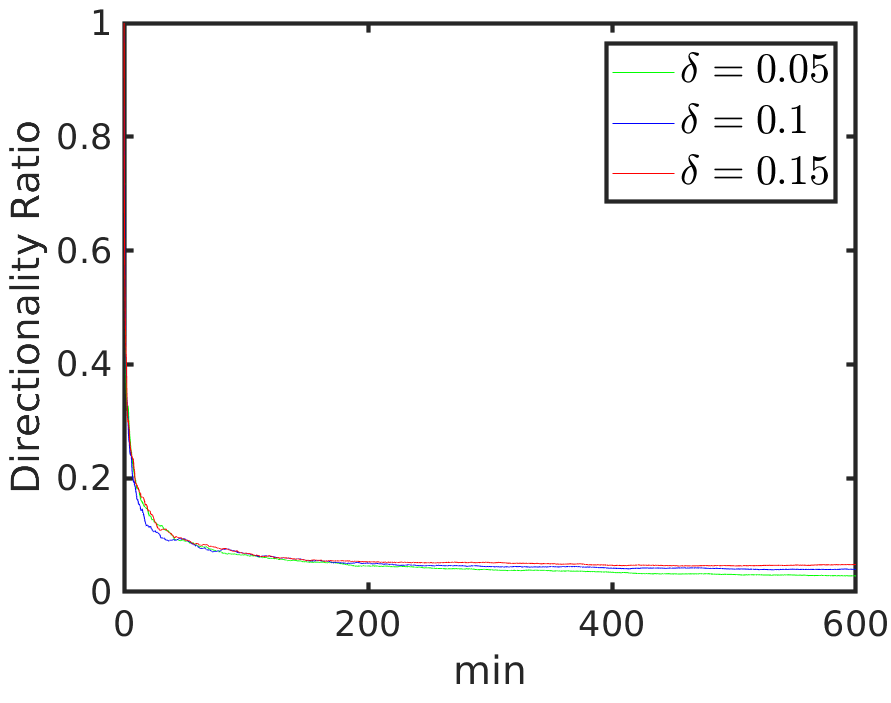}
}
\hspace{0mm}
\subfloat[]
{
	\includegraphics[width=40mm,height=37mm]{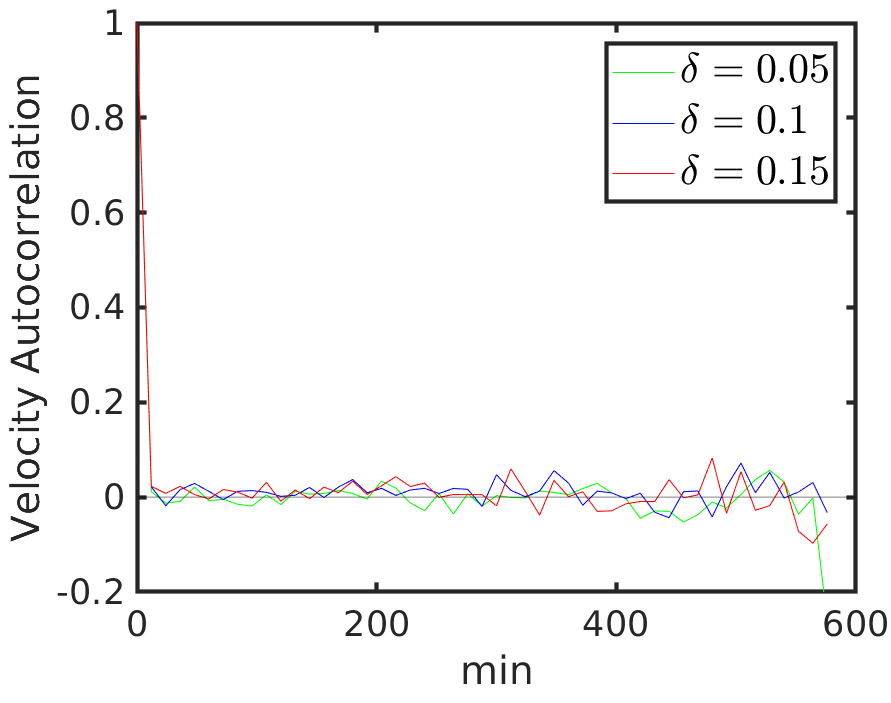}
}
\subfloat[]
{
	\includegraphics[width=40mm,height=37mm]{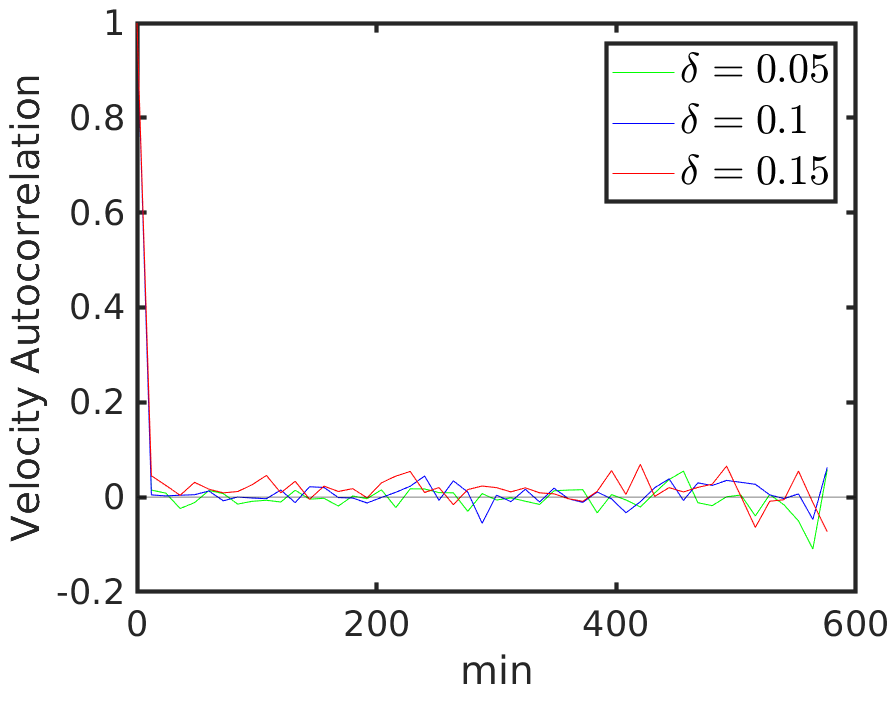}
}
\subfloat[]
{
	\includegraphics[width=40mm,height=37mm]{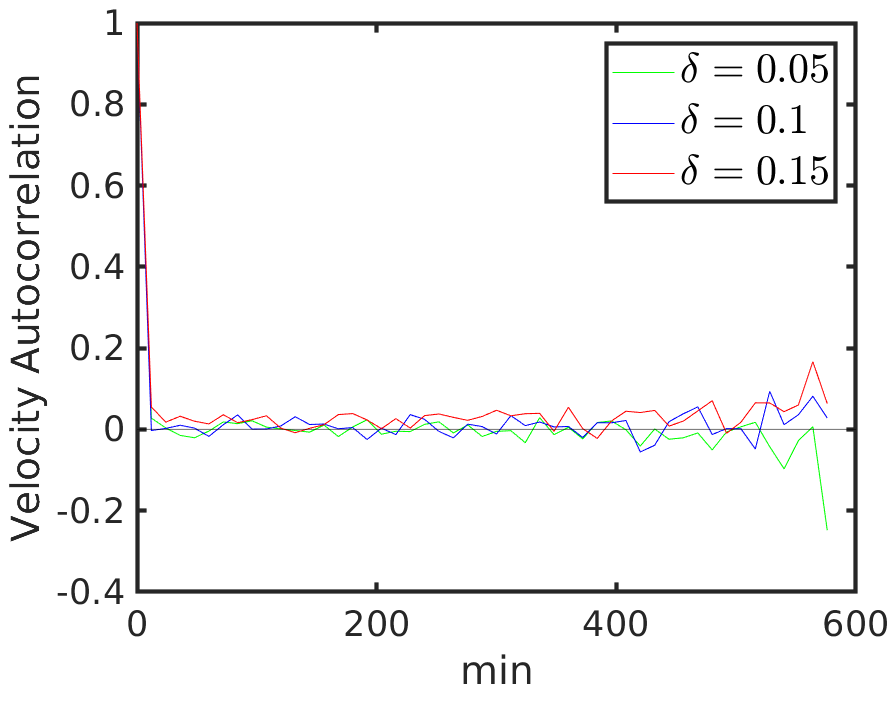}
}
\caption{Persistence of motion for cells with $M=8,16,32$ adhesions in the first, second, and third columns, respectively, and with $\delta=0.05$ (green), $0.1$ (blue), $0.15$ (red). (a-c) Directionality ratio. (d-f) Velocity autocorrelation.}
\label{figure: persistence uniform bias}
\end{figure}

Note that the increased values of $\bar{\beta}$ indicate that the cells explore a larger surface area \cite{gorelik2014quantitative}. However, other indicators of motion persistence are not affected significantly (Figure \ref{figure: persistence uniform bias}), although migration paths become slightly straighter, as indicated by increased values of $\bar{r}$ (Table \ref{table: uniform bias parameters}). These results suggest that symmetry breaking of adhesion binding may allow cells to explore larger area without introducing velocity correlations (Figure \ref{figure: persistence uniform bias}(d-f)).

As cell polarization is required for migration even in the absence of external signals, it is not surprising that our results show that an imbalance of adhesion formation within a cell leads to experimentally observed superdiffusive scaling of the squared displacement \cite{dieterich2008anomalous},\cite{liang2008persistent}, \cite{liu2015confinement}. Nevertheless, this highlights a potential mechanism of anomalous diffusion. In the following, we examine whether our model gives biologically consistent results in the case of externally induced polarization.

\subsection{External cue gradient}\label{section: external cue}
We first investigate how cell trajectories are varied in the presence of an external cue gradient. If a cue, for example, is a chemoattractant, then it is well known that adhesion formation in a cell is biased in the direction of the attractant. Thus, to simulate such biased migration, we let the functions $Q_{cue}$ and $q$ to have the following form (recall equation \eqref{equation: binding propensity}):
\begin{align*}
Q_{cue}(\mathbf{x})&=
\begin{cases*}
1+\delta_{E}x_2,\text{ if } x_2\geq0\\
1,\phantom{+\delta_{E}x_2a} \text{ else} 
\end{cases*}\\
q(Q_{cue}(\mathbf{x})) &= Q_{cue}(\mathbf{x}),
\end{align*}
where $\delta_E$ represents the gradient magnitude and $x_2$ is the second component of $\mathbf{x}$. Here, for simplicity we took the identity function for $q$ and a linear cue gradient in the $y$ coordinate. This cue can represent, for example, density of ECM or concentration of a chemoattractant. Thus, we simulate, respectively, hapto- or chemotactic migration.      

\begin{figure}[!h]
\centering
\subfloat[]
{
	\includegraphics[width=40mm,height=37mm]{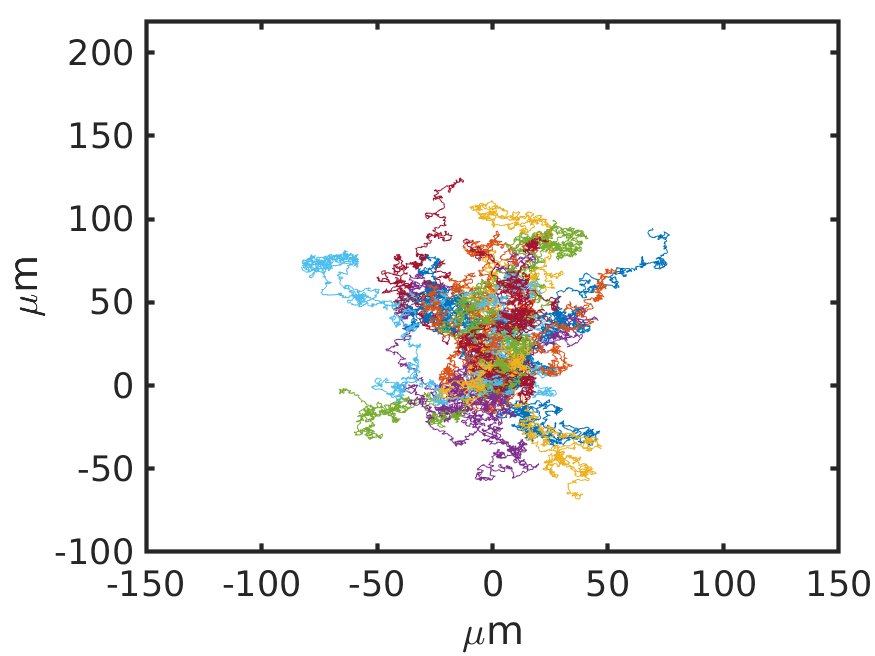}
}
\subfloat[]
{
	\includegraphics[width=40mm,height=37mm]{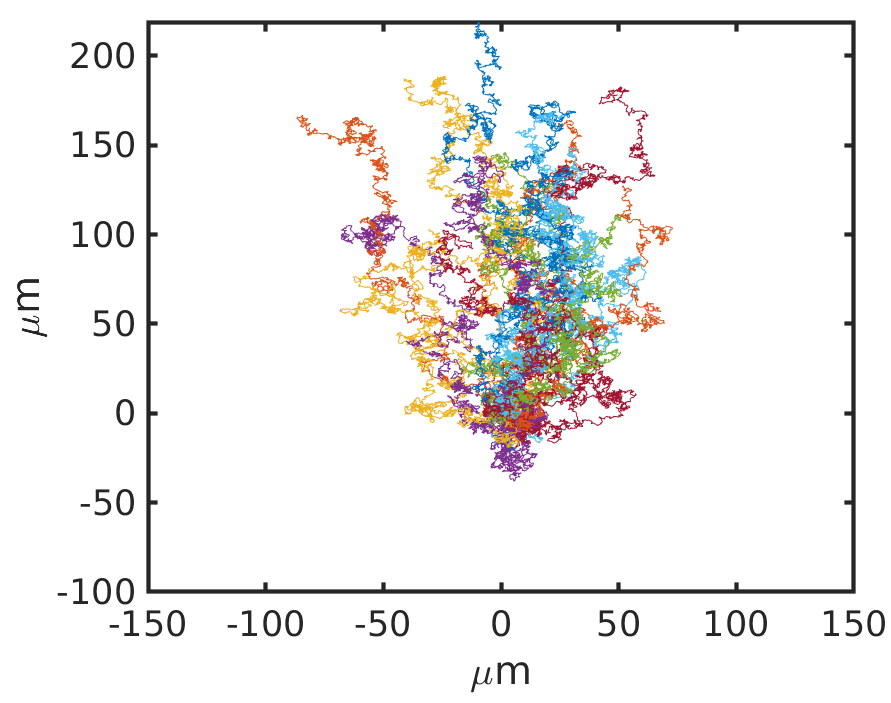}
}
\subfloat[]
{
	\includegraphics[width=40mm,height=37mm]{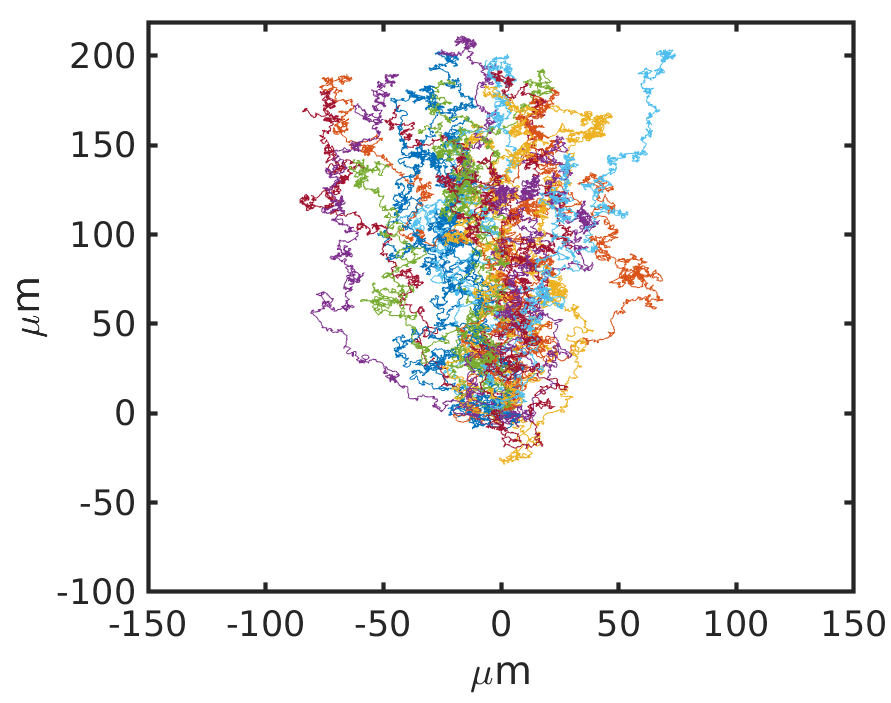}
}
\hspace{0mm}
\subfloat[]
{
	\includegraphics[width=40mm,height=37mm]{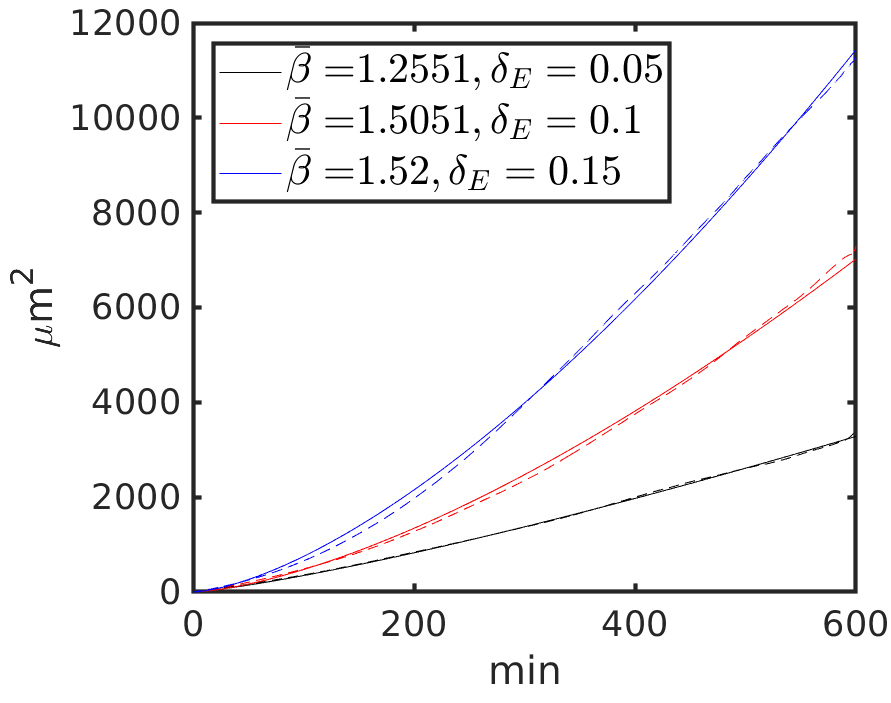}
}
\subfloat[]
{
	\includegraphics[width=40mm,height=37mm]{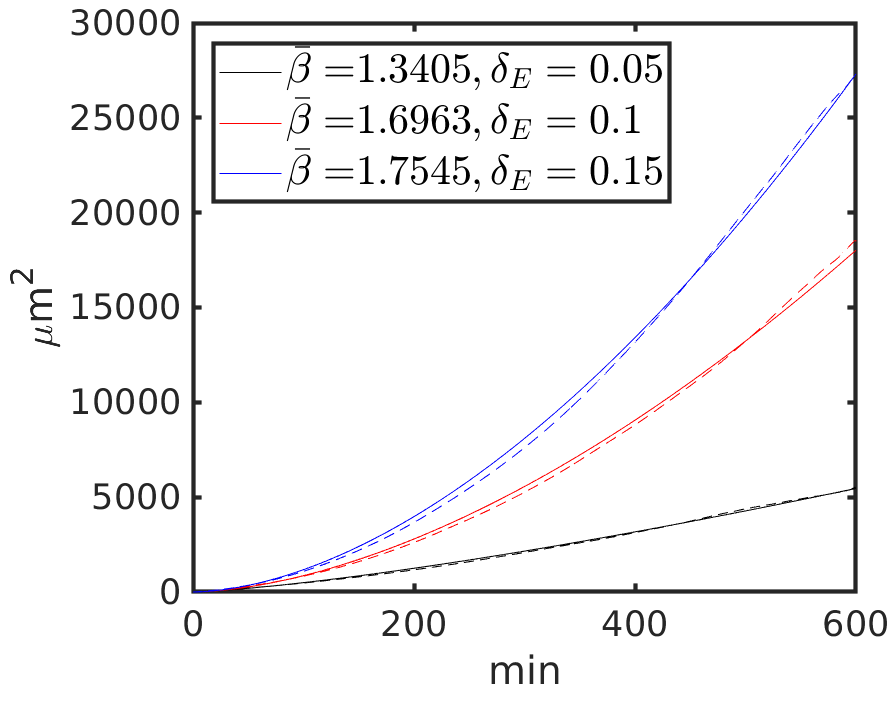}
}
\subfloat[]
{
	\includegraphics[width=40mm,height=37mm]{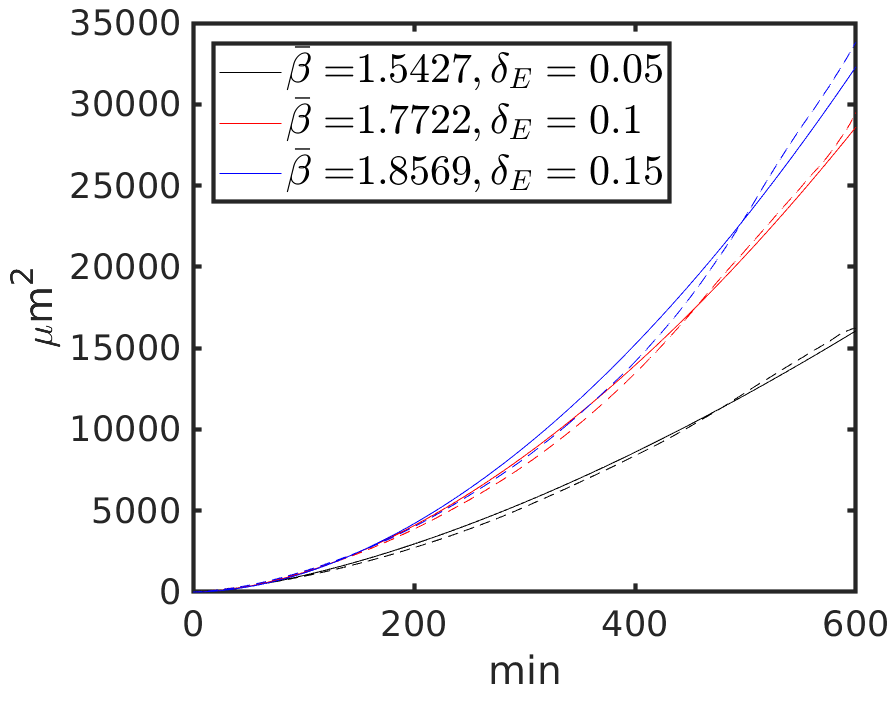}
}
\hspace{0mm}
\subfloat[]
{
	\includegraphics[width=40mm,height=37mm]{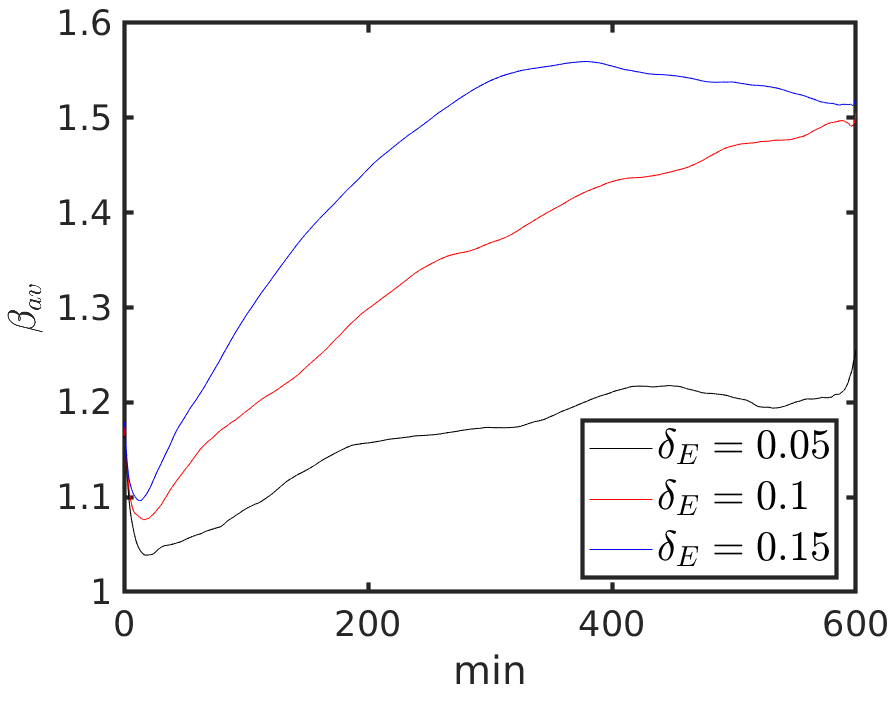}
}
\subfloat[]
{
	\includegraphics[width=40mm,height=37mm]{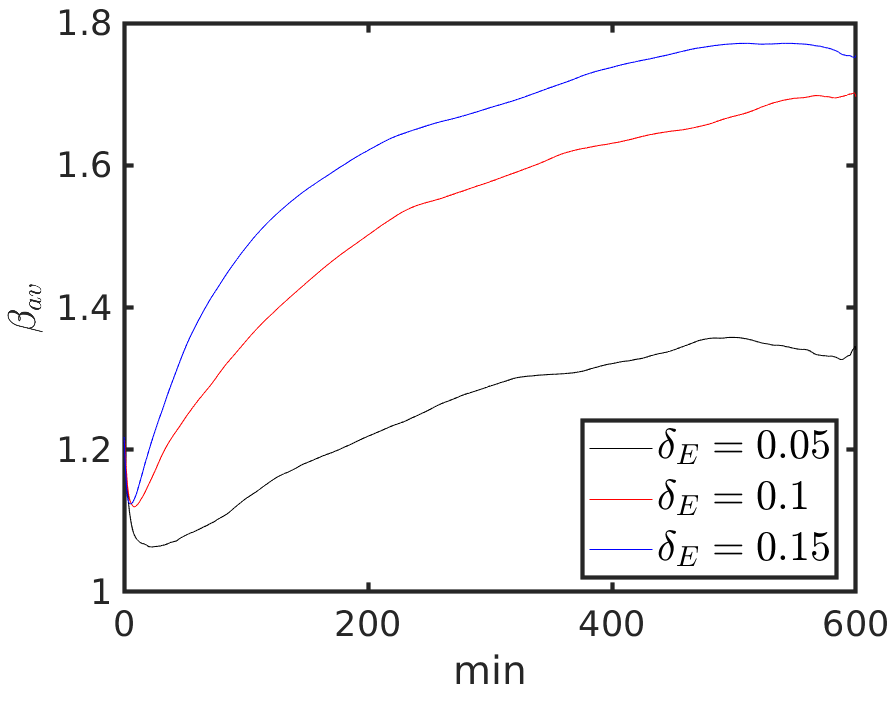}
}
\subfloat[]
{
	\includegraphics[width=40mm,height=37mm]{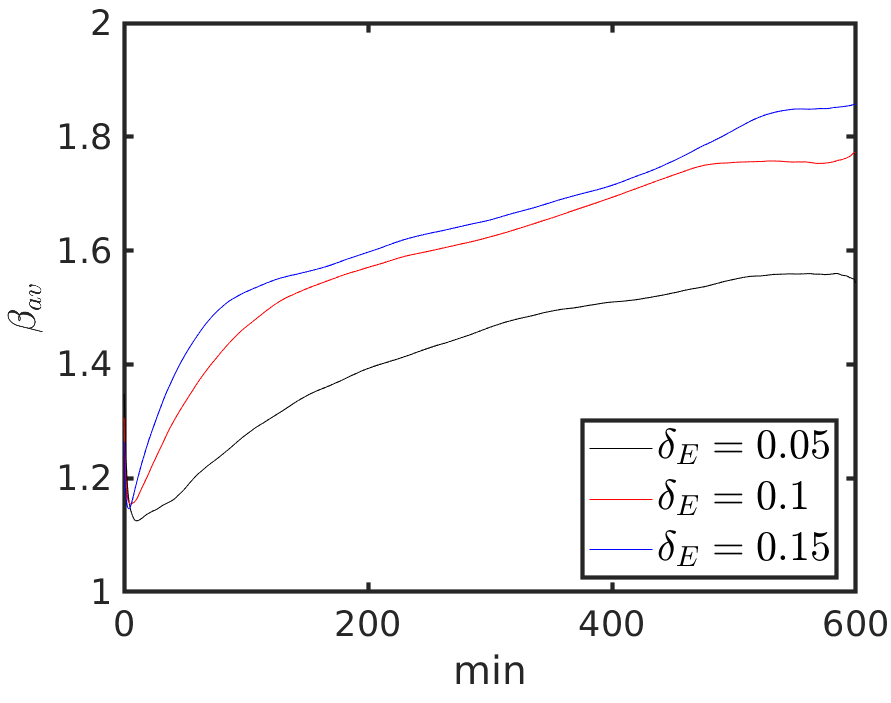}
}

\caption{Simulation results with $M=8,16,32$ adhesions in the first, second, and third columns respectively, and with various values of $\delta_{E}$. (a-c) Trajectories of 27 cell centroids $\mathbf{x}(t)$ with $\delta_{E}=0.1$. (d-f) Mean-squared displacements $msd(t)$ (solid) and fitted $\widehat{msd}(t)$ (dash) with $\delta_E=0.05$ (black), $0.1$ (red), $0.15$ (blue). (g-i) Time-averaged exponents $\beta_{av}(t)$ }
\label{figure: ECM bias results}
\end{figure}

\begin{figure}[!h]
\centering
\subfloat[]
{
	\includegraphics[width=40mm,height=37mm]{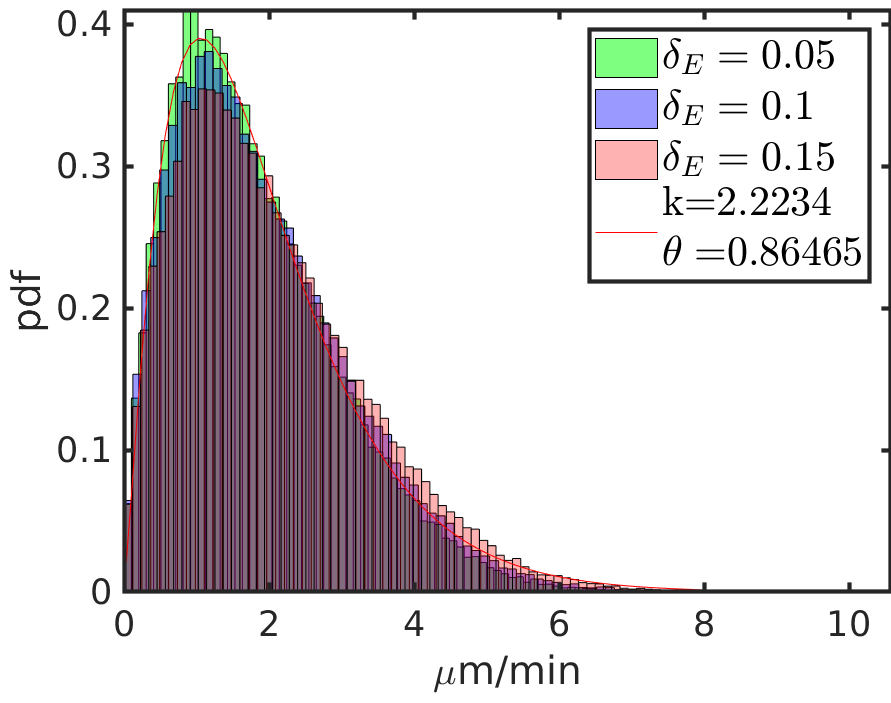}
}
\subfloat[]
{
	\includegraphics[width=40mm,height=37mm]{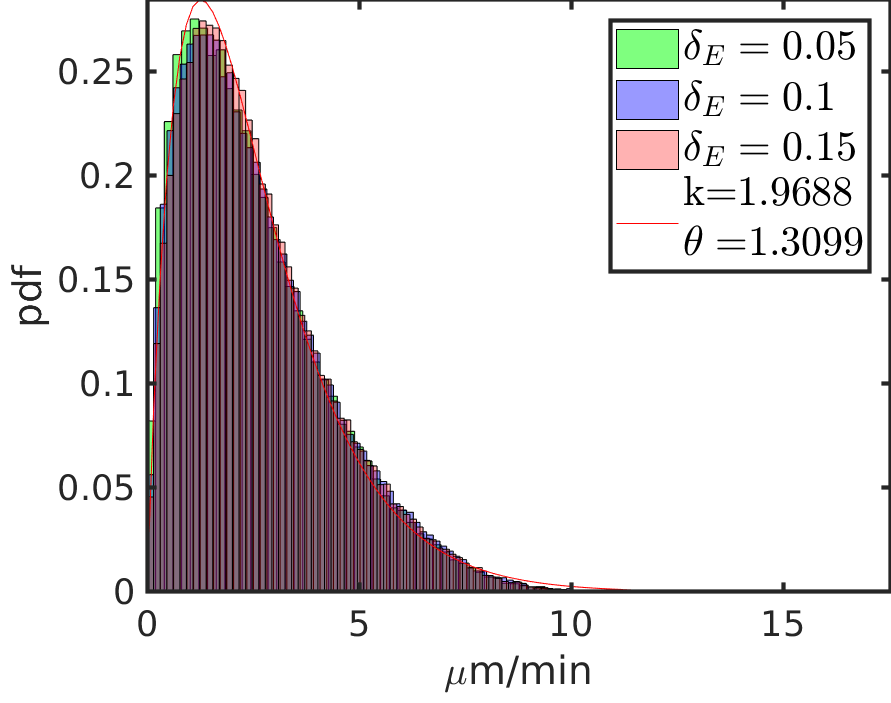}
}
\subfloat[]
{
	\includegraphics[width=40mm,height=37mm]{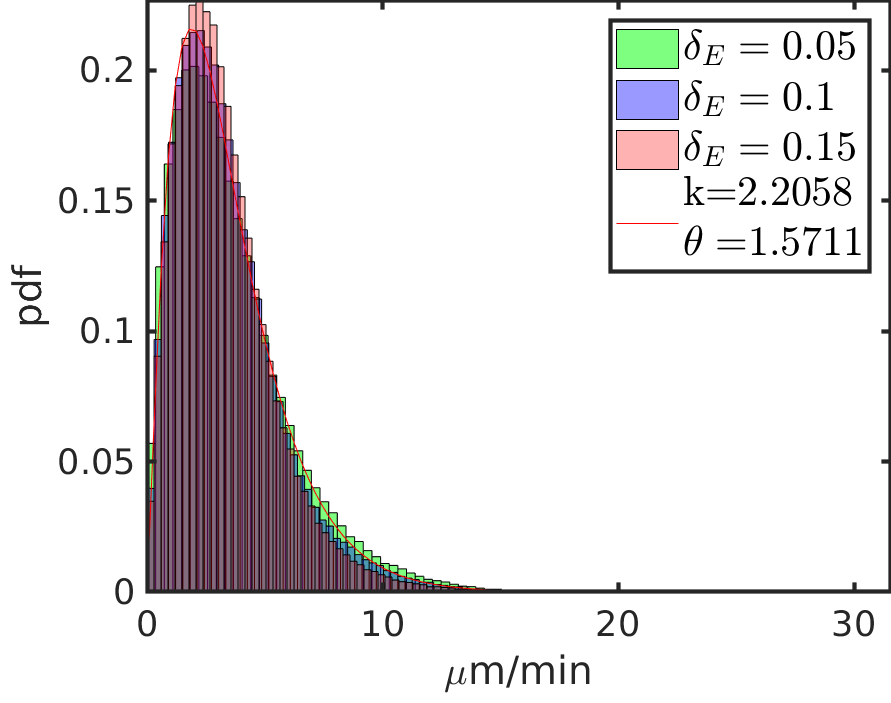}
}
\hspace{0mm}
\subfloat[]
{
	\includegraphics[width=40mm,height=37mm]{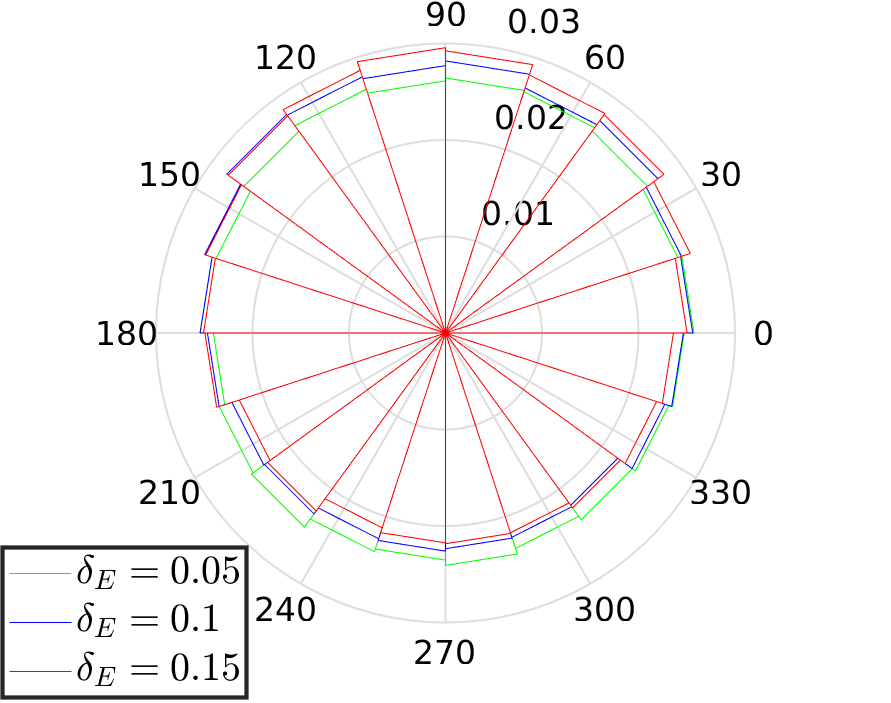}
}
\subfloat[]
{
	\includegraphics[width=40mm,height=37mm]{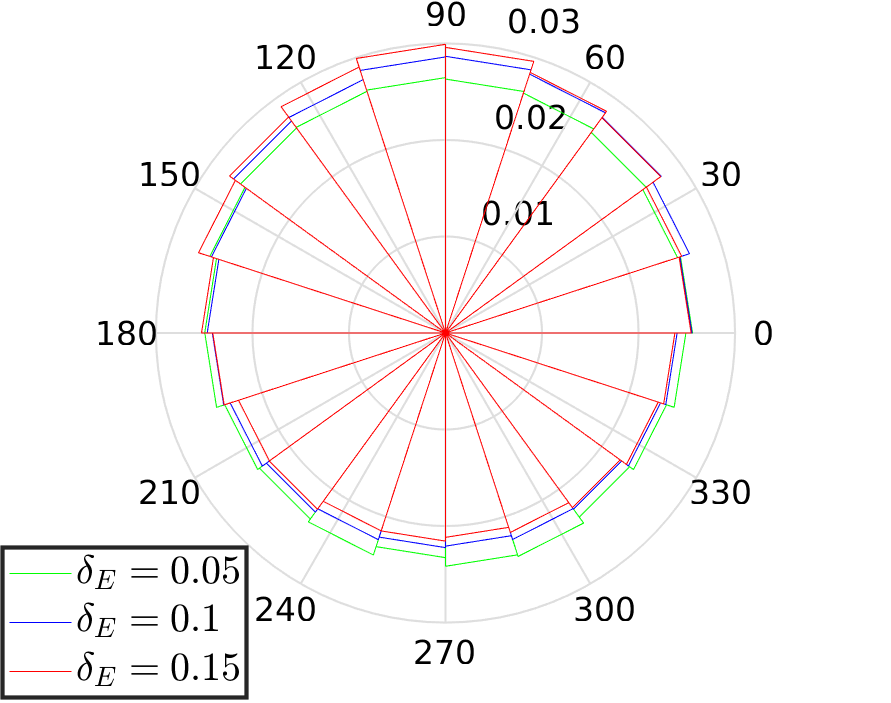}
}
\subfloat[]
{
	\includegraphics[width=40mm,height=37mm]{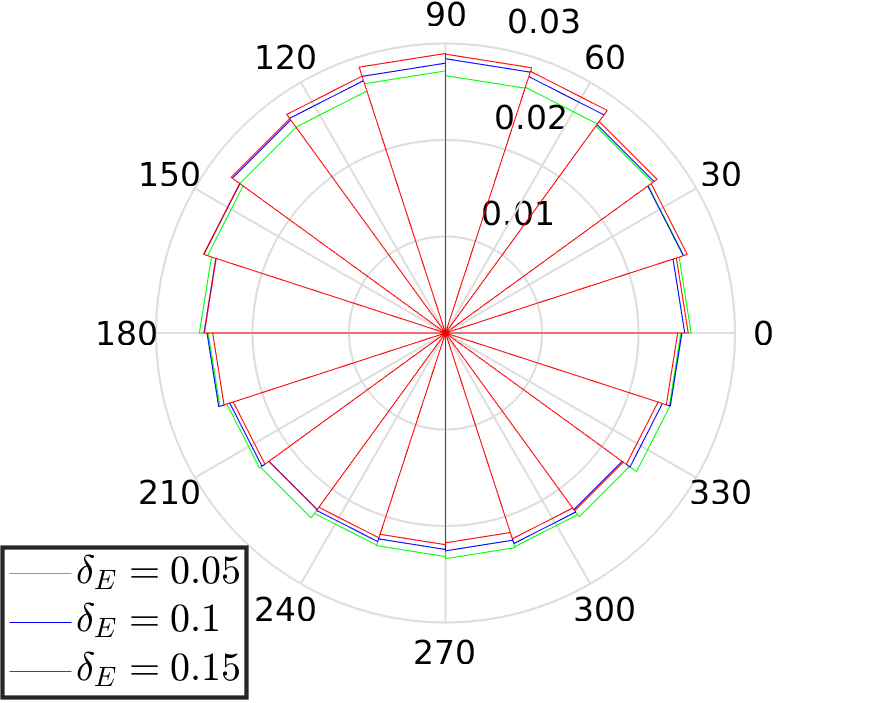}
}
\hspace{0mm}

\subfloat[]
{
	\includegraphics[width=40mm,height=37mm]{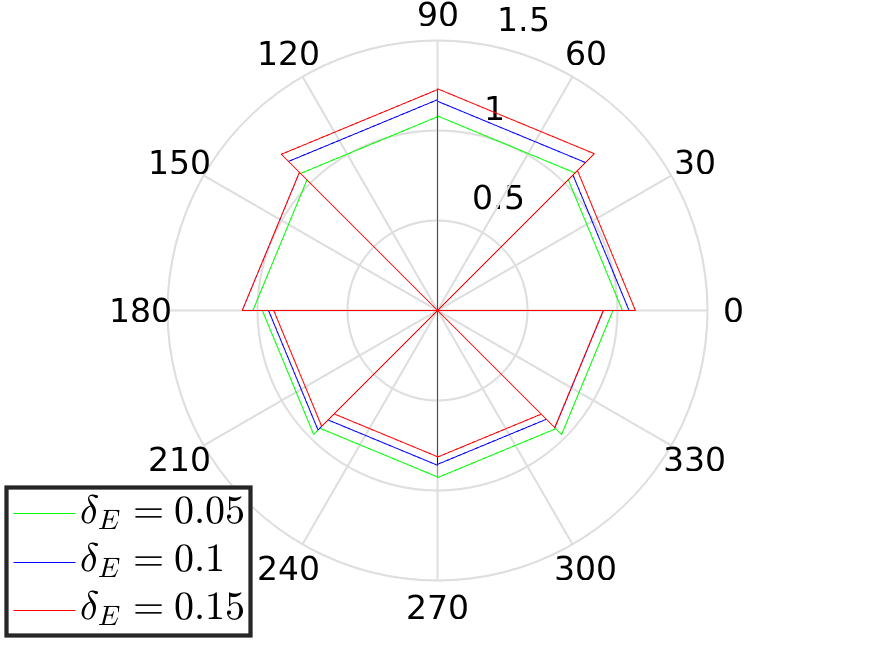}
}
\subfloat[]
{
	\includegraphics[width=40mm,height=37mm]{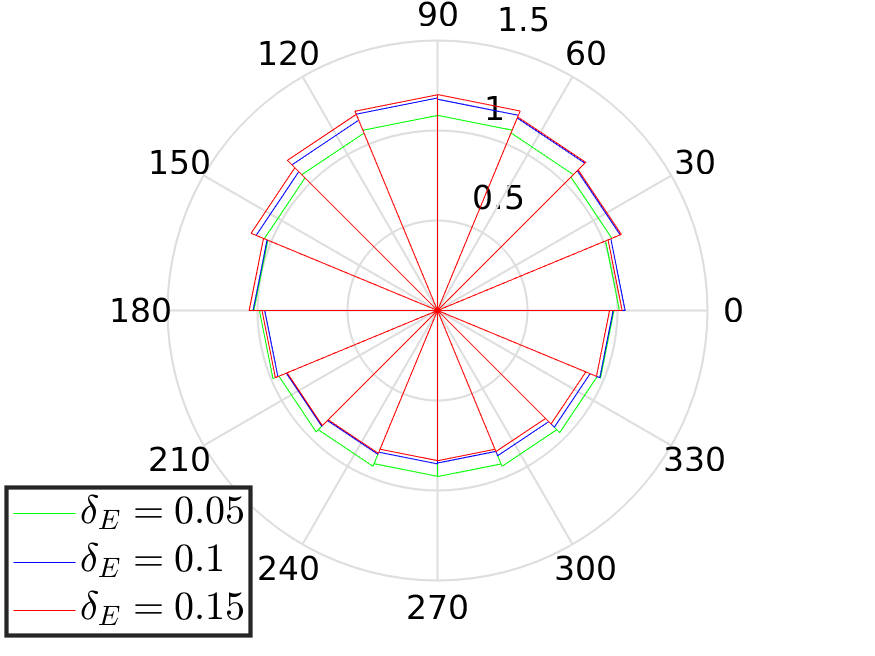}
}
\subfloat[]
{
	\includegraphics[width=40mm,height=37mm]{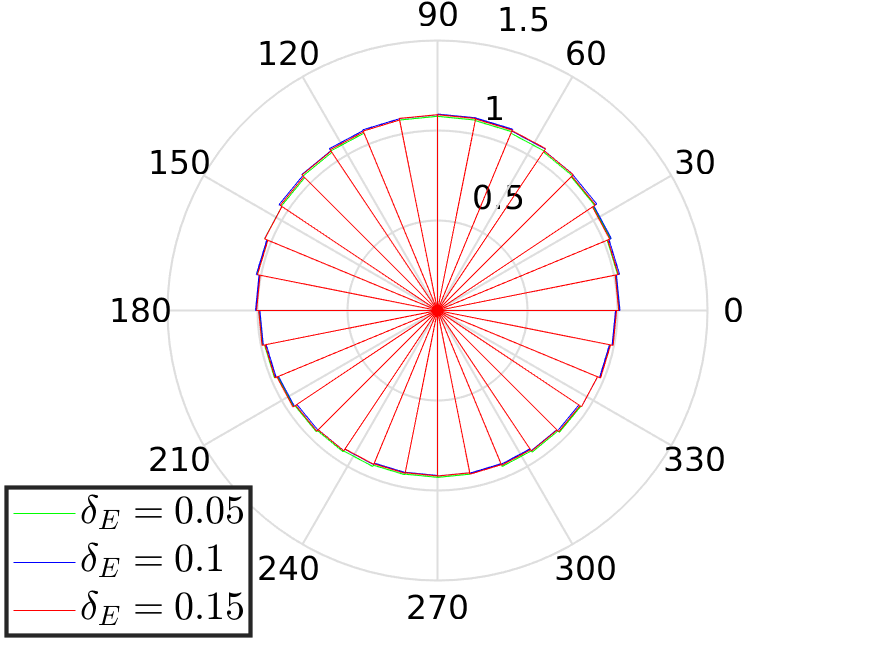}
}
\caption{Superimposed histograms of speeds, velocities and adhesion events with $M=8,16,32$ adhesions in the first, second, and third columns respectively, and with various values of $\delta_{E}$. (a-c) Speed probability density functions and fitted density function of gamma distribution (solid red) with average parameters $k$ and $\theta$ (see text for details). (d-f) Relative frequency of normalized velocities.
(g-i) Ratio of the number of binding to unbinding events in each sector, such that any given time, only one adhesion site is in each sector}
\label{figure: ECM bias speeds and events}
\end{figure}

\begin{table}[!h]
\centering
\def\arraystretch{1.5}
\setlength\tabcolsep{5.5pt}
\begin{tabular}{|c|c c c|c c c|c c c|}
\hline

M & \multicolumn{3}{c|}{8} & \multicolumn{3}{c|}{16} & \multicolumn{3}{c|}{32}\\
\hline
$\delta_{E}$ & 0.05 & 0.1 & 0.15 & 0.05 & 0.1 & 0.15 & 0.05 & 0.1 & 0.15\\
\hline
$\bar{\beta}$, 1 & 1.2551 & 1.5051 & 1.52 & 1.3405 & 1.6963 & 1.7545 & 1.5427 & 1.7722 & 1.8569\\
\hline
\begin{tabular}{@{}c@{}}$s_{av}$, \\ $\mu m/min$\end{tabular} & 1.8136 & 1.9133 & 2.0365 & 2.5235 & 2.5972 & 2.6089 & 3.5819 & 3.4218 & 3.3074\\
\hline
\begin{tabular}{@{}c@{}}$\beta_0$, \\ $\mu m^2/min^{\bar{\beta}}$\end{tabular} & 1.0697 & 0.4625 & 0.6845 & 1.0312 & 0.3496 & 0.3654 & 0.8319 & 0.3412 & 0.2242\\
\hline
$\bar{r}$, 1 & 0.0523 & 0.0607 & 0.0693 & 0.053 & 0.08 & 0.097 & 0.0726 & 0.1 &0.1223\\
\hline
\end{tabular}
 \caption{Parameters obtained from the simulations with varying $\delta_{E}$.}
 \label{table: ECM no bias parameters}
\end{table}
In the presence of a cue gradient, we see that the cell trajectories, shown in Figure \ref{figure: ECM bias results} (a-c), exhibit a clear trend in the direction of an increasing concentration. The corresponding plots of the mean-squared displacements show the superdiffusive time scaling in Figure \ref{figure: ECM bias results} (d-f), with the exponent $\bar{\beta}>1$ for all cases. Notice that as the number of adhesion sites $M$ increases, so does $\bar{\beta}$ for the same $\delta_E$ (see Table \ref{table: ECM no bias parameters}). Together with the trajectory plots in Figure \ref{figure: ECM bias results}, our results suggest that in the presence of an external gradient, the taxis becomes more prominent and a cell more sensitive to a cue for increasing number of FAs. Moreover, comparing with the case of a uniform environment, we see that although the amoeboid motility is more diffusive in the absence of external cues, it is also more regular and directed when a cue gradient is present (see Tables \ref{table: uniform parameters}, \ref{table: uniform bias parameters} vs. Table \ref{table: ECM no bias parameters} and Figures \ref{figure: uniform results}, \ref{figure: uniform bias results}(a-c) vs. \ref{figure: ECM bias results}(a-c)). In Figure \ref{figure: ECM bias results} (g-i) we see that the evolution of time-averaged exponents $\beta_{av}(t)$ (see Appendix \ref{appendix: data analysis}) have three phases. Following the rapid increase in the first phase, there is a gradual decrease in the rate of change in the second phase, followed by stabilization of $\beta_{av}(t)$ at $\bar{\beta}$. Curiously, a similar behavior has also been observed by Dieterich et al. \cite{dieterich2008anomalous}.

The distribution of speeds again remained invariant and the average speeds are very close to the cases with no external cues (see Table \ref{table: ECM no bias parameters}). However, the frequency of normalized velocities (see Figure \ref{figure: ECM bias speeds and events} (d-f)) show, as expected, that the cell velocities are aligned with the cue gradient. Accordingly, we see that persistent motion emerges: directionality ratio increases compared to unbiased migration (Table \ref{table: ECM no bias parameters}) and the velocities become correlated (Figure \ref{figure: persistence ECM}(d-f)). We also observe that an external signal has a stronger impact on motion persistence for higher number of adhesions due to relative increases of $\bar{r}$ and the degree of velocity autocorrelation. Recall that in the presence of, for example, a chemotactic cue, a cell polarizes so that its adhesion dynamics is aligned with the gradient. In particular, adhesions are preferentially formed at the front (where the chemoattractant concentration is larger), and preferentially ruptured at the back. We can see in Figure \ref{figure: ECM bias speeds and events} (g-i), that our simulation results reproduce such polarized dynamics: the ratio of binding to unbinding events is larger (smaller) than unity in the northern (southern) part of the cells, where the cue is stronger (weaker) relative to the cell centroid. Also, for a smaller number of adhesion sites, the effects of increasing the cue gradient have more noticeable effect on the ratios of events (see \ref{figure: ECM bias speeds and events} (g-i)). This is simply due to the reduced density of adhesion sites, which leads to larger relative difference in the concentration of the cue between them. From Figure \ref{figure: Qcue linear gradient} we can asses the effect of an external cue $Q_{cue}$ on the the binding rate $a^+_i$ (omitting the force dependence for clarity), since the rate is proportional to $Q_{cue}$.   

\begin{figure}[h]
\centering
\subfloat[]
{
	\includegraphics[width=40mm,height=37mm]{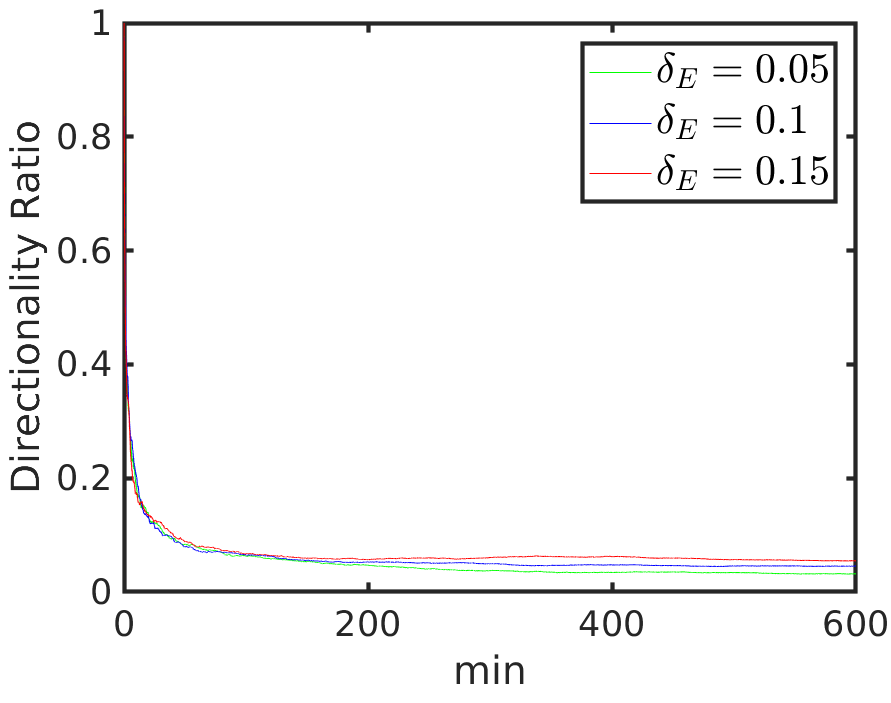}
}
\subfloat[]
{
	\includegraphics[width=40mm,height=37mm]{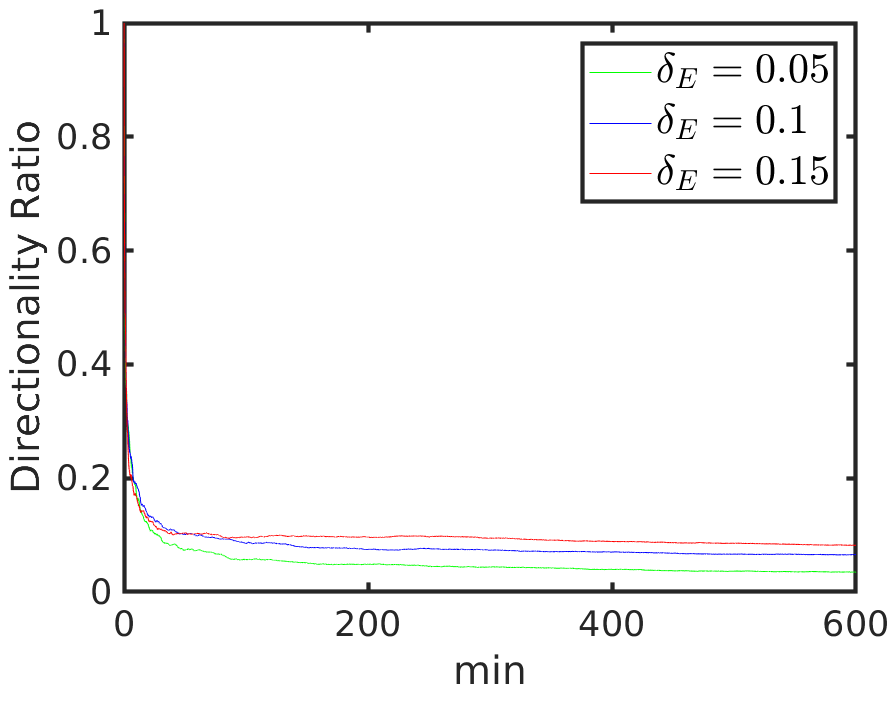}
}
\subfloat[]
{
	\includegraphics[width=40mm,height=37mm]{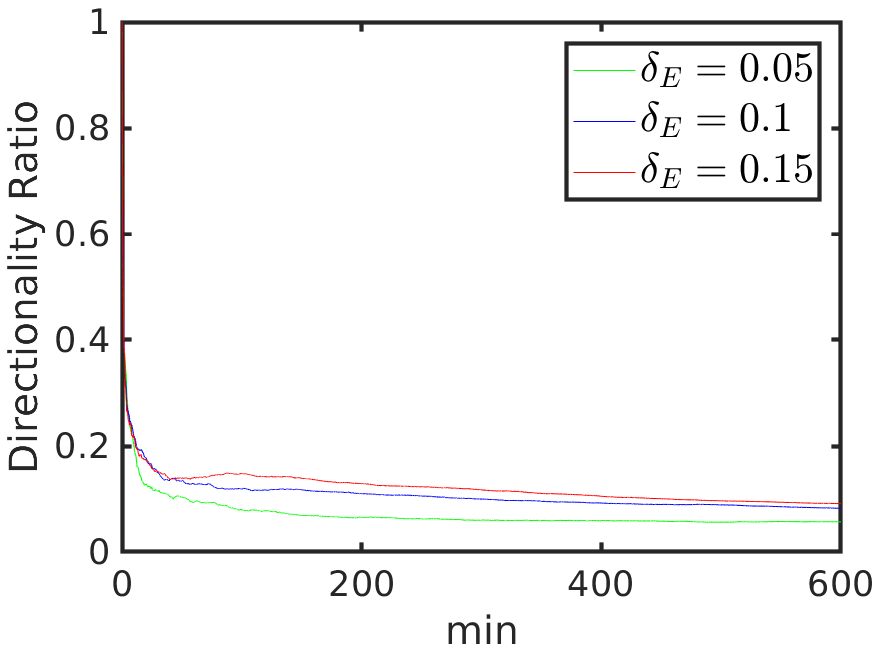}
}
\hspace{0mm}
\subfloat[]
{
	\includegraphics[width=40mm,height=37mm]{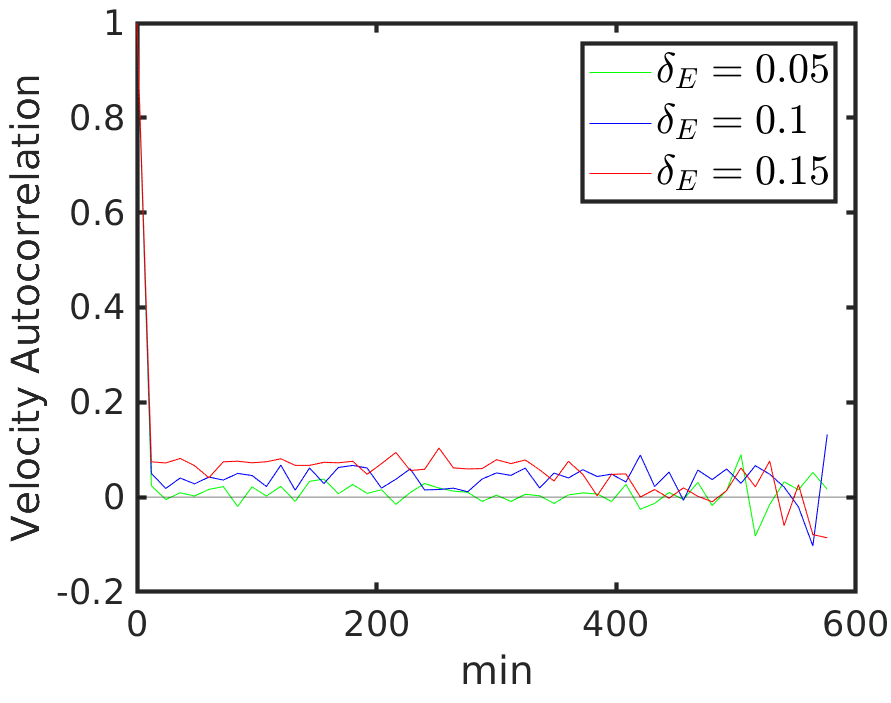}
}
\subfloat[]
{
	\includegraphics[width=40mm,height=37mm]{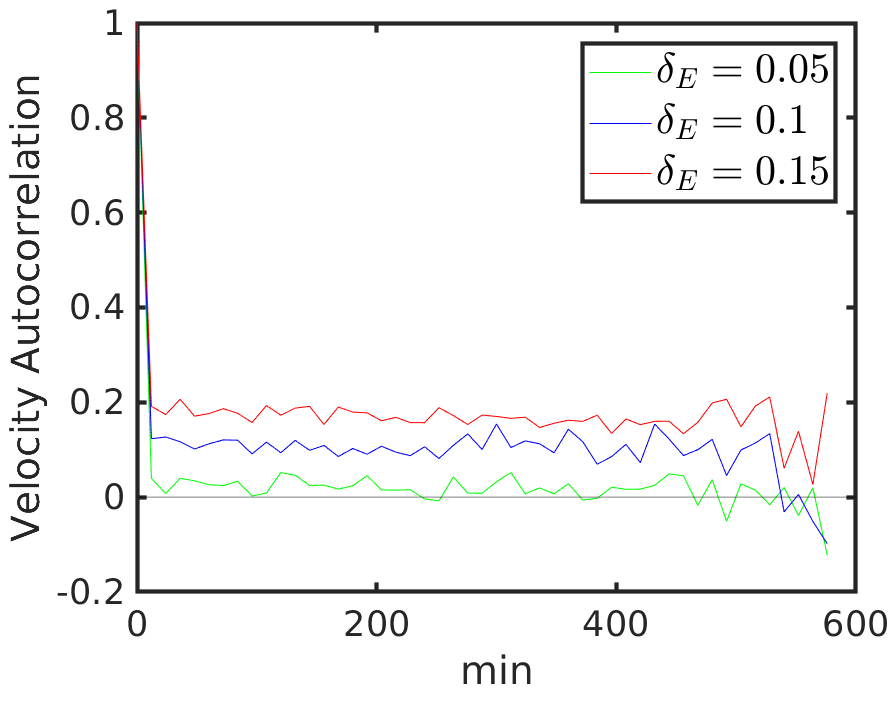}
}
\subfloat[]
{
	\includegraphics[width=40mm,height=37mm]{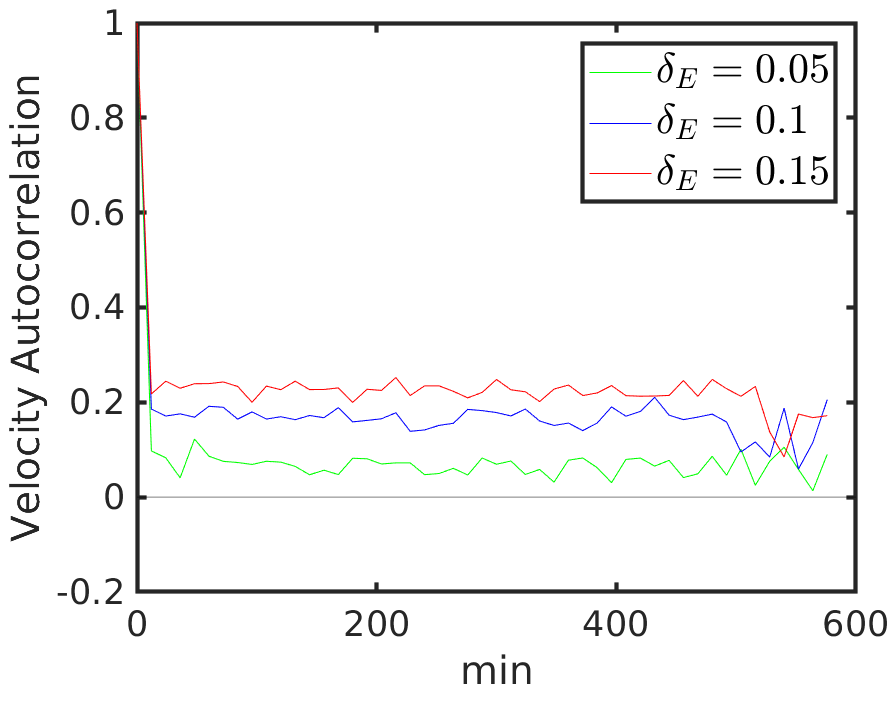}
}
\caption{Persistence of motion for cells with $M=8,16,32$ adhesions in the first, second, and third columns, respectively, and with $\delta_E=0.05$ (green), $0.1$ (blue), $0.15$ (red). (a-c) Directionality ratio. (d-f) Velocity autocorrelation.}
\label{figure: persistence ECM}
\end{figure}
Together with Figure \ref{figure: Qcue linear gradient}, the simulations illustrate that directed tactic migration, resulting from biased adhesion formation, follows from the \textit{local} information about the external cue. That is, the spatial dependence of the FA binding rate is solely due to the local concentration of an external cue (see \eqref{equation: binding propensity}) and no central mechanism for gradient determination was utilized to bias adhesion formation. Consequently, migration along the gradient of an external cue is achieved without its explicit ``computation" by the cell.  

Along with external cue, force dependence of the binding rate is also important for directed migration and without it, the cells do not exhibit biased migration (data not shown). Figure \ref{figure: force dependence migration cycle} illustrates how the dependence fits into the migration cycle (recall Figure \ref{fig: four cycle}). For the directed migration to occur, at the time of FA disassociation $\mathbf{x}_n$ must be preferentially in the rear (Figure \ref{figure: force dependence migration cycle} step 2). After FA unbinding the increased force at the rear FAs due to extended SFs promotes binding there (Figure \ref{figure: force dependence migration cycle} step 3). Note that since cell body translocation occurs only after an unbinding event, formation of new FA in the prospective rear of the cell does not lead to backwards movement. Also, due to the external signal more FAs tend to be at the front than at the rear. 
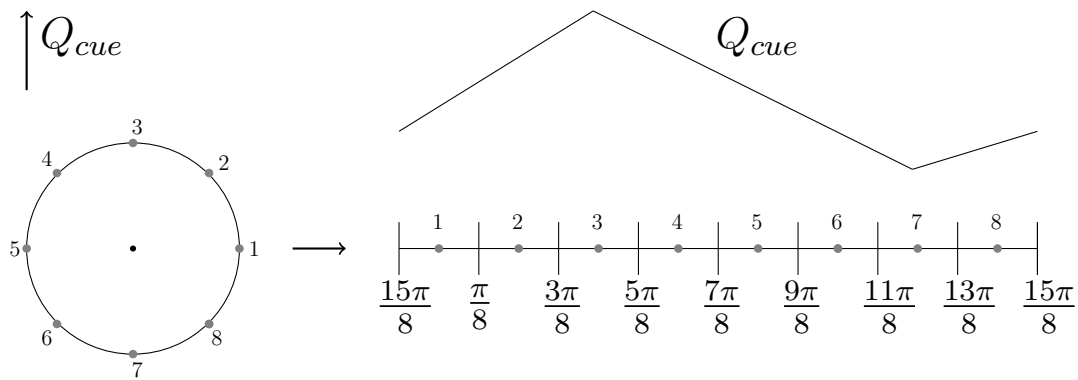
\begin{figure}[H]
\begin{tikzpicture}[scale=0.7,transform shape,every node/.style={scale=1}]

\draw (0,0) circle [radius=2cm];

\filldraw [gray] (2,0) circle [radius=2pt];
\filldraw [gray] (0,2) circle [radius=2pt];
\filldraw [gray] (0,-2) circle [radius=2pt];
\filldraw [gray] (-2,0) circle [radius=2pt];
\filldraw [gray] (1.43,1.43) circle [radius=2pt];
\filldraw [gray] (-1.43,1.43) circle [radius=2pt];
\filldraw [gray] (1.43,-1.43) circle [radius=2pt];
\filldraw [gray] (1.43,-1.43) circle [radius=2pt];
\filldraw [gray] (-1.43,-1.43) circle [radius=2pt];

\filldraw [black] (0,0) circle [radius=1.33pt];

\draw (2.2,0) node[text width = 1.33pt] {$1$};
\draw (1.63,1.63) node[text width = 1.33pt] {$2$};
\draw (0,2.3) node[text width = 1.33pt] {$3$};
\draw (-1.7,1.7) node[text width = 1.33pt] {$4$};
\draw (-2.3,0) node[text width = 1.33pt] {$5$};
\draw (-1.7,-1.7) node[text width = 1.33pt] {$6$};
\draw (0,-2.3) node[text width = 1.33pt] {$7$};
\draw (1.5,-1.7) node[text width = 1.33pt] {$8$};

\draw [thick,black][->](3,0) -- (4,0);

\draw [black](5,0) -- (17,0);
\draw (5,-0.5) -- (5,0.5);
\draw (17,-0.5) -- (17,0.5);

\foreach \i in {1,...,8}
	\draw [xshift=\i*1.5cm](5,-0.5)--(5,0.5); 
\foreach \i in {0,...,7}
	\filldraw [gray,xshift=\i*1.5cm+0.75cm](5,0) circle [radius=2pt];
\filldraw [gray, xshift = 8.25cm] (5,0) circle [radius=2pt];
\draw [xshift = 0.75cm] (4.9,0.5) node[text width = 1.33pt] {$1$};
\draw [xshift = 2.25cm] (4.9,0.5) node[text width = 1.33pt] {$2$};
\draw [xshift = 3.75cm] (4.9,0.5) node[text width = 1.33pt] {$3$};
\draw [xshift = 5.25cm] (4.9,0.5) node[text width = 1.33pt] {$4$};
\draw [xshift = 6.75cm] (4.9,0.5) node[text width = 1.33pt] {$5$};
\draw [xshift = 8.25cm] (4.9,0.5) node[text width = 1.33pt] {$6$};
\draw [xshift = 9.75cm] (4.9,0.5) node[text width = 1.33pt] {$7$};
\draw [xshift = 11.25cm] (4.9,0.5) node[text width = 1.33pt] {$8$};

\draw [xshift = 0cm] (4.6,-1.1) node[scale = 2, text width = 1.33pt] {$\frac{15\pi}{8}$};
\draw [xshift = 1.5cm] (4.8,-1.1) node[scale = 2, text width = 1.33pt] {$\frac{\pi}{8}$};
\draw [xshift = 3.0cm] (4.7,-1.1) node[scale = 2, text width = 1.33pt] {$\frac{3\pi}{8}$};
\draw [xshift = 4.5cm] (4.7,-1.1) node[scale = 2, text width = 1.33pt] {$\frac{5\pi}{8}$};
\draw [xshift = 6.0cm] (4.7,-1.1) node[scale = 2, text width = 1.33pt] {$\frac{7\pi}{8}$};
\draw [xshift = 7.5cm] (4.7,-1.1) node[scale = 2, text width = 1.33pt] {$\frac{9\pi}{8}$};
\draw [xshift = 9.0cm] (4.7,-1.1) node[scale = 2, text width = 1.33pt] {$\frac{11\pi}{8}$};
\draw [xshift = 10.5cm] (4.7,-1.1) node[scale = 2, text width = 1.33pt] {$\frac{13\pi}{8}$};
\draw [xshift = 12.0cm] (4.7,-1.1) node[scale = 2, text width = 1.33pt] {$\frac{15\pi}{8}$};

\draw [xshift = 3.75cm] (4.9,4.5) -- (10.9,1.5);
\draw [xshift = 3.75cm] (10.9,1.5) -- (13.25,2.22);
\draw (5,2.22)--(8.65,4.5);

\draw [thick,black,->](-2,3)--(-2,4.5);
\draw (-1.7,4) node[scale = 2, text width = 1.33pt] {$Q_{cue}$};
\draw (11,4) node[scale = 2, text width = 1.33pt] {$Q_{cue}$};
\end{tikzpicture}
\caption{Concentration of an external cue projected on the cell's circumference. Gray bullets represent FA sites.} 
\label{figure: Qcue linear gradient}
\end{figure}
Thus, the pulling force exerted by the front on the rear tends to be larger than the opposite and hence the cell moves preferentially in the direction of the gradient. Without the signal, of course, movement becomes unbiased, as shown in the previous section. This suggests that the SF length dependence of the forces (see \eqref{eq: Fi}) and the force dependence of the FA binding rate (see \eqref{eq: force binding rate}) are necessary for directed migration resulting from biased adhesion formation in the presence of an external signal.
\begin{figure}[H]
\includegraphics[width = 0.5\textwidth]{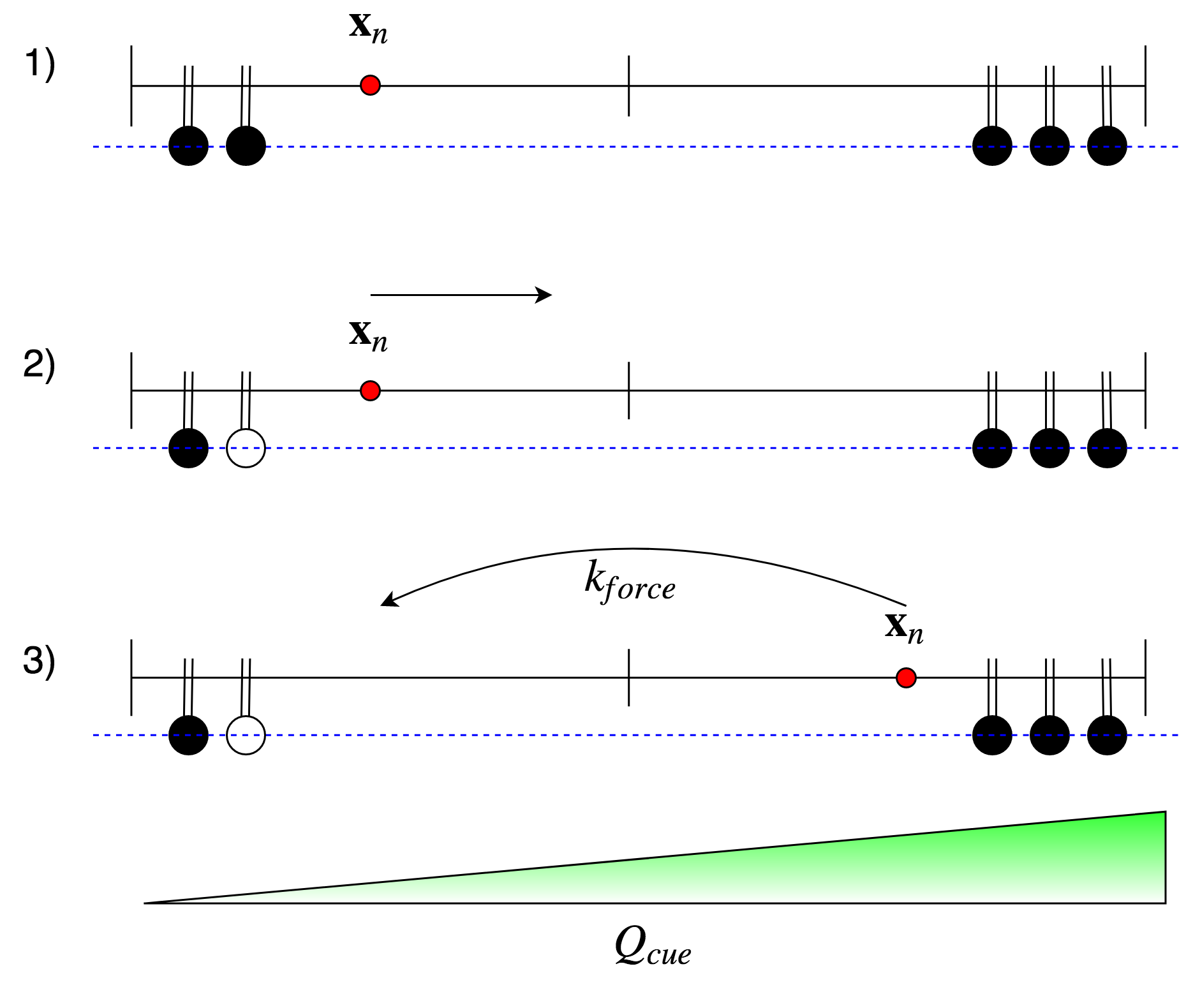}
\caption{The force dependence of the binding rate and the biased adhesion formation during the migration cycle. Side view schematic of the cell is illustrated, where (un)bound FAs are shown as (white)black circles. 1) Initial configuration. 2) Unbinding leads to cell translocation and motion of $\mathbf{x}_n$ within the cell. 3) Increased force on the cell rear (due to its dependence on SF extension) promotes FA association due to force dependence $k_{force}$ of the binding rate (see Section \ref{section: adhesion force dependence}), after which the cycle begins anew.}
\label{figure: force dependence migration cycle}
\end{figure}

\subsection{Fibrillar architecture of ECM}
The ECM topography is another important determinant of directed cell migration. In particular, the spatial distribution of the ECM fibers guides the motility by inducing cell shape alignment along the adhesive cues, resulting in characteristic directed movement along the fiber tracts \cite{PDY09}. Such guided migration is called contact guidance \cite{PDY09}, \cite{ROG17}. Ramirez-San Juan et. al. \cite{ROG17} showed that contact guidance can be modulated by micrometer scale variations of interfiber spacing. Inspired by this study, we simulate how subcellular scale fiber spacing influences cell motility, and whether such ECM architecture yields migration patterns characteristic of contact guidance.

\begin{figure}[!h]
\captionsetup[subfigure]{labelformat=empty}
\centering
\subfloat[]
{
	\includegraphics[width=40mm,height=37mm]{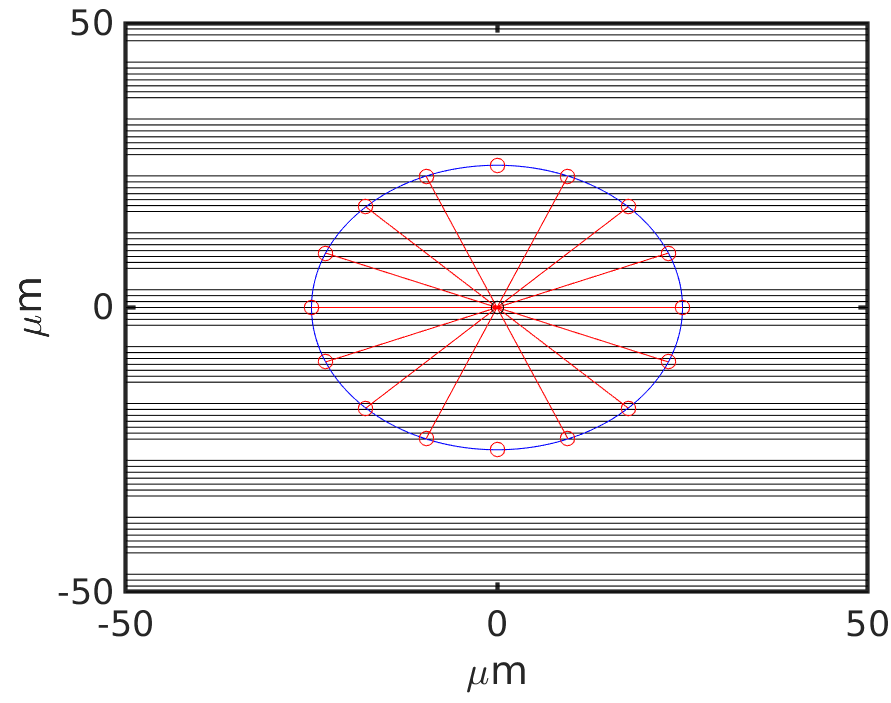}
}
\subfloat[]
{
	\includegraphics[width=40mm,height=37mm]{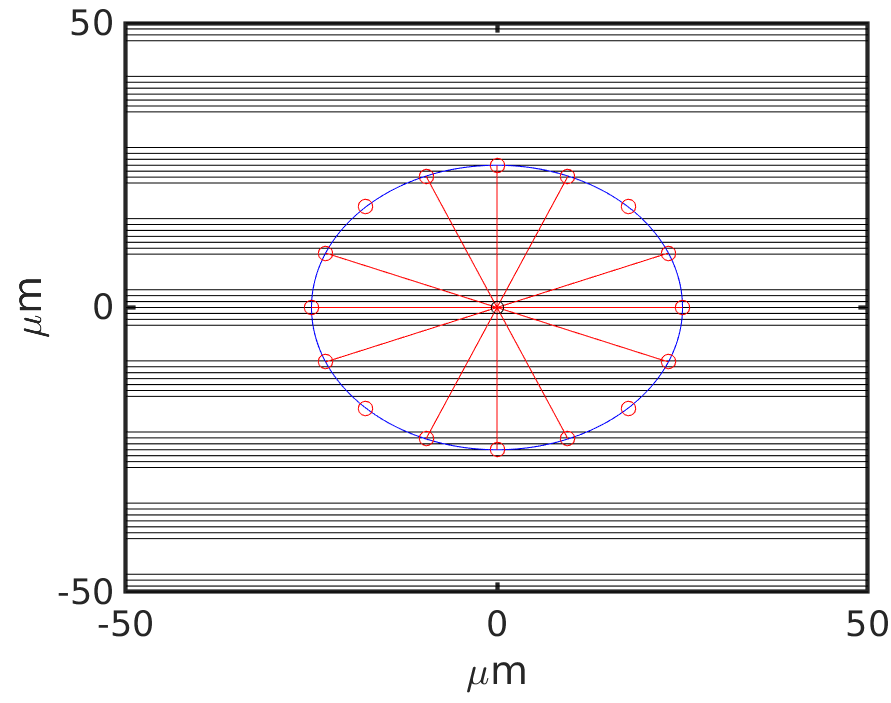}
}
\subfloat[]
{
	\includegraphics[width=40mm,height=37mm]{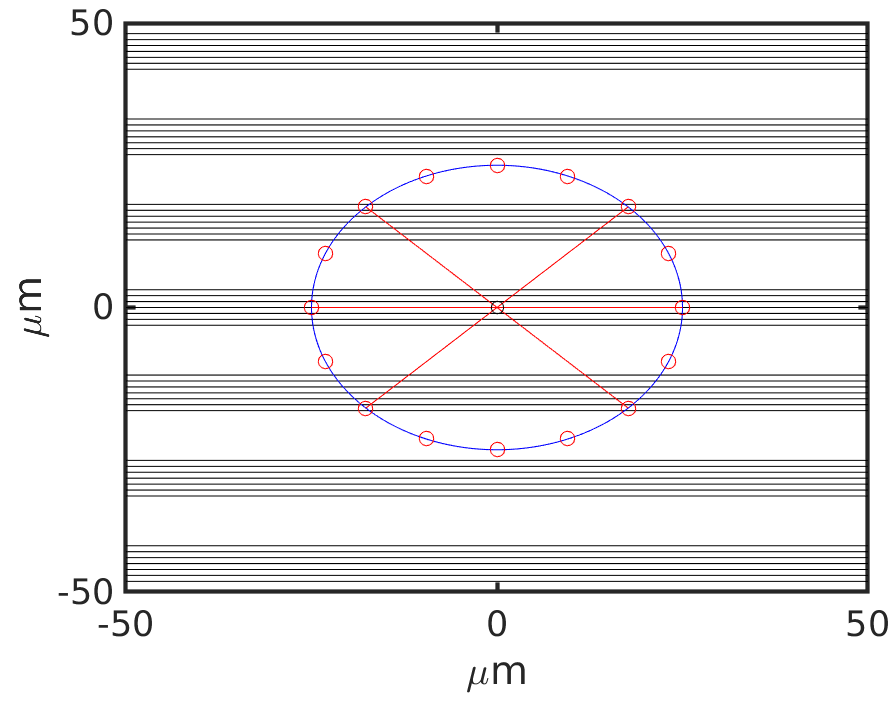}
}
\hspace{0mm}

\caption{Stripe pattern with $\delta_G=0.15,0.25,0.35$ on, respectively, left, middle, and right plots. A cell is illustrated such that each FA on a stripe is bound and cell center $\mathbf{x}$ coincides with $\mathbf{x}_n$  at the origin.}
\label{figure: ECM stripes}
\end{figure}
Similar to the case with an external cue gradient, the functions $Q_{cue}$ and $q$ have the following form:
\begin{align*}
Q_{cue}(\mathbf{x})&=
\begin{cases*}
1,\phantom{aaa}\text{  if } \mathbf{x}\in \Omega_{\delta_{G}}\\
0.01, \phantom{,}\text{ else} 
\end{cases*}\\
q(Q_{cue}(\mathbf{x})) &= Q_{cue}(\mathbf{x}),
\end{align*}
where $\Omega_{\delta_{G}}$ represents the stripe pattern, $\delta_{G} = 0.15, 0.25, 0.35$ represents the spacing between stripes such that the distances between them is $\delta_GR_{cell}$ (Figure \ref{figure: ECM stripes}). The stripe width is taken to be $0.25R_{cell}$. Similarly as in \cite{ROG17}, these dimensions are chosen so that a cell is spread on multiple stripes.

The simulation results, shown in Figure \ref{figure: contact guidance plots}, indicate that the cell motility has characteristics of contact guidance. Namely, the trajectories show preferential horizontal cell movement (Figure \ref{figure: contact guidance plots} (a-c)), and the displacements are aligned with the fiber pattern (Figure \ref{figure: contact guidance plots} (d-f)). However, increasing the spacing does not simply lead to a greater adhesion alignment along the horizontal direction, as can be observed in Figure \ref{figure: contact guidance plots} (g-i). Rather, it is the combination of the ECM pattern and the radial position of FAs that gives rise to, for example, definite x-shaped adhesion binding patterns (Figure \ref{figure: contact guidance plots} (h)).         
     
\begin{figure}[H]
\centering
\subfloat[]
{	
	\includegraphics[width=40mm,height=37mm]{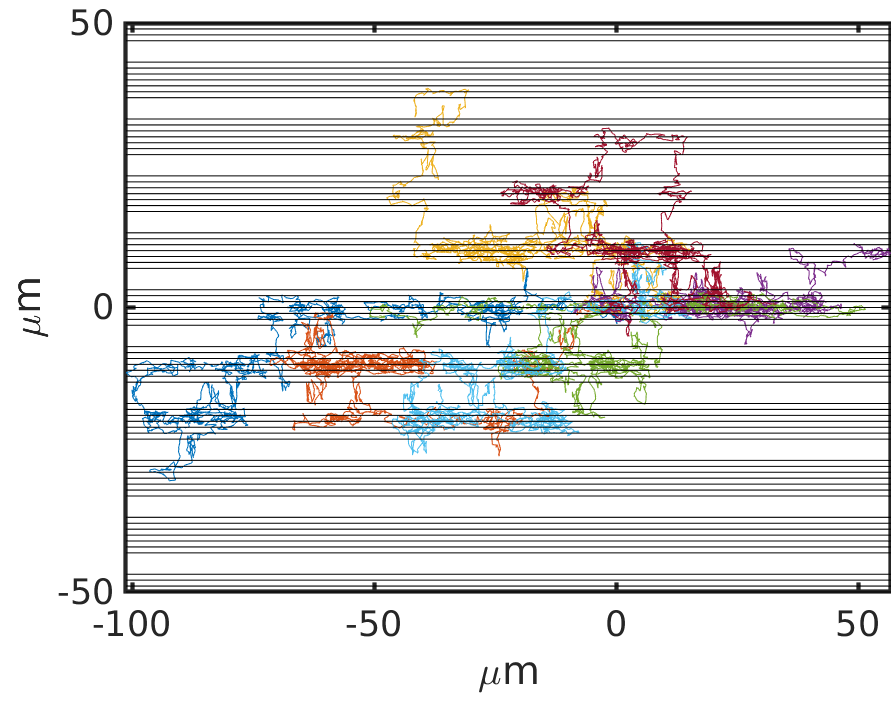}
}
\subfloat[]
{
	\includegraphics[width=40mm,height=37mm]{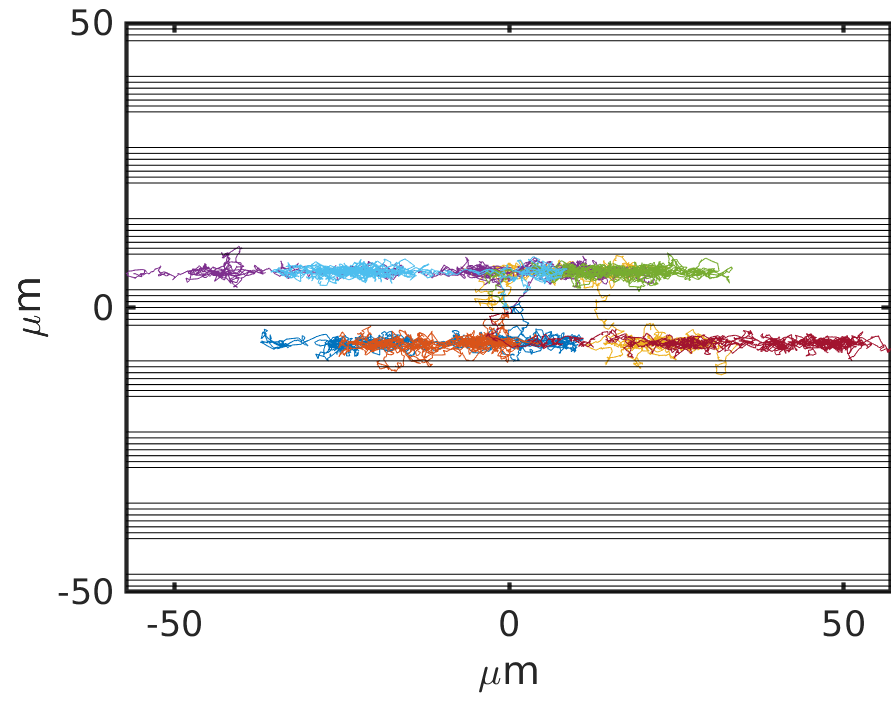}
}
\subfloat[]
{
	\includegraphics[width=40mm,height=37mm]{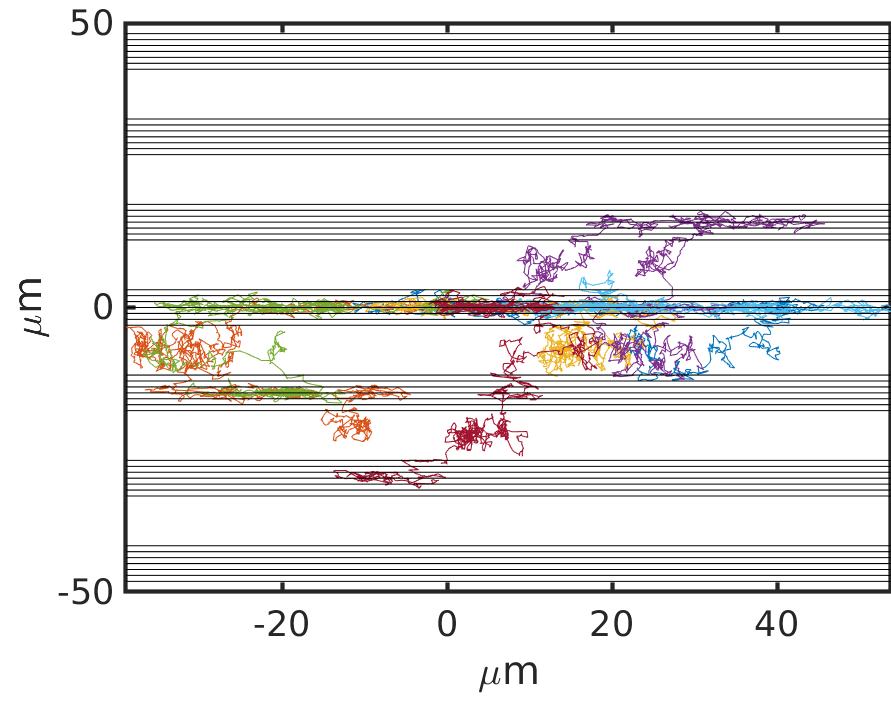}
}
\hspace{0mm}
\subfloat[]
{
	\includegraphics[width=40mm,height=37mm]{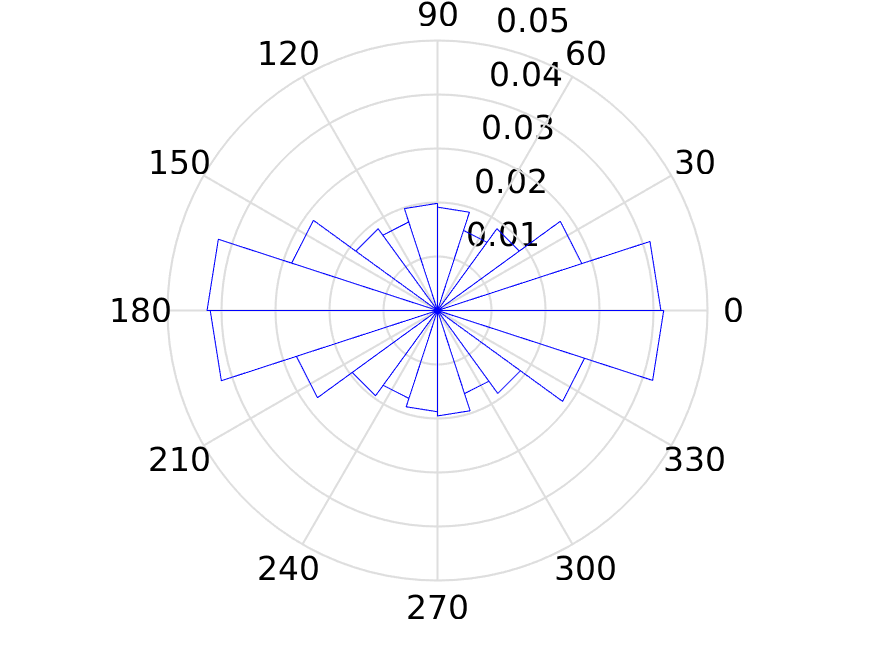}
}
\subfloat[]
{
	\includegraphics[width=40mm,height=37mm]{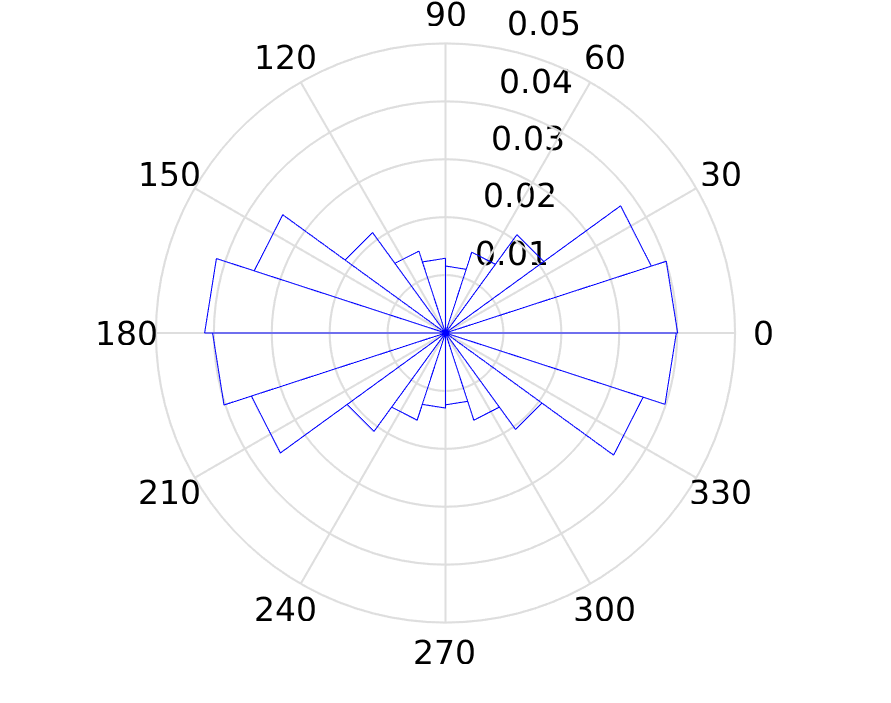}
}
\subfloat[]
{
	\includegraphics[width=40mm,height=37mm]{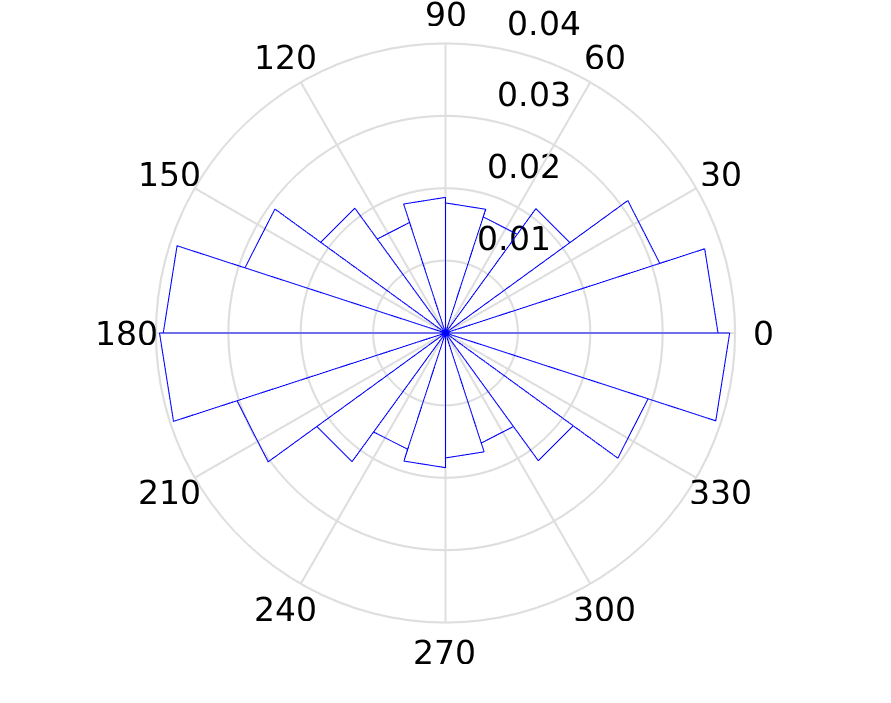}
}
\hspace{0mm}
\subfloat[]
{
	\includegraphics[width=40mm,height=37mm]{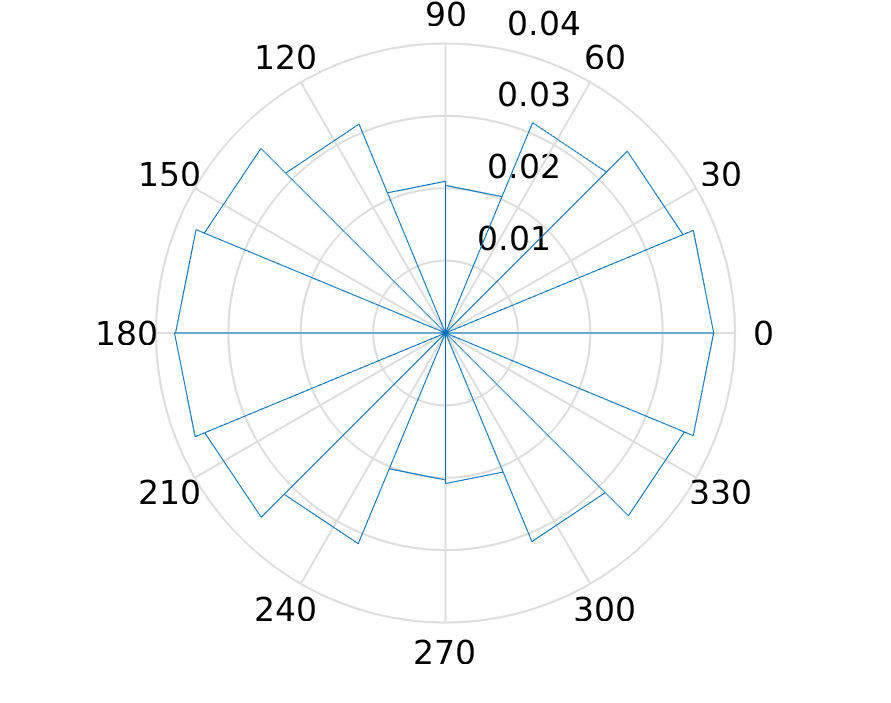}
}
\subfloat[]
{
	\includegraphics[width=40mm,height=37mm]{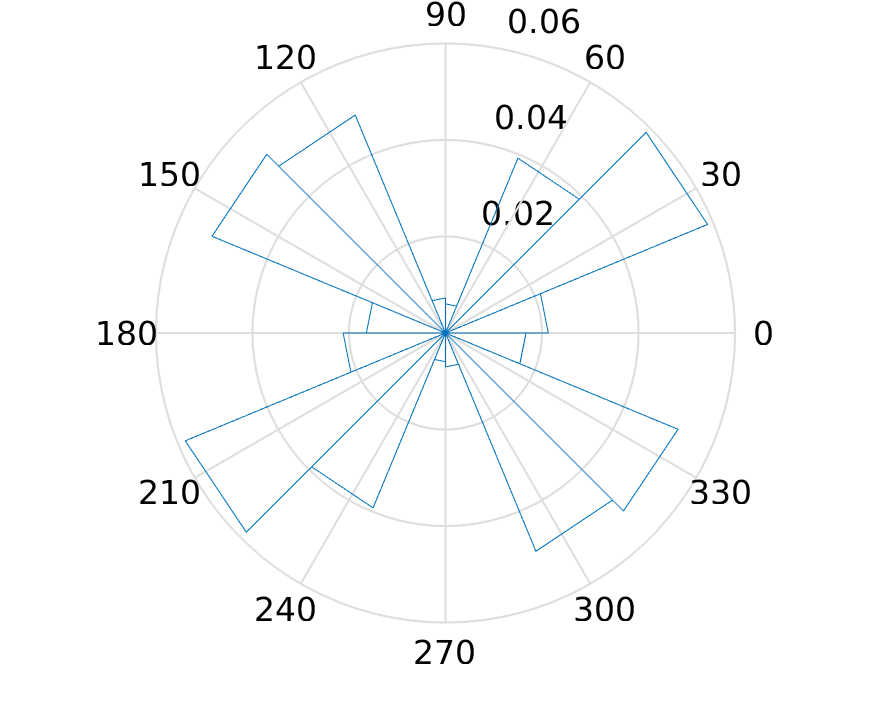}
}
\subfloat[]
{
	\includegraphics[width=40mm,height=37mm]{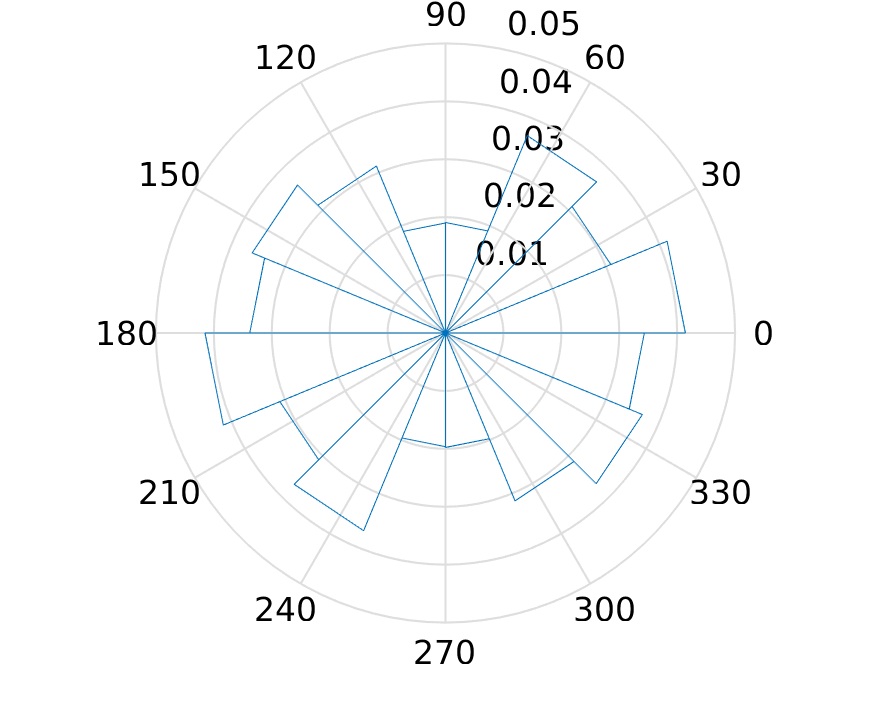}
}
\caption{Simulation results with $M=16$, and $\delta_G=0.15,0.25,0.35$ in first, second, and third columns, respectively. (a-c) Trajectories of 7 cells and the striped ECM pattern. (d-f) Relative frequencies of normalized velocities. (g-i) Relative frequencies of binding events in each of the 16 cell sectors.}
\label{figure: contact guidance plots}
\end{figure} 
Such binding (and unbinding) pattern leads to fluctuating movement along northwest-southeast and northeast-southwest axis, with the resulting net migration pattern shown in Figure \ref{figure: contact guidance plots} (b). Similarly, the binding pattern shown in Figure \ref{figure: contact guidance plots} (i) with more frequent events along the equator corresponds to a mixture of diagonal and horizontal movements (Figure \ref{figure: contact guidance plots} (c)), as larger interfiber spacing precludes FA binding at the poles and facilitates adhesion along, as well as across the stripes in x-shaped pattern (see also Figure \ref{figure: ECM stripes} (right) for illustration of a characteristic FA configuration). On the other hand, smaller spacing also leads to horizontal movement, but with more frequent vertical displacement across the stripes (Figure \ref{figure: contact guidance plots}). These results are also in line with conclusions made by in \cite{kubow2017contact}, where it was found that adhesion alignment determines contact guidance  We also found that the average speeds were lower than in previous scenarios (Tables \ref{table: uniform bias parameters}, \ref{table: ECM no bias parameters}): $1.52\mu m/min$, $0.94\mu m/min$, and $0.87\mu m/min$ corresponding to, respectively, $\delta_G = 0.15, 0.25, 0.35$. Interestingly, the average speeds reported in \cite{ROG17} were $\sim0.6\mu m/min$, although in that study the speeds were nearly constant for varying fiber pattern.  

\begin{figure}[H]
\begin{tikzpicture}[scale=0.7,transform shape,every node/.style={scale=1}]

\draw (0,0) circle [radius=2cm];
\foreach \i in {0,...,15}
	\filldraw [gray]({2*cos(\i*360/16)},{2*sin(\i*360/16)}) circle [radius=2pt];
\foreach \i in {-2,...,2}
	\draw [yshift=\i*1.2cm+0.25cm] (-2.4,0)--(2.4,0);
\foreach \i in {-2,...,2}
	\draw [yshift=\i*1.2cm-0.25cm] (-2.4,0)--(2.4,0);	
\foreach \i in {-2,...,2}
	\foreach \j in {0,...,20}{
		\draw [xshift = \j*0.22cm,yshift=\i*1.2cm-0.25cm] (-2.2,0)--(-2,0.5);
	}
\filldraw [red] (2,0) circle [radius = 3pt];
\filldraw [red] ({2*cos(2*360/16)},{2*sin(2*360/16)}) circle [radius = 3pt];
\filldraw [red] ({2*cos(6*360/16)},{2*sin(6*360/16)}) circle [radius = 3pt];
\filldraw [red] ({2*cos(8*360/16)},{2*sin(8*360/16)}) circle [radius = 3pt];		
\filldraw [red] ({2*cos(10*360/16)},{2*sin(10*360/16)}) circle [radius = 3pt];
\filldraw [red] ({2*cos(14*360/16)},{2*sin(14*360/16)}) circle [radius = 3pt];				
\filldraw [black] (0,0) circle [radius=1.33pt];

\draw (2.5,0) node[scale = 1.2, text width = 1.33pt] {$1$};
\draw (1.7,1.7) node[scale = 1.2, text width = 1.33pt] {$3$};
\draw (-1.7,1.7) node[scale = 1.2, text width = 1.33pt] {$7$};
\draw (-2.5,0) node[scale = 1.2, text width = 1.33pt] {$9$};
\draw (-1.8,-1.8) node[scale = 1.2, text width = 1.33pt] {$11$};
\draw (1.6,-1.8) node[scale = 1.2, text width = 1.33pt] {$15$};

\draw [thick,red] ({2*cos(2*360/16)},{2*sin(2*360/16)}) -- (0,0);
\draw [thick,red] ({2*cos(6*360/16)},{2*sin(6*360/16)}) -- (0,0);
\draw [thick,red] ({2*cos(8*360/16)},{2*sin(8*360/16)}) -- (0,0);
\draw [thick,red] ({2*cos(10*360/16)},{2*sin(10*360/16)}) -- (0,0);
\draw [thick,red] ({2*cos(14*360/16)},{2*sin(14*360/16)}) -- (0,0);
\draw [thick,red] ({2*cos(16*360/16)},{2*sin(16*360/16)}) -- (0,0);

\draw [thick,black][->](3,0) -- (4,0);

\draw [yshift = -2cm][black](5,0) -- (17,0);
\draw [yshift = -2cm](5,-0.5) -- (5,0.5);
\draw [yshift = -2cm](17,-0.5) -- (17,0.5);

\foreach \i in {1,...,16}
	\draw [yshift = -2cm][xshift=\i*0.75cm](5,-0.25)--(5,0.25); 
\foreach \i in {0,...,15}
	\filldraw [yshift = -2cm][gray,xshift=\i*0.75cm+0.375cm](5,0) circle [radius=2pt];
\foreach \i in {1,...,16}
	\draw [yshift = -2cm][xshift = \i*0.75cm-0.375cm] (4.9,0.5) node[text width = 1.33pt] {\i};	
\draw [yshift = -2cm](5,4.5)--(5.75,4.5);
\draw [yshift = -2cm](5.75,4.5) -- (5.75,1.5);
\draw [yshift = -2cm](5.75,1.5) -- (6.5,1.5);
\draw [yshift = -2cm](6.5,1.5) -- (6.5,4.5);
\draw [yshift = -2cm](6.5,4.5) -- (7.25,4.5);
\draw [yshift = -2cm](7.25,4.5) -- (7.25,1.5);
\draw [yshift = -2cm](7.25,1.5) -- (9.5,1.5);
\draw [yshift = -2cm](9.5,1.5) -- (9.5,4.5);
\draw [yshift = -2cm](9.5,4.5) -- (10.25,4.5);
\draw [yshift = -2cm](10.25,4.5) -- (10.25,1.5);
\draw [yshift = -2cm](10.25,1.5) -- (11,1.5);
\draw [yshift = -2cm](11,1.5) -- (11,4.5);
\draw [yshift = -2cm](11.0,4.5) -- (11.75,4.5);
\draw [yshift = -2cm](11.75,4.5) -- (11.75,1.5);
\draw [yshift = -2cm](11.75,1.5) -- (12.5,1.5);
\draw [yshift = -2cm](12.5,1.5) -- (12.5,4.5);
\draw [yshift = -2cm](12.5,4.5) -- (13.25,4.5);
\draw [yshift = -2cm](13.25,4.5) -- (13.25,1.5);
\draw [yshift = -2cm](13.25,1.5) -- (15.5,1.5);
\draw [yshift = -2cm](15.5,1.5) -- (15.5,4.5);
\draw [yshift = -2cm](15.5,4.5) -- (16.25,4.5);
\draw [yshift = -2cm](16.25,4.5) -- (16.25,1.5);
\draw [yshift = -2cm](16.25,1.5) -- (17,1.5);
\draw [yshift = -2cm](17,1.5) -- (17,4.5);
\draw [yshift = -2cm](11,6) node[scale = 2, text width = 1.33pt] {$Q_{cue}$};

\filldraw [yshift = -2cm][red,xshift = 0.375cm] (5.0,0) circle [radius = 3pt];
\filldraw [yshift = -2cm][red,xshift = 1.875cm] (5.0,0) circle [radius = 3pt];
\filldraw [yshift = -2cm][red,xshift = 4.875cm] (5.0,0) circle [radius = 3pt];
\filldraw [yshift = -2cm][red,xshift = 6.375cm] (5.0,0) circle [radius = 3pt];
\filldraw [yshift = -2cm][red,xshift = 7.875cm] (5.0,0) circle [radius = 3pt];
\filldraw [yshift = -2cm][red,xshift = 10.875cm] (5.0,0) circle [radius = 3pt];

\end{tikzpicture}
\end{figure}
\begin{figure}[H]
\begin{tikzpicture}[scale=0.7,transform shape,every node/.style={scale=1}]

\draw [yshift = 0.6cm](0,0) circle [radius=2cm];
\foreach \i in {0,...,15}
	\filldraw [yshift = 0.6cm][gray]({2*cos(\i*360/16)},{2*sin(\i*360/16)}) circle [radius=2pt];
\foreach \i in {-2,...,2}
	\draw [yshift=\i*1.2cm+0.25cm] (-2.4,0)--(2.4,0);
\foreach \i in {-2,...,2}
	\draw [yshift=\i*1.2cm-0.25cm] (-2.4,0)--(2.4,0);	
\foreach \i in {-2,...,2}
	\foreach \j in {0,...,20}{
		\draw [xshift = \j*0.22cm,yshift=\i*1.2cm-0.25cm] (-2.2,0)--(-2,0.5);
	}
\filldraw [yshift = 0.6cm][red] ({2*cos(1*360/16)},{2*sin(1*360/16)}) circle [radius = 3pt];
\filldraw [yshift = 0.6cm][red] ({2*cos(3*360/16)},{2*sin(3*360/16)}) circle [radius = 3pt];
\filldraw [yshift = 0.6cm][red] ({2*cos(4*360/16)},{2*sin(4*360/16)}) circle [radius = 3pt];
\filldraw [yshift = 0.6cm][red] ({2*cos(5*360/16)},{2*sin(5*360/16)}) circle [radius = 3pt];		
\filldraw [yshift = 0.6cm][red] ({2*cos(7*360/16)},{2*sin(7*360/16)}) circle [radius = 3pt];
\filldraw [yshift = 0.6cm][red] ({2*cos(9*360/16)},{2*sin(9*360/16)}) circle [radius = 3pt];
\filldraw [yshift = 0.6cm][red] ({2*cos(11*360/16)},{2*sin(11*360/16)}) circle [radius = 3pt];
\filldraw [yshift = 0.6cm][red] ({2*cos(12*360/16)},{2*sin(12*360/16)}) circle [radius = 3pt];		
\filldraw [yshift = 0.6cm][red] ({2*cos(13*360/16)},{2*sin(13*360/16)}) circle [radius = 3pt];
\filldraw [yshift = 0.6cm][red] ({2*cos(15*360/16)},{2*sin(15*360/16)}) circle [radius = 3pt];

\draw [yshift = 0.6cm](2,1.1) node[scale = 1.2, text width = 1.33pt] {$2$};
\draw [yshift = 0.6cm](0.75,2.3) node[scale = 1.2, text width = 1.33pt] {$4$};
\draw [yshift = 0.6cm](-0.1,2.5) node[scale = 1.2, text width = 1.33pt] {$5$};
\draw [yshift = 0.6cm](-0.95,2.3) node[scale = 1.2, text width = 1.33pt] {$6$};
\draw [yshift = 0.6cm](-2.1,1.1) node[scale = 1.2, text width = 1.33pt] {$8$};
\draw [yshift = 0.6cm](-2.35,-1.2) node[scale = 1.2, text width = 1.33pt] {$10$};
\draw [yshift = 0.6cm](0.85,-2.4) node[scale = 1.2, text width = 1.33pt] {$14$};
\draw [yshift = 0.6cm](-0.2,-2.5) node[scale = 1.2, text width = 1.33pt] {$13$};
\draw [yshift = 0.6cm](-1.1,-2.4) node[scale = 1.2, text width = 1.33pt] {$12$};
				
\filldraw [yshift = 0.6cm][black] (0,0) circle [radius=2pt];

\draw [yshift = 0.6cm][thick,red] ({2*cos(1*360/16)},{2*sin(1*360/16)}) -- (0,0);
\draw [yshift = 0.6cm][thick,red] ({2*cos(3*360/16)},{2*sin(3*360/16)}) -- (0,0);
\draw [yshift = 0.6cm][thick,red] ({2*cos(4*360/16)},{2*sin(4*360/16)}) -- (0,0);
\draw [yshift = 0.6cm][thick,red] ({2*cos(5*360/16)},{2*sin(5*360/16)}) -- (0,0);
\draw [yshift = 0.6cm][thick,red] ({2*cos(7*360/16)},{2*sin(7*360/16)}) -- (0,0);
\draw [yshift = 0.6cm][thick,red] ({2*cos(9*360/16)},{2*sin(9*360/16)}) -- (0,0);
\draw [yshift = 0.6cm][thick,red] ({2*cos(11*360/16)},{2*sin(11*360/16)}) -- (0,0);
\draw [yshift = 0.6cm][thick,red] ({2*cos(12*360/16)},{2*sin(12*360/16)}) -- (0,0);
\draw [yshift = 0.6cm][thick,red] ({2*cos(13*360/16)},{2*sin(13*360/16)}) -- (0,0);
\draw [yshift = 0.6cm][thick,red] ({2*cos(15*360/16)},{2*sin(15*360/16)}) -- (0,0);

\draw [thick,black][->](3,0) -- (4,0);

\draw [yshift = -2cm][black](5,0) -- (17,0);
\draw [yshift = -2cm](5,-0.5) -- (5,0.5);
\draw [yshift = -2cm](17,-0.5) -- (17,0.5);

\foreach \i in {1,...,16}
	\draw [yshift = -2cm][xshift=\i*0.75cm](5,-0.25)--(5,0.25); 
\foreach \i in {0,...,15}
	\filldraw [yshift = -2cm][gray,xshift=\i*0.75cm+0.375cm](5,0) circle [radius=2pt];
\foreach \i in {1,...,16}
	\draw [yshift = -2cm][xshift = \i*0.75cm-0.375cm] (4.9,0.5) node[text width = 1.33pt] {\i};	
\draw [yshift = -2cm](5,1.5)--(5.75,1.5);
\draw [yshift = -2cm](5.75,1.5) -- (5.75,4.5);
\draw [yshift = -2cm](5.75,4.5) -- (6.5,4.5);
\draw [yshift = -2cm](6.5,4.5) -- (6.5,1.5);
\draw [yshift = -2cm](6.5,1.5) -- (7.25,1.5);
\draw [yshift = -2cm](7.25,1.5) -- (7.25,4.5);
\draw [yshift = -2cm](7.25,4.5) -- (9.5,4.5);
\draw [yshift = -2cm](9.5,4.5) -- (9.5,1.5);
\draw [yshift = -2cm](9.5,1.5) -- (10.25,1.5);
\draw [yshift = -2cm](10.25,1.5) -- (10.25,4.5);
\draw [yshift = -2cm](10.25,4.5) -- (11,4.5);
\draw [yshift = -2cm](11,4.5) -- (11,1.5);
\draw [yshift = -2cm](11,1.5) -- (11.75,1.5);
\draw [yshift = -2cm](11.75,1.5) -- (11.75,4.5);
\draw [yshift = -2cm](11.75,4.5) -- (12.5,4.5);
\draw [yshift = -2cm](12.5,4.5) -- (12.5,1.5);
\draw [yshift = -2cm](12.5,1.5) -- (13.25,1.5);
\draw [yshift = -2cm](13.25,1.5) -- (13.25,4.5);
\draw [yshift = -2cm](13.25,4.5) -- (15.5,4.5);
\draw [yshift = -2cm](15.5,4.5) -- (15.5,1.5);
\draw [yshift = -2cm](15.5,1.5) -- (16.25,1.5);
\draw [yshift = -2cm](16.25,1.5) -- (16.25,4.5);
\draw [yshift = -2cm](16.25,4.5) -- (17,4.5);
\draw [yshift = -2cm](17,4.5) -- (17,1.5);
%

\filldraw [yshift = -2cm][red,xshift = 1.125cm] (5.0,0) circle [radius = 3pt];
\filldraw [yshift = -2cm][red,xshift = 2.625cm] (5.0,0) circle [radius = 3pt];
\filldraw [yshift = -2cm][red,xshift = 3.375cm] (5.0,0) circle [radius = 3pt];
\filldraw [yshift = -2cm][red,xshift = 4.125cm] (5.0,0) circle [radius = 3pt];
\filldraw [yshift = -2cm][red,xshift = 5.625cm] (5.0,0) circle [radius = 3pt];
\filldraw [yshift = -2cm][red,xshift = 7.125cm] (5.0,0) circle [radius = 3pt];
\filldraw [yshift = -2cm][red,xshift = 8.625cm] (5.0,0) circle [radius = 3pt];
\filldraw [yshift = -2cm][red,xshift = 9.375cm] (5.0,0) circle [radius = 3pt];
\filldraw [yshift = -2cm][red,xshift = 10.125cm] (5.0,0) circle [radius = 3pt];
\filldraw [yshift = -2cm][red,xshift = 11.625cm] (5.0,0) circle [radius = 3pt];

\end{tikzpicture}
\caption{Profiles of an adhesion pattern and an external cue, projected on cell's circumference. Bound and unbound focal adhesions are depicted as red and gray circles, respectively. Stress fibers are also colored in red.} 
\label{figure: fibrillar FA pattern}
\end{figure}
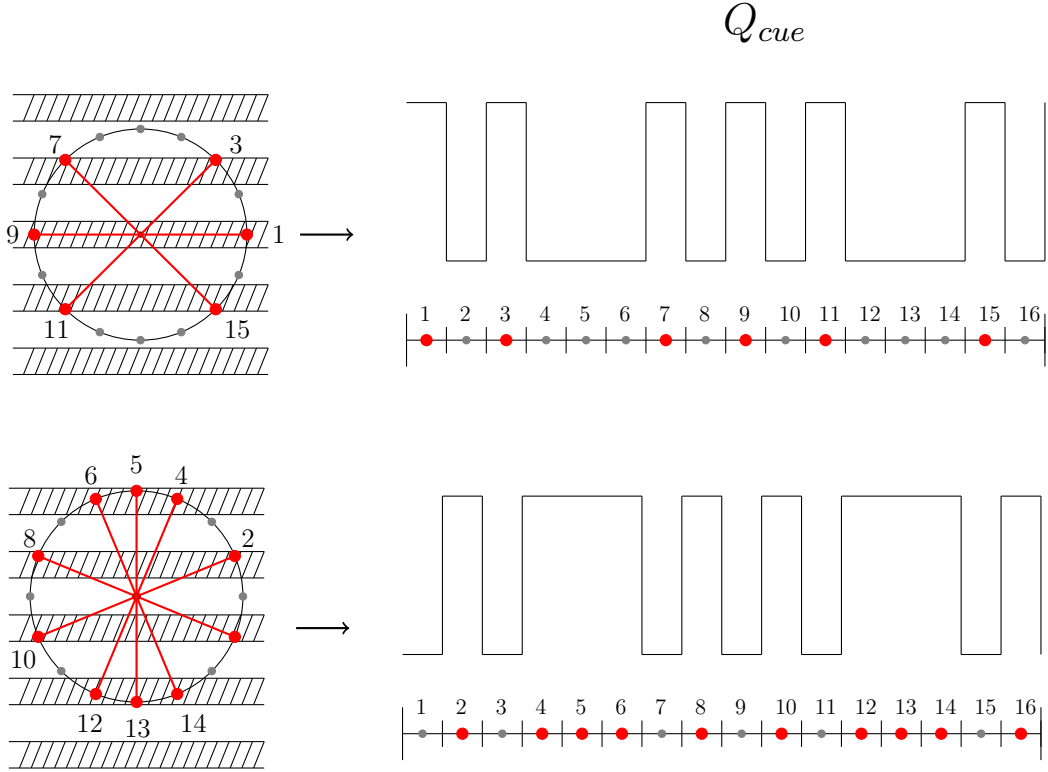
In Figure \ref{figure: fibrillar FA pattern} we illustrate the characteristic adhesions pattern and the profiles of the FA binding rate corresponding to ECM architecture in Figure \ref{figure: ECM stripes} (right). Assuming that there is a mechanical equilibrium for simplicity, we see that the adhesion pattern on the cell's periphery reflect the structure of the environment, since low values of $Q_{cue}$ translate into low probability of focal adhesion binding. Alternatively, if the cell is positioned as in Figure \ref{figure: fibrillar FA pattern} (bottom, left), then the adhesion pattern is modified accordingly. Thus, we see that our assumption about constant relative distance of FAs does not preclude the characteristic cell adhesion patterns to reflect environmental inhomogeneities (see also Figure \ref{figure: Qcue linear gradient}). 

Altogether, our simulations of contact guidance are, for the most part, consistent with the observations reported in the literature. In particular, we obtain the expected guidance of cell movement (Figure \ref{figure: contact guidance plots} (a-f)) and the geometric constraint of adhesion sites (Figure \ref{figure: contact guidance plots} (g-i)) by the fibrillar ECM pattern, in agreement with \cite{PDY09}, \cite{ROG17}. Nevertheless, since our model does not explicitly take into account morphological changes in cell shape (recall that in our model cell shape is normalized to a circle; see Section \ref{section: The Model}) and since cell shape control is essential to contact guidance \cite{PDY09}, \cite{ROG17}, increasing the interfiber distance does not necessarily lead to greater alignment of cell migration along the ECM fibers in our simulations\footnote{The values of the guidance parameter $G$, defined as in \cite{ROG17} (See Appendix \ref{appendix: data analysis} for details), were found to be $0.64,0.70,0.64$, corresponding to, respectively, $\delta_G=0.15,0.25,0.35$.}. Moreover, in the case when the total number of adhesion sites is very low, the stripes are too narrow, and the separation between them is large, then it might occur that all adhesion sites ``miss" the stripes, although the cell is spread over multiple stripes. In this case, the probability that any adhesion binds to the substrate is low, which is not biologically consistent. To remedy these shortcomings, the model needs to be extended in order to accommodate strong changes to cell morphology. 

\subsection{Asymmetric contractility}
We now investigate how cell motility is influenced by asymmetrical contractile forces in a cell. Along with preferential adhesion formation, due to, for example, a chemotactic gradient, formation of cell rear by increased actomyosin contractile activity serves as an alternative mechanism by which a directed migration can be induced in the absence of such gradient \cite{Cramer2010}. In particular, local stimulation of contractility leads to directed motility in the direction opposite to the stimulated area, even in the absence of response to chemotactic stimuli \cite{yam2007actin}. Here we show that our model is also capable of capturing such directed movement, triggered by breaking myosin mediated contractile symmetry. 

Recall that $T_i$ in equation \eqref{eq: Fi} denotes the force generated by myosin motors at an adhesion site $i$. Instead of taking it constant, we let it vary with the radial position of an FA. Namely, let $T_i:[0,2\pi)\rightarrow \mathbb{R}_{+}$ be defined as:
\begin{align*}
T_i(\theta) = 
\begin{cases*}
(1+\delta_{myo})T_{i_0},\text{ if } \pi<\theta+(i-1)\frac{2\pi}{M}<2\pi\\
T_{i_0}, \phantom{aasdaasad} \text{else}
\end{cases*},
\end{align*}
where $\delta_{myo} = 0.35,0.40,0.45$, $T_{i_0}$ is the constant value used in previous simulations (see Appendix \ref{appendix: parameters}), and $\theta+(i-1)\frac{2\pi}{M}$ is the radial position of the $i^{\text{th}}$ FA. Thus, contractile forces south of cell equator are larger by $35\%,40\%, 45\%$ for corresponding values of $\delta_{myo}$. We should, therefore, expect in our simulations that the northern part of a cell becomes the front due to the imposed contractile symmetry breaking, and that cells will move accordingly (see Figure \ref{figure: asymmetric contractility} for illustration).

\begin{figure}[!h]
\begin{tikzpicture}

\draw (0,0) circle [radius=1.5cm];
\draw (0,-0.4) -- (1.5,0);
\draw (0,-0.4) -- (1.0607,1.0607);
\draw (0,-0.4) -- (0,1.5);
\draw (0,-0.4) -- (-1.0607,1.0607);
\draw (0,-0.4) -- (-1.5,0);
\draw (0,-0.4) -- (-1.0607,-1.0607);
\draw (0,-0.4) -- (0,-1.5);
\draw [dashed](0,-0.4) -- (1.0607,-1.0607);

\filldraw [red] (1.5,0) circle [radius=1.5pt];
\filldraw [red] (1.0607,1.0607) circle [radius=1.5pt];
\filldraw [red] (0,1.5) circle [radius=1.5pt];
\filldraw [red] (-1.0607,1.0607) circle [radius=1.5pt];
\filldraw [red] (-1.5,0) circle [radius=1.5pt];
\filldraw [red] (-1.0607,-1.0607) circle [radius=1.5pt];
\filldraw [red] (0,-1.5) circle [radius=1.5pt];
\filldraw [red] (1.0607,-1.0607) circle [radius=1.5pt];

\draw [thick,red][->](1.5,0) -- (0.8667,-0.14);
\draw [thick,red][->](1.0607,1.0607) -- (0.5303,0.3303);
\draw [thick,red][->](0,1.5) -- (0,0.5);
\draw [thick,red][->](-1.0607,1.0607) -- (-0.5303,0.3303);
\draw [thick,red][->](-1.5,0) -- (-0.8667,-0.14);
\draw [thick,red][->](-1.0607,-1.0607) -- (-0.3979,-0.64);
\draw [thick,red][->](0,-1.5) -- (0,-0.5);

\filldraw [blue] (0,-0.4) circle [radius=1.5pt];

\draw [very thick, blue][->] (0,-0.4)--(-0.6929,0.1);
\filldraw [blue] (0,-0.4) circle [radius=1.5pt];

\draw [xshift=5cm](0,0) circle [radius=1.5cm];
\draw [xshift=5cm](0,-0.4) -- (1.5,0);
\draw [xshift=5cm](0,-0.4) -- (1.0607,1.0607);
\draw [xshift=5cm](0,-0.4) -- (0,1.5);
\draw [xshift=5cm](0,-0.4) -- (-1.0607,1.0607);
\draw [xshift=5cm](0,-0.4) -- (-1.5,0);
\draw [xshift=5cm](0,-0.4) -- (-1.0607,-1.0607);
\draw [xshift=5cm](0,-0.4) -- (1.0607,-1.0607);
\draw [xshift=5cm][dashed](0,-0.4) -- (0,-1.5);

\filldraw [xshift=5cm][red] (1.5,0) circle [radius=1.5pt];
\filldraw [xshift=5cm][red] (1.0607,1.0607) circle [radius=1.5pt];
\filldraw [xshift=5cm][red] (0,1.5) circle [radius=1.5pt];
\filldraw [xshift=5cm][red] (-1.0607,1.0607) circle [radius=1.5pt];
\filldraw [xshift=5cm][red] (-1.5,0) circle [radius=1.5pt];
\filldraw [xshift=5cm][red] (-1.0607,-1.0607) circle [radius=1.5pt];
\filldraw [xshift=5cm][red] (0,-1.5) circle [radius=1.5pt];
\filldraw [xshift=5cm][red] (1.0607,-1.0607) circle [radius=1.5pt];

\draw [xshift=5cm][thick,red][->](1.5,0) -- (0.8667,-0.14);
\draw [xshift=5cm][thick,red][->](1.0607,1.0607) -- (0.5303,0.3303);
\draw [xshift=5cm][thick,red][->](0,1.5) -- (0,0.5);
\draw [xshift=5cm][thick,red][->](-1.0607,1.0607) -- (-0.5303,0.3303);
\draw [xshift=5cm][thick,red][->](-1.5,0) -- (-0.8667,-0.14);
\draw [xshift=5cm][thick,red][->](-1.0607,-1.0607) -- (-0.3979,-0.64);
\draw [xshift=5cm][thick,red][->](1.0607,-1.0607) -- (0.3979,-0.64);

\filldraw [xshift=5cm][blue] (0,-0.4) circle [radius=1.5pt];

\draw [xshift=5cm][very thick, blue][->] (0,-0.4)--(0,0.4);
\filldraw [xshift=5cm][blue] (0,-0.4) circle [radius=1.5pt];
\draw [xshift=10cm](0,0) circle [radius=1.5cm];
\draw [xshift=10cm](0,-0.4) -- (1.5,0);
\draw [xshift=10cm](0,-0.4) -- (1.0607,1.0607);
\draw [xshift=10cm](0,-0.4) -- (0,1.5);
\draw [xshift=10cm](0,-0.4) -- (-1.0607,1.0607);
\draw [xshift=10cm](0,-0.4) -- (-1.5,0);
\draw [xshift=10cm](0,-0.4) -- (0,-1.5);
\draw [xshift=10cm](0,-0.4) -- (1.0607,-1.0607);
\draw [xshift=10cm][dashed](0,-0.4) -- (-1.0607,-1.0607);

\filldraw [xshift=10cm][red] (1.5,0) circle [radius=1.5pt];
\filldraw [xshift=10cm][red] (1.0607,1.0607) circle [radius=1.5pt];
\filldraw [xshift=10cm][red] (0,1.5) circle [radius=1.5pt];
\filldraw [xshift=10cm][red] (-1.0607,1.0607) circle [radius=1.5pt];
\filldraw [xshift=10cm][red] (-1.5,0) circle [radius=1.5pt];
\filldraw [xshift=10cm][red] (-1.0607,-1.0607) circle [radius=1.5pt];
\filldraw [xshift=10cm][red] (0,-1.5) circle [radius=1.5pt];
\filldraw [xshift=10cm][red] (1.0607,-1.0607) circle [radius=1.5pt];

\draw [xshift=10cm][thick,red][->](1.5,0) -- (0.8667,-0.14);
\draw [xshift=10cm][thick,red][->](1.0607,1.0607) -- (0.5303,0.3303);
\draw [xshift=10cm][thick,red][->](0,1.5) -- (0,0.5);
\draw [xshift=10cm][thick,red][->](-1.0607,1.0607) -- (-0.5303,0.3303);
\draw [xshift=10cm][thick,red][->](-1.5,0) -- (-0.8667,-0.14);
\draw [xshift=10cm][thick,red][->](0,-1.5) -- (0,-0.5);
\draw [xshift=10cm][thick,red][->](1.0607,-1.0607) -- (0.3979,-0.64);

\filldraw [xshift=10cm][blue] (0,-0.4) circle [radius=1.5pt];

\draw [xshift=10cm][very thick, blue][->] (0,-0.4)--(0.6929,0.1);
\filldraw [xshift=10cm][blue] (0,-0.4) circle [radius=1.5pt];


\draw [xshift=5cm,yshift = -4cm](0,0) circle [radius=1.5cm];
\draw [xshift=5cm,yshift = -4cm](0,-0.4) -- (1.5,0);
\draw [xshift=5cm,yshift = -4cm](0,-0.4) -- (1.0607,1.0607);
\draw [xshift=5cm,yshift = -4cm](0,-0.4) -- (0,1.5);
\draw [xshift=5cm,yshift = -4cm](0,-0.4) -- (-1.0607,1.0607);
\draw [xshift=5cm,yshift = -4cm](0,-0.4) -- (-1.5,0);
\draw [xshift=5cm,yshift = -4cm](0,-0.4) -- (-1.0607,-1.0607);
\draw [xshift=5cm,yshift = -4cm](0,-0.4) -- (0,-1.5);
\draw [xshift=5cm,yshift = -4cm](0,-0.4) -- (1.0607,-1.0607);

\filldraw [xshift=5cm,yshift = -4cm][red] (1.5,0) circle [radius=1.5pt];
\filldraw [xshift=5cm,yshift = -4cm][red] (1.0607,1.0607) circle [radius=1.5pt];
\filldraw [xshift=5cm,yshift = -4cm][red] (0,1.5) circle [radius=1.5pt];
\filldraw [xshift=5cm,yshift = -4cm][red] (-1.0607,1.0607) circle [radius=1.5pt];
\filldraw [xshift=5cm,yshift = -4cm][red] (-1.5,0) circle [radius=1.5pt];
\filldraw [xshift=5cm,yshift = -4cm][red] (-1.0607,-1.0607) circle [radius=1.5pt];
\filldraw [xshift=5cm,yshift = -4cm][red] (0,-1.5) circle [radius=1.5pt];
\filldraw [xshift=5cm,yshift = -4cm][red] (1.0607,-1.0607) circle [radius=1.5pt];

\draw [xshift=5.2cm,yshift = -4cm] (1.5,0) node[scale = 1, text width = 1.33pt] {$1$};
\draw [xshift=5.2cm,yshift = -3.8cm] (1.0607,1.0607) node[scale = 1, text width = 1.33pt] {$2$};
\draw [xshift=5.3cm,yshift = -3.8cm] (0,1.5) node[scale = 1, text width = 1.33pt] {$3$};
\draw [xshift=4.7cm,yshift = -3.8cm] (-1.0607,1.0607) node[scale = 1, text width = 1.33pt] {$4$};
\draw [xshift=4.7cm,yshift = -4cm] (-1.5,0) node[scale = 1, text width = 1.33pt] {$5$};
\draw [xshift=4.7cm,yshift = -4.2cm] (-1.0607,-1.0607) node[scale = 1, text width = 1.33pt] {$6$};
\draw [xshift=5cm,yshift = -4.3cm] (0,-1.5) node[scale = 1, text width = 1.33pt] {$7$};
\draw [xshift=5.2cm,yshift = -4.2cm] (1.0607,-1.0607) node[scale = 1, text width = 1.33pt] {$8$};

\draw [xshift=5cm,yshift = -4cm][thick,red][->](1.5,0) -- (0.8667,-0.14);
\draw [xshift=5cm,yshift = -4cm][thick,red][->](1.0607,1.0607) -- (0.5303,0.3303);
\draw [xshift=5cm,yshift = -4cm][thick,red][->](0,1.5) -- (0,0.5);
\draw [xshift=5cm,yshift = -4cm][thick,red][->](-1.0607,1.0607) -- (-0.5303,0.3303);
\draw [xshift=5cm,yshift = -4cm][thick,red][->](-1.5,0) -- (-0.8667,-0.14);
\draw [xshift=5cm,yshift = -4cm][thick,red][->](-1.0607,-1.0607) -- (-0.3979,-0.64);
\draw [xshift=5cm,yshift = -4cm][thick,red][->](0,-1.5) -- (0,-0.5);
\draw [xshift=5cm,yshift = -4cm][thick,red][->](1.0607,-1.0607) -- (0.3979,-0.64);

\filldraw [xshift=5cm,yshift = -4cm][blue] (0,-0.4) circle [radius=1.5pt];

\draw [xshift=6cm][yshift=-3.6cm][very thick, black][->] (1.0607,1.0607)--(1.6,1.6);
\draw [xshift=4cm][yshift=-3.6cm][very thick, black][->] (-1.0607,1.0607)--(-1.6,1.6);
\draw [xshift=5cm][yshift=-2.3cm][very thick, black][->] (0,0)--(0,0.5);

\end{tikzpicture}
\caption{A schematic representation of asymmetric contractility. (Bottom row) Increased contractility causes $\mathbf{x}_n$ (blue circle) to shift south of the otherwise equilibrium point in the center. (Top row) Preferential unbinding of FAs south of equator leads to directed movement indicated by blue arrows. See also Figure \ref{fig: motion cycle2} (II') for schematic representation of cell motility in case of an unbinding event.}

\label{figure: asymmetric contractility}
\end{figure}
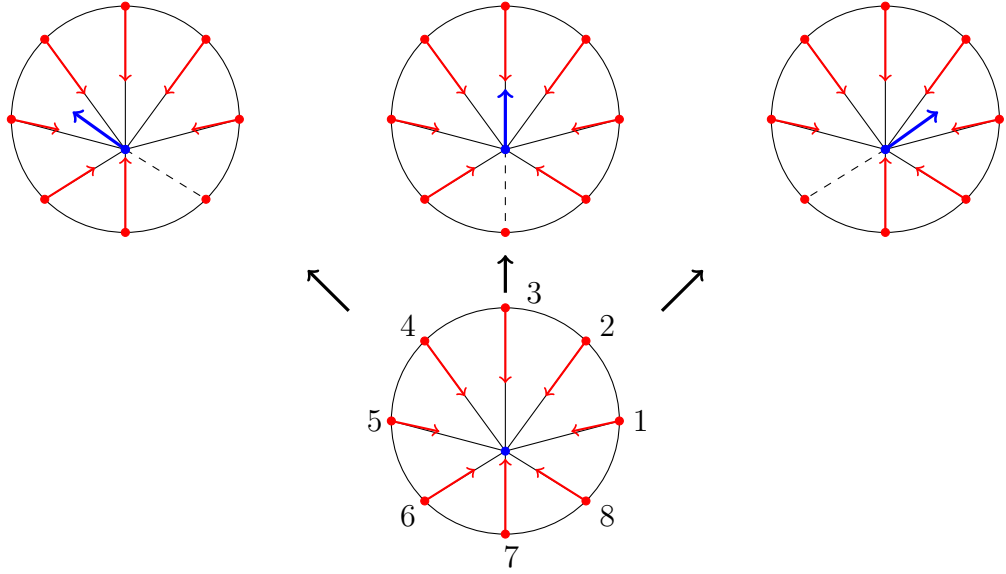
Indeed, Figure \ref{figure: myosin plots} (a-c) shows, as expected, the trajectories  of cells maintaining north-south polarity corresponding to, respectively, front and rear. Since the asymmetry of myosin forces remained during the simulations, the cell's north-south polarity also persisted, resulting in the cell movement that was highly directed along this axis, consistent with \cite{yam2007actin}. Consequently, we obtain higher values of $\beta_{av}(t)$, as shown in Figure \ref{figure: myosin plots} (d-f). In particular, for $\delta_{myo}=0.45$, we see that the time scaling of the mean squared displacement is close to ballistic (see Table \ref{table: myosin parameters} for values of $\bar{\beta}$). 
\begin{figure}[!h]
\vspace{0mm}
\centering
\subfloat[]
{
	\includegraphics[width=40mm,height=37mm]{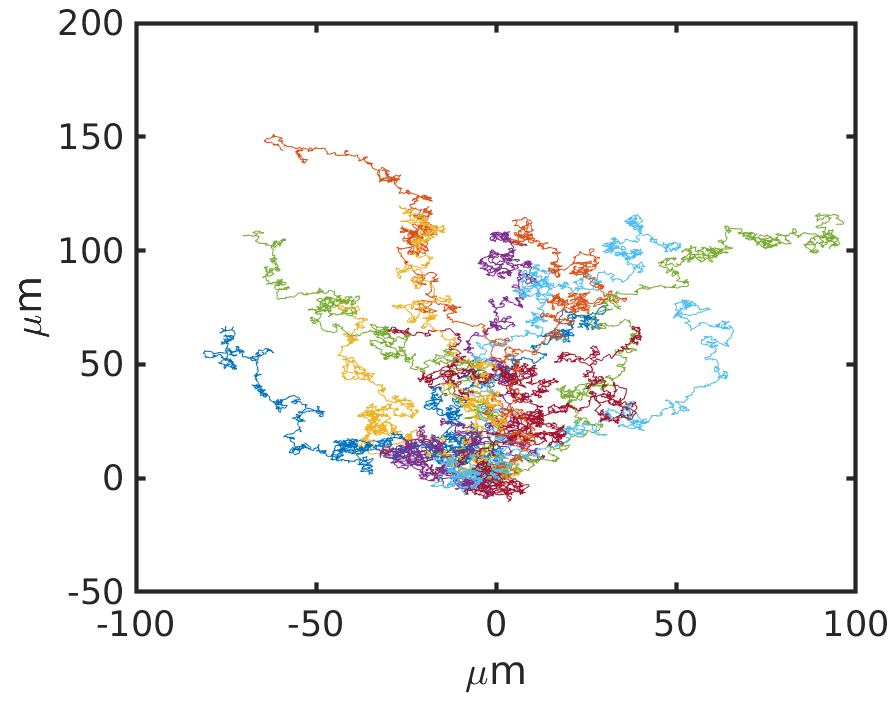}
}
\subfloat[]
{
	\includegraphics[width=40mm,height=37mm]{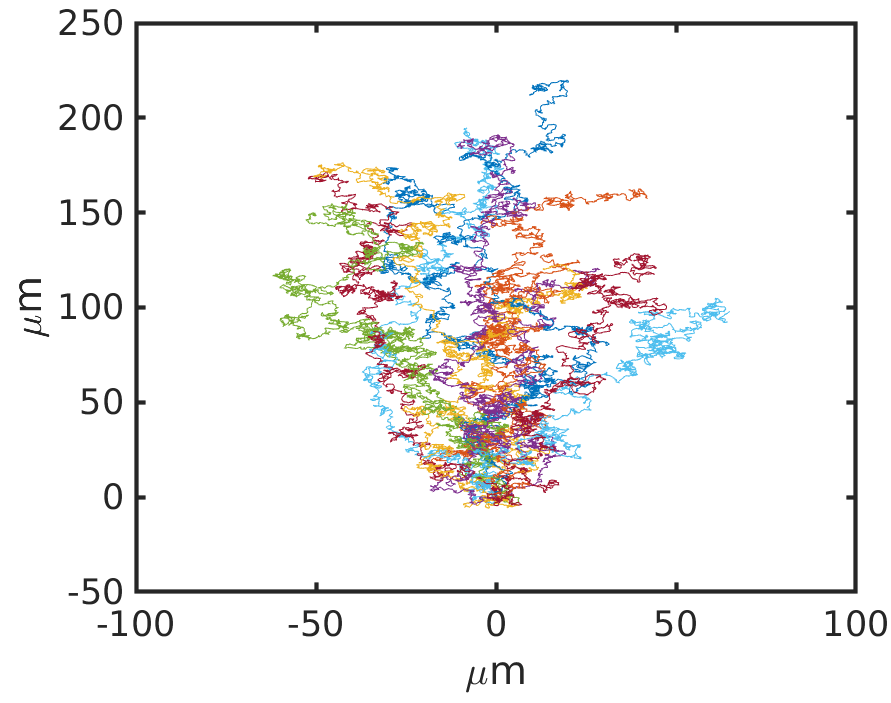}
}
\subfloat[]
{
	\includegraphics[width=40mm,height=37mm]{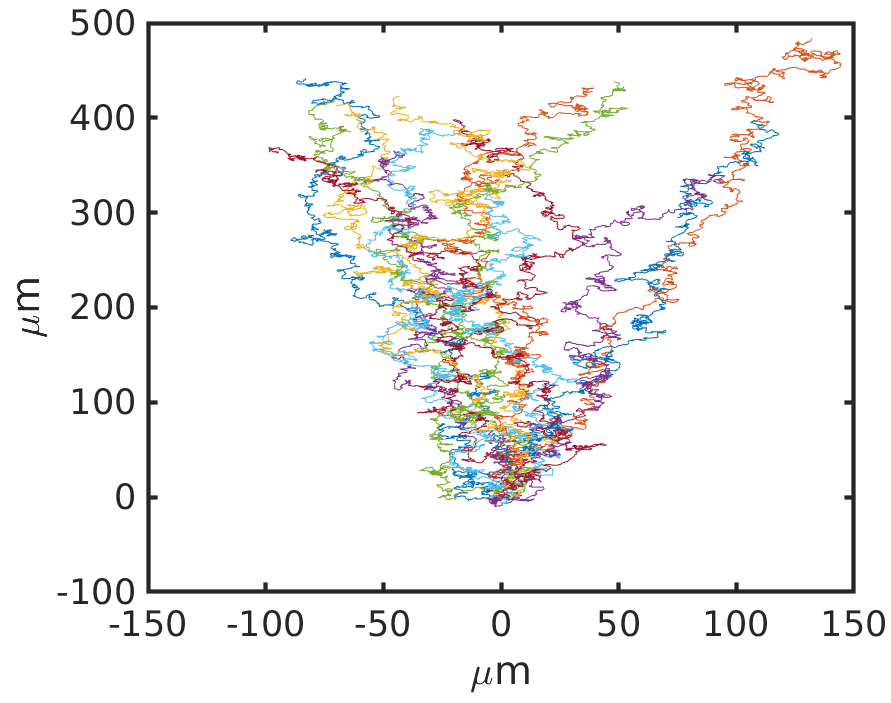}
}
\hspace{0mm}
\subfloat[]
{
	\includegraphics[width=40mm,height=37mm]{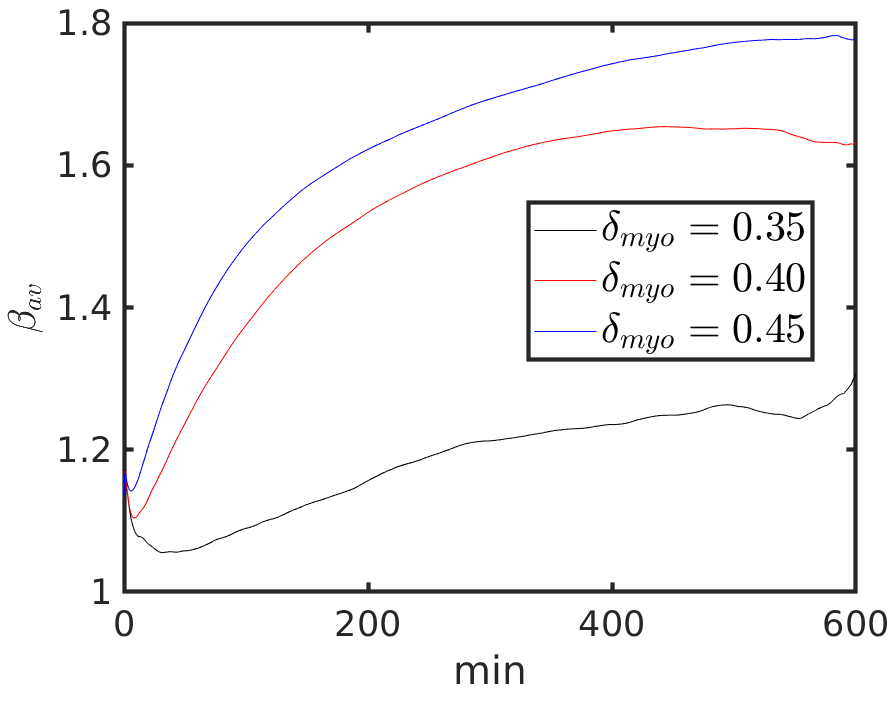}
}
\subfloat[]
{
	\includegraphics[width=40mm,height=37mm]{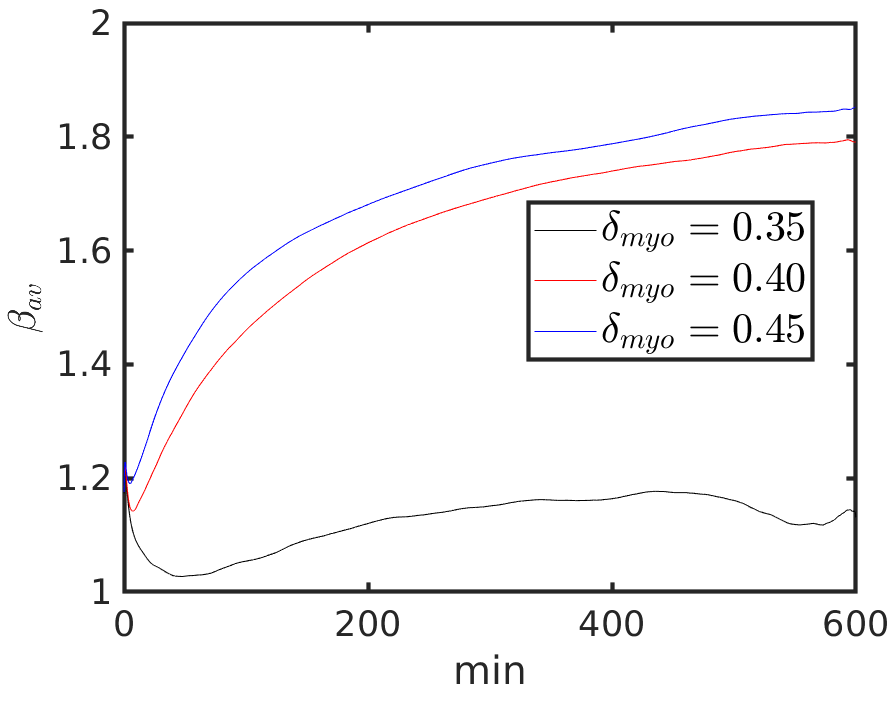}
}
\subfloat[]
{
	\includegraphics[width=40mm,height=37mm]{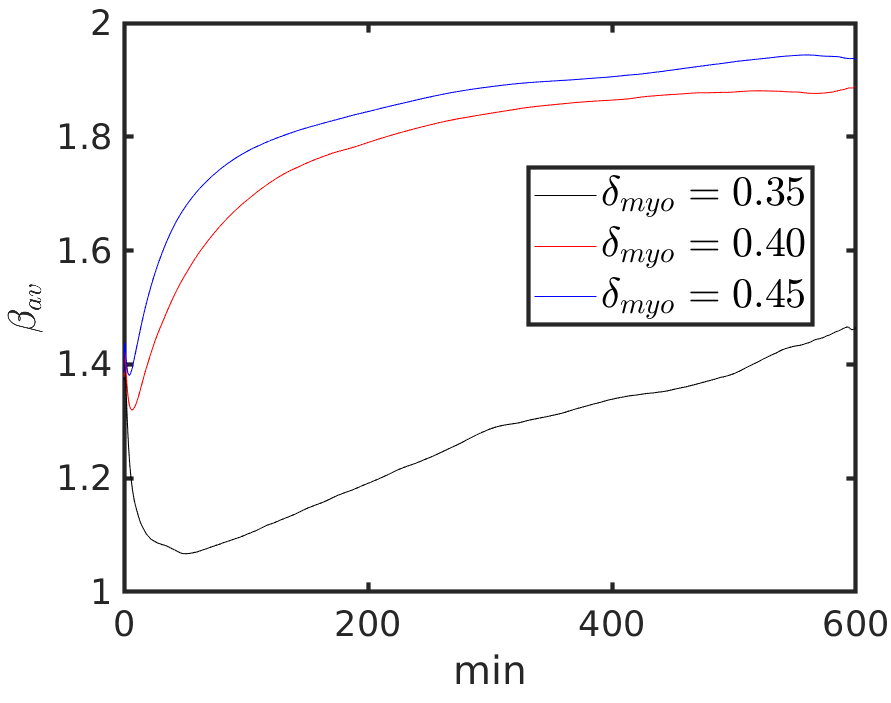}
}
\hspace{0mm}

\subfloat[]
{
	\includegraphics[width=40mm,height=37mm]{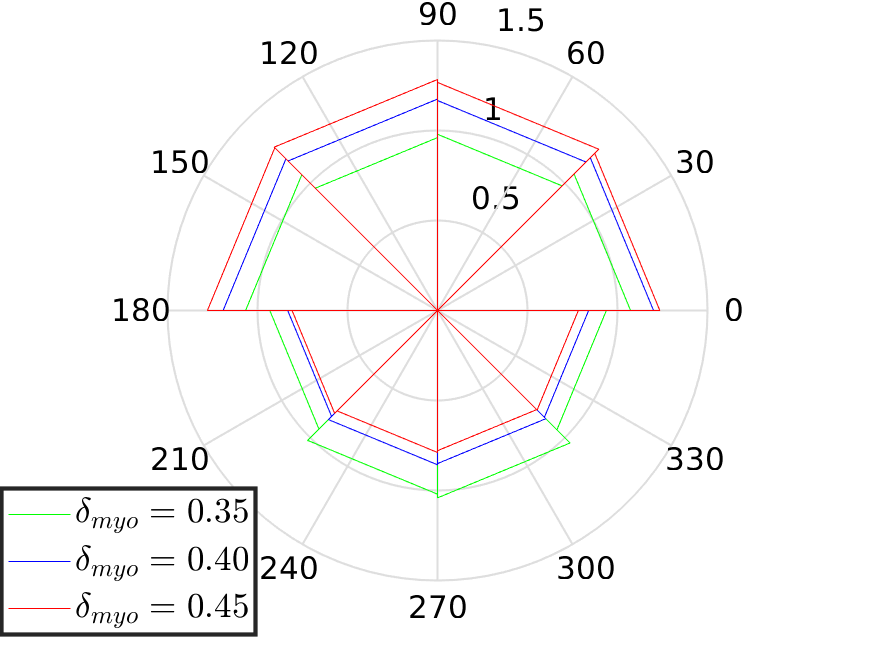}
}
\subfloat[]
{
	\includegraphics[width=40mm,height=37mm]{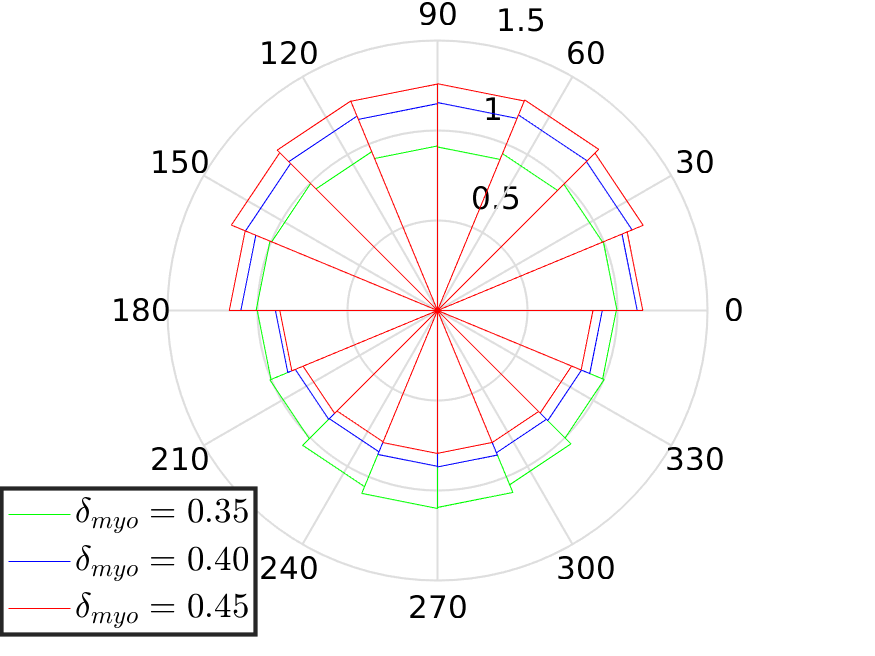}
}
\subfloat[]
{
	\includegraphics[width=40mm,height=37mm]{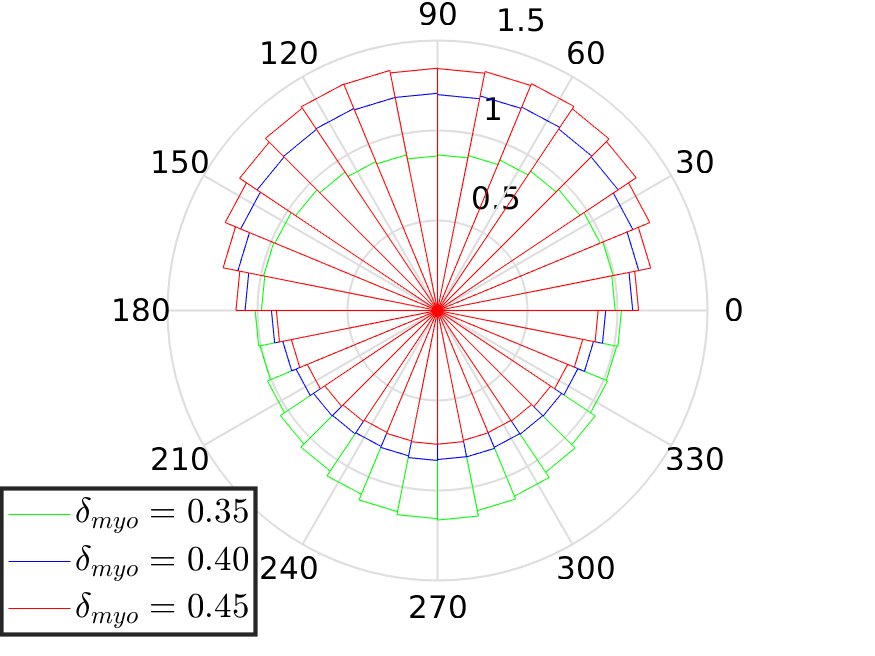}
}

\caption{Simulation results with $M=8,16,32$ adhesions in the first, second, and third columns respectively, and with various values of $\delta_{myo}$. (a-c) Trajectories of 13 cells with $\delta_{myo}=0.4$. (d-f) Time-average exponents $\beta_{av}$. (g-i) Ratio of the number of binding to unbinding events in each sector. }
\label{figure: myosin plots}
\end{figure}
\begin{table}[H]
\centering
\def\arraystretch{1.5}
\setlength\tabcolsep{5.5pt}
\begin{tabular}{|c|c c c|c c c|c c c|}
\hline

M & \multicolumn{3}{c|}{8} & \multicolumn{3}{c|}{16} & \multicolumn{3}{c|}{32}\\
\hline
$\delta_{myo}$ & 0.35 & 0.40 & 0.45 & 0.35 & 0.40 & 0.45 & 0.35 & 0.4 & 0.45\\
\hline
$\bar{\beta}$, 1 & 1.3072 & 1.6325 & 1.7759 & 1.1311 & 1.7905 & 1.8524 & 1.4650 & 1.8892 & 1.9353\\
\hline
\begin{tabular}{@{}c@{}}$s_{av}$, \\ $\mu m/min$\end{tabular} & 1.6230 & 1.5639 & 1.5103 & 2.3742 & 2.3798 & 2.3516 & 3.5861 & 3.7414 & 3.7701\\
\hline
\begin{tabular}{@{}c@{}}$\beta_0$, \\ $\mu m^2/min^{\bar{\beta}}$\end{tabular} & 0.7767 & 0.3659 & 0.2493 & 2.3088 & 0.2893 & 0.3037 & 1.0713 & 0.9894 & 1.2787\\
\hline
\end{tabular}
 \caption{Parameters obtained from the simulations with varying $\delta_{myo}$.}
 \label{table: myosin parameters}
\end{table}

Moreover, increasing the number of FAs leads to more polarized, directed migration. As in the previous cases, neither speed averages (Table \ref{table: myosin parameters}) nor their distribution (data not shown) changed significantly for a given number of focal adhesions. Interestingly, for $\delta_{myo}=0.35$, the binding is relatively more frequent in the rear (i.e. south of equator) than in the front, and unbinding is relatively more frequent in the front (i.e. north of equator) than in the back (Figure \ref{figure: myosin plots} (g-i)). This suggests, then, that cells were preferentially moving in the southern direction. However, as can be seen in Figure \ref{figure: myosin 35p plots}, this is not the case. Although movements southwards are more frequent in this situation (due to the above-mentioned event frequencies), the speeds are lower than northward movements: the ratios of the average speeds directed north to the average speeds directed south were found to be $1.0165$, $1.0181$, and $1.0858$ corresponding to, respectively, $M=8,16,32$. The net effect is northward movement. For higher values of $\delta_{myo}$, we see that the unbinding is, expectedly, more frequent in the rear, while binding is preferentially in the front.

\begin{figure}[H]
\captionsetup[subfigure]{labelformat=empty}
\centering
\subfloat[]
{
	\includegraphics[width=40mm,height=37mm]{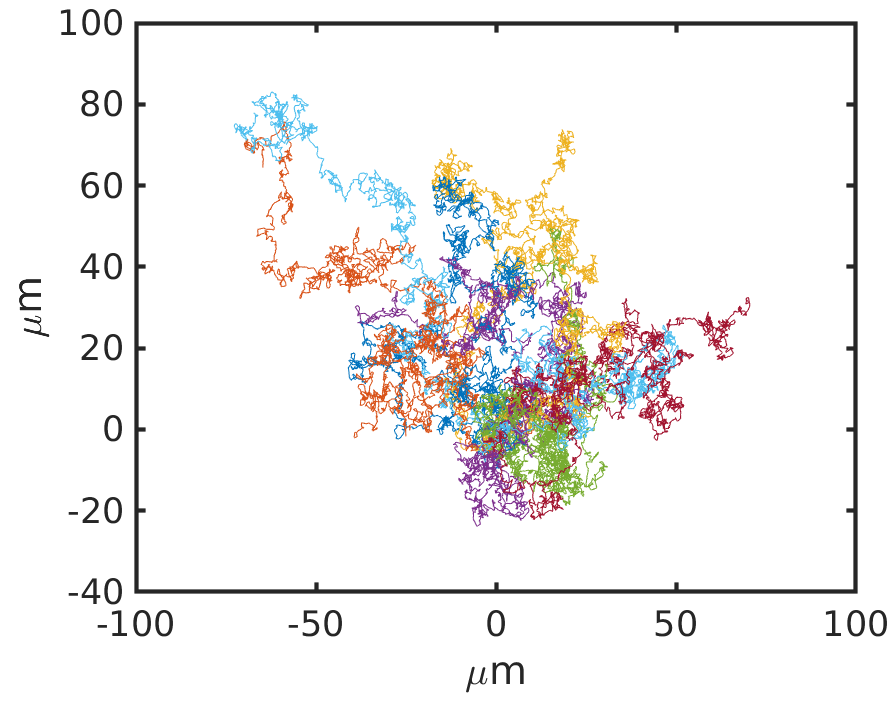}
}
\subfloat[]
{
	\includegraphics[width=40mm,height=37mm]{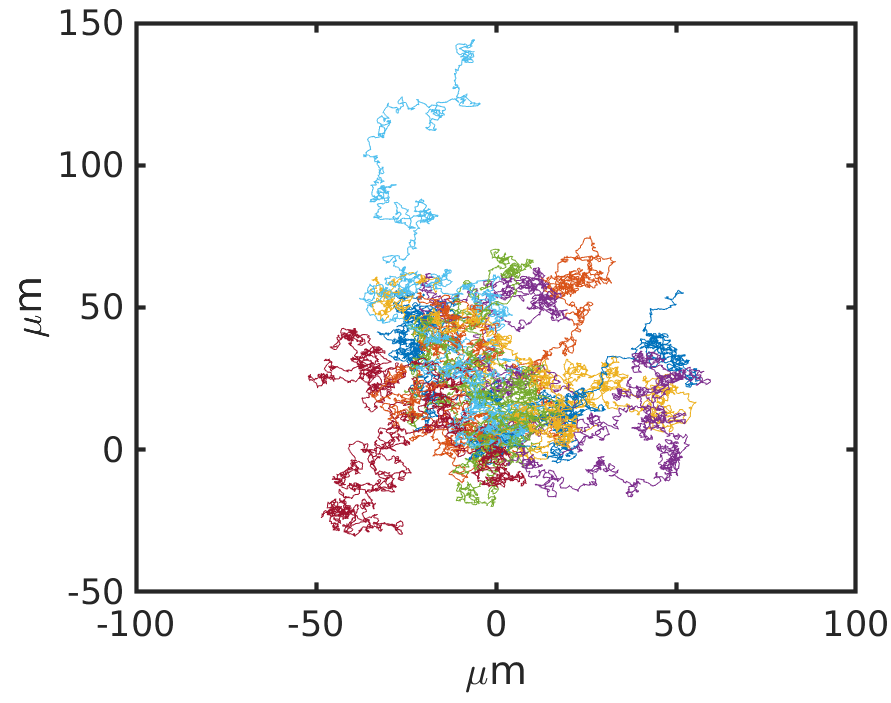}
}
\subfloat[]
{
	\includegraphics[width=40mm,height=37mm]{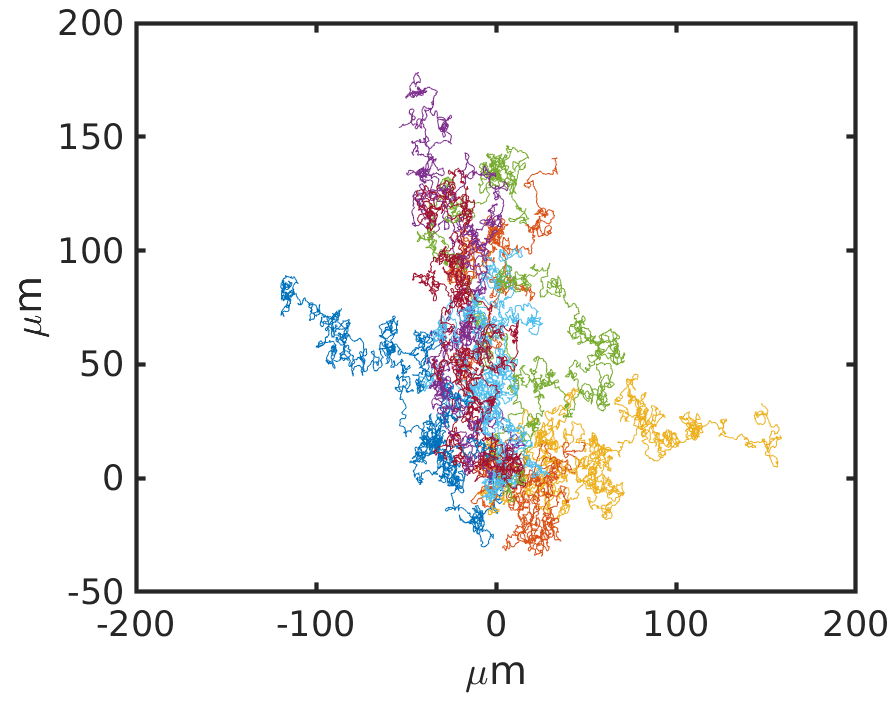}
}
\caption{Simulated trajectories with $M=8,16,32$ adhesions with $\delta_{myo} = 0.35$ on left, middle, and right plots, respectively.}
\label{figure: myosin 35p plots}
\end{figure} 

These adhesion frequency patterns also illustrate the significance of the force dependence of the FA binding rate. Recalling Figure \ref{figure: force dependence}, we see that, for $\delta_{myo}=0.4,0.45$ (corresponding to $T_i=1.018F_b, 1.054F_b$), the binding rate dominates unbinding north of equator due to greater SF extension (see Figure \ref{figure: asymmetric contractility} for an illustration) leading to increased contractile force. Since the expected adhesion pattern is reversed for $\delta_{myo}=0.35$ (corresponding to $T_i=0.981F_b$) and yet the cells migrate northwards, it may suggest that there is a threshold value of $\delta_{myo}$, above which cells can migrate in a certain direction solely by asymmetric contractility, and/or below which cells must additionally bias adhesion formation to do so.        

\begin{figure}[H]
\centering
\subfloat[$\delta_{EA}=0$]
{
	\includegraphics[width=40mm,height=37mm]{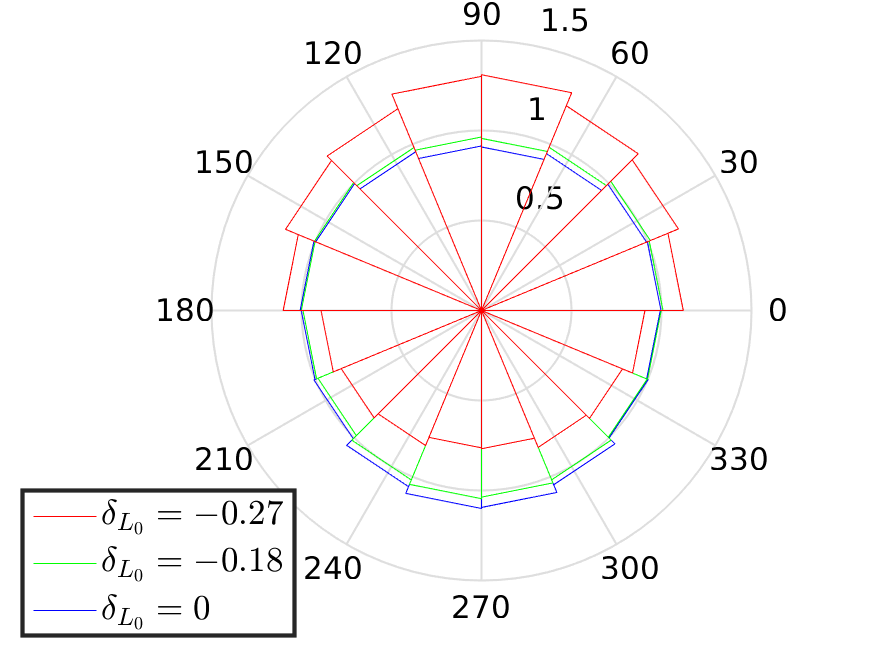}
}
\subfloat[$\delta_{L_0}=-0.18$]
{
	\includegraphics[width=40mm,height=37mm]{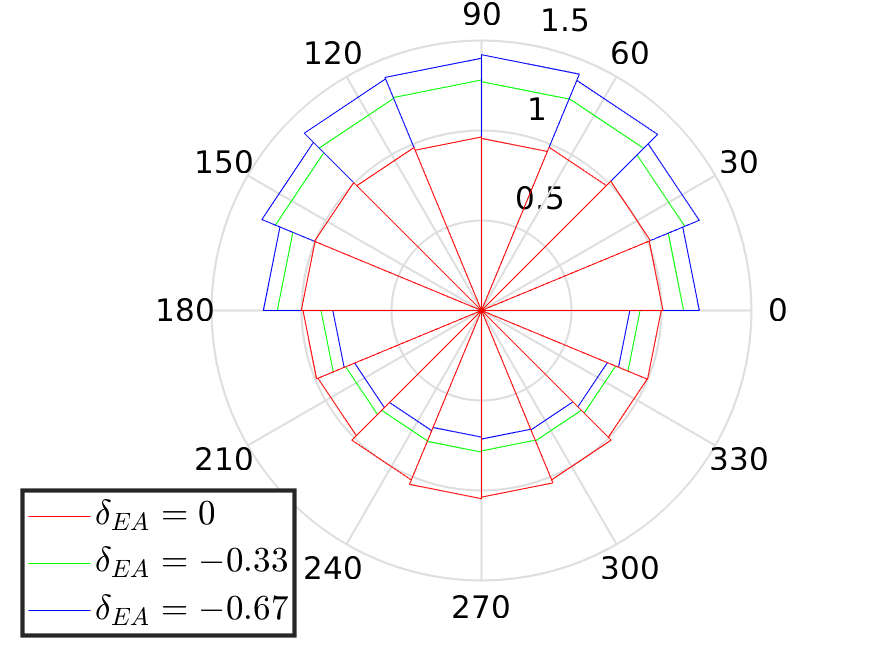}
}
\subfloat[$\delta_{L_0}=-0.27$]
{
	\includegraphics[width=40mm,height=37mm]{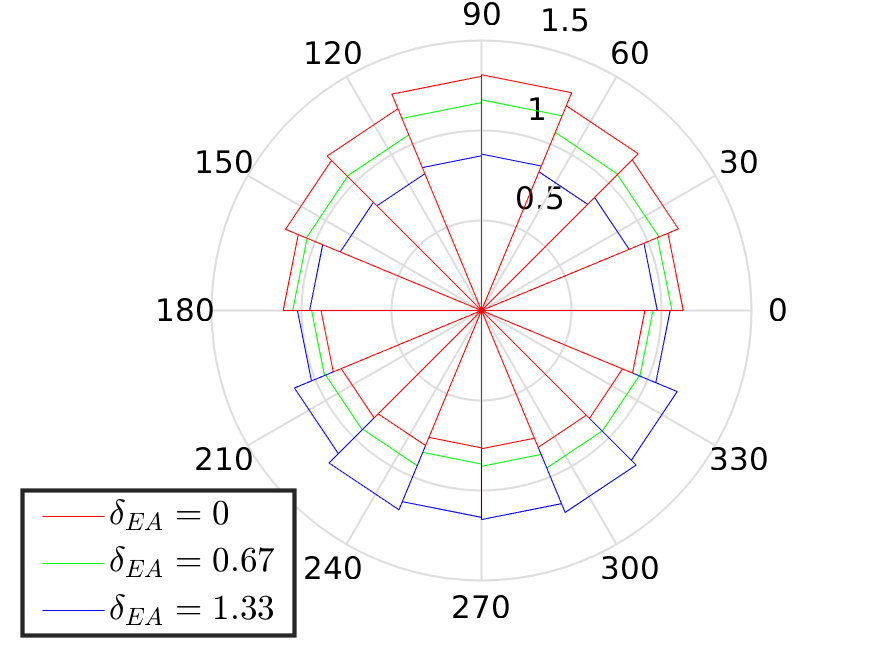}
}
\caption{Ratios of the number of binding to unbinding events in each sector with varying buckling length and stiffness of SFs. (a) The effect of reducing the buckling length $L_0$ with fixed and stiffness value $EA$. (b-c) The effects of varying stiffness $EA$ for reduced buckling length.}
\label{figure: myosin 35p force FAs}
\end{figure} 
This prompted us to investigate whether varying mechanical properties of SFs can yield the expected adhesion pattern for lower degree of asymmetry, corresponding to $\delta_{myo}=0.35$. Specifically, we varied the buckling length $L_0$ and stiffness $EA$ such that  $x=(1+\delta_{x})x^0$ corresponds to the modified value of the parameter $x\in\lbrace L_0, EA\rbrace$, where $x^0$ corresponds to the default value given in Appendix \ref{appendix: parameters}. In Figure \ref{figure: myosin 35p force FAs}(a) we see that reducing the buckling length $L_0$ by 27\% leads to the expected adhesion pattern, while reducing it by 18\% leaves it largely unchanged. However, decreasing and increasing stiffness when $\delta_{L_0}=-0.18,-0.27$, respectively, leads to the opposite results (Figure \ref{figure: myosin 35p force FAs}(b,c)). This suggests that if SFs are less prone to buckling and less stiff, lower degree of myosin induced contractile asymmetry may be required to drive directed migration.

\textbf{Remark.} Another way to induce contractile asymmetry is, for example, to decrease the myosin force $T_i$ north of the cell's equator. Then, again, the south of the cell equator is more contractile. However, the simulated trajectories show southward directed movement (data not shown), contrary to what we should expect. Therefore, merely inducing contractile asymmetry is not sufficient. For the expected directed migration to occur, there must be a local increase of contractile forces above some critical level in the prospective cell rear, rather than a local decrease of contractility in the prospective front. Interestingly, Yam et al. \cite{yam2007actin} were able to initiate directed movement by increasing local actomyosin contraction, while locally decreasing the contractile activity did not lead to migration initiation. More recently, Shellard et al. \cite{Shellard339} showed that directed collective cell migration of neural crest cells requires greater contractility at the rear of the clump.
\section{Discussion and outlook}\label{section: discussion}
In this paper we constructed a stochastic model of cell migration using a minimal representation of cellular structures, essential for crawling, such as stress fibers and focal adhesions. Using this representation, and observing that FA assembly and disassembly events of the migration cycle lead to different migratory outcomes, we obtained the equations describing deterministic cell motion between the random occurrence of FA events. After introducing the rates of FA binding and unbinding, we obtained the remaining necessary objects to define a piecewise deterministic Markov process: the distribution of interarrival times and of the next FA event. Note that the forms of these distributions have been derived, rather than simply postulated. Having specified the coupling between SFs and FAs, as well as between the cellular environment and FAs, we performed numerical simulations. We showed that our model is able to reproduce experimental observations, such as: superdiffusive scaling of the mean-squared displacement \cite{dieterich2008anomalous}, \cite{liang2008persistent}, \cite{liu2015confinement} (Figure \ref{figure: uniform bias results}); biased motility in the presence of external cue (Figure \ref{figure: ECM bias results}); contact guidance \cite{ROG17} (Figure \ref{figure: contact guidance plots}). In these cases, the obtained results followed solely due to asymmetric, dynamic instability of FAs in direct response to environmental stimuli. Specifically, it is only the biased FA assembly rate that drives biased cell motility along the cue gradient or the fiber tracts (Figures \ref{figure: ECM bias speeds and events} and  \ref{figure: contact guidance plots} (d-i)). That is, preferred velocities were not imposed or chosen in any way, but simply followed as a consequence of front-rear polarity, as the cell front is characterized by preferential FA binding and the rear by unbinding.         

Another characteristic of directed migration is the asymmetric contraction of actomyosin bundles. By increasing the force generation of myosin motors in the prospective rear, we obtained directed movement in the opposite direction (Figure \ref{figure: myosin 35p plots}). Here asymmetric FA dynamics (and so front-rear polarity) was also obtained, but as a consequence of locally induced contractile activity, consistent with \cite{yam2007actin}.

Our simulation results in various settings suggest that the cell speeds follow a gamma distribution (Figures \ref{figure: uniform results} (g-i), \ref{figure: uniform bias results} (g-i), \ref{figure: ECM bias speeds and events} (a-c)). Furthermore, the number of adhesion sites seems to be a determinant of the gamma distribution, as its parameters are similar under different settings and given number of FAs. These results suggest that cell speeds are independent of biased FA formation, i.e. the bias only alters the directionality and not the speed. It is also interesting to see a correlation between the number of adhesion sites and diffusivity (Table \ref{table: uniform parameters}), as well as average speed (Tables \ref{table: uniform parameters}-\ref{table: myosin parameters}). Note that faster and diffusive amoeboid movement is characterized by an increased number of weaker adhesions with high turnover and contractility \cite{pavnkova2010molecular}. Thus, the aforementioned correlation is also consistent with experimental observations. We note that our model is not fit to take into account motility strongly relying on cell shape control, which is required, for example, in highly mobile cells. However, the simulations reproduce migration along fiber tracts, where cell reshaping takes place \cite{ROG17}. Our results suggest, then, that adhesion along the tracts is sufficient to produce such migration patterns.   

Although the model of the internal contractile machinery driving cell locomotion and cytoskeletal remodeling is simple, the resulting numerical simulations explain several experimental observations. Moreover, the cyclical nature of cell motility is captured with our piecewise deterministic model. While migration of a crawling cell is accompanied by changes in its shape, dynamic coupling of cell-substrate adhesions and contractile machinery, i.e. focal adhesions and stress fibers, represent another side of the coin. Numerous sophisticated phase-field or free boundary models that produce realistic morphology of motile cells, often do not emphasize this coupling (and stochasticity) during the migration cycle. We attempted to remedy this issue in our model, and showed with our simulations that numerous aspects of cell migration can be explained without detailed account of cell shape changes. Nevertheless, shape control is essential for a more complete description of the phenomena. We believe this can be done with the framework provided by vertex-based models \cite{FLETCHER20142291}: a more complex contractility apparatus can be described via active cable network model \cite{TBS12} and a more detailed account of mechanical forces (e.g. protrusions due to actin polymerization) can be done as in \cite{COPOS20172672}. Together with models of RhoGTPases signaling pathways \cite{holmes2012comparison}, \cite{holmes2012modelling}, \cite{mori2008wave}, the most significant drawbacks of our approach (including rigid rotation of the SF structure) can be overcome. The presented framework of piecewise deterministic motility process can also be extended to three-dimensional setting, as neither the event interarrival time distribution, nor the transition measure rely on the particular features of migration in a plane. 

However, given the relative simplicity of the stochastic model and its ability to explain a handful of the experimental observations, it is possible to extend the model to include cell-cell collisions in the context of contact inhibition of locomotion (CIL). Here, the collisions lead to cessation of locomotion and to repolarization, such that the formation of new adhesions at the site of contact is inhibited, while contractility is stimulated \cite{roycroft2016molecular}. Within the framework of our model this can be implemented in a straightforward manner: collisions cause a switch to a non-moving state and the FA probability rates are modified according to contact location, as was done in our work in \cite{uatay2019mathematical} Yet another extension is obtained by treating cells as particles and using kinetic theory, yielding equations governing population migration. Thus, we can achieve a multiscale description of cell motility. Due to limitations in size and scope, cell-cell collisions and population migration will be treated in forthcoming works.

\section*{Acknowledgement}
The author acknowledges support of the German Academic Exchange Service (DAAD).
\begin{appendices}

\section{Equations of cell motion}\label{appendix: equations of motion}
In our model, using common, lab's reference frame will yield the same governing relations, because the involved forces are determined by relative position of cellular structures.
Below, we show why this is the case and provide a more detailed explanation regarding the equation of motions for $\mathbf{x},\mathbf{x}_n,\theta_1$ presented in Sections 2.1-2.2. 

Let $\mathbf{x}'_n=\mathbf{x}+\mathbf{x}_n$ and $\mathbf{x}'_i = \mathbf{x}+\mathbf{x}_i$, where primes indicate the corresponding variables in the lab's reference frame (recall that $\mathbf{x}_i$ is the position of the $i^\text{th}$ FA in cell's reference frame). Then, in this frame, the length of the $i^{\text{th}}$ SF $L'_i$ and the unit vector $\mathbf{e}'_i$ along it are given by 
\begin{align*}
L'_i &= \lVert\mathbf{x}_n'-\mathbf{x}'_i\rVert=\lVert\mathbf{x}_n-\mathbf{x}_i\rVert=L_i\\
\mathbf{e}'_i&= \frac{\mathbf{x}'_n-\mathbf{x}'_i}{L'_i}=\frac{\mathbf{x}_n-\mathbf{x}_i}{L_i}=\mathbf{e}_i,
\end{align*}
respectively.
Thus, $\mathbf{F}'_i(\mathbf{x}'_n,\theta_1)=\mathbf{F}_i(\mathbf{x}_n,\theta_1)$, where $\mathbf{F}'_i$ is the force applied by the $i^{\text{th}}$ SF at the $i^{\text{th}}$ FA.  Note that the force at $\mathbf{x}'_n$ (or $\mathbf{x}_n$) due to the $i^{\text{th}}$ SF is $-\mathbf{F}'_i(\mathbf{x}'_n,\theta_1)$ (or $-\mathbf{F}_i(\mathbf{x}_n,\theta_1)$) by action-reaction principle. Therefore, net force $\mathbf{F}'$ at $\mathbf{x}'_n$ is $\mathbf{F}'(\mathbf{x}'_n,\theta_1,\mathbf{Y})=-\sum_{i=1}^{M}Y_i\mathbf{F}'_i=-\sum_{i=1}^{M}Y_i\mathbf{F}_i=\mathbf{F}(\mathbf{x}_n,\theta_1,\mathbf{Y})$. Neglecting inertia, we have 
\begin{align*}
\beta_{cell}\dot{\mathbf{x}}_n = \mathbf{F}(\mathbf{x}_n,\theta_1,\mathbf{Y}) = \mathbf{F}'(\mathbf{x}'_n,\theta_1,\mathbf{Y}).
\end{align*}
Now, let us examine the equations of motion after FA unbinding, stated in Section 2.2.2, but in lab's reference frame. In this frame, the radial unit vector $\hat{\mathbf{r}}'(\mathbf{x}'_n)$ from the cell center $\mathbf{x}$ is given by (see Figure \ref{fig: five} in the manuscript for illustration) 
\begin{align*}
\hat{\mathbf{r}}'(\mathbf{x}'_n) = \frac{\mathbf{x}'_n-\mathbf{x}}{\lVert\mathbf{x}'_n-\mathbf{x}\rVert} = \frac{\mathbf{x}_n}{\lVert \mathbf{x}_n\rVert} = \hat{\mathbf{r}}(\mathbf{x}_n).
\end{align*}
Analogously, the tangential unit vector $\hat{\bm{\varphi}}'(\mathbf{x}'_n)$ is given by
\begin{align*}
\hat{\bm{\varphi}}'(\mathbf{x}'_n) = \left(-\frac{x'_{n,2}-x_2}{\lVert\mathbf{x}'_n-\mathbf{x}\rVert},\frac{x'_{n,1}-x_1}{\lVert\mathbf{x}'_n-\mathbf{x}\rVert}\right)^T = \left(-\frac{x_{n,2}}{\lVert\mathbf{x}_n\rVert},\frac{x_{n,1}}{\lVert\mathbf{x}_n\rVert}\right)^T = \hat{\bm{\varphi}}(\mathbf{x}_n).
\end{align*}
Note that the tangential component $F'_{\varphi}$ of the force $\mathbf{F}'$ at $\mathbf{x}'_n$ induces rotational motion, while the radial component $F'_{r}$ of the force $\mathbf{F}'$ at $\mathbf{x}'_n$ induces translational motion. These components are given by:
\begin{align*}
F'_{\varphi} &= \mathbf{F}'(\mathbf{x}'_n,\theta_1,\mathbf{Y})\cdot\hat{\bm{\varphi}}'(\mathbf{x}'_n) = \mathbf{F}(\mathbf{x}_n,\theta_1,\mathbf{Y})\cdot\hat{\bm{\varphi}}(\mathbf{x}_n) = F_{\varphi}\\
F'_{r} &= \mathbf{F}'(\mathbf{x}'_n,\theta_1,\mathbf{Y})\cdot\hat{\mathbf{r}}'(\mathbf{x}'_n) = \mathbf{F}(\mathbf{x}_n,\theta_1,\mathbf{Y})\cdot\hat{\mathbf{r}}(\mathbf{x}_n) = F_r.
\end{align*}
Neglecting rotational inertia, we then have 
\begin{align*}
\beta_{rot}\dot{\theta}_1 &= \lVert\mathbf{x}'_n-\mathbf{x}\rVert F'_{\varphi}(\mathbf{x}'_n,\theta_1,\mathbf{Y})\\
&=\lVert\mathbf{x}_n\rVert F_{\varphi}(\mathbf{x}_n,\theta_1,\mathbf{Y})
\end{align*}
where the right hand side in the first (second) line is the torque due to tangential component of the force $\mathbf{F}'$ ($\mathbf{F}$) at $\mathbf{x}'_n$ ($\mathbf{x}_n$). Because of low Reynolds number, we also have
\begin{align*}
\beta_{ECM}\dot{\mathbf{x}} &= F'_r(\mathbf{x}'_n,\theta_1,\mathbf{Y})\hat{\mathbf{r}}'(\mathbf{x}'_n)\\
&=F_r(\mathbf{x}_n,\theta_1,\mathbf{Y})\hat{\mathbf{r}}(\mathbf{x}_n),
\end{align*}
due to translational motion induced by the radial component of the force $\mathbf{F}'$ at $\mathbf{x}'_n$.

In the common reference frame, the following system of ODEs holds after unbinding (using the definition of $\mathbf{x}'_n$):
\begin{align*}
\beta_{ECM}\dot{\mathbf{x}} &= F'_r(\mathbf{x}'_n,\theta_1,\mathbf{Y})\hat{\mathbf{r}}'(\mathbf{x}'_n)\\
\beta_{cell}\dot{\mathbf{x}}'_n &= \beta_{cell}\dot{\mathbf{x}}+\mathbf{F}'(\mathbf{x}'_n,\theta_1,\mathbf{Y})\\
\beta_{rot}\dot{\theta}_1 &= \lVert\mathbf{x}'_n-\mathbf{x}\rVert F'_{\varphi}(\mathbf{x}'_n,\theta_1,\mathbf{Y}),
\end{align*}
which is equivalent to \eqref{eq: 3}. Using the common reference frame becomes even less convenient when we formulate and analyze our stochastic process of cell motility. Moreover, our approach in the main text does not contradict the formulation with the single reference frame, and is equivalent to it.

\section{Parameter assessment}\label{appendix: parameters}
Note that the length of myosin mini-filaments is $\sim0.3\mu m$ \cite{svitkina1989minifils} and the interfilament distance is $\sim1\mu m$ in an uncontracted fiber \cite{aratynschaus2011dynamic}. Assuming vanishing interfilament distance (see Figure \ref{figure: stress fiber contraction 1} for illustration and \cite{Murrell2015} for a review on actomyosin contraction mechanism), then the proportion of the minifilaments to the initial, uncontracted SF length is $\frac{0.3}{1+0.3}=0.23$. Kassianidou et. al. \cite{kassianidou2017geometry} showed that the retraction length scales linearly with the initial length. If the interfilament distance vanishes, the front myosin motor cannot perform a power stroke and step forward, which renders the single motor and the entire minifilament unable to apply contractile stress. Therefore, taking the initial length to be $\sim R_{cell}$, we estimate the critical length $L_c=0.2R_{cell}$. Interestingly, Deguchi et. al. \cite{deguchi2006stress} found that stress fibers contract, on average, to $20\%$ of their original length. As stress fibers generally span more than half of a cell, and since it was found that there is a preexisting strain \cite{deguchi2006stress} in them, we estimate $L_0=1.1R_{cell}$. 

If we take $R_{cell}=25\mu m$, and assuming a SF has at rest the length of $R_{cell}$, we can estimate the number of myosin minifilaments in a fiber to be $25\mu m/1.3\mu m\approx20$. It was estimated in \cite{Verkhovsky637} that there are $10$-$30$ myosin motors in each minifilament. As each motor produces a force of $2-10pN$ \cite{Finer1994}, \cite{MBKJTW95}, \cite{Murrell2015}, we then estimate $T_i=4nN$. Balaban et. al. \cite{Balaban01} found that focal adhesions apply a constant stress of $\sim5.5nN/\mu m^2$ over an area of $\sim 1\mu m^2$ on an elastic substrate. Thus, we take $F_b=5.5nN$. Assuming a preexisting strain was $0.1$ when these measurements were taken, and since $T_i=4nN$, we then estimate the one dimensional Young's modulus $EA=15nN$.

Using Stokes' Law for drag in the low Reynolds number regime, the drag coefficient $\beta_{ECM}$ can be estimated as:
\begin{align*}
\beta_{ECM} = 6\pi\eta_{ECM}R_{cell},
\end{align*}
where $\eta_{ECM}$ is the dynamic viscosity of the environment. Assuming that the viscosity $\eta_{ECM}$ is higher than that of water, and taking into account that the contact between cell surface and the substrate increase the effective friction, we estimate $\beta_{ECM}\approx 10-10^3\frac{N\cdot s}{m^2}\times R_{cell}$. 
Similarly, due to the low Reynolds number, the rotational drag coefficient $\beta_{rot}$ is given by:
\begin{align*}
\beta_{rot} = 8\pi\eta_{ECM}R^3_{cell}\approx 10-10^3\frac{N\cdot s}{m^2}\times R^3_{cell}.
\end{align*}
In order to obtain estimates for the drag coefficient $\beta_{cell}$ one needs to have an estimate of the cytoplasm viscosity. Assessing the effective cytoplasmic viscosity of migrating cells is a challenging task, since the viscoelastic properties of the cytoskeleton (which, among other things, consists of polymer networks) are highly dynamic due to constant remodeling and spatiotemporal mediation of the rheology by various signaling pathways. Particularly, the actin network bundle size and cross-linkers influence the viscoelastic properties \cite{Gardel2003rheology}. Furthermore, the effective viscosity experienced by an experimental probe (or a protein) in polymer solutions depends not only on the type of material properties of the fluid, but also on the size of a probe \footnote{Using a small molecule as a probe, the cytoplasm viscosity was found to be $\simeq2-3 \times 10^{-4}Pa\cdot s$ \cite{Mastro01061984}. With larger probes, the viscosity was found to be $\simeq 2-4\times 10^{-2}Pa\cdot s$ \cite{Kalwarczyk2011viscosity} and $\simeq 5\times 10^{-2}Pa\cdot s$\cite{Arcizet2008viscosity}} \cite{Kalwarczyk2011viscosity}. Inferring that the body being dragged in the cell is the nucleus with radius $R_{nucleus}$, we estimate:
\begin{align*}
\beta_{cell} = 6\pi\eta_{cell}R_{nucleus} \approx 10-10^2\frac{N\cdot s}{m^2}\times R_{nucleus},
\end{align*}
where $\eta_{cell}$ is the cytoplasm viscosity.

\begin{table}[h]
\centering
\def\arraystretch{1.5}
\begin{tabular}{ |c|>{\centering\arraybackslash}m{3.5cm}|c||c|>{\centering\arraybackslash}m{3.5cm}|c|}
\hline
 Parameter& Value  &Value  & Parameter & Value &Value\\
 \hline
 $T_i$ & $4nN$ & 0.72 & $R_{cell}$& $25\mu m$  &1 \\
 \hline
 $EA$& $15nN$ &2.72  & $L_0$ & $27.5\mu m$ &1.1\\
 \hline
 $F_b$ & $5.5 nN$ &1 & $L_c$ & $5\mu m$ &0.2\\
 \hline
 $k^0_{off}$ & $0.05s^{-1}$&1   & $\beta_{rot}$ & $1.56\times10^{-11}N\cdot s\cdot m$ &5.68\\
 \hline
 $k^0_{on}$ & $0.01s^{-1}$&0.2 & $\beta_{ECM}$ & $5\times10^{-4}\frac{N\cdot s}{m}$ &0.11\\
 \hline 
 $R_{nucleus}$ & $5\mu m$&0.2  & $\beta_{cell}$ & $5\times10^{-3}\frac{N\cdot s}{m}$ &1.14\\
 \hline 
\end{tabular}
 \caption{Parameters used for simulation and their relative size with respect to spatial, temporal, and force scales. See Section \ref{section: kinematics summary} for details.}
 \label{table: parameter1}
\end{table}

Note that a focal adhesion is a cluster of transmembrane receptors (integrins) linking the substrate with the cytoskeleton, which is always under tension. The cluster also includes adapter proteins, which interlinks these receptors with the cytoskeleton (see e.g. reviews \cite{BCW1996}, \cite{geiger2009environmental}). Thus, if there is no load on a focal adhesion, then, since the cytoskeleton is always under tension, such a focal adhesion can be treated as a complex of independent bonds to the substrate without a link to SF. In the absence of a load, the average cluster lifetime $T_{life}$ is given by \cite{ES04}: 
\begin{align*}
T_{life} = \frac{1}{k^1_0+k^1_{on}}\left[H_N+\sum_{n=1}^{N}\left(\frac{k^1_{on}}{k^1_0}\right)^n\frac{1}{n}{{N}\choose{n}}\right],
\end{align*} 
where $N$ is the number of bonds in a cluster, $H_N$ is the $N$th harmonic number, $k_0^1$ and $k^1_{on}$ are, respectively, unbinding and binding rates of integrins under no load. Note that in the absence of a load, (re-)binding of individual integrins is an independent event, which bears no relation to the FA, since tension is required  for an FA to form and sustain itself. We then estimate:

\begin{align*}
k^0_{off} = 1/T_{life}\lvert_{k^1_{on}=0}=k^1_0/H_N.
\end{align*}  
Li et al.\cite{LREM03} found that $k^1_{0} = 0.012s^{-1}$ for $\alpha_5\beta_1$-integrin binding to fibronectin. For $N=10^3-10^4$ we estimate that $k^0_{off}=0.05s^{-1}-0.1s^{-1}$.

The rationale for simulating $M=8,16,32$ FAs is the following. Note that the number of cell-substrate adhesions is higher than the number of focal adhesions we chose for our simulations. However, not all adhesions are directly involved in translocating the cell body, during which large traction forces are applied to the substrate through focal adhesions (which are fewer in number than immature, less stable focal complexes/points and nascent adhesions). Moreover, detachment of focal adhesions that leads to translocation, is primarily the result of contractile tension applied by ventral stress fibers, as opposed to transverse arcs and dorsal stress fibers \cite{kassianidou2015biomechanical}. The latter two have primarily structural role, while the former is fundamental to rear retraction \cite{Chen187}, \cite{kassianidou2015biomechanical}. Thus, the number of focal adhesions that are directly involved in cell body translocation is controlled by the number of the associated ventral stress fibers, which are also the most significant source of traction force applied to the substrate due to large tension within them \cite{kassianidou2015biomechanical}, \cite{OakesMBTF15}. Although reports of ventral stress fiber numbers are elusive, visual inspection of the fluorescence images in, for example, \cite{Hotulainen383}, \cite{kuragano2018different}, \cite{OakesMBTF15} (or any other appropriate study) suggests that simulations with the chosen number of (ventral) fibers (and focal adhesions) is realistic. Moreover, diameter of focal adhesions $d\sim 1-5\mu m$ \cite{gardel2010mechanical}. Assuming that the separation between focal adhesions is comparable to their size, and taking the cell radius to be $R_{cell} = 25\mu m$ (as in our simulations), we see that the upper range of possible number of adhesions on the cell circumference is $2\pi R_{cell}/2d\approx 16-80$. We reiterate that this number is an estimate of focal adhesions attached to ventral stress fibers, and it underestimates the \textit{total} number of focal adhesions that a cell employs, since significant number of them are attached to other types of stress fibers and may also be present within the cell body and not on its periphery. We performed simulations with $M=64$ focal adhesions and did not find any added insight. 

The values of $\gamma_1,\gamma_2$ and $\epsilon$, mentioned in \ref{section: adhesion force dependence}, can be found as follows: 

Suppose $F\leq F_b$. Then,
\begin{align*}
\gamma_1 = -\frac{F_b}{F-F_1^*}\log\left(\frac{k^0_{off}e+k^0_{on}}{k_{force}(F)-k^0_{on}+\epsilon}-1\right).
\end{align*}
Since $k_{force}(0) = k^0_{on}$ and $k_{force}(F_b) = k^0_{off}e$, then:
\begin{align*}
\gamma_1 &= \frac{F_b}{F_1^*}\log\left(\frac{k^0_{off}e+k^0_{on}}{\epsilon}-1\right)\\
\gamma_1 &= -\frac{F_b}{F_b-F_1^*}\log\left(\frac{k^0_{off}e+k^0_{on}}{k^0_{off}e-k^0_{on}+\epsilon}-1\right).
\end{align*}
It follows that $\epsilon$ is given as the solution of the following equation:
\begin{align*}
\frac{F_b}{F_1^*}\log\left(\frac{k^0_{off}e+k^0_{on}}{\epsilon}-1\right)+\frac{F_b}{F_b-F_1^*}\log\left(\frac{k^0_{off}e+k^0_{on}}{k^0_{off}e-k^0_{on}+\epsilon}-1\right) = 0.
\end{align*}
Similarly, since $k_{force}(F_b) = k^0_{off}e$, we find:
\begin{align*}
\gamma_2 = \frac{F_b}{F_b-F^*_2}\log\left(\frac{k_{off}^0e+k^0_{on}}{k^0_{off}e-k_{on}^0}-1\right).
\end{align*}
The values of $\gamma_1,\gamma_2,$ and $\epsilon$ are fixed for a value of $F_b$.

\section{Data analysis}\label{appendix: data analysis}
Given that the time interval $[0,t_{end}]$ is divided into $n_{time}$ subintervals of equal length $\Delta t$ and given the positions $\mathbf{x}^i(t_j)$ of cell $i$ at the time points $t_j:=j\Delta t$, $j=0,\ldots,n_{time}$, the mean-squared displacement $msd_i(t_j)$ of cell $i\in\left\lbrace 1,\ldots,n_{cells}\right\rbrace $ over a time interval of length $t_j$ is given by:
\begin{align}\label{eq: time average msd}
msd_i(t_j) := \frac{1}{n_{time}-j}\sum_{k=1}^{n_{time}-j}\lVert\mathbf{x}^i(t_{j+k})-\mathbf{x}^i(t_{k})\rVert^2,
\end{align}
where $j=1,\ldots,n_{time}-1$ and $n_{cells}$ is the total number of cells. Then, the mean-squared displacement $msd(t)$ of an ensemble of cells over time interval of length $t_j$ is defined by:
\begin{align}\label{eq: ensemble average msd}
msd(t):=\frac{1}{n_{cells}}\sum_{i=1}^{n_{cells}}msd_i(t_j). 
\end{align} 
\textbf{Remark.} In general, the (time-averaged) mean-squared displacement $<d^2(t,T)>$ of a particle trajectory $\mathbf{x}(t)$ at the time $t$, time endpoint $T$ is formally defined as:
\begin{align}\label{eq: time msd integral}
<d^2(t,T)> = \frac{1}{T-t}\int_{0}^{T-t}\lVert \mathbf{x}(s+t)-\mathbf{x}(t)\rVert^2ds.
\end{align}
For an ergodic process, we have:
\begin{align*}
\lim_{T\rightarrow\infty}<d^2(t,T)> = <\mathbf{x}^2(t)>,
\end{align*}
where $<\mathbf{x}^2(t)>$ is formally defined as:
\begin{align*}
<\mathbf{x}^2(t)> = \int\mathbf{x}^2P_t(d\mathbf{x}),
\end{align*}
and $P_t(d\mathbf{x})$ is the probability measure of the underlying stochastic process at time $t$. That is, for an ergodic process, and for sufficiently long times, the time average equals the phase space average. However, our cell motility process need not be ergodic and hence, using a quadrature to evaluate the integral in equation \eqref{eq: time msd integral}, we obtain time average displacement in equation \eqref{eq: time average msd}. To smooth out trajectory-to-trajectory fluctuations, we then average the displacements over all trajectories in equation \eqref{eq: ensemble average msd}. 

For a diffusive motion we expect that $msd(t)\sim t^{\beta(t)}$ with $\beta(t)\approx1$, while for a ballistic motion $\beta(t)\approx2$. Since $msd(0)=0$, we can estimate the exponent $\beta(t)$ for $t\in[\Delta t, t_{end}-\Delta t]$ from the simulated data as:
\begin{align*}
\beta(t)=\frac{d\ln{msd(t)}}{d\ln t}.
\end{align*}    

Although averaging reduces fluctuations, it does not eliminate them completely. Thus, computing the derivative above will yield a result that may oscillate wildly, which we want to avoid. Then, in order to investigate how $\beta$ varies with time, we define the time average $\beta_{av}(t)$ over the interval $[\Delta t, t]$ as:
\begin{align*}
\beta_{av}(t):=\frac{1}{t-\Delta t}\int_{\Delta t}^{t}\beta(s)ds = \frac{1}{t-\Delta t}\left(s\ln{msd(s)}\bigg\rvert _{\Delta t}^{t}-\int_{\Delta t}^{t}\ln{msd(s)}ds\right),
\end{align*}
where $t\in[2\Delta t, t_{end}-\Delta t]$, and we used integration by substitution and by parts. Then, $\bar{\beta}:=\beta_{av}(t_{end}-\Delta t)$ estimates the time scaling of $msd$ over the whole time interval. To asses how well $\bar{\beta}$ reflects the scaling of $msd$, we define the following function $\widehat{msd}(t):=\beta_0t^{\bar{\beta}}$, where $\beta_0$ is found by minimizing the square error:
\begin{align*}
\min_{\beta_0}\frac{1}{2}&\sum_{j=1}^{n_{time}-1}\left(\beta_0t_j^{\bar{\beta}}-msd(t_j)\right)^2\Rightarrow\\
&\beta_0 = \frac{\sum_{j=1}^{n_{time}-1}msd(t_j)t_j^{\bar{\beta}}}{\sum_{j=1}^{n_{time}-1}t_j^{2\bar{\beta}}}.
\end{align*}       
Letting $\bar{\beta}=\beta_{av}(t-\Delta t)$ to asses time scaling of $msd\sim t^{\beta(t)}$ is more accurate than the standard methods used for Brownian motion, since it also takes into account time dependence of the exponent. Also, our stochastic model has no Gaussian component. We refer to Section \ref{section: numerical simulations} for comparisons between $msd$ and $\widehat{msd}$, which show that the former well approximates the latter.

Note that because binding events can occur, a cell need not to have moved between the two time points $t_j$ and $t_{j+1}$. Thus, the speed between the consecutive time points may be zero for many time points, which would give an inaccurate statistical assessment of cell speeds. In order to estimate the speeds of a cell $i$ we use the following procedure: 

First, we find $l_i$, given by:
\begin{align*}
l_i:=\min\left\lbrace l\in\mathbb{N}:\mathbf{x}^i(t_{l+k})\neq\mathbf{x}^i(t_k), 0\leq k<n_{time}, l+k\leq n_{time}\right\rbrace.
\end{align*}       
Then we find the set of speeds $S_i$ as:
\begin{align*}
S_i := \left\lbrace s\in\mathbb{R}^+: s=\frac{\lVert\mathbf{x}^i(t_{(k+1)l_i})-\mathbf{x}^i(t_{kl_i})\rVert}{l_i\Delta t}, k\in\mathbb{N},(k+1)l_i\leq n_{time} \right\rbrace. 
\end{align*} 
This simply means that to compute speeds we only use a (minimal) time interval, such that a cell $i$ is guaranteed to change its position. The total set of speeds for $n_{cells}$ is $S:=\cup_{i=1}^{n_{cells}}S_i$. The average speed $s_{av}$ is then defined as arithmetic average:
\begin{align*}
s_{av}:=\frac{1}{\lvert S\rvert}\sum_{s\in S}s.
\end{align*}

The set of normalized velocities $V_i$ (or, alternatively, displacements) of cell $i$ is given by:
\begin{align*}
V_i := \left\lbrace \mathbf{v}\in\mathbb{R}^2: \mathbf{v}=\frac{\mathbf{x}^i(t_{(k+1)l_i})-\mathbf{x}^i(t_{kl_i})}{\lVert\mathbf{x}^i(t_{(k+1)l_i})-\mathbf{x}^i(t_{kl_i})\rVert}, k\in\mathbb{N},(k+1)l_i\leq n_{time} \right\rbrace,
\end{align*}
and the total set of normalized velocities $V$ is given by $V:=\cup_{i=1}^{n_{cells}}V_i$.

The directionality ratio $r_i(t_j)$ of cell $i$ over a time interval of length $t_j$ is given by:
\begin{align*}
r_i(t_j) := \frac{\sum_{k=1}^{j}\lVert\mathbf{x}^i(t_k)-\mathbf{x}^i(t_{k-1})\rVert}{\lVert\mathbf{x}^i(t_j)-\mathbf{x}^i(t_{0})\rVert}.
\end{align*}
The population and the time averages of the directionality ratio are given by, respectively:
\begin{align*}
r(t_j)& = \frac{1}{n_{cell}}\sum_{i=1}^{n_{cell}}r_i(t_j)\\
\bar{r} &= \frac{1}{n_{time}}\sum_{j=1}^{n_{time}}r(t_j). 
\end{align*}

Velocity autocorrelation $v^i_{ac}(t_j)$ of cell $i$ over a time interval of length $t_j$ is given by:
\begin{align*}
v^i_{ac}(t_j):=\frac{1}{n_{time}-j}\sum_{k=1}^{n_{time}-j}\frac{\mathbf{v}^i(t_{j+k})\cdot \mathbf{v}^i(t_{k})}{\lVert \mathbf{v}^i(t_{j+k})\rVert\phantom{.} \lVert\mathbf{v}^i(t_{k})\rVert},
\end{align*} 
where $j=1,\ldots, n_{time}-1$ and $\mathbf{v}^i(t_{k}) = \left(\mathbf{x}^i(t_{k})-\mathbf{x}^i(t_{k-1})\right)/\Delta t$. Velocity autocorrelation of the population $v_{ac}(t)$ is simply the arithmetic average of each cell's velocity autocorrelation. To compute $v_{ac}$ we used the time step of $12min$, whereas for all other quantities involving time dependence (e.g. $msd$) we used the time step of $0.12min$.  

We define the guidance parameter $G\in[0,1]$ similarly as in \cite{ROG17}:
\begin{align*}
G:=\frac{1}{\lvert\Theta\rvert}\sum_{\theta\in\Theta}g(\theta), 
\end{align*}
where $\Theta:=\cup_{i=1}^{n_{cells}}\Theta_i$. The set $\Theta_i$ of angles between a displacement vector of cell $i$ and the ECM fibers is defined by 
\begin{align*}
\Theta_i := \left\lbrace \theta\in[-\frac{\pi}{2},\frac{\pi}{2}]: 
\theta=arcsin\left( \frac{x^i_{2}(t_{(k+1)l_i})-x^i_{2}(t_{kl_i})}{\lVert\mathbf{x}^i(t_{(k+1)l_i})-\mathbf{x}^i(t_{kl_i})\rVert}\right), k\in\mathbb{N},(k+1)l_i\leq n_{time} \right\rbrace,
\end{align*}
where $x^i_{2}$ is the $y$-component of $\mathbf{x}^i$. The function $g:[-\frac{\pi}{2},\frac{\pi}{2}]\rightarrow\left\lbrace 0,1\right\rbrace$ is given by
\begin{align*}
g(\theta) = 
\begin{cases*}
1,\text{ if } \lvert\theta\rvert\leq\pi/4\\
0, \text{ else}
\end{cases*}.
\end{align*}
Thus, $G$ increases when displacements are aligned with the horizontal axis.
\section{Simulation of the PDMP}\label{appendix: simulation of the pdmp}
In order to compute the trajectories of the cell motility process, the following algorithm is used:
\begin{algorithm}[H]
\begin{enumerate}
\item Set $(\nu_0,\mathbf{X}_0)\in A\times\Gamma$ and $t=T_0=0$
\item For $k=0,1,\ldots$\\
Generate interarrival time $\Delta_k=T_{k+1}-T_{k}$, whose distribution function is given by:
\begin{align}\label{eq: tau distribution function}
\mathbb{P}\left(\Delta_k\leq\tau\right) = 1-\exp\left(-\int_{t}^{t+\tau}a_0(\nu_t,\phi_\nu(s,\mathbf{X}_t))ds\right)
\end{align}
 Compute $\mathbf{X}_{t+s^-}:=\phi_\nu(s,\mathbf{X}_t)$\\
 Set $t=T_{k+1}=T_k+\Delta_k$\\
 Generate $(\nu_t,\mathbf{X}_t)\sim Q(\cdot;(\nu_{t^-},\mathbf{X}_{t^-}))$  
\end{enumerate}
\caption{Simulation of the PDMP}
\end{algorithm}

To generate the interarrival time $\Delta_k$, we need to solve for $\tau$ in the following equation:
\begin{align}\label{eq: tau random}
f(\tau) := \int_{t}^{t+\tau}a_0(\nu_t,\phi_{\nu_t}(s,\mathbf{X}_t))ds + \ln(1-u)=0,
\end{align} 
where $u$ is uniformly distributed on the interval $(0,1)$.
Notice that the evaluation of the integral by a quadrature rule requires computing the solution $\mathbf{X}_{t+s}=\phi_{\nu_t}(s,\mathbf{X}_t)$ up to time $s$, where $s$ is a quadrature point. Moreover, using an iterative method to solve \eqref{eq: tau random} requires computing the integral at each iteration. Therefore, it is imperative to devise an efficient method to sample from distribution \eqref{eq: tau distribution function}. In the following, we propose a general method to generate the next event time.      
\subsection{Simulation of the next event time}\label{section: next event time simulation}

Let $T_k\leq t< T_{k+1}$ and let $\mathbf{G}(\cdot;h):\Gamma\rightarrow\Gamma$ represent a numerical method to solve the ODE system 
\begin{align*}
\frac{d}{dt}\mathbf{X}_t = \mathbf{H}_{\nu_t}(\mathbf{X}_t)
\end{align*}
for one time step $h$. That is, $\mathbf{X}_{t+h}=\mathbf{G}(\mathbf{X}_t;h)$ is the numerical solution of the above ODE system at time $t+h$. 

Let $[T_k,T_{k+1})^2\ni(s',t')\mapsto A_0(t',s'):=\int_{s'}^{t'}a_0(\nu_t,\phi_{\nu_t}(u,\mathbf{X}_t))du$  denote the integrated rate function.

The method to find the root $\tau$ of equation \eqref{eq: tau random} is given in Algorithm \ref{alg: Event time computation}. First, in steps (1-22), we find the upper bound $\tau_{max}$ by solving the ODE system for $n$ steps with step size $h$ and store the solution, the computed rate $a_0$, and the integrated rate $A_0$ at these time steps. Then, for any $\tau\leq\tau_{max}$ we can compute $A_0(t+\tau,t)$ upon using the stored value of $A_0$ at time $t+\tau_i$, where $i=\left\lfloor{\frac{\tau}{h}}\right\rfloor$ and $\tau_i=ih$:
\begin{align*}
A_0(t+\tau,t) &= \int_{t}^{t+\tau}a_0(\nu_t,\phi_{\nu_t}(s,\mathbf{X}_t))ds\\
&=\int_{t}^{t+\tau_i}a_0(\nu_t,\mathbf{X}_{t+s})ds+\int_{t+\tau_i}^{t+\tau_i+\tau-\tau_i}a_0(\nu_t,\phi_{\nu_t}(s,\mathbf{X}_t))ds\\
&=A_0(t+\tau_i,t)+\int_{t+\tau_i}^{t+\tau_i+h_i}a_0(\nu_t,\phi_{\nu_t}(s,\mathbf{X}_t))ds.
\end{align*}   

\begin{algorithm}[H]
\small
\caption{Event time computation}
\label{alg: Event time computation}
\begin{algorithmic}[1]
\Procedure{Initialization}{}
\State \textbf{Input}: Time $t=T_{k}$; $(\nu_t,\mathbf{X}_t)$; time step $h$ and ODE method $\mathbf{G}$; $n,m\in\mathbb{N}$. 
\State Set $s_0:=0,n_0:=0$, create $List_{a_0}$, $List_{A_0}$, and $List_{\mathbf{X}}$.
\State Append $List_{a_0}\leftarrow a_0(\nu_t,\mathbf{X}_t)$, $List_{A_0}\leftarrow 0$, $List_{\mathbf{X}}\leftarrow \mathbf{X}_t$.
\State Set $\tau_{max}:=nh$, $s_0:=n_0h$\label{begin}.
\State Set initial condition $\mathbf{X}_s\leftarrow List_{\mathbf{X}}[last]$.
\State Set $A_0^0:=List_{A_0}[last]$.
\For{$i=1$ to $n$}
\State $s_i:=t+s_0+ih$.
\State Compute $\mathbf{X}_{s_i} := \mathbf{G}(\mathbf{X}_{s};h)$ and $a_0(\nu_t,\mathbf{X}_{s_i})$.
\State Compute $A_0(s_i; t+s_0)$ with quadrature points $s_j$, $j=0,\ldots,i$.
\State $A_0(s_i;t) := A_0^0+A_0(s_i; t+s_0)$.
\State Append $List_{a_0}\leftarrow a_0(\nu_t,\mathbf{X}_{s_i})$, $List_{A_0}\leftarrow A_0(s_i;t)$, $List_{\mathbf{X}}\leftarrow \mathbf{X}_{s_i}$.  
\EndFor
\State Generate $u\sim U(0,1)$.
\If{$List_{A_0}[last]<-ln(1-u)$}
\State $n:=n+m$.
\State $n_0 = n_0+m$.
\State go to \ref{begin}.
\EndIf
\State \textbf{Output:} $\tau_{max}$, $List_{a_0}$, $List_{A_0}$, and $List_{\mathbf{X}}$.
\EndProcedure
\Procedure{Evalution of $f$}{}\label{function eval}
\State \textbf{Input}: $\tau$; time step $h$; $List_{a_0}$, $List_{A_0}$; Integrated interpolation method $I$.
\State Set $i:=\left\lfloor{\frac{\tau}{h}}\right\rfloor$ and $h_i=\tau-ih$.
\State \textbf{Output}: $f(\tau)=List_{A_0}[i]+I(List_{a_0}[i],List_{a_0}[i+1];h_i,h)+\ln(1-u)$
\EndProcedure
\Procedure{Event time}{}
\State Find the root $\tau$ of $f(\tau)=0$ using \ref{function eval} and a root finding method.
\State $i:=\left\lfloor{\frac{\tau}{h}}\right\rfloor$.
\State Compute $\mathbf{X}_{t+\tau}:=\mathbf{G}(List_{\mathbf{X}}[i];\tau-ih)$.
\State \textbf{Output}: $\tau$, $\mathbf{X}_{t+\tau}$ 
\EndProcedure
\end{algorithmic}
\end{algorithm}

To compute the last integral in the expression above, we interpolate the integrand using the stored values $a_0(\nu_t,\mathbf{X}_{t+\tau_i})$ and $a_0(\nu_t,\mathbf{X}_{t+\tau_{i+1}})$. Let
\begin{align*}
I\left(a_0(\nu_t,\mathbf{X}_{t+\tau_i}),a_0(\nu_t,\mathbf{X}_{t+\tau_{i+1}}); h_i,h\right):= \int_{t+\tau_i}^{t+\tau_i+h_i}a_0(\nu_t,\phi_{\nu_t}(s,\mathbf{X}_t))
\end{align*}
denote the approximation of the integral using interpolated integrand.
We can use the following interpolation methods for $t+\tau_i<s<t+\tau_{i+1}$:
\begin{enumerate}
\item Piecewise constant\\
Forward: $a_0(\nu_t,\phi_{\nu_t}(s,\mathbf{X}_t))=a_0(\nu_t,\phi_{\nu_t}(\tau_{i},\mathbf{X}_t))=a_0(\nu_t,\mathbf{X}_{t+\tau_i})$.
\begin{align}\label{eq: forward interpol}
I\left(a_0(\nu_t,\mathbf{X}_{t+\tau_i}),a_0(\nu_t,\mathbf{X}_{t+\tau_{i+1}}); h_i,h\right) = h_ia_0(\nu_t,\mathbf{X}_{t+\tau_i}).
\end{align}
Backward:
$a_0(\nu_t,\phi_{\nu_t}(s,\mathbf{X}_t))=a_0(\nu_t,\phi_{\nu_t}(\tau_{i+1},\mathbf{X}_t))=a_0(\nu_t,\mathbf{X}_{t+\tau_{i+1}})$
\begin{align}\label{eq: backward interpol}
I\left(a_0(\nu_t,\mathbf{X}_{t+\tau_i}),a_0(\nu_t,\mathbf{X}_{t+\tau_{i+1}}); h_i,h\right)  = h_ia_0(\nu_t,\mathbf{X}_{t+\tau_{i+1}}).
\end{align}
\item Average: $a_0(\nu_t,\phi_{\nu_t}(s,\mathbf{X}_t)) = \frac{1}{2}\left(a_0(\nu_t,\mathbf{X}_{t+\tau_i})+a_0(\nu_t,\mathbf{X}_{t+\tau_{i+1}})\right)$.\\
\begin{align}\label{eq: average interpol}
I\left(a_0(\nu_t,\mathbf{X}_{t+\tau_i}),a_0(\nu_t,\mathbf{X}_{t+\tau_{i+1}}); h_i,h\right) = \frac{1}{2}h_i\left(a_0(\nu_t,\mathbf{X}_{t+\tau_i})+a_0(\nu_t,\mathbf{X}_{t+\tau_{i+1}})\right).
\end{align}
\item Piecewise linear: 
\begin{align*}
a_0(\nu_t,\phi_{\nu_t}(s,\mathbf{X}_t)) = (s-t-\tau_i)\frac{a_0(\nu_t,\mathbf{X}_{t+\tau_{i+1}})-a_0(\nu_t,\mathbf{X}_{t+\tau_{i}})}{h}+a_0(\nu_t,\mathbf{X}_{t+\tau_i}).
\end{align*}
\begin{align}\label{eq: linear interpol}
I\big(a_0(\nu_t,\mathbf{X}_{t+\tau_i}), a_0(\nu_t,&\mathbf{X}_{t+\tau_{i+1}}); h_i,h\big).\nonumber\\ 
&= h_ia_0(\nu_t,\mathbf{X}_{t+\tau_i})\left(1-\frac{h_i}{h}\right)+\frac{h^2_i}{2h}a_0(\nu_t,\mathbf{X}_{t+\tau_{i+1}}) .
\end{align}
\end{enumerate}
Thus, $f(\tau)$ can be evaluated using equations \eqref{eq: forward interpol}-\eqref{eq: linear interpol}: 
\begin{align*}
f(\tau) = A_0(t+\tau_i,t)+I\left(a_0(\nu_t,\mathbf{X}_{t+\tau_i}),a_0(\nu_t,\mathbf{X}_{t+\tau_{i+1}}); h_i,h\right) + \ln(1-u).
\end{align*}

 Using the interpolations above, we can now employ, for example, Newton's method to find the root of equation \eqref{eq: tau random}:
 \begin{align*}
 \tau_{l+1} = \tau_l - \frac{f(\tau_l)}{a_0(\nu_t,\phi_{\nu_t}(\tau_l,\mathbf{X}_t))},
 \end{align*}
 or any other root-finding method (note that $a_0>0$).
 
 Once the root is found, we simply advance the ODE system for one time step as described in Steps (28-33) of the Algorithm \ref{alg: Event time computation}. 
 
 Note that we solve the ODE system for $n+1=\tau_{max}/h+1$ steps and we also solve for the interarrival time $\tau$ primarily by using a look-up table. Moreover, obtaining a relatively sharp upper bound $\tau_{max}$ does not yield a large computational overhead, since one simply can start the Algorithm \ref{alg: Event time computation} with a small $n,m$. Consequently, choosing an initial guess close to the sharp upper bound for Newton's method results in a faster convergence. In case of thinning methods (see \cite{lewis1979simulation} or \cite{2015arXiv151002451B} for adaptive method), after each rejection one needs to solve the ODE system for time period that is, on average, approximately the same as $\tau_{max}$ (in the best case scenarios for both methods, i.e. when the bound $\tau_{max}$ and the bound for the rate function in thinning methods are sharp). Of course, these arguments hold when the computational cost of solving the ODE system is relatively large.

\subsection{Sampling from the transition measure}\label{section: next event index simulation}
Given the time $t$ of the next event and $\mathbf{X}_t$ we need to sample from the transition distribution $Q(\cdot,(\nu_{t^-},\mathbf{X}_{t^-}))$. Recalling Section \ref{section: summary of PDP} and the proof of Proposition \ref{proposition: transition measure form}, in order to sample from the transition measure it is sufficient to simulate the index $j\in\left\lbrace1,\ldots,2M \right\rbrace $ of the next reaction, since the continuous component of the process does not jump. The discrete distribution of the next reaction index is given by equation \eqref{eq: probability of the next reaction j}:
\begin{align*}
\mathbb{P}(j|\bm{\alpha}_2(\nu_{t^-}),\mathbf{X}_{t^-}) = \frac{a_j(\bm{\alpha}_2(\nu_{t^-}),\mathbf{X}_{t^-})}{a_0(\bm{\alpha}_2(\nu_{t^-}),\mathbf{X}_{t^-})}.
\end{align*}
To simulate from the discrete distribution one can use the fairly efficient Vose Alias Method \cite{vose1991random}.

\end{appendices}
\providecommand{\noopsort}[1]{}

	\bibliographystyle{abbrv}
	
\end{document}